\newif\ifsubmission
\else\usepackage{amsthm}\fi
\newif\ifnotes
\newtheorem{theorem}{Theorem}[section]
\newtheorem{claim}[theorem]{Claim}
\newtheorem{lemma}[theorem]{Lemma}
\newtheorem{corollary}[theorem]{Corollary}
\newtheorem{definition}[theorem]{Definition}
\newtheorem{remark}[theorem]{Remark}
\theoremstyle{remark}
\newtheorem{importedtheorem}[theorem]{Imported Theorem}
\Crefname{importedtheorem}{Imported Theorem}{Imported Theorems}
\Crefname{theorem}{Theorem}{Theorems}
\Crefname{proposition}{Proposition}{Propositions}
\Crefname{claim}{Claim}{Claims}
\Crefname{lemma}{Lemma}{Lemmas}
\Crefname{conjecture}{Conjecture}{Conjectures}
\Crefname{corollary}{Corollary}{Corollaries}
\Crefname{construction}{Construction}{Constructions}
\Crefname{property}{Property}{Properties}
\Crefname{definition}{Definition}{Definitions}
\Crefname{assumption}{Assumption}{Assumptions}
\Crefname{notation}{Notation}{Notations}
\Crefname{question}{Question}{Questions}
\Crefname{remark}{Remark}{Remarks}
\Crefname{comment}{Comment}{Comments}
\Crefname{fact}{Fact}{Facts}
\newcommand{\secp}{\lambda}
\def\cA{{\cal A}}
\def\cB{{\cal B}}
\def\cC{{\cal C}}
\def\cD{{\cal D}}
\def\cE{{\cal E}}
\def\cF{{\cal F}}
\def\cG{{\cal G}}
\def\cL{{\cal L}}
\def\cQ{{\cal Q}}
\def\cR{{\cal R}}
\def\cS{{\cal S}}
\def\cV{{\cal V}}
\def\cZ{{\cal Z}}
\def\sA{{\mathsf A}}
\def\sC{{\mathsf C}}
\def\sD{{\mathsf D}}
\def\sE{{\mathsf E}}
\def\sM{{\mathsf M}}
\def\sO{{\mathsf O}}
\def\sP{{\mathsf P}}
\def\sR{{\mathsf R}}
\def\sT{{\mathsf T}}
\def\sV{{\mathsf V}}
\def\sW{{\mathsf W}}
\def\sX{{\mathsf X}}
\def\sY{{\mathsf Y}}
\def\sZ{{\mathsf Z}}
\def\sAux{{\mathsf{Aux}}}
\def\bbC{{\mathbb C}}
\def\bbI{{\mathbb I}}
\def\bbN{{\mathbb N}}
\def\bbR{{\mathbb R}}
\newcommand{\bq}{\mathbf{q}}
\def\negl{{\rm negl}}
\newcommand{\pk}{\mathsf{pk}}
\newcommand{\sk}{\mathsf{sk}}
\newcommand{\com}{\mathsf{com}}
\newcommand{\Com}{\mathsf{Com}}
\newcommand{\Enc}{\mathsf{Enc}}
\newcommand{\Dec}{\mathsf{Dec}}
\newcommand{\ct}{\mathsf{ct}}
\newcommand{\Eval}{\mathsf{Eval}}
\newcommand{\EXP}{\mathsf{EXP}}
\newcommand{\CEXP}{\mathsf{C}\text{-}\mathsf{EXP}}
\newcommand{\EVEXP}{\mathsf{EV}\text{-}\mathsf{EXP}}
\newcommand{\Gen}{\mathsf{Gen}}
\newcommand{\vk}{\mathsf{vk}}
\newcommand{\proref}[1]{Protocol~\protect\ref{#1}}
\newenvironment{boxfig}[2]{\begin{figure}[#1]\fbox{
    \begin{minipage}{\linewidth}
    \vspace{0.2em}\makebox[0.025\linewidth]{}    \begin{minipage}{0.95\linewidth}{{#2 }}
    \end{minipage}\vspace{0.2em}\end{minipage}}}{\end{figure}}
\newcommand{\pprotocol}[4]{
\begin{boxfig}{h!}{
\begin{center}
\textbf{#1}
\end{center}
    #4
\vspace{0.2em} } \caption{\label{#3} #2}
\end{boxfig}
}
\newcommand{\protocol}[4]{
\pprotocol{#1}{#2}{#3}{#4} }
\newcommand{\Real}{\mathsf{REAL}}
\newcommand{\Ideal}{\mathsf{IDEAL}}
\newcommand{\Ver}{\mathsf{Ver}}
\newcommand{\abort}{\mathsf{abort}}
\newcommand{\OT}{\mathsf{OT}}
\newcommand{\Del}{\mathsf{Del}}
\newcommand{\Rev}{\mathsf{Rev}}
\newcommand{\TD}{\mathsf{TD}}
\newcommand{\Fail}{\mathsf{FAIL}}
\newcommand{\TRE}{\mathsf{TRE}}
\newcommand{\PKE}{\mathsf{PKE}}
\newcommand{\Hyb}{\mathsf{Hyb}}
\newcommand{\Advt}{\mathsf{Advt}}
\newcommand{\Tr}{\mathsf{Tr}}
\newcommand{\cert}{\mathsf{cert}}
\newcommand{\counternew}{\ensuremath{\gamma}}
\newcommand{\SSS}{\mathsf{SS}}
\newcommand{\Share}{\mathsf{Share}}
\newcommand{\Rec}{\mathsf{Rec}}
\newcommand{\CD}{\mathsf{CD}}
\newcommand{\FHE}{\mathsf{FHE}}
\newcommand{\RTRE}{\mathsf{RTRE}}
\newcommand{\WE}{\mathsf{WE}}
\newcommand{\ABE}{\mathsf{ABE}}
\newcommand{\KG}{\mathsf{KeyGen}}
\newcommand{\msk}{\mathsf{msk}}
\newcommand{\socom}{\mathsf{so}\text{-}\mathsf{com}}
\newcommand{\DelReq}{\mathsf{DelReq}}
\newcommand{\DelRes}{\mathsf{DelRes}}
\begin{document}
\def\doi#1{\url{https://doi.org/#1}}

\title{Cryptography with Certified Deletion}
\ifsubmission
\author{}
\institute{}
\else
\author{James Bartusek\thanks{UC Berkeley. Email: \texttt{bartusek.james@gmail.com}} \and Dakshita Khurana\thanks{UIUC. Email: \texttt{dakshita@illinois.edu}.}}
\fi
\date{}
\maketitle

\ifsubmission
\else
\pagenumbering{gobble}
\fi

\vspace{-10mm}
\begin{abstract}
    We propose a unifying framework that yields an array of cryptographic primitives with certified deletion. These primitives enable a party in possession of a quantum ciphertext to generate a classical certificate that the encrypted plaintext has been information-theoretically deleted, and cannot be recovered even given unbounded computational resources.
    
\begin{itemize}    

\item For $X \in \{\mathsf{public}\text{-}\mathsf{key},\mathsf{attribute\text{-}based},\mathsf{fully\text{-}homomorphic},\mathsf{witness},\mathsf{timed}\text{-}\mathsf{release}\}$, our compiler converts any (post-quantum) $X$ encryption to $X$ encryption with certified deletion.

In addition, we compile statistically-binding commitments to statistically-binding commitments with certified everlasting hiding. As a corollary, we also obtain statistically-sound zero-knowledge proofs for QMA with certified everlasting zero-knowledge assuming statistically-binding commitments.

\item We also obtain a strong form of everlasting security for two-party and multi-party computation in the dishonest majority setting. 
While simultaneously achieving everlasting security against \emph{all} parties in this setting is known to be impossible, we introduce {\em everlasting security transfer (EST)}. This enables \emph{any one} party (or a subset of parties)  to dynamically and certifiably information-theoretically delete other participants' data after protocol execution. 

We construct general-purpose secure computation with EST assuming statistically-binding commitments, which can be based on one-way functions or pseudorandom quantum states.
\end{itemize}

We obtain our results by developing a novel proof technique to argue that a bit $b$ has been {\em information-theoretically deleted} from an adversary's view once they output a valid deletion certificate, despite having been previously {\em information-theoretically determined} by the ciphertext they held in their view. This technique may be of independent interest.

\end{abstract}
\newpage

\ifsubmission
\else
{
  \hypersetup{linkcolor=Violet}
  \setcounter{tocdepth}{2}
  \tableofcontents
}
\newpage
\fi

\ifsubmission
\else
\pagenumbering{arabic}
\fi
\section{Introduction}
\paragraph{Deletion in a classical world.} On classical devices, data is stored and exchanged as a string of bits. There is nothing that can prevent an untrusted device with access to such a string from making arbitrarily many copies of it. Thus, it seems hopeless to try to {\em force} an untrusted device to delete classical data. Even if the string is merely a ciphertext encoding an underlying plaintext, there is no way to prevent a server from keeping that ciphertext around in memory forever. If at some point in the future, the security of the underlying encryption scheme is broken either via brute-force or major scientific advances, or if the key is compromised and makes its way to the server, the server will be able to recover the underlying plaintext. This may be unacceptable in situations where extremely sensitive data is being transmitted or computed upon. 

In fact, there has recently been widespread interest in holding data collectors accountable in responding to ``data deletion requests'' from their clients, as evidenced by data deletion clauses in legal regulations adopted by the European Union \cite{european_commission_regulation_2016} and California \cite{Cal}. Unfortunately, the above discussion shows that these laws cannot be cryptographically enforced against malicious data collectors, though there has been recent work on cryptographically \emph{formalizing} what it means for \emph{honest} data collectors to follow such guidelines \cite{EC:GarGolVas20}. 


\vspace{-2mm}
\paragraph{Deletion in a quantum world.} The \emph{uncertainty principle} \cite{1927ZPhy...43..172H}, which lies at the foundation of quantum mechanics, completely disrupts the above classical intuition. It asserts the existence of pairs of measurable quantities such that precisely determining one quantity (e.g. the position of an electron) implies the \emph{inability} to determine the other (e.g. the momentum of the electron). While such effects only become noticeable at an extreme microscopic scale, the pioneering work of Wiesner \cite{Wiesner1983ConjugateC} suggested that the peculiar implications of the uncertainty principle could be leveraged to perform seemingly impossible ``human-scale'' information processing tasks.

Given the inherent ``destructive'' properties of information guaranteed by the uncertainty principle, provable data deletion appears to be a natural information processing task that, while impossible classically, may become viable quantumly. Surprisingly, the explicit study of data deletion in a quantum world has only begun recently. However, over the last few years, this question has been explored in many different contexts. Initial work studied deletion in the context of non-local games \cite{PhysRevA.97.032324} and information-theoretic proofs of deletion with partial security~\cite{CRW20}, while the related notion of \emph{revocation} was introduced in~\cite{EC:Unruh14}.

The work of~\cite{10.1007/978-3-030-64381-2_4} first considered certified deletion in the context of encryption schemes, leveraging the uncertainty principle to obtain one-time pad encryption with certified deletion. This caused a great deal of excitement, leading to many recent followup works on deletion in a cryptographic context: device-independent security of one-time pad encryption with certified deletion \cite{https://doi.org/10.48550/arxiv.2011.12704}, public-key and attribute-based encryption with certified deletion~\cite{10.1007/978-3-030-92062-3_21}, commitments and zero-knowledge with certified everlasting hiding~\cite{cryptoeprint:2021:1315}, and most recently fully-homomorphic encryption with certified deletion~\cite{cryptoeprint:2022:295}.


\vspace{-2mm}
\paragraph{This work.}
Our work makes new definitional, conceptual and technical contributions.
Our key contribution is a new proof technique to show that many natural encryption schemes satisfy security with certified deletion.
This improves prior work in many ways, as we summarize below.
\begin{enumerate}
\item {\bf A unified framework.} We present a simple compiler that relies on conjugate coding/BB84 states \cite{Wiesner1983ConjugateC,BenBra84} to bootstrap semantically-secure cryptosystems to semantically-secure cryptosystems with certified deletion.
For any $X \in$ \{public-key encryption, attribute-based encryption, witness encryption, timed-release encryption, statistically-binding commitment\}, 
we immediately obtain
``$X$ with certified deletion'' by plugging $X$ into our compiler.
This compiler builds on~\cite{10.1007/978-3-030-64381-2_4}, who used BB84 states in the context of certified deletion for one-time pad encryption.

%
%

\item {\bf Stronger definitions. }
We consider a strong definition of security with certified deletion for public-key primitives, which stipulates that if an adversary in possession of a quantum ciphertext encrypting bit $b$ issues a certificate of deletion which passes verification, then the bit $b$ must now be {\em information-theoretically} hidden from the adversary. 

Previous definitions of public-key and fully-homomorphic encryption with certified deletion \cite{10.1007/978-3-030-92062-3_21,cryptoeprint:2022:295} considered a weaker experiment, inspired by~\cite{10.1007/978-3-030-64381-2_4}, where after deletion, the adversary is explicitly given the secret key, but is still required to be computationally bounded. For the public-key setting, we consider this prior definition to capture a (strong) \emph{security against key leakage} property, as opposed to a \emph{certified deletion} property%
%
\footnote{In contrast, in the one-time pad encryption setting as considered by \cite{10.1007/978-3-030-64381-2_4}, the original encrypted message is already information-theoretically hidden from the adversary, so to obtain any interesting notion of certified deletion, one must explicitly consider leaking the secret key.}.
We show that the everlasting flavor of our definition implies prior definitions (\cref{sec:implication}). Intuitively, this is because for public-key schemes, an adversary can sample a secret key on its own given sufficient computational resources.
Moreover, in the case of fully-homomorphic encryption (FHE), prior work \cite{cryptoeprint:2022:295}  considered definitions (significantly) weaker than semantic security.\footnote{Subsequent to the original posting of our paper on arXiv, an update to \cite{cryptoeprint:2022:295} was posted with somewhat different results. We provide a comparison between our work and the updated version of \cite{cryptoeprint:2022:295} in \cref{subsec:concurrent}.} We obtain the first semantically-secure FHE with certified deletion from standard LWE.

\item {\bf Simpler constructions and weaker assumptions.} Our compiler removes the need to rely on complex cryptographic primitives such as non-committing encryption and indistinguishability obfuscation as in~\cite{10.1007/978-3-030-92062-3_21}, or idealized models such as random oracles as in~\cite{EC:Unruh14,cryptoeprint:2021:1315}, or complex quantum states (such as Gaussian coset states) as in~\cite{cryptoeprint:2022:295}, instead yielding simple schemes satisfying certified deletion for a range of primitives from BB84 states and minimal assumptions.

In fact, reliance on non-committing encryption was a key reason that prior techniques did not yield homomorphic encryption schemes with certified deletion, since compact homomorphic encryption schemes cannot simultaneously be non-committing~\cite{PKC:KatThiZho13}. Our work builds simple homomorphic encryption schemes that support certified deletion by eliminating the need to rely on non-committing properties, and instead only relying on semantic security of an underlying encryption scheme.
\item {\bf Overcoming barriers to provable security.} How can one prove that a bit $b$ has been {\em information-theoretically deleted} from an adversary's view once they produce a valid deletion certificate, while it was previously information-theoretically {\em determined} by the ciphertext they hold in their view?

Prior work~\cite{EC:Unruh14,10.1007/978-3-030-92062-3_21,cryptoeprint:2021:1315,cryptoeprint:2022:295} resorted to either idealized models or weaker definitions, and constructions with layers of indirection, in order to get around this barrier.
We develop a novel proof technique that resolves this issue by (1) carefully deferring the dependence of the experiment on the plaintext bit, and (2) identifying an efficiently checkable predicate on the adversary's state after producing a valid deletion certificate. We rely on semantic security of encryption to show that this predicate must hold, and we argue that if the predicate holds, the adversary's left-over state is statistically independent of the plaintext bit.
This allows us to prove certified deletion security for simple and natural schemes.

\item {\bf New implications to secure computation: Everlasting Security Transfer (EST).}
We introduce the concept of {\em everlasting security transfer}. Everlasting security guarantees (malicious) security against a participant in a secure two-(or multi-)party computation protocol even if the participant becomes computationally unbounded after protocol execution. We introduce and build secure computation protocols where participants are able to {\em transfer} everlasting security properties from one party to another, even after the protocol ends.  
\end{enumerate}
%
We elaborate on our results in more detail below, then we provide an overview of our techniques.

\subsection{Our results}
\paragraph{Warmup: secret sharing with certified deletion.}
We begin by considering certified deletion in the context of one of the simplest cryptographic primitives: information-theoretic, two-out-of-two secret sharing.
Here, a dealer Alice would like to share a classical secret bit $b$ between two parties Bob and Charlie, such that
\begin{enumerate}
\item {\bf (Secret sharing.)} The individual views of Bob and Charlie perfectly hide $b$, while the joint view of Bob and Charlie can be used to reconstruct $b$, and 
\item {\bf (Certified deletion.)} Bob may generate a deletion certificate for Alice, guaranteeing that $b$ has been {\em information theoretically removed} from the \emph{joint} view of Bob and Charlie.
\end{enumerate}
That is, as long as Bob and Charlie do not collude at the time of generating the certificate of deletion, their joint view upon successful verification of this certificate is guaranteed to become independent of $b$. As long as the certificate verifies, $b$ will be perfectly hidden from Bob and Charlie {\em even if} they decide to later collude.



To build such a secret sharing scheme, we start by revisiting the usage of conjugate coding/BB84 states to obtain encryption with certified deletion, which was first explored in~\cite{10.1007/978-3-030-64381-2_4}.
While the construction in~\cite{10.1007/978-3-030-64381-2_4} relies on a seeded randomness extractor in combination with BB84 states, we suggest a simpler alternative that replaces the seeded extractor with the XOR function. Looking ahead, this simplification combined with other proof techniques will help generically lift our secret sharing scheme to obtain several encryption schemes with certified deletion.

Consider a random string $x \gets \{0,1\}^\secp$, and a random set of bases $\theta \gets \{0,1\}^\secp$ (where 0 corresponds to the standard basis and 1 corresponds to the Hadamard basis).
To obtain a scheme with certifiable deletion, we will build on the intuition that it is impossible to recover $x$ given only BB84 states $\ket{x}_\theta$ without knowledge of the basis $\theta$. Furthermore, measuring $\ket{x}_\theta$ in an incorrect basis $\theta'$ will destroy (partial) information about $x$.

Thus to secret-share a bit $b$ in a way that supports deletion, 
the dealer will sample $x \gets \{0,1\}^\secp$ and bases $\theta \gets \{0,1\}^\secp$.
Bob's share is then
$$\ket{x}_\theta$$ 
and Charlie's share is
$$\theta, b' = b \oplus \bigoplus_{i:\theta_i = 0}x_i$$
That is, in Charlie's share, $b$  is masked by the bits of $x$ that are encoded in the standard basis.

We note that Bob's share contains only BB84 states while Charlie's share is entirely classical. 
Bob can now produce a certificate of deletion by returning the results of measuring all his BB84 states in the Hadamard basis,  
and Alice will accept as a valid certificate any string $x'$ such that $x_i = x'_i$ for all $i$ where $\theta_i = 1$.
We show that this scheme is indeed a two-out-of-two secret sharing scheme that satisfies certified deletion as defined above.

\paragraph{A conceptually simple and generic compiler.}
As our key technical contribution, we upgrade the secret sharing with certified deletion scheme to the public-key setting by encrypting Charlie's share.
In more detail, to encrypt a bit $b$ with respect to any encryption scheme, we first produce two secret shares of $b$ as described above, and then release a ciphertext that contains (1) Bob's share in the clear and (2) an encryption of Charlie's share.
To certifiably delete a ciphertext, one needs to simply measure the quantum part of the ciphertext (i.e., Bob's share) in the Hadamard basis.
Intuitively, since information about the bases (Charlie's share) is hidden at the time of producing the certificate of deletion, generating a certificate that verifies must mean information theoretically losing the description of computational basis states. 

This method of converting a two-party primitive (i.e. secret sharing with certified deletion) into one-party primitives (i.e. encryption schemes with certified deletion) is reminiscent of other similar compilers in the literature, for instance those converting probabilistically checkable proofs to succinct arguments~\cite{BMW,C:KalRaz09}. In our case, just like those settings, while the intuition is relatively simple, the proof turns out to be fairly non-trivial.

\paragraph{Our main theorem.} In (almost) full generality, our main theorem says the following.\footnote{In order to fully capture all of our applications, we actually allow $\cZ_\secp$ to operate on all inputs, including the BB84 states. See \cref{sec:main} for the precise details.} Consider an arbitrary family of distributions $\{\cZ_\secp(\theta)\}_{\secp \in \bbN, \theta \in \{0,1\}^\secp}$ and an arbitrary class $\mathscr{A}$ of computationally bounded adversaries $\cA = \{\cA_\secp\}_{\secp \in \bbN}$, such that $\cZ_\secp(\theta)$ semantically hides $\theta$ against $\cA_\secp$. Then, consider the following distribution $\widetilde{\cZ}_\secp^{\cA_\secp}(b)$ over quantum states, parameterized by a bit $b \in \{0,1\}$. 

\begin{itemize}
    \item Sample $x,\theta \gets \{0,1\}^\secp$ and initialize $\cA_\secp$ with \[\cZ_\secp(\theta),b\oplus \bigoplus_{i: \theta_i = 0}x_i,\ket{x}_\theta.\]
    \item $\cA_\secp$'s output is parsed as a bitstring $x' \in \{0,1\}^\secp$ and a residual state on register $\sA'$.
    \item If $x_i = x_i'$ for all $i$ such that $\theta_i = 1$ then output $\sA'$, and otherwise output a special symbol $\bot$.
\end{itemize}

Then,

\begin{theorem}\label{thm:main-intro}
For every $\cA \in \mathscr{A}$, 
the trace distance between  $\widetilde{Z}^{\cA_\secp}_\secp(0)$ and $\widetilde{Z}^{\cA_\secp}_\secp(1)$ is $\negl(\secp)$.
\end{theorem}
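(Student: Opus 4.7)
The plan is to combine three ingredients: (i) an EPR-pair purification of the BB84 encoding that defers the sampling of $x$ to after the adversary has acted, (ii) a coherent reformulation of the experiment that relegates the dependence on $b$ to a single post-selection at the very end, and (iii) a semantic-security hop on $\cZ_\secp$ together with a BB84-style uncertainty argument that forces this post-selection to be $b$-independent.

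First, I would prepare $\secp$ EPR pairs $(A_i,B_i)$, hand $B$ to the adversary, and note that measuring $A_i$ in basis $\theta_i$ yields $x_i$ and collapses $B_i$ to $\ket{x_i}_{\theta_i}$; by the deferred-measurement principle these measurements can be postponed until after the adversary returns $(x',\sigma)$. Second, instead of feeding $m = b \oplus \bigoplus_{i:\theta_i=0} x_i$ to the adversary, I sample $m \in \{0,1\}$ uniformly and independently, give it to the adversary in place of the mask, and at the end of the experiment introduce a post-selection on the event $\bigoplus_{i:\theta_i=0} s_i = b \oplus m$, where $s_i$ is the standard-basis measurement of $A_i$ at $\theta_i=0$ positions. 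A short direct calculation shows the reformulated experiment $R(b)$ satisfies $R(b) = \tfrac{1}{2}\widetilde{Z}^{\cA_\secp}_\secp(b) + \tfrac{1}{2}\ketbra{\bot}$, so it suffices to bound $\|R(0)-R(1)\|_1$. Crucially, in $R(b)$ the bit $b$ enters only through the terminal post-selection.

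Third, I invoke semantic security of $\cZ_\secp(\theta)$ to switch to $\cZ_\secp(\tilde\theta)$ for a freshly sampled $\tilde\theta$. In the resulting hybrid, every input handed to the adversary — $\cZ_\secp(\tilde\theta)$, the uniform bit $m$, and the marginally maximally-mixed register $B$ — has a distribution independent of $\theta$, so the joint state $\rho_{A\sA'X'}$ after the adversary's action is $\theta$-independent. Fourth, the information-theoretic core (the efficiently checkable predicate in the paper's language is precisely the certificate-check event on $\rho_{A\sA'X'}$): for any such $\theta$-independent $\rho$, averaging over uniform $\theta$ and conditioning on the cert event $x'_i = h_i$ for $\theta_i=1$, I would argue that $\bigoplus_{i:\theta_i=0} s_i$ is statistically uniform on $\{0,1\}$ and independent of $(\sigma,x')$. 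The intuition is the usual BB84 trade-off: the adversary, not knowing $\theta$, cannot simultaneously commit to correct Hadamard-basis predictions on the (unknown) $\theta_i=1$ positions and preserve the standard-basis correlations with $A$ on the complementary positions. Given this, $\bigoplus s_i = b\oplus m$ occurs with probability exactly $1/2$ conditionally on $(\sigma,x',\text{cert})$, so $R(0)$ and $R(1)$ agree up to the negligible slack of the uncertainty step, yielding the desired bound on $\|\widetilde{Z}(0)-\widetilde{Z}(1)\|_1$.

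The main obstacle will be formalizing the uncertainty step: showing that a $\theta$-oblivious measurement on $B$ that produces a valid certificate must statistically randomize the standard-basis XOR on the complementary positions. I would attack this by Naimark-dilating the adversary's measurement to a projective one, exploiting the permutation-invariance of the adversary's hybrid inputs (so one may symmetrize over a random reshuffling of positions, equivalent to re-randomizing $\theta$ position-wise), and reducing to a BB84 monogamy-of-entanglement inequality in the spirit of Tomamichel–Fehr–Kaniewski–Wehner. A secondary subtlety I anticipate is reconciling the computational nature of the $\cZ_\secp$ hop with the statistical trace-distance conclusion: this is to be handled by proving exact (or negligibly small) statistical $b$-independence directly in the hybrid, and then arguing that any residual gap in the real experiment yields an efficient distinguisher for $\cZ_\secp$ via a reduction that only needs to apply the adversary $\cA_\secp$ and the (bounded-dimensional) certificate-and-post-selection checks.
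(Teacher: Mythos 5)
Your first two moves are exactly the paper's: the EPR purification with deferred measurement and the trick of guessing the mask $m$ uniformly and pushing the entire dependence on $b$ into a terminal post-selection, at the cost of a factor $1/2$ (this is precisely $\Hyb_1,\Hyb_2$ in the paper's proof). The gap is in how you bring the statistical conclusion back from the semantic-security hybrid to the real experiment. After switching $\cZ_\secp(\theta)$ to $\cZ_\secp(\tilde\theta)$ you prove (near-)exact $b$-independence \emph{in the hybrid}, and then claim that ``any residual gap in the real experiment yields an efficient distinguisher for $\cZ_\secp$'' via a reduction that only runs $\cA_\secp$ and the certificate-and-post-selection checks. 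This does not work: the quantity you would need to transfer is a trace-distance (equivalently, a conditional uniformity/independence) statement about the XOR bit \emph{relative to the adversary's residual register} $\sA'$, and a violation of it is witnessed only by some (in general inefficient, Helstrom-type) measurement on $\sA'$. The certificate check and the post-selection alone cannot detect such a correlation, so the computational indistinguishability of $\cZ_\secp(\theta)$ and $\cZ_\secp(\tilde\theta)$ gives you no handle on it. (Transferring only the efficiently estimable acceptance probabilities, e.g.\ $\Pr[\text{cert passes} \wedge \bigoplus_i s_i = b\oplus m]$, is not enough either: equal acceptance probabilities for $b=0,1$ say nothing about closeness of the conditional residual states.) Relatedly, your parenthetical identification of the ``efficiently checkable predicate'' with the certificate-check event is off: the cert-check event is not negligible and is not what gets transferred.

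The paper's proof inverts the order of quantifiers to avoid exactly this barrier. It defines a \emph{bad-event} projector $\Pi_{x',\theta}$ acting on the challenger's EPR halves $\sC$ only: the Hadamard-basis outcome on the $\theta_1$ positions matches $x'$ \emph{and} the Hadamard-basis outcome on the $\theta_0$ positions is far (relative distance $\geq 1/2$) from $x'$. This event has negligible probability in the hybrid by a purely classical Hoeffding/sampling argument (since there $\theta$ is independent of the adversary's view), and because the reduction knows $(x',\theta)$ and can implement $\{\Pi_{x',\theta},\bbI-\Pi_{x',\theta}\}$ efficiently, semantic security transfers \emph{this event probability} to the real experiment. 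Then, entirely within the real experiment, gentle measurement puts the post-$\cA_\secp$ state negligibly close to one supported on the image of $\bbI-\Pi_{x',\theta}$, and on such a state either the cert check fails (output $\bot$, $b$-independent) or the $\sC_{\theta_0}$ registers are supported on low-weight Hadamard deviations from $x'$, whereupon the seedless XOR-extractor theorem gives exact uniformity of $\bigoplus_{i:\theta_i=0}x_i$ even given the (possibly entangled) register $\sA'$. So the computational assumption is used only to bound an efficiently measurable event, and the statistical statement is derived information-theoretically where it is needed. Your monogamy-of-entanglement/symmetrization idea could plausibly replace the Bouman--Fehr-style core, but as structured your argument does not close the computational-to-statistical step, which is the heart of the theorem.
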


Intuitively, this means that as long as the adversary $\cA_\secp$ is computationally bounded {\em at the time of producing any deletion certificate} $x'$ that properly verifies (meaning that $x'_i$ is the correct bit encoded at index $i$ for any indices encoded in the Hadamard basis), their left-over state {\em statistically} contains only negligible information about the original encrypted bit $b$. That is, once the certificate verifies, information about $b$ cannot be recovered  information-theoretically even given unbounded time from the adversary's residual state.

This theorem is both quite simple and extremely general. The quantum part that enables certified deletion only involves simple BB84 states, and we require no additional properties of the underlying distribution $\cZ_\secp$ except for the fact that $\cZ_\secp(\theta)$ and $\cZ_\secp(0^\secp)$ are indistinguishable to some class of adversaries.
\footnote{It may seem counter-intuitive that the certified deletion guarantees provided by our theorem hold even when
instantiating $\cZ_\secp$ with general semantically secure schemes, such as a fully-homomorphic encryption scheme. In
particular, what if an adversary evaluated the FHE to recover a classical encryption of $b$, and then reversed their
computation and finally produced a valid deletion certificate? This may seem to contradict everlasting security,
since a classical ciphertext could be used to recover $b$ given unbounded time. However, this attack is actually not
feasible. After performing FHE evaluation coherently, the adversary would obtain a register holding a superposition
over classical ciphertexts encrypting $b$, but with different random coins. Measuring this superposition to obtain a
single classical ciphertext would collapse the state, and prevent the adversary from reversing their computation to
eventually produce a valid deletion certificate. Indeed, our Theorem rules out this (and all other) efficient attacks.} We now discuss our (immediate) applications in more detail. 


\paragraph{Public-key, attribute-based and witness encryption.} Instantiating the distribution $\cZ_\secp$ with the encryption procedure for any public-key encryption scheme, 
we obtain a public-key encryption scheme with certified deletion. 

We also observe that we can instantiate the distribution $\cZ_\secp$ with the encryption procedure for any \emph{attribute-based} encryption scheme, and immediately obtain an attribute-based encryption scheme with certified deletion. Previously, this notion was only known under the assumption of indistinguishability obfuscation, and also only satisfied the weaker key leakage style definition discussed above \cite{10.1007/978-3-030-92062-3_21}. Finally, instantiating $\cZ_\secp$ with any \emph{witness encryption} scheme implies a witness encryption scheme with certified deletion.

\paragraph{Fully-homomorphic encryption.} Next, we consider the question of computing on encrypted data. We observe that, if $\cZ_\secp$ is instantiated with the encryption procedure $\Enc$ for a \emph{fully-homomorphic} encryption scheme \cite{10.1145/1536414.1536440,6108154,C:GenSahWat13}, then given $\ket{x}_\theta, \Enc(\theta, b \oplus \bigoplus_{i: \theta_i = 0} x_i)$, one could run a homomorphic evaluation procedure in superposition to recover (a superposition over) $\Enc(b)$. Additionally, given multiple ciphertexts, one can even compute arbitrary functionalities over the encrypted plaintexts. Moreover, if such evaluation is done \emph{coherently} (without performing measurements), then it can be reversed and the deletion procedure can subsequently be run on the original ciphertexts. 

This immediately implies what we call a ``blind delegation with certified deletion'' protocol, which allows a computationally weak client to utilize the resources of a computationally powerful server, while (i) keeping its data hidden from the server during the protocol, and (ii) ensuring that its data is \emph{information-theoretically} deleted from the server afterwards, by requesting a certificate of deletion. We show that, as long as the server behaves honestly
during the ``function evaluation'' phase of the protocol, 
then even if it is arbitrarily malicious after the function evaluation phase, it cannot both pass deletion verification and maintain any information about the client's original plaintexts.

Recently, Poremba \cite{cryptoeprint:2022:295} also constructed a fully-homomorphic encryption scheme satisfying a weaker notion of certified deletion.\footnote{We discuss comparisons with a recently updated version of \cite{cryptoeprint:2022:295} in \cref{subsec:concurrent}.} 
In particular, the guarantee in~\cite{cryptoeprint:2022:295} is that from the perspective of any server that passes deletion with \emph{sufficiently high probability}, there is significant entropy in the client's original \emph{ciphertext}. This does not necessarily imply anything about the underlying plaintext, since a ciphertext encrypting a fixed bit $b$ may be (and usually will be) highly entropic. Moreover, their construction makes use of relatively complicated and highly entangled \emph{Gaussian coset states} in order to obtain these deletion properties. In summary, our framework simultaneously strengthens the security (to standard semantic security of the plaintext) and simplifies the construction of fully-homomorphic encryption with certified deletion. We also remark that neither our work nor \cite{cryptoeprint:2022:295} considers security against servers that may be malicious during the function evaluation phase of the blind delegation with certified deletion protocol.
We leave obtaining security against fully malicious servers as an interesting direction for future research.

\paragraph{Commitments and zero-knowledge.} Next, we consider \emph{commitment schemes}. A fundamental result in quantum cryptography states that one cannot use quantum communication to build a commitment that is simultaneously statistically hiding and statistically binding \cite{Mayers97,LoChau97}. Intriguingly, \cite{cryptoeprint:2021:1315} demonstrated the feasibility of statistically-binding commitments with a \emph{certified everlasting hiding} property, where hiding is computational during the protocol, but becomes information-theoretic after the receiver issues a valid deletion certificate. However, their construction relies on the idealized quantum random oracle model. 
Using our framework, we show that \emph{any} (post-quantum) statistically-binding computationally-hiding commitment implies a statistically-binding commitment with certified everlasting hiding. Thus, we obtain statistically-binding commitments with certified everlasting hiding in the plain model from post-quantum one-way functions, and even from plausibly weaker assumptions like \emph{pseudorandom quantum states} \cite{cryptoeprint:2021:1663,cryptoeprint:2021:1691}. 

Following implications in \cite{cryptoeprint:2021:1315} from commitments with certified deletion to zero-knowledge, we also obtain interactive proofs for NP (and more generally, QMA) with \emph{certified everlasting zero-knowledge}.
These are proofs that are statistically sound, and additionally the verifier may issue a classical certificate {\em after the protocol ends} showing that the verifier has information-theoretically deleted all secrets about the statement being proved. Once a computationally bounded verifier issues a valid certificate, the proof becomes {\em statistically} zero-knowledge (ZK).
Similarly to the case of commitments, while proofs for QMA or NP are unlikely to simultaneously satisfy {\em statistical soundness} and {\em statistical ZK},~\cite{cryptoeprint:2021:1315} previously introduced and built
statistically sound, certified everlasting ZK proofs in the random oracle model.
On the other hand, we obtain a construction in the plain model from any statistically-binding commitment.

\paragraph{Timed-release encryption.} As another immediate application, we consider the notion of \emph{revocable} timed-release encryption. Timed-release encryption schemes (also known as time-lock puzzles) have the property that, while ciphertexts can eventually be decrypted in some polynomial time, it takes \emph{at least} some (parallel) $T(\secp)$ time to do so. \cite{EC:Unruh14} considered adding a \emph{revocable} property to such schemes, meaning that the recipient of a ciphertext can either eventually decrypt the ciphertext in $\geq T(\secp)$ time, or issue a certificate of deletion proving that they will \emph{never} be able to obtain the plaintext. \cite{EC:Unruh14} constructs semantically-secure revocable timed-release encryption assuming post-quantum timed-release encryption, but with the following drawbacks: the certificate of deletion is a \emph{quantum state}, and the underlying scheme must either be \emph{exponentially} hard or security must be proven in the idealized quantum random oracle model. 

We can plug any post-quantum timed-release encryption scheme into our framework, and obtain revocable timed-released encryption from (polynomially-hard) post-quantum timed-released encryption, with a classical deletion certificate. Note that, when applying our main theorem, we simply instantiate the class of adversaries to be those that are $T(\secp)$-parallel time bounded.

\paragraph{Secure computation with Everlasting Security Transfer (EST).} 
Secure computation allows mutually distrusting participants to compute on joint private inputs while revealing no information beyond the output of the computation.
The first templates for secure computation that make use of quantum information were proposed in a combination of works by Crépeau and Kilian~\cite{FOCS:CreKil88}, and Kilian~\cite{STOC:Kilian88}. For a while~\cite{MayersSalvail94,STOC:Yao95} it was believed that {\em unconditionally secure computation} could be realized based on a specific cryptographic building block: an {\em unconditionally secure quantum bit commitment}. Unfortunately, beliefs that unconditionally secure quantum bit commitments exist~\cite{FOCS:BCJL93} were subsequently proven false~\cite{Mayers97,LoChau97}, and the possibility of unconditional secure computation was also ruled out~\cite{Lo}.

As such, secure computation protocols must either assume an honest majority or necessarily rely on computational hardness to achieve security against adversaries that are computationally bounded. But this may be troublesome when participants wish to compute on extremely sensitive data, such as medical or government records.
In particular, consider a server that computes on highly sensitive data and keeps information from the computation around in memory forever. Such a server may be able to eventually recover data if the underlying hardness assumption breaks down in the future.
In this setting, it is natural to ask: 
Can we use computational assumptions to design ``everlasting'' secure protocols against an adversary that is computationally bounded during protocol execution but becomes \emph{computationally unbounded} after protocol execution?

Unfortunately, everlasting secure computation against {\em every participant in a protocol} is {\em also} impossible~\cite{C:Unruh13} for most natural two-party functionalities (or multi-party functionalities against dishonest majority corruptions).
For the specific case of two parties, this means that it is impossible to achieve everlasting security against {\em both} players, without relying on special tools like trusted/ideal hardware. 
Nevertheless, 
it is still possible to obtain 
everlasting (or even the stronger notion of statistical) security against one unbounded participant (see eg.,~\cite{TCC:KhuMug20} and references therein).
But in {\em all existing protocols}, which party may be unbounded and which one must be assumed to be computationally bounded must necessarily be fixed {\em before protocol execution}.
We ask if this is necessary. That is,

\begin{center}
    {\em Can participants transfer everlasting security from one party\\ to another even after a protocol has already been executed?}
\end{center}

We show that the answer is yes, under the weak cryptographic assumption that (post-quantum) statistically-binding computationally-hiding bit commitments exist. These commitments can in turn be based on one-way functions~\cite{C:Naor89} or even pseudo-random quantum states~\cite{cryptoeprint:2021:1691,cryptoeprint:2021:1663}.

We illustrate our novel security property by considering it in the context of Yao's classic millionaire problem~\cite{FOCS:Yao82b}. Stated simply, this toy problem requires two millionaires to securely compute who is richer without revealing to each other or anyone else information about their wealth.
That is, the goal is to only reveal the bit indicating whether $x_1 > x_2$ where
$x_1$ is Alice’s private input and $x_2$ is Bob’s private input.
In our extension, the millionaires would also like (certified) everlasting security against the wealthier party, while maintaining standard simulation-based security against the other party. Namely, if $x_1 > x_2$ then the protocol should satisfy certified everlasting security against Alice and standard simulation-based security against computationally bounded Bob; and if it turns out that $x_2 \geq x_1$, then the protocol should satisfy certified everlasting security against Bob and simulation-based security against bounded Alice.

More generally, our goal is to enable any one party (or a subset of parties) to dynamically and certifiably information-theoretically delete other participants' inputs, during or even after a secure computation protocol completes. At the same time, the process of deletion should not destroy standard simulation-based security. 

We build a two-party protocol that is (a) designed to be secure against computationally {\em unbounded Alice} and computationally {\em bounded Bob}. In addition, even after the protocol ends, (b) Bob 
has the capability to generate a proof whose validity 
certifies that the protocol has now become secure against {\em unbounded Bob} while remaining secure against {\em bounded Alice}. 
In other words, verification of the proof implies that everlasting security roles have switched: this is why we call this property {\em everlasting security transfer}.
This 
implies zero-knowledge proofs for NP/QMA with certified everlasting ZK as a special case.
We also extend this result to obtain \emph{multi-party computation} where even after completion of the protocol, any arbitrary subset of parties can 
certifiably, information-theoretically remove information about the other party inputs from their view. 

At a high level, we build these protocols by carefully combining \cref{thm:main-intro} with additional techniques to ensure that having one party generate a certificate of deletion does not ruin standard (simulation-based, computational) security against the other party.

In what follows, we provide a detailed overview of our techniques.

\subsection{Techniques}\label{subsec:tech-overview}

We first provide an overview of our proof of \cref{thm:main-intro}.

Our construction and analysis include a couple of crucial differences from previous work on certified deletion. First, our analysis diverges from recent work~\cite{10.1007/978-3-030-64381-2_4,cryptoeprint:2022:295} that relies on ``generalized uncertainty relations'' which provide lower bounds on the sum of entropies resulting from two incompatible measurements, and instead builds on the simple but powerful ``quantum cut-and-choose'' formalism of Bouman and Fehr \cite{C:BouFeh10}. Next, we make crucial use of an \emph{unseeded} randomness extractor (the XOR function), as opposed to a seeded extractor, as used by \cite{10.1007/978-3-030-64381-2_4}. 

\paragraph{Delaying the dependence on $b$.} A key tension that must be resolved when proving a claim like \cref{thm:main-intro} is the following: how to \emph{information-theoretically} remove the bit $b$ from the adversary's view, when it is initially information-theoretically \emph{determined} by the adversary's input. Our first step towards a proof is a simple change in perspective. We will instead imagine sampling the distribution by \emph{guessing} a uniformly random $b' \gets \{0,1\}$, and initializing the adversary with $\ket{x}_\theta, b',\cZ_\secp(\theta)$. Then, we abort the experiment (output $\bot$) if it happens that $b' \neq b \oplus \bigoplus_{i: \theta_i = 0} x_i$. Since $b'$ was a uniformly random guess, we always abort with probability exactly 1/2, and thus the trace distance between the $b=0$ and $b=1$ outputs of this experiment is at least half the trace distance between the outputs of the original experiment.\footnote{One might be concerned that extending this argument to multi-bit messages may eventually reduce the advantage by too much, since the entire message must be guessed. However, it actually suffices to prove \cref{thm:main-intro} for single bit messages and then use a bit-by-bit hybrid argument to obtain security for any polynomial-length message.}

Now, the bit $b$ is only used by the experiment to determine whether or not to output $\bot$. This is not immediately helpful, since the result of this ``abort decision'' is of course included in the output of the experiment. However, we can make progress by delaying this abort decision (and thus, the dependence on $b$) until \emph{after} the adversary outputs $x'$ and their residual state on register $\sA'$. To do so, we will make use of a common strategy in quantum cryptographic proofs: replace the BB84 states $\ket{x}_\theta$ with halves of EPR pairs $\frac{1}{\sqrt{2}}(\ket{00} + \ket{11})$. Let $\sC$ be the register holding the ``challenger's'' halves of EPR pairs, and $\sA$ be the register holding the other halves, which is part of the adversary's input. This switch is perfectly indistinguishable from the adversary's perspective, and it allows us to \emph{delay} the measurement of $\sC$ in the $\theta$-basis (and thus, delay the determination of the string $x$ and subsequent abort decision), until after the adversary outputs $(x',\sA')$.

We still have not shown that when the deletion certificate is accepted, information about $b$ doesn't exist in the output of the experiment. However, note that at this point it suffices to argue that $\bigoplus_{i: \theta_i = 0}x_i$ is distributed like a uniformly random bit, even conditioned on the adversary's ``side information'' on register $\sA'$ (which may be entangled with $\sC$). This is because, if $\bigoplus_{i: \theta_i = 0}x_i$ is uniformly random, then the outcome of the abort decision, whether $b'$ = $b \oplus \bigoplus_{i: \theta_i = 0} x_i$, is also a uniformly random bit, regardless of $b$.

\paragraph{Identifying an efficiently-checkable predicate.} 
To prove
that $\bigoplus_{i: \theta_i = 0}x_i$ is uniformly random, we will need to establish that the measured bits $\{x_i\}_{i: \theta_i = 0}$ contain sufficient entropy. To do this, we will need to make some claim about the structure of the state on registers $\sC_{i: \theta_i = 0}$. These registers are measured in the computational basis to produce $\{x_i\}_{i: \theta_i = 0}$, so if we could claim that these registers are in a Hadamard basis state, we would be done. We won't quite be able to claim something this strong, but we don't need to. Instead, we will rely on the following claim:
consider any (potentially entangled) state on systems $\sX$ and $\sY$, such that the part of the state on system $\sY$ is in a superposition of Hadamard basis states $\ket{u}_1$ where each $u$ is a vector of somewhat low Hamming weight.\footnote{It suffices to require that the relative Hamming weight of each $u$ is $< 1/2$.} Then, measuring $\sY$ in the computational basis and computing the XOR of the resulting bits produces a bit that is \emph{uniformly random and independent} of system $\sX$.\footnote{This proof strategy is inspired by the techniques of \cite{C:BouFeh10}, who show a similar claim using a \emph{seeded} extractor.} This claim can be viewed as saying that XOR is a good (seedless) randomness extractor for the quantum source of entropy that results from measuring certain structured states in the conjugate basis. Indeed, such a claim was developed to remove the need for \emph{seeded} randomness extraction in applications like quantum oblivious transfer \cite{ABKK}, and it serves a similar purpose here.\footnote{If we had tried to rely on generic properties of a seeded randomness extractor, as done in \cite{10.1007/978-3-030-64381-2_4}, 
we would still have had to deal with the fact the adversary's view includes an encryption of the seed, which is required to be \emph{uniform and independent} of the source of entropy. Even if the challenger's state can be shown to produce a sufficient amount of min-entropy when measured in the standard basis, we cannot immediately claim that this source of entropy is perfectly independent of the seed of the extractor. Similar issues with using seeded randomness extraction in a related context are discussed by \cite{EC:Unruh14} in their work on revocable timed-release encryption.}

Thus, it suffices to show that the state on registers $\sC_{i: \theta_i = 0}$ is only supported on low Hamming weight vectors in the Hadamard basis.
A priori, it is not clear why this would even be true, since $\sC,\sA$ are initialized with EPR pairs, and the adversary, who has access to $\sA$, can simply measure its halves of these EPR pairs in the computational basis. However, recall that the experiment we are interested in only outputs the adversary's final state when its certificate of deletion is valid, and moreover, a valid deletion certificate is a string $x'$ that matches $x$ in all the \emph{Hadamard} basis positions. Moreover, which positions will be checked is semantically hidden from the adversary. Thus, in order to be sure that it passes the verification, an adversary should intuitively be measuring most of its registers $\sA$ in the Hadamard basis.

\paragraph{Reducing to semantic security.}
One remaining difficulty in formalizing this intuition is that if the adversary knew $\theta$, it could decide which positions to measure in the Hadamard basis to pass the verification check, and then measure $\sA_{i:\theta_i = 0}$ in the computational basis in order to thwart the above argument from going through. And in fact, the adversary \emph{does} have information about $\theta$, encoded in the distribution $\cZ_\secp(\theta)$. 

This is where the assumption that $\cA_\secp$ cannot distinguish between $\cZ_\secp(\theta)$ and $\cZ_\secp(0^\secp)$ comes into play. We interpret the condition that registers $\sC_{i:\theta_i = 0}$ must be in a superposition of low Hamming weight vectors in the Hadamard basis (or verification doesn't pass) as an efficient predicate (technically a binary projective measurement) that can be checked by a reduction to the indistinguishability of distributions $\cZ_\secp(\theta)$ and $\cZ_\secp(0^\secp)$. Thus, this predicate must have roughly the same probability of being true when the adversary receives $\cZ_\secp(0^\secp)$. But now, since $\theta$ is independent of the adversary's view, we can show \emph{information-theoretically} that this predicate must be true with overwhelming probability. 


We note that the broad strategy of identifying an efficiently-checkable predicate which implies the \emph{uncheckable property that some information is random and independent of the adversary's view} has been used in similar (quantum cryptographic) contexts by Gottesman \cite{Gottesman2003UncloneableE} in their work on the related concept of \emph{uncloneable} (or perhaps more accurately, \emph{tamper-detectable}) encryption\footnote{In this notion, the adversary is an eavesdropper who sits between a ciphertext generator Alice and a ciphertext receiver Bob (using a symmetric-key encryption scheme), who attempts to learn some information about the ciphertext. The guarantee is that, \emph{either} the eavesdropper gains information-theoretically no information about the underlying plaintext, \emph{or} Bob can detect that the ciphertext was tampered with. While this is peripherally related to our setting, \cite{Gottesman2003UncloneableE} does not consider public-key encryption, and moreover Bob's detection procedure is quantum.} and by Unruh \cite{EC:Unruh14} in their work on revocable timed-release encryption.

\paragraph{Application: A variety of encryption schemes with certified deletion.} 
For any $X \in$ \{public-key encryption, attribute-based encryption, witness encryption, statistically-binding commitment, timed-release encryption\}, 
we immediately obtain ``$X$ with certified deletion'' by instantiating the distribution $\cZ_\lambda$ with the encryption/encoding procedure for $X$, and additionally encrypting/encoding the bit $b\oplus\bigoplus_{i:\theta_i = 0}x_i$ to ensure that semantic security holds regardless of whether the adversary deletes the ciphertext or not. 

Similarly, if $\cZ_\secp$ is instantiated with the encryption procedure for a \emph{fully-homomorphic} encryption scheme \cite{10.1145/1536414.1536440,6108154,C:GenSahWat13}, then the scheme also allows for arbitrary homomorphic operations over the ciphertext. We also note that such a scheme can be used for blind delegation with certified deletion, allowing a weak client to outsource computations to a powerful server and subsequently verify deletion of the plaintext.
In particular, a server may perform homomorphic evaluation coherently (i.e. by not performing any measurements), and return the register containing the output to the client. 
The client can coherently decrypt this register to obtain a classical outcome, then reverse the decryption operation and return the output register to the server. Finally, the server can use this register to reverse the evaluation operation and recover the original ciphertext. Then, the server can prove deletion of the original plaintext as above, i.e. measure the quantum state associated with this ciphertext in the Hadamard basis, and report the outcomes as their certificate.

\paragraph{Application: Secure computation with Everlasting Security Transfer (EST).}
Recall that in building two-party computation with EST, the goal is to build protocols
(a) secure against {\em unbounded Alice} and computationally {\em bounded Bob} such that, during or even after the protocol ends, (b) Bob 
can generate a proof whose validity 
certifies that the protocol has now become secure against {\em unbounded Bob} while remaining secure against {\em bounded Alice}.

Our goal is to realize two-party secure computation with EST from minimal cryptographic assumptions. 
We closely inspect a class of protocols for secure computation that do not a-priori have any EST guarantees, and develop techniques to equip them with EST. 

In particular, we observe that 
a key primitive called quantum oblivious transfer (QOT) is known to unconditionally imply secure computation of {\em all classical (and quantum) circuits}~\cite{STOC:Kilian88,C:CreVanTap95,EC:DGJMS20}. 
Namely, given OT with information-theoretic security, it is possible to build secure computation with everlasting (and even unconditional) security against unbounded participants. We recall that information-theoretically secure OT cannot exist in the plain model, even given quantum resources~\cite{Lo}.
However, for the case of EST, we 
establish a general sequential composition theorem (\cref{thm:compose-EST}) 
which shows that oblivious transfer with EST can be plugged into the above unconditional protocols to yield secure computation protocols with EST. 

Furthermore, a recent line of work~\cite{FOCS:CreKil88,C:DFLSS09,C:BouFeh10,BCKM2021,EC:GLSV21} establishes {\em ideal} commitments\footnote{The term ``ideal committment'' can sometimes refer to the commitment \emph{ideal funtionality}, but in this work we use the term ideal commitment to refer to a \emph{real-world protocol} that can be shown to securely implement the commitment ideal functionality.} as the basis for QOT. Intuitively, these are commitments that satisfy the (standard) notion of simulation-based security against computationally bounded quantum committers and receivers. Namely, for every adversarial committer (resp., receiver) that interacts with an honest receiver (resp., committer) in the real protocol, there is a simulator that interacts with the ideal commitment functionality and generates a simulated state that is indistinguishable from the committer's (resp., receiver's) state in the real protocol.
%
Our composition theorem (\cref{thm:compose-EST}) combined with~\cite{BCKM2021} also immediately shows that ideal commitments {\em with EST} imply QOT with EST. Thus, the problem reduces to building ideal commitments with EST.

%

\paragraph{Constructing Ideal Commitments with EST.}
An ideal commitment with EST 
satisfies statistical simulation-based security against unbounded committers, and computational simulation-based security against bounded receivers. Furthermore, after an optional delete/transfer phase succeeds, everlasting security is {\em transfered}: that is, then the commitment satisfies statistical (simulation-based) security against unbounded receivers, and remains computationally (simulation-based) secure against bounded committers.

To build ideal commitments with EST, we start with any commitment that satisfies standard computational hiding, and a strong form of binding: namely, simulation-based security against an unbounded malicious committer. At a high level, this means that there is an efficient extractor that can extract the input committed by an unbounded committer, thereby statistically simulating the view of the adversarial committer in its interaction with the ideal commitment functionality.
We call this a \emph{computationally-hiding statistically-efficiently-extractable} (CHSEE) commitment, and 
observe that prior work (\cite{BCKM2021}) builds such commitments from black-box use of any statistically-binding, computationally-hiding commitment. 
Our construction of ideal commitments with EST starts with CHSEE commitments, and proceeds in two steps, where the first involves new technical insights and the second follows from ideas in prior work~\cite{BCKM2021}.\\

\noindent{\bf Step 1: One-Sided Ideal Commitments with EST.} While CHSEE commitments satisfy simulation-based security against a malicious committer, they do not admit security transfer. Therefore, our first step is to add the EST property to CHSEE commitments, which informally additionally allows receivers to certifiably, information-theoretically, delete the committed input. We call the resulting primitive {\em one-sided ideal commitments with EST}. The word ``one-sided'' denotes that these commitments satisfy simulation-based security against any malicious committer, but are not necessarily simulation-secure against malicious receivers. Instead, these commitments semantically hide the committed bit from a malicious receiver and furthermore, support certified everlasting hiding against malicious receivers.
%
    
We observe that invoking \cref{thm:main-intro} while instantiating $\cZ_\secp$ with a CHSEE commitments already helps us add the certified everlasting hiding property to any CHSEE commitment. While this ensures the desired certified everlasting security against malicious receivers, the scheme appears to become insecure against malicious committers after certified deletion! 

To see why, recall that the resulting commitment is now
$\ket{x}_\theta, \Com\left(\theta, b'\right)$, where $\Com$ is a CHSEE commitment and $b' = b\oplus \bigoplus_{i: \theta_i = 0}x_i$.
In particular, to simulate (i.e., to extract the bit committed by) a malicious committer $\cC^*$, a simulator must extract the bases $\theta$ and masked bit $b'$ from the CHSEE commitment, measure the accompanying state $\ket{\psi}$ in basis $\theta$ to recover $x$, and then XOR the parity $\bigoplus_{i:\theta_i = 0} x_i$ with $b'$ to obtain the committed bit $b$. Thus, the 
simulator will have to first measure qubits of $\ket{\psi}$ that correspond to $\theta_i = 0$ in the computational basis to recover $x_i$ values at these positions. If the committer makes a delete request after this point, the simulator must measure {\em all positions} in the Hadamard basis to generate the certificate of deletion. But consider a cheating committer that (maliciously) generates the qubit at a certain position (say $i = 1$) as a half of an EPR pair, keeping the other half to itself. Next, this committer commits to $\theta_i = 0$ (i.e., computational basis) corresponding to the index $i = 1$. The simulation strategy outlined above will first measure the first qubit of $\ket{\psi}$ in the computational basis, and then later in the Hadamard basis to generate a deletion certificate. On the other hand, an honest receiver will only ever measure this qubit in the Hadamard basis to generate a deletion certificate. This makes it easy for such a committer to distinguish simulation from an honest receiver strategy, simply by measuring its half of the EPR pair in the Hadamard basis, thereby breaking simulation security post-deletion.
    
To prevent this attack, we modify the scheme so that the committer $\cC^*$ {\em only ever obtains} the receiver's outcomes of Hadamard basis measurements on indices where the committed $\theta_i = 1$. In particular, we make the delete phase interactive: the receiver will first commit to all measurement outcomes in Hadamard bases, $\cC^*$ will then decommit to $\theta$, and then finally the receiver will {\em only} open the committed measurement outcomes on indices $i$ where $\theta_i = 1$.
Against malicious receivers, we prove that this scheme is computationally hiding before deletion, and is certified everlasting hiding after deletion. 
Against a malicious committer, we prove statistical simulation-based security before deletion, and show that computational simulation-based security holds \emph{even after deletion}.\\
    
    \noindent{\bf Step 2: Ideal Commitments with EST.} Next, we upgrade the one-sided ideal commitments with EST obtained above to build (full-fledged) ideal commitments with EST. Recall that the one-sided ideal commitments with EST do not satisfy simulation-based security against malicious receivers. Intuitively,  simulation-based security against malicious receivers requires the existence of a simulator that interacts with a malicious receiver to produce a state in the commit phase, that can later be opened (or {\em equivocated}) to a bit that is only revealed to the simulator at the end of the commit phase. We show that this property can be generically obtained (with EST) by relying on a previous compiler, namely an {\em equivocality compiler} from~\cite{BCKM2021}. 
    We defer additional details of this step to~\cref{subsec:ideal} since this essentially follows from ideas in prior work~\cite{BCKM2021}.
    This also completes an overview of our techniques.

\ifsubmission
\paragraph{Roadmap.} We refer the reader to \cref{sec:main} for the proof of our main theorem, \cref{sec:app} and \cref{sec:additional} for a variety of encryption and commitment schemes with everlasting security, and \cref{sec:2pc-est} for details on building secure computation with everlasting security transfer.

\else
\paragraph{Roadmap.} We refer the reader to \cref{sec:main} for the proof of our main theorem, \cref{sec:app} for a variety of encryption and commitment schemes with everlasting security, and \cref{sec:2pc-est} for details on building secure computation with everlasting security transfer.
\fi

\subsection{Concurrent and independent work}\label{subsec:concurrent}

Subsequent to the original posting of our paper on arXiv, an updated version of \cite{cryptoeprint:2022:295} was posted with some independent new results on fully-homomorphic encryption with certified deletion. The updated FHE scheme with certified deletion is shown to satisfy standard semantic security, but under a newly introduced conjecture that a particular hash function is ``strong Gaussian-collapsing''. Proving this conjecture based on a standard assumption such as LWE is left as an open problem in \cite{cryptoeprint:2022:295}. Thus, the FHE scheme presented in our paper is the first to satisfy certified deletion based on a standard assumption (and in addition satisfies \emph{everlasting hiding}). On the other hand, the updated scheme of \cite{cryptoeprint:2022:295} also satisfies the property of publicly-verifiable deletion, which we do not consider in this work.

Also, a concurrent and independent work of Hiroka et al. \cite{cryptoeprint:2022/969} was posted shortly after the original posting of our paper. In \cite{cryptoeprint:2022/969}, the authors construct public-key encryption schemes satisfying the definition of security that we use in this paper: certified everlasting security. However, their constructions are either in the \emph{quantum random oracle model}, or require a \emph{quantum} certificate of deletion. Thus, our construction of PKE with certified everlasting security, which is simple, in the plain model, and has a classical certificate of deletion, subsumes these results. On the other hand,~\cite{cryptoeprint:2022/969} introduce and construct the primitive of (bounded-collusion) \emph{functional encryption} with certified deletion, which we do not consider in this work.

\section{Preliminaries}

Let $\secp$ denote the security parameter. We write $\negl(\cdot)$ to denote any \emph{negligible} function, which is a function $f$ such that for every constant $c \in \mathbb{N}$ there exists $N \in \mathbb{N}$ such that for all $n > N$, $f(n) < n^{-c}$.

Given an alphabet $A$ and string $x \in A^n$, let $h(x)$ denote the Hamming weight (number of non-zero indices) of $x$, and $\omega(x) \coloneqq h(x)/n$ denote the \emph{relative} Hamming weight of $x$. Given two strings $x,y \in \{0,1\}^n$, let $\Delta(x,y) \coloneqq \omega(x \oplus y)$ denote the \emph{relative Hamming distance} between $x$ and $y$.

\subsection{Quantum preliminaries}

A register $\sX$ is a named Hilbert space $\bbC^{2^n}$. A pure quantum state on register $\sX$ is a unit vector $\ket{\psi}^{\sX} \in \bbC^{2^n}$, and we say that $\ket{\psi}^{\sX}$ consists of $n$ qubits. A mixed state on register $\sX$ is described by a density matrix $\rho^{\sX} \in \bbC^{2^n \times 2^n}$, which is a positive semi-definite Hermitian operator with trace 1. 

A \emph{quantum operation} $F$ is a completely-positive trace-preserving (CPTP) map from a register $\sX$ to a register $\sY$, which in general may have different dimensions. That is, on input a density matrix $\rho^{\sX}$, the operation $F$ produces $F(\rho^{\sX}) = \tau^{\sY}$ a mixed state on register $\sY$. We will sometimes write a quantum operation $F$ applied to a state on register $\sX$ and resulting in a state on register $\sY$ as $\sY \gets F(\sX)$. Note that we have left the actual mixed states on these registers implicit in this notation, and just work with the names of the registers themselves.

A \emph{unitary} $U: \sX \to \sX$ is a special case of a quantum operation that satisfies $U^\dagger U = U U^\dagger = \bbI^{\sX}$, where $\bbI^{\sX}$ is the identity matrix on register $\sX$. A \emph{projector} $\Pi$ is a Hermitian operator such that $\Pi^2 = \Pi$, and a \emph{projective measurement} is a collection of projectors $\{\Pi_i\}_i$ such that $\sum_i \Pi_i = \bbI$.

Let $\Tr$ denote the trace operator. For registers $\sX,\sY$, the \emph{partial trace} $\Tr^{\sY}$ is the unique operation from $\sX,\sY$ to $\sX$ such that for all $(\rho,\tau)^{\sX,\sY}$, $\Tr^{\sY}(\rho,\tau) = \Tr(\tau)\rho$. The \emph{trace distance} between states $\rho,\tau$, denoted $\TD(\rho,\tau)$ is defined as \[\TD(\rho,\tau) \coloneqq \frac{1}{2}\|\rho-\tau\|_1 \coloneqq \frac{1}{2}\Tr\left(\sqrt{(\rho-\tau)^\dagger(\rho-\tau)}\right).\] We will often use the fact that the trace distance between two states $\rho$ and $\tau$ is an upper bound on the probability that any (unbounded) algorithm can distinguish $\rho$ and $\tau$. When clear from context, we will write $\TD(\sX,\sY)$ to refer to the trace distance between a state on register $\sX$ and a state on register $\sY$.

\begin{lemma}[Gentle measurement \cite{DBLP:journals/tit/Winter99}]\label{lemma:gentle-measurement}
Let $\rho^{\sX}$ be a quantum state and let $(\Pi,\bbI-\Pi)$ be a projective measurement on $\sX$ such that $\Tr(\Pi\rho) \geq 1-\delta$. Let \[\rho' = \frac{\Pi\rho\Pi}{\Tr(\Pi\rho)}\] be the state after applying $(\Pi,\bbI-\Pi)$ to $\rho$ and post-selecting on obtaining the first outcome. Then, $\TD(\rho,\rho') \leq 2\sqrt{\delta}$.
\end{lemma}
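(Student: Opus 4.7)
The plan is to use the standard purification trick combined with the Fuchs--van de Graaf relation between fidelity and trace distance, together with the monotonicity of trace distance under partial trace (i.e., under any CPTP map). This reduces the mixed-state statement to a statement about the overlap of two pure states, where the inner product can be read off directly from the success probability of the measurement.

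First, I would introduce an auxiliary register $\sZ$ and fix a purification $\ket{\psi}^{\sX\sZ}$ of $\rho^{\sX}$, so that $\Tr^{\sZ}(\ket{\psi}\!\bra{\psi}) = \rho$. Next, define the (normalized) coherent post-measurement pure state
\[
\ket{\phi}^{\sX\sZ} \;\coloneqq\; \frac{(\Pi\otimes \bbI^{\sZ})\ket{\psi}}{\bigl\lVert (\Pi\otimes \bbI^{\sZ})\ket{\psi}\bigr\rVert}.
\]
The norm in the denominator equals $\sqrt{\bra{\psi}(\Pi\otimes \bbI)\ket{\psi}} = \sqrt{\Tr(\Pi\rho)}$, which is at least $\sqrt{1-\delta}$ by hypothesis. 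A direct calculation then shows that $\Tr^{\sZ}(\ket{\phi}\!\bra{\phi}) = \Pi\rho\Pi/\Tr(\Pi\rho) = \rho'$, so $\ket{\psi}$ and $\ket{\phi}$ are purifications of $\rho$ and $\rho'$ respectively, sharing the same purifying register.

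Now I would compute the overlap between the two purifications:
\[
|\langle \psi | \phi\rangle| \;=\; \frac{\bra{\psi}(\Pi\otimes \bbI)\ket{\psi}}{\bigl\lVert(\Pi\otimes \bbI)\ket{\psi}\bigr\rVert} \;=\; \bigl\lVert(\Pi\otimes \bbI)\ket{\psi}\bigr\rVert \;\geq\; \sqrt{1-\delta}.
\]
For two pure states, the trace distance and the squared overlap are related by the identity $\TD(\ket{\psi}\!\bra{\psi},\ket{\phi}\!\bra{\phi}) = \sqrt{1 - |\langle\psi|\phi\rangle|^2}$, which yields
\[
\TD\bigl(\ket{\psi}\!\bra{\psi},\,\ket{\phi}\!\bra{\phi}\bigr) \;\leq\; \sqrt{1-(1-\delta)} \;=\; \sqrt{\delta}.
\]
Finally, I would invoke monotonicity of trace distance under the partial trace $\Tr^{\sZ}$ (itself a CPTP map), giving $\TD(\rho,\rho') \leq \sqrt{\delta} \leq 2\sqrt{\delta}$, which establishes the lemma.

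The only mildly delicate point is the pure-state identity $\TD = \sqrt{1 - F}$, which I would either cite directly from standard references or derive in one line by diagonalizing $\ket{\psi}\!\bra{\psi} - \ket{\phi}\!\bra{\phi}$ in the two-dimensional subspace they span and reading off its two non-zero eigenvalues as $\pm\sqrt{1-|\langle\psi|\phi\rangle|^2}$. Everything else is routine; in particular, the stated bound has a factor of $2$ slack, so there is no need to optimize either the purification argument or the renormalization step.
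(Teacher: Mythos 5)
Your proposal is correct. Note that the paper itself does not prove this lemma at all: it is imported by citation from Winter's paper, so there is no in-paper argument to compare against. Your purification argument is a standard and fully rigorous way to establish it, and every step checks out: the coherent post-measurement purification $\ket{\phi} = (\Pi\otimes\bbI)\ket{\psi}/\lVert(\Pi\otimes\bbI)\ket{\psi}\rVert$ does trace down to $\rho'$ (because $\Pi$ acts only on $\sX$ and so commutes with $\Tr^{\sZ}$), the overlap computation $|\langle\psi|\phi\rangle| = \lVert(\Pi\otimes\bbI)\ket{\psi}\rVert = \sqrt{\Tr(\Pi\rho)} \geq \sqrt{1-\delta}$ correctly uses $\Pi^2=\Pi$, the pure-state identity $\TD = \sqrt{1-|\langle\psi|\phi\rangle|^2}$ is standard (and your two-eigenvalue derivation is fine), and data processing under $\Tr^{\sZ}$ finishes it. In fact your argument gives the sharper bound $\TD(\rho,\rho')\leq\sqrt{\delta}$, which is tight for pure $\rho$; the factor $2$ in the stated lemma is slack that matters only for more general variants (e.g.\ arbitrary POVM elements in place of projectors, where the overlap step degrades to $\sqrt{2\delta}$, or comparisons with the unnormalized state $\Pi\rho\Pi$ as in Winter's original formulation). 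The only cosmetic caveat is that $\rho'$ must be well-defined, i.e.\ $\Tr(\Pi\rho)>0$, which is implicit in the hypothesis for $\delta<1$ and trivial otherwise.
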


We will make use of the convention that $0$ denotes the computational basis $\{\ket{0},\ket{1}\}$ and $1$ denotes the Hadamard basis $\left\{\frac{\ket{0} + \ket{1}}{\sqrt{2}}, \frac{\ket{0} - \ket{1}}{\sqrt{2}}\right\}$. For a bit $r \in \{0,1\}$, we write $\ket{r}_0$ to denote $r$ encoded in the computational basis, and $\ket{r}_1$ to denote $r$ encoded in the Hadamard basis. For strings $x,\theta \in \{0,1\}^\secp$, we write $\ket{x}_\theta$ to mean $\ket{x_1}_{\theta_1},\dots,\ket{x_\secp}_{\theta_\secp}$.

A non-uniform quantum polynomial-time (QPT) machine $\{\cA_\secp,\ket{\psi}_\secp\}_{\secp \in \bbN}$ is a family of polynomial-size quantum machines $\cA_\secp$, where each is initialized with a polynomial-size advice state $\ket{\psi_\secp}$. Each $\cA_\secp$ is in general described by a CPTP map. Similar to above, when we write $\sY \gets \cA(\sX)$, we mean that the machine $\cA$ takes as input a state on register $\sX$ and produces as output a state on register $\sY$, and we leave the actual descripions of these states implicit. Finally, a quantum \emph{interactive} machine is simply a sequence of quantum operations, with designated input, output, and work registers.

\subsection{The XOR extractor}

We make use of a result from \cite{ABKK} which shows that the XOR function is a good randomness extractor from certain \emph{quantum} sources of entropy, even given quantum side information. We include a proof here for completeness.

\begin{importedtheorem}[\cite{ABKK}]\label{thm:XOR-extractor}
Let $\sX$ be an $n$-qubit register, and consider any quantum state $\ket{\gamma}^{\sA,\sX}$ that can be written as \[\ket{\gamma}^{\sA,\sX} = \sum_{u: h(u) < n/2} \ket{\psi_u}^{\sA} \otimes \ket{u}^{\sX},\] where $h(\cdot)$ denotes the Hamming weight. Let $\rho^{\sA,\sP}$ be the mixed state that results from measuring $\sX$ in the Hadamard basis to produce a string $x \in \{0,1\}^n$, and writing $\bigoplus_{i \in [n]}x_i$ into a single qubit register $\sP$. Then it holds that \[\rho^{\sA,\sP} = \Tr^{\sX}(\ket{\gamma}\bra{\gamma}) \otimes \left(\frac{1}{2}\ket{0}\bra{0} + \frac{1}{2}\ket{1}\bra{1}\right).\] 
\end{importedtheorem}

\begin{proof}
First, write the state on registers $\sA,\sX,\sP$ that results from applying Hadamard to $\sX$ and writing the parity, denoted by $p(x) \coloneqq \bigoplus_{i \in [n]}x_i$, to $\sP$:
\[\frac{1}{2^{n/2}}\sum_{x \in \{0,1\}^n}\left(\sum_{u:h(u) < n/2}(-1)^{u \cdot x}\ket{\psi_u}^{\sA}\right)\ket{x}^{\sX}\ket{p(x)}^{\sP} \coloneqq \frac{1}{2^{n/2}}\sum_{x \in \{0,1\}^n}\ket{\phi_x}^{\sA}\ket{x}^{\sX}\ket{p(x)}^{\sP}.\]
Then, tracing out the register $\sX$, we have that
\begin{align*}
    \rho^{\sA,\sP} &= \frac{1}{2^n}\sum_{x \in \{0,1\}^n}\ket{\phi_x}\ket{p(x)}\bra{p(x)}\bra{\phi_x} \\
    &= \frac{1}{2^n}\sum_{x: p(x) = 0}\ket{\phi_x}\bra{\phi_x} \otimes \ket{0}\bra{0} + \frac{1}{2^n}\sum_{x: p(x) = 1}\ket{\phi_x}\bra{\phi_x} \otimes \ket{1}\bra{1}\\
    &= \frac{1}{2^n}\sum_{x: p(x) = 0} \left(\sum_{u_1,u_2: h(u_1),h(u_2) < n/2}(-1)^{(u_1 \oplus u_2)\cdot x}\ket{\psi_{u_1}}\bra{\psi_{u_2}}\right) \otimes \ket{0}\bra{0}\\ &\ \ \ + \frac{1}{2^n}\sum_{x: p(x) = 1} \left(\sum_{u_1,u_2: h(u_1),h(u_2) < n/2}(-1)^{(u_1 \oplus u_2)\cdot x}\ket{\psi_{u_1}}\bra{\psi_{u_2}}\right) \otimes \ket{1}\bra{1} \\
    &= \sum_{u_1,u_2: h(u_1),h(u_2) < n/2}\ket{\psi_{u_1}}\bra{\psi_{u_2}} \otimes \left(\frac{1}{2^n}\sum_{x:p(x)=0}(-1)^{(u_1 \oplus u_2)\cdot x}\ket{0}\bra{0} + \frac{1}{2^n}\sum_{x:p(x)=1}(-1)^{(u_1 \oplus u_2)\cdot x}\ket{1}\bra{1}\right) \\ &= \sum_{u: h(u) < n/2} \ket{\psi_{u}}\bra{\psi_{u}} \otimes \left(\frac{1}{2}\ket{0}\bra{0} + \frac{1}{2}\ket{1}\bra{1}\right) \\ &= \Tr^{\sX}(\ket{\gamma}\bra{\gamma}) \otimes \left(\frac{1}{2}\ket{0}\bra{0} + \frac{1}{2}\ket{1}\bra{1}\right),
\end{align*}
where the 5th equality is due to the following claim, plus the observation that $u_1 \oplus u_2 \neq 1^n$ for any $u_1,u_2$ such that $h(u_1) < n/2$ and $h(u_2) < n/2$.

\begin{claim}
For any $u \in \{0,1\}^n$ such that $u \notin \{0^n,1^n\}$, it holds that \[\sum_{x:p(x)=0}(-1)^{u\cdot x} = \sum_{x:p(x)=1}(-1)^{u\cdot x} = 0.\]
\end{claim}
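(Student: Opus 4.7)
The plan is to relate the two conditioned sums to a pair of unconditioned character sums by taking their sum and difference. Let $\mathbf{1}^n$ denote the all-ones string, so that $p(x) = x_1 \oplus \cdots \oplus x_n = \mathbf{1}^n \cdot x \pmod 2$, and hence $(-1)^{p(x)} = (-1)^{\mathbf{1}^n \cdot x}$. Denote $S_0 \coloneqq \sum_{x: p(x)=0} (-1)^{u \cdot x}$ and $S_1 \coloneqq \sum_{x: p(x)=1} (-1)^{u \cdot x}$.

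Then I will compute:
\[
S_0 + S_1 = \sum_{x \in \{0,1\}^n} (-1)^{u \cdot x}, \qquad S_0 - S_1 = \sum_{x \in \{0,1\}^n} (-1)^{u \cdot x + p(x)} = \sum_{x \in \{0,1\}^n} (-1)^{(u \oplus \mathbf{1}^n) \cdot x}.
\]
Using the standard character orthogonality identity $\sum_{x \in \{0,1\}^n} (-1)^{v \cdot x} = 2^n \cdot \mathbbm{1}[v = 0^n]$, the first sum is nonzero only when $u = 0^n$, and the second is nonzero only when $u = \mathbf{1}^n$. Since the hypothesis excludes both cases, both $S_0 + S_1 = 0$ and $S_0 - S_1 = 0$, whence $S_0 = S_1 = 0$.

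There is no real obstacle here; the only thing to be careful about is recognizing that the parity $p(x)$ is itself an inner product $\mathbf{1}^n \cdot x$, which lets the difference sum collapse into another character sum. Everything else is immediate from Fourier/character orthogonality on $\{0,1\}^n$.
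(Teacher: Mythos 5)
Your proof is correct, and it takes a genuinely different route than the paper's. The paper argues combinatorially: it splits indices into the sets $\{i : u_i = 0\}$ and $\{i : u_i = 1\}$, parameterizes each summand $x$ by its restrictions to these two sets, and then counts directly that for each fixed restriction to the $u_i=1$ coordinates there are exactly $2^{|\{i : u_i=0\}|-1}$ completions with the prescribed total parity, reducing the whole thing to an alternating sum over $\{0,1\}^{|\{i : u_i = 1\}|}$ which vanishes. Your approach instead observes $p(x) = \mathbf{1}^n \cdot x$, forms $S_0 + S_1$ and $S_0 - S_1$, and collapses each into a single unconditioned character sum $\sum_x (-1)^{v \cdot x}$ with $v = u$ and $v = u \oplus \mathbf{1}^n$ respectively; orthogonality then kills both precisely because $u \notin \{0^n, \mathbf{1}^n\}$. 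The sum-and-difference trick is shorter and more transparently explains why the two boundary cases $u = 0^n$ and $u = \mathbf{1}^n$ are the exact exceptions, at the cost of invoking the Fourier orthogonality identity; the paper's version is self-contained counting but obscures the symmetry between the two excluded values of $u$. Both are valid, and your version is arguably the cleaner of the two.
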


\begin{proof}
For any such $u \notin \{0^n,1^n\}$, define $S_0 = \{i : u_i = 0\}$ and $S_1 = \{i : u_i = 1\}$. Then, for any $y_0 \in \{0,1\}^{|S_0|}$ and $y_1 \in \{0,1\}^{|S_1|}$, define $x_{y_0,y_1} \in \{0,1\}^n$ to be the $n$-bit string that is equal to $y_0$ when restricted to indices in $S_0$ and equal to $y_1$ when restricted to indices in $S_1$. Then,

\begin{align*}
    &\sum_{x:p(x)=0}(-1)^{u \cdot x} = \sum_{y_1 \in \{0,1\}^{|S_1|}}\sum_{y_0 \in \{0,1\}^{|S_0|}: p(x_{y_0,y_1})=0} (-1)^{u \cdot x_{y_0,y_1}}\\ &= \sum_{y_1 \in \{0,1\}^{|S_1|}}2^{|S_0|-1}(-1)^{1^{|S_1|} \cdot y_1} = 2^{|S_0|-1}\sum_{y_1 \in \{0,1\}^{|S_1|}}(-1)^{p(y_1)} = 0,
\end{align*}
where the second equality can be seen to hold by noting that for any fixed $y_1 \in \{0,1\}^{|S_1|}$, there are exactly  $2^{|S_0|-1}$ strings $y_0 \in \{0,1\}^{|S_0|}$ such that the parity of $x_{y_0,y_1}$ is 0. Finally, the same sequence of equalities can be seen to hold for $x : p(x) = 1$.
\end{proof}
\end{proof}

\subsection{Quantum rewinding}
We will make use of the following lemma from~\cite{STOC:Watrous06}.

\begin{lemma}\label{lem:qrl} Let $\cQ$ be a quantum circuit that takes $n$ qubits as input and outputs a classical bit $b$ and $m$ qubits. For an $n$-qubit state $\ket{\psi}$, let $p(\ket{\psi})$ denote the probability that $b = 0$ when executing $\cQ$ on input $\ket{\psi}$. Let $p_0, q \in (0,1)$ and $\epsilon \in (0,1/2)$ be such that:

\begin{itemize}
    \item For every $n$-qubit state $\ket{\psi},p_0 \leq p(\ket{\psi})$,
    \item For every $n$-qubit state $\ket{\psi}$, $|p(\ket{\psi})-q| < \epsilon$,
    \item $p_0(1-p_0) \leq q(1-q)$,
\end{itemize}

Then, there is a quantum circuit $\widehat{\cQ}$ of size $O\left(\frac{\log(1/\epsilon)}{4 \cdot p_0(1-p_0)} |\cQ|\right)$, taking as input $n$ qubits, and returning as output $m$ qubits, with the following guarantee. For an $n$ qubit state $\ket{\psi}$, let $\cQ_0(\ket{\psi})$ denote the output of $\cQ$ on input $\ket{\psi}$ conditioned on $b=0$, and let $\widehat{\cQ}(\ket{\psi})$ denote the output of $\widehat{\cQ}$ on input $\ket{\psi}$. Then, for any $n$-qubit state $\ket{\psi}$,

$$\mathsf{TD}\left(\cQ_0(\ket{\psi}),\widehat{\cQ}(\ket{\psi})\right) \leq 4\sqrt{\epsilon}\frac{\log(1/\epsilon)}{p_0(1-p_0)}.$$

\end{lemma}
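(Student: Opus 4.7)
The plan is to implement the standard Watrous rewinding procedure, which is a form of (fixed-point) amplitude amplification. Purify $\cQ$ into a unitary $U$ on $n+k$ wires so that on input $\ket{\psi}\ket{0^k}$ it produces a state of the form $\sqrt{p(\ket{\psi})}\,\ket{0}\ket{\phi_0(\psi)} + \sqrt{1-p(\ket{\psi})}\,\ket{1}\ket{\phi_1(\psi)}$. Then $\widehat{\cQ}$ is defined by applying $U$, measuring the first qubit; if it is $0$, output the remaining register; if it is $1$, apply $U^\dagger$, then flip the sign of the $\ket{0^k}$ ancilla, then apply $U$ again, and iterate this rewinding step $R = O(\log(1/\epsilon)/(p_0(1-p_0)))$ times. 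Equivalently, $\widehat{\cQ}$ alternates the reflections $R_0 = I - 2\Pi_{b=0}$ and $R_\psi = U(I - 2\ket{0^k}\bra{0^k}^{\mathrm{anc}})U^\dagger$ starting from $U\ket{\psi}\ket{0^k}$.

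The analysis I would carry out proceeds via Jordan's lemma applied to the two projectors $\Pi_{b=0}$ and $U\Pi_{\mathrm{init}}U^\dagger$, which decomposes the relevant Hilbert space into a direct sum of invariant subspaces of dimension at most two. On each $2$-dimensional block spanned by a pair $\ket{v_j},\ket{v_j^\perp}$, the operator $-R_0 R_\psi$ acts as a rotation by angle $2\theta_j$ with $\sin^2\theta_j$ equal to the ``local'' success probability obtained when the input is prepared in $\ket{v_j}$. Because the hypothesis $|p(\ket{\psi}) - q| \leq \epsilon$ holds for \emph{every} input state, every Jordan angle satisfies $\sin^2\theta_j \in [q-\epsilon, q+\epsilon]$; in particular all $\theta_j$ cluster near a common value $\theta^\star$ with $\sin^2\theta^\star = q$, and each $\theta_j$ is bounded away from $0$ and $\pi/2$ by roughly $\sqrt{p_0(1-p_0)}$.

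Given this near-uniform spectrum, standard amplitude amplification shows that after $R$ rounds, the amplitude of the iterated state inside $\Pi_{b=0}$ within each Jordan block is $\sin((2R+1)\theta_j)$. Choosing $R$ so that $(2R+1)\theta^\star \approx \pi/2$ would simultaneously drive every amplitude close to $1$ if the $\theta_j$'s were equal; as it is, they deviate from $\theta^\star$ by $O(\epsilon/\sqrt{q(1-q)})$, so the residual error per block after $R$ iterations is $O(R\epsilon/\sqrt{q(1-q)})$. Balancing $R$ against this error, and using $p_0(1-p_0) \leq q(1-q)$ to upper bound $1/\sqrt{q(1-q)}$ by $1/\sqrt{p_0(1-p_0)}$, yields the stated circuit size $O(\log(1/\epsilon)|\cQ|/(p_0(1-p_0)))$ and trace distance $O(\sqrt{\epsilon}\log(1/\epsilon)/(p_0(1-p_0)))$ from the post-selected output $\cQ_0(\ket{\psi})$.

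The main obstacle is the last step: translating the per-Jordan-block amplitude deviation into a global trace-distance bound on the mixed output state, uniformly over \emph{all} inputs $\ket{\psi}$. The subtlety is that different inputs decompose differently across Jordan blocks, and the rewinding succeeds only if the amplification error is small in every block simultaneously; this is precisely what the uniform bound $|p(\ket{\psi}) - q| \leq \epsilon$ buys us. I would handle this by writing $\sin((2R+1)\theta_j) = \sin((2R+1)\theta^\star) + O(R\epsilon/\sqrt{p_0(1-p_0)})$, bounding the Euclidean norm between the iterated state and its projection onto $\Pi_{b=0}$ blockwise, and then converting to trace distance via the Fuchs--van de Graaf inequality (or equivalently via gentle measurement, \cref{lemma:gentle-measurement}).
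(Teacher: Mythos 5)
A preliminary remark: the paper does not prove this lemma at all — it is imported verbatim from Watrous (STOC '06) — so your proposal has to be measured against Watrous's argument rather than against a proof in this paper. Your structural setup is sound and matches the modern view of that argument: purify $\cQ$ into $U$, consider the two projectors $\Pi_{b=0}$ and $U(\bbI\otimes\ket{0^k}\bra{0^k})U^\dagger$, and observe that because $|p(\ket{\psi})-q|<\epsilon$ holds for \emph{every} input state, every Jordan angle obeys $\sin^2\theta_j\in(q-\epsilon,q+\epsilon)$ (the angles are eigenvalues of the compression of $\Pi_{b=0}$ to the ancilla subspace, whose quadratic form is exactly $p(\cdot)$). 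That uniformity is indeed the reason rewinding preserves the conditional state.

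There are, however, two genuine gaps. First, the procedure you describe (measure $b$ each round, rewind on failure) and the analysis you give (pure alternation of the two reflections, amplitude $\sin((2R+1)\theta_j)$ after $R$ rounds, choose $R$ with $(2R+1)\theta^\star\approx\pi/2$) are not equivalent, and the latter cannot yield the stated bound. With a purely unitary iteration and one terminal measurement, $R$ is an integer, so $(2R+1)\theta^\star$ can miss $\pi/2$ by as much as $\theta^\star$; the residual failure amplitude can then be of order $\sin\theta^\star=\sqrt{q}$ no matter how large $R$ is (e.g.\ a constant for $q\approx 0.3$), so no choice of $R$ gives error $O(\sqrt{\epsilon}\log(1/\epsilon)/p_0(1-p_0))$, and the $\log(1/\epsilon)$ iteration count cannot arise this way. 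The exponential convergence in the lemma comes precisely from the intermediate measurements: within each Jordan block a failed measurement collapses the state onto the failure axis, each subsequent round succeeds independently with probability $\approx 4q(1-q)$, and — crucially — the post-success vector in each block is the same in every round, which is why in the exact case $\epsilon=0$ the conditional output is \emph{exactly} $\cQ_0(\ket{\psi})$. Your $\sin((2R+1)\theta_j)$ formula does not describe this measured process. Second, your final step — bounding the distance from the iterated state to its projection onto $\Pi_{b=0}$ and invoking gentle measurement or Fuchs--van de Graaf — only shows that the procedure terminates in the success subspace with high probability; it does not identify the output state. One must compare the conditional output with the specific state $\cQ_0(\ket{\psi})$, i.e.\ the normalized $\Pi_{b=0}U\ket{\psi}\ket{0^k}$, which requires tracking block by block that the relative weights of the success components after amplification match those of the initial projection (exactly true when all $\theta_j$ coincide, and off by a perturbation otherwise). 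Watrous closes this by comparing the real circuit to an ideal one whose success probability is exactly constant and bounding the resulting operator-norm perturbation, which is where the $\sqrt{\epsilon}$ (rather than $\epsilon$) in the bound originates; some argument of that kind is still needed to finish your proof.
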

\section{Main theorem}
\label{sec:main}

\begin{theorem}\label{thm:main}
Let $\{\cZ_\secp(\cdot,\cdot,\cdot)\}_{\secp \in \bbN}$ be a quantum operation with three arguments: a $\secp$-bit string $\theta$, a bit $b'$, and a $\secp$-bit quantum register $\sA$. Let $\mathscr{A}$ be a class of adversaries\footnote{Technically, we require that for any  $\{\cA_\secp\}_{\secp \in \bbN} \in \mathscr{A}$, every adversary $\cB$ with time and space complexity that is linear in $\secp$ more than that of $\cA_\secp$, is also in $\mathscr{A}$.} such that for all $\{\cA_\secp\}_{\secp \in \bbN} \in \mathscr{A}$, and for any string $\theta \in \{0,1\}^\secp$, bit $b' \in \{0,1\}$, and state $\ket{\psi}^{\sA,\sC}$ on $\secp$-bit register $\sA$ and arbitrary size register $\sC$,

\[\bigg|\Pr[\cA_{\secp}(\cZ_\secp(\theta,b',\sA),\sC) = 1] - \Pr[\cA_{\secp}(\cZ_\secp(0^\secp,b',\sA),\sC)=1] \bigg| = \negl(\secp).\] That is, $\cZ_\secp$ is semantically-secure against $\cA_\secp$ with respect to its first input. For any $\{\cA_\secp\}_{\secp \in \bbN} \in \mathscr{A}$, consider the following distribution $\left\{\widetilde{\cZ}_\secp^{\cA_\secp}(b)\right\}_{\secp \in \bbN, b \in \{0,1\}}$ over quantum states, obtained by running $\mathcal{A}_\lambda$ as follows.


\begin{itemize}
    \item Sample $x,\theta \gets \{0,1\}^\secp$ and initialize $\cA_\secp$ with \[\cZ_\secp\left(\theta, b \oplus \bigoplus_{i:\theta_i = 0}x_i,\ket{x}_\theta\right).\]
    \item $\cA_\secp$'s output is parsed as a string $x' \in \{0,1\}^\secp$ and a residual state on register $\sA'$.
    \item If $x_i = x_i'$ for all $i$ such that $\theta_i = 1$ then output $\sA'$, and otherwise output a special symbol $\bot$.
\end{itemize}

Then, 

\[\TD\left(\widetilde{\cZ}_\secp^{\cA_\secp}(0),\widetilde{\cZ}_\secp^{\cA_\secp}(1)\right) = \negl(\secp).\]
\end{theorem}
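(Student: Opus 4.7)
The plan is to follow the roadmap laid out in the techniques section. The core tension is that $b$ is initially information-theoretically determined by the adversary's input, yet we want to argue that conditioning on successful deletion makes $b$ statistically hidden. I would resolve this via a sequence of indistinguishability steps that progressively isolate the dependence of the experiment on $b$.

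First, I would reduce to the case where $b$ enters the experiment only through an abort decision. Replace the plaintext bit $b\oplus\bigoplus_{i:\theta_i=0}x_i$ fed into $\cZ_\secp$ with a uniformly sampled guess $b'\in\{0,1\}$, and append a post-hoc check that $b' = b\oplus\bigoplus_{i:\theta_i=0}x_i$, outputting $\bot$ otherwise. Since $b'$ is uniform, this check succeeds with probability exactly $1/2$ regardless of $b$, so the trace distance in the guess experiment lower-bounds the original trace distance up to a factor of $2$. Next, purify the BB84 encoding: replace $\ket{x}_\theta$ with halves of EPR pairs, with the challenger holding the companion register $\sC$ and deferring the basis-$\theta$ measurement until after the adversary produces $(x',\sA')$. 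This switch is perfectly indistinguishable; the deletion check becomes a Hadamard-basis measurement of $\sC_{i:\theta_i=1}$ compared against $x'_{i:\theta_i=1}$, and the abort decision depends on a computational-basis measurement of $\sC_{i:\theta_i=0}$ and the XOR of its outcomes against $b\oplus b'$. At this point, it suffices to show that once verification passes, the parity $\bigoplus_{i:\theta_i=0}x_i$ obtained from the final measurement is uniformly random and statistically independent of $\sA'$, since then the abort decision is $1/2$-biased regardless of $b$.

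Third, I would invoke the XOR extractor (the Imported Theorem of ABKK) to obtain the uniform-parity conclusion. This requires showing that post-verification the joint state on $(\sA',\sC_{i:\theta_i=0})$ is close to a superposition of Hadamard-basis vectors on $\sC_{i:\theta_i=0}$ whose relative Hamming weight lies strictly below $1/2$. Using that EPR pairs yield perfectly correlated outcomes under identical basis measurements, I would phrase the required predicate as a binary projective measurement that Hadamard-measures $\sC_{i:\theta_i=0}$ to obtain a string $y$ and tests $\Delta(y,x'_{i:\theta_i=0})<1/2-\delta$ for an appropriate $\delta$. The heart of the argument is to show this predicate holds with overwhelming probability conditioned on a valid deletion certificate. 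I would do this by a reduction to the semantic security of $\cZ_\secp$: replace $\cZ_\secp(\theta,\cdot,\cdot)$ with $\cZ_\secp(0^\secp,\cdot,\cdot)$, which is an efficient change visible only to a distinguisher running the adversary and performing the predicate measurement, both of which remain within $\mathscr{A}$ by the complexity-closure hypothesis. In the $0^\secp$ hybrid, $\theta$ is information-theoretically independent of the adversary's view, so a Hoeffding-style tail bound over the uniform random subset $\{i:\theta_i=1\}$ shows that whenever verification passes on that subset with non-negligible probability, the complementary-subset Hamming distance is concentrated on the same value with overwhelming probability. Semantic security then transfers this conclusion back to the real experiment, a gentle-measurement argument (Lemma on Gentle Measurement) ensures that checking the predicate negligibly disturbs the state, and the XOR extractor yields the desired uniform, $\sA'$-independent parity.

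The main obstacle I anticipate is step three: constructing the efficiently checkable predicate so that (i) its success probability in the $0^\secp$ hybrid is provably overwhelming by an information-theoretic sampling argument, (ii) its complexity keeps it within the class $\mathscr{A}$ so the semantic-security reduction goes through, and (iii) the structural conclusion it furnishes is close enough to the hypothesis required by the XOR extractor. Sequencing the gentle-measurement invocations so that disturbance from the predicate check does not accumulate and propagate into the final parity extraction, while still letting the extractor see a clean superposition over low-Hamming-weight Hadamard strings on $\sC_{i:\theta_i=0}$, is the delicate part of the argument.
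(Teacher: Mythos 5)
Your proposal is correct and follows essentially the same route as the paper's proof: guess $b'$ to defer the dependence on $b$ (losing a factor of $2$), purify with EPR pairs and delay the $\theta$-basis measurement, bound the projective predicate (verification passes on $\theta_1$ yet the $\theta_0$ part is Hamming-far from $x'$) information-theoretically via Hoeffding in the $0^\secp$ hybrid, transfer it back by a semantic-security reduction using the closure of $\mathscr{A}$, and finish with gentle measurement plus the XOR-extractor theorem. One small note: rather than conditioning on a valid certificate (which does not transfer across computational indistinguishability when the conditioning event may be rare), the paper bounds the unconditional probability of the joint event, exactly as your Hoeffding step implicitly does, and gentle measurement is invoked only once on this single predicate, so no accumulation issue arises.
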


\begin{remark}
We note that, in fact, the above theorem is true as long as $x,\theta$ are $\omega(\mathsf{log} \secp)$ bits long.
\end{remark}






\begin{proof} We define a sequence of hybrid distributions.

\begin{itemize}
    \item $\Hyb_0(b):$ This is the distribution $\left\{\widetilde{\cZ}_\secp^{\cA_\secp}(b)\right\}_{\secp \in \bbN}$ described above.
    
    \item $\Hyb_1(b):$ This distribution is sampled as follows.
    \begin{itemize}
        \item Prepare $\secp$ EPR pairs $\frac{1}{\sqrt{2}}(\ket{00} + \ket{11})$ on registers $(\sC_1,\sA_1),\dots,(\sC_\secp,\sA_\secp)$. Define $\sC \coloneqq \sC_1,\dots,\sC_\secp$ and $\sA \coloneqq \sA_1,\dots,\sA_\secp$.
        \item Sample $\theta \gets \{0,1\}^\secp, b' \gets \{0,1\}$, measure register $\sC$ in basis $\theta$ to obtain $x \in \{0,1\}^\secp$, and initialize $\cA_\secp$ with $\cZ_\secp(\theta,b',\sA)$.
        \item If $b' = b \oplus \bigoplus_{i: \theta_i = 0} x_i$ then proceed as in $\Hyb_0$ and otherwise output $\bot$.
    \end{itemize}
    \item $\Hyb_2(b):$ This is the same as $\Hyb_1(b)$ except that measurement of register $\sC$ to obtain $x$ is performed after $\cA_\secp$ outputs $x'$ and $\rho$.
\end{itemize}

We define $\Advt(\Hyb_i) \coloneqq \TD\left(\Hyb_i(0),\Hyb_i(1)\right).$ Then, we have that \[\Advt(\Hyb_1) \geq \Advt(\Hyb_0)/2,\] which follows because $\Hyb_1(b)$ is identically distributed to the distribution that outputs $\bot$ with probability 1/2 and otherwise outputs $\Hyb_0(b)$. Next, we have that \[\Advt(\Hyb_2) = \Advt(\Hyb_1),\] which follows because the register $\sC$ is disjoint from the registers that $\cA_\secp$ operates on. Thus, it remains to show that \[\Advt(\Hyb_2) = \negl(\secp).\]

To show this, we first define the following hybrid.

\begin{itemize}
    \item $\Hyb_2'(b):$ This is the same as $\Hyb_2$ except that $\cA_\secp$ is initialized with $\cZ_\secp(0^\secp,b',\sA)$.
\end{itemize}

Now, for any $b \in \{0,1\}$, consider the state on register $\sC$ immediately after $\cA_\secp$ outputs $(x',\sA')$ in $\Hyb'_2(b)$. For any $\theta \in \{0,1\}^\secp$, define sets $\theta_0 \coloneqq \{i : \theta_i =0\}$ and $\theta_1 \coloneqq \{i : \theta_i =1\}$, and define the projector \[\Pi_{x',\theta} \coloneqq \left(H^{\otimes |\theta_1|}\ket{x'_{\theta_1}}\bra{x'_{\theta_1}}H^{\otimes |\theta_1|}\right)^{\sC_{\theta_1}} \otimes \sum_{\substack{y \in \{0,1\}^{|\theta_0|} \text{ s.t.}\\ \Delta\left(y,x'_{\theta_0}\right) \geq 1/2}}\left(H^{\otimes |\theta_0|}\ket{y}\bra{y}H^{\otimes |\theta_0|}\right)^{\sC_{\theta_0}},\] where $\Delta(\cdot,\cdot)$ denotes relative Hamming distance. Then, let $\Pr[\Pi_{x',\theta},\Hyb_2'(b)]$ be the probability that a measurement of $\left\{\Pi_{x',\theta}, \bbI - \Pi_{x',\theta}\right\}$ accepts (returns the outcome associated with $\Pi_{x',\theta}$) in $\Hyb_2'(b)$.

\begin{claim}\label{claim:BF10}
For any $b \in \{0,1\}$, $\Pr[\Pi_{x',\theta},\Hyb_2'(b)] = \negl(\secp)$.
\end{claim}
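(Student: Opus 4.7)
The plan is to exploit the crucial feature of $\Hyb_2'(b)$: the adversary's input $\cZ_\secp(0^\secp, b', \sA)$ makes no reference to $\theta$, and the EPR pair preparation is also $\theta$-independent. Consequently, the joint distribution of $\cA_\secp$'s output $x'$ together with the residual state on registers $\sC, \sA'$ is completely independent of $\theta$. When computing $\Pr[\Pi_{x',\theta}, \Hyb_2'(b)]$, I may therefore imagine sampling $\theta$ only at the very end and treating it as fresh uniform randomness.

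Next, observe that $\Pi_{x',\theta}$ is diagonal in the Hadamard basis on $\sC$: by definition its range is spanned by Hadamard basis vectors $H^{\otimes \secp}\ket{y}$ on $\sC$ satisfying $y_{\theta_1} = x'_{\theta_1}$ and $\Delta(y_{\theta_0}, x'_{\theta_0}) \geq 1/2$. Measuring $\Pi_{x',\theta}$ is therefore equivalent to measuring $\sC$ in the Hadamard basis to obtain a string $y \in \{0,1\}^\secp$ and accepting iff those two conditions hold. Since $\sC$ is untouched by $\cA_\secp$ and its state is $\theta$-independent, I commute this Hadamard measurement to just after the adversary halts; the resulting joint distribution of $(y, x')$ is then fully determined without reference to $\theta$.

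Fix any such $(y, x')$ and let $S_1 \coloneqq \{i : y_i \neq x'_i\}$, $s \coloneqq |S_1|$. The condition $y_{\theta_1} = x'_{\theta_1}$ is equivalent to $\theta_i = 0$ for every $i \in S_1$, an event of probability $2^{-s}$ over uniform $\theta$. Conditioned on this event, the bits $\{\theta_i\}_{i \notin S_1}$ are independent and uniform, and the condition $\Delta(y_{\theta_0}, x'_{\theta_0}) \geq 1/2$ becomes $|\theta_0 \setminus S_1| \leq s$, i.e., the sum of $\secp - s$ i.i.d. Bernoulli$(1/2)$ bits is at most $s$. If $s \geq \secp/4$, the leading factor $2^{-s}$ is already $2^{-\Omega(\secp)}$; if $s < \secp/4$, then the binomial mean $(\secp - s)/2$ exceeds $s$ by $\Omega(\secp)$, so Hoeffding's inequality bounds the tail by $e^{-\Omega(\secp)}$. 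In either case, for every fixed $(y, x')$, the $\theta$-probability is $2^{-\Omega(\secp)}$; averaging over $(y, x')$ yields the claim.

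I do not anticipate a substantive obstacle here: the whole argument is a combinatorial/Chernoff estimate once $\theta$-independence in $\Hyb_2'$ has been exploited. The only step that requires care is justifying that the hypothetical $\Pi_{x',\theta}$ measurement can legitimately be commuted past both the Hadamard measurement of $\sC$ and the sampling of $\theta$, so that the entire $\theta$-dependence of the acceptance probability collapses to a single clean averaging at the very end; this is where the replacement of $\cZ_\secp(\theta, \cdot, \cdot)$ by $\cZ_\secp(0^\secp, \cdot, \cdot)$ in the transition from $\Hyb_2$ to $\Hyb_2'$ does all of the heavy lifting.
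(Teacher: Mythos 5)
Your proof is correct and follows the same structure as the paper's: exploit that $\Hyb_2'(b)$ is manifestly $\theta$-independent, observe that $\Pi_{x',\theta}$ is diagonal in the Hadamard basis so the measurement reduces to a classical event over $(y,x',\theta)$, and then bound that event's probability over a freshly-sampled uniform $\theta$. Where the paper simply cites the Bouman--Fehr sample-and-estimate bound (with $\bq = y \oplus x'$ and $t = \theta_1$, the event is $|\omega(\bq_t) - \omega(\bq_{\bar t})| \ge 1/2$), you supply the underlying Chernoff calculation in-line via the case split on $s = h(y \oplus x')$; this is a self-contained and slightly more elementary derivation of the same $2^{-\Omega(\secp)}$ bound. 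One cosmetic point worth noting in a write-up: the calculation factors as $2^{-s}\cdot\Pr[\mathrm{Bin}(\secp-s,1/2)\le s]$ because the two $\theta$-conditions involve disjoint coordinate sets $S_1$ and $[\secp]\setminus S_1$, so they are independent, which is exactly what justifies multiplying the two probabilities rather than merely conditioning.
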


\begin{proof}
Consider running $\Hyb_2'(b)$ until $\cA_\secp$ outputs $x'$ and a state on register $\sA'$ that may be entangled with the challenger's state on register $\sC$. Note that we can sample $\theta \gets \{0,1\}^\secp$ independently since it is no longer in $\cA_\secp$'s view. Then since $\Pi_{x',\theta}$ is diagonal in the Hadamard basis for any $(x',\theta)$, we have that

\[\Pr[\Pi_{x',\theta},\Hyb_2'(b)] = \Pr_{x',\theta,y}\left[y_{\theta_1} = x'_{\theta_1} \wedge \Delta\left(y_{\theta_0},x'_{\theta_0}\right) \geq 1/2\right],\]

where the second probability is over $\cA_\secp$ outputting $x'$, the challenger sampling $\theta \gets \{0,1\}^\secp$, and the challenger measuring register $\sC$ in the Hadamard basis to obtain $y$. For any fixed string $x'$, this probability can be bound by standard Hoeffding inequalities. For example, in \cite[Appendix B.3]{C:BouFeh10}, it is shown to be bounded by $4e^{-\secp(1/2)^2/32} = \negl(\secp)$, which completes the proof.

\end{proof}

Now we consider the corresponding event in $\Hyb_2(b)$, denoted $\Pr[\Pi_{x',\theta},\Hyb_2(b)]$.

\begin{claim}\label{claim:predicate}
For any $b \in \{0,1\}$, $\Pr[\Pi_{x',\theta},\Hyb_2(b)] = \negl(\secp)$.
\end{claim}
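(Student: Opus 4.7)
The plan is to prove this claim by reduction to the semantic security property of $\cZ_\secp$, using \cref{claim:BF10} as the starting point. The intuition is that since $\theta$ is semantically hidden by $\cZ_\secp$, whether the adversary receives $\cZ_\secp(\theta, b', \sA)$ or $\cZ_\secp(0^\secp, b', \sA)$ should not noticeably affect the probability of any efficiently-checkable event. The critical observation is that $\Pi_{x',\theta}$, although defined in terms of the hidden string $\theta$, can be implemented efficiently: simply apply $H^{\otimes \secp}$ to $\sC$, measure in the computational basis to obtain some string $y$, and verify both $y_{\theta_1} = x'_{\theta_1}$ and $\Delta(y_{\theta_0}, x'_{\theta_0}) \geq 1/2$. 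So we treat $\Pi_{x',\theta}$ as an efficient binary predicate parameterized by data ($x', \theta$) that the reduction itself holds.

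Concretely, I would construct a reduction $\cB$ in $\mathscr{A}$ as follows. $\cB$ prepares $\secp$ EPR pairs on registers $(\sC_i, \sA_i)$, samples $\theta \gets \{0,1\}^\secp$ and $b' \gets \{0,1\}$ itself, and submits $(\theta, b', \sA)$ together with auxiliary register $\sC$ to the semantic security challenger for $\cZ_\secp$. Upon receiving back $\cZ_\secp(\theta^*, b', \sA)$ for $\theta^* \in \{\theta, 0^\secp\}$, $\cB$ runs $\cA_\secp$ to obtain $(x', \sA')$, then performs the efficient measurement described above on $\sC$ using its stored copy of $(x', \theta)$, and outputs $1$ iff the measurement accepts. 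By construction, the case $\theta^* = \theta$ exactly simulates the process underlying $\Pr[\Pi_{x',\theta}, \Hyb_2(b)]$ (note that the bit $b$ plays no role in this probability since the abort decision in $\Hyb_2$ happens logically after the measurement, and the experiment's output need not be consulted), while the case $\theta^* = 0^\secp$ exactly simulates $\Pr[\Pi_{x',\theta}, \Hyb_2'(b)]$.

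Since $\cB$'s overhead beyond running $\cA_\secp$ consists only of preparing EPR pairs, sampling $\theta, b'$, and a single layer of Hadamards and computational-basis measurements on $\secp$ qubits, $\cB$ has complexity only linearly larger than $\cA_\secp$ and hence lies in $\mathscr{A}$ by the assumed closure property. Semantic security of $\cZ_\secp$ therefore yields
\[
\bigl| \Pr[\Pi_{x',\theta}, \Hyb_2(b)] - \Pr[\Pi_{x',\theta}, \Hyb_2'(b)] \bigr| = \negl(\secp).
\]
Combining this with \cref{claim:BF10}, which already gives $\Pr[\Pi_{x',\theta}, \Hyb_2'(b)] = \negl(\secp)$, completes the proof.

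I expect the main subtlety to be the bookkeeping argument that the semantic-security game as stated in the theorem permits the reduction I have described — specifically, that $\cZ_\secp$'s input register $\sA$ may be entangled with an external register $\sC$ held by the distinguisher, and that the distinguisher's final decision is allowed to be any efficient quantum predicate on $(\sC, \cA_\secp\text{'s output})$. The theorem's formulation explicitly quantifies over arbitrary states $\ket{\psi}^{\sA,\sC}$ and adversaries in $\mathscr{A}$, so this is already built in; I just need to verify that $\cB$ is a valid adversary in $\mathscr{A}$ (closure under linear overhead) and that the predicate implementation is indeed in polynomial time and space, both of which are immediate.
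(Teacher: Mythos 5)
Your proposal is correct and matches the paper's own proof essentially step for step: the paper also reduces directly to semantic security by having the reduction sample $\theta,b'$, prepare the EPR pairs, keep $\sC$ as side information, run $\cA_\secp$ on the challenge, measure $\{\Pi_{x',\theta},\bbI-\Pi_{x',\theta}\}$ on $\sC$, and conclude via \cref{claim:BF10} together with the closure of $\mathscr{A}$ under linear-overhead reductions. No further changes are needed.
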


\begin{proof}
This follows by a direct reduction to semantic security of $\{\cZ_\secp(\cdot,\cdot,\cdot)\}_{\secp \in \bbN}$ with respect to its first input. The reduction samples $\theta \gets \{0,1\}^\secp$, $b' \gets \{0,1\}$, prepares $\secp$ EPR pairs on registers $(\sA,\sC)$, and sends $(\theta,b',\sA)$ to its challenger. It receives either $\cZ_\secp(\theta,b',\sA)$ or $\cZ_\secp(0^\secp,b',\sA)$, which its sends to $\cA_{\secp}$. After $\cA_{\secp}$ outputs $(x',\sA')$, the reduction measures $\left\{\Pi_{x',\theta},\bbI - \Pi_{x',\theta}\right\}$ on register $\sC$. Note that the complexity of this reduction is equal to the complexity of $\cA_{\secp}$ plus an extra $\secp$ bits of space and an extra linear time operation, so it is still in $\mathscr{A}$. If $\Pr[\Pi_{x',\theta},\Hyb_2(b)]$ is non-negligible this can be used to distinguish $\cZ_\secp(\theta,b',\sA)$ from $\cZ_\secp(0^\secp,b',\sA)$, due to \cref{claim:BF10}.
\end{proof}

Finally, we can show the following claim, which completes the proof.

\begin{claim}
$\Advt(\Hyb_2) = \negl(\secp).$
\end{claim}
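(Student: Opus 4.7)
The plan is to combine the gentle measurement lemma with the XOR extractor (\cref{thm:XOR-extractor}) to show that, conditioned on the deletion certificate verifying, the XOR $\bigoplus_{i : \theta_i = 0} x_i$ is a uniform bit in tensor product with $\sA'$ and $b'$. Once this is established, the ``abort decision'' $b' \stackrel{?}{=} b \oplus \bigoplus_{i : \theta_i = 0} x_i$ is a uniform coin flip independent of $b$, so the surviving (non-$\bot$) branches of $\Hyb_2(0)$ and $\Hyb_2(1)$ are identically distributed, giving the desired bound.

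First, I would introduce an intermediate hybrid $\Hyb_2^\ast(b)$ in which the projective measurement $\{\Pi_{x',\theta},\bbI-\Pi_{x',\theta}\}$ on $\sC$ is performed immediately after $\cA_\secp$ outputs $(x',\sA')$, and the experiment is aborted (output $\bot$) if the $\Pi_{x',\theta}$ outcome occurs. By \cref{claim:predicate}, this outcome occurs with negligible probability in $\Hyb_2(b)$, so \cref{lemma:gentle-measurement} implies $\TD(\Hyb_2(b),\Hyb_2^\ast(b)) = \negl(\secp)$, and it therefore suffices to bound $\Advt(\Hyb_2^\ast)$.

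Next, since the $\theta$-basis measurement of $\sC$ factors as a Hadamard measurement on $\sC_{\theta_1}$ (yielding $x_{\theta_1}$) and a computational-basis measurement on $\sC_{\theta_0}$ (yielding $x_{\theta_0}$), and since the experiment already outputs $\bot$ unless $x_{\theta_1} = x'_{\theta_1}$, I would further condition on this deletion check passing. Because $\Pi_{x',\theta}$ factors as a tensor product of projectors on $\sC_{\theta_1}$ and $\sC_{\theta_0}$, projecting onto $\bbI-\Pi_{x',\theta}$ followed by the Hadamard post-selection $x_{\theta_1} = x'_{\theta_1}$ is equivalent to projecting $\sC_{\theta_0}$ onto the Hadamard-basis span of strings $y$ with $\Delta(y,x'_{\theta_0}) < 1/2$. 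Substituting $u = y \oplus x'_{\theta_0}$ gives $h(u) < |\theta_0|/2$, and since $H^{\otimes |\theta_0|} \ket{u \oplus x'_{\theta_0}}$ and $H^{\otimes |\theta_0|} \ket{u}$ differ only by a computational-basis-diagonal phase (which does not affect computational-basis measurement statistics), the measurement of $\sC_{\theta_0}$ is equivalent to measuring a state of the form \[\sum_{u:\,h(u) < |\theta_0|/2} \ket{\psi_u}^{\sA'} \otimes H^{\otimes |\theta_0|}\ket{u}^{\sC_{\theta_0}},\] which is precisely the input form required by the XOR extractor.

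Applying \cref{thm:XOR-extractor} then yields that $\bigoplus_{i : \theta_i = 0} x_i$ is a uniformly random bit in tensor product with $\sA'$; since $b'$ was sampled independently of $\sC$, the same holds relative to $b'$. Consequently the abort condition in the final step of $\Hyb_2^\ast(b)$ is a fresh, $b$-independent fair coin, so the output distribution of $\Hyb_2^\ast(b)$ does not depend on $b$, giving $\Advt(\Hyb_2^\ast) = 0$ and thus $\Advt(\Hyb_2) = \negl(\secp)$. The main subtlety is handling the ``shift'' by $x'_{\theta_0}$ in the support condition, since $x'_{\theta_0}$ is chosen adaptively by $\cA_\secp$; this is dispatched cleanly by the phase-commutation observation above, which lets the $x'_{\theta_0}$-dependent phase be absorbed into the arbitrary side-information register that the XOR extractor tolerates.
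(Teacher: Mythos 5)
Your proof is correct and follows essentially the same route as the paper: use \cref{claim:predicate} plus gentle measurement to pass to a state in the image of $\bbI-\Pi_{x',\theta}$, factor the $\theta$-basis measurement so that conditioning on the certificate check leaves $\sC_{\theta_0}$ supported on Hadamard vectors close to $x'_{\theta_0}$, and invoke \cref{thm:XOR-extractor} to make the parity a uniform bit independent of $\sA'$, so the abort decision is a $b$-independent coin. The only differences are cosmetic: you make the gentle-measurement step an explicit intermediate hybrid and absorb the shift by $x'_{\theta_0}$ as a computational-basis-diagonal phase, whereas the paper phrases it as a change of basis (Hadamard followed by XOR with $x'_{\theta_0}$).
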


\begin{proof}
First, we note that for any $b \in \{0,1\}$, the global state of $\Hyb_2(b)$ immediately after $\cA_\secp$ outputs $x'$ is within negligible trace distance of a state $\tau^{\sC,\sA'}_{\mathsf{Ideal}}$ in the image of $\bbI - \Pi_{x',\theta}$. This follows immediately from \cref{claim:predicate} and Gentle Measurement (\cref{lemma:gentle-measurement}). Now, consider measuring registers $\sC_{\theta_1}$ of $\tau_{\mathsf{Ideal}}^{\sC,\sA'}$ to determine whether the experiment outputs $\bot$. That is, the procedure measures $\sC_{\theta_1}$ in the Hadamard basis and checks if the resulting string is equal to $x'_{\theta_1}$. There are two options.

\begin{itemize}
    \item If the measurement fails, then the experiment outputs $\bot$, independent of whether $b=0$ or $b=1$, so there is 0 advantage in this case.
    \item If the measurement succeeds, then we know that the state on register $\sC_{\theta_0}$ is only supported on vectors $H^{\otimes |\theta_0|}\ket{y}$ such that $\Delta(y,x'_{\theta_0}) < 1/2$, since $\tau_{\mathsf{Ideal}}^{\sC,\sA'}$ was in the image of $\bbI - \Pi_{x',\theta}$. These registers are then measured in the computational basis to produce bits $\{x_i\}_{i: \theta_i = 0}$, and the experiment outputs $\bot$ if $\bigoplus_{i:\theta_i = 0}x_i \neq b' \oplus b$ and otherwise outputs the state on register $\sA'$. Note that (i) this decision is the \emph{only} part of the experiment that depends on $b$, and (ii) it follows from \cref{thm:XOR-extractor} that the bit  $\bigoplus_{i:\theta_i = 0}x_i$ is \emph{uniformly random and independent} of the register $\sA'$, which is disjoint (but possibly entangled with) $\sC$. Thus, there is also 0 advantage in this case. 
    
    Indeed, \cref{thm:XOR-extractor} says that making a Hadamard basis measurement of a register that is in a superposition of computational basis vectors with relative Hamming weight $< 1/2$ will produce a set of bits $\{x_i\}_{i: \theta_i = 0}$ such that $\bigoplus_{i: \theta_i = 0}x_i$ is a uniformly random bit, even given potentially entangled quantum side information. We can apply this lemma to our system on $\sC_{\theta_0},\sA'$ by considering a change of basis that maps $H^{\otimes |\theta_0|}\ket{x'_{\theta_0}} \to \ket{0^{|\theta_0|}}$. That is, the change of basis first applies Hadamard gates, and then an XOR with the fixed string $x'_{\theta_0}$. Applying such a change of basis maps $\sC_{\theta_0}$ to a state that is supported on vectors $\ket{y}$ such that $\omega(y) < 1/2$, and we want to claim that a \emph{Hadamard} basis measurement of the resulting state produces $\{x_i\}_{i: \theta_i = 0}$ such that $\bigoplus_{i: \theta_i = 0}x_i$ is uniformly random and independent of $\sA'$. This is exactly the statement of \cref{thm:XOR-extractor}.
\end{itemize}

This completes the proof, since we have shown that there exists a single distribution, defined by $\tau^{\sC,\sA'}_{\mathsf{Ideal}}$, that is negligibly close to both $\Hyb_2(0)$ and $\Hyb_2(1)$.
\end{proof}

\end{proof}

\section{Cryptography with Certified Everlasting Security}
\label{sec:app}

\subsection{Secret sharing}\label{subsec:ss}

We give a simple construction of a 2-out-of-2 secret sharing scheme where there exists a designated party that the dealer can ask to produce a certificate of deletion of their share. If this certificate verifies, then the underlying plaintext is information theoretically deleted, even given the other share.

\paragraph{Definition.}
First, we augment the standard syntax of secret sharing to include a deletion algorithm $\Del$ and a verification algorithm $\Ver$. Formally, consider a secret sharing scheme $\CD$-$\SSS = (\Share,\Rec,\Del,\Ver)$ with the following syntax.

\begin{itemize}
    \item $\Share(m) \to (s_1, s_2, \vk)$ is a quantum algorithm that takes as input a classical message $m$, and outputs a quantum share $s_1$, a classical share $s_2$ and a (potentially quantum) verification key $\vk$.
    \item $\Rec(s_1,s_2) \to \{m,\bot\}$ is a quantum algorithm that takes as input two shares and outputs either a message $m$ or a $\bot$ symbol.
    \item $\Del(s_1) \to \cert$ is a quantum algorithm that takes as input a quantum share $s_1$ and outputs a (potentially quantum) deletion certificate $\cert$.
    \item $\Ver(\vk,\cert) \to \{\top,\bot\}$ is a (potentially quantum) algorithm that takes as input a (potentially quantum) verification key $\vk$ and a (potentially quantum) deletion certificate $\cert$ and outputs either $\top$ or $\bot$.
\end{itemize}

We say that $\CD$-$\SSS$ satisfies correctness of deletion if the following holds.

\begin{definition}[Correctness of deletion]\label{def:CD-correctness-SS}
$\CD$-$\SSS = (\Share,\Rec,\Del,\Ver)$ satisfies \emph{correctness of deletion} if for any $m$, it holds with $1-\negl(\secp)$ probability over $(s_1,s_2,\vk) \gets \Share(m),\cert \gets \Del(s_1), \mu \gets \Ver(\vk,\cert)$ that $\mu = \top$.
\end{definition}

Next, we define certified deletion security for a secret sharing scheme.

\begin{definition}[Certified deletion security]\label{def:CD-security}

Let $\cA = \{\cA_\secp\}_{\secp \in \bbN}$ denote an unbounded adversary and $b$ denote a classical bit. Consider experiment $\EVEXP_\secp^{\cA}(b)$ which describes everlasting security given a deletion certificate, and is defined as follows.
\begin{itemize}
    \item Sample $(s_1,s_2,\vk) \gets \Share(b)$.
    \item Initialize $\cA_{\secp}$ with $s_1$.
    \item Parse $\cA_{\secp}$'s output as a deletion certificate $\cert$ and a residual state on register $\sA'$.
    \item If $\Ver(\vk,\cert) = \top$ then output $(\sA', s_2)$, and otherwise output $\bot$.
\end{itemize}
Then $\CD$-$\SSS = (\Share,\Rec,\Del,\Ver)$ satisfies \emph{certified deletion security} if for any unbounded adversary $\cA$, 
it holds that \[\TD\left(\EVEXP_\secp^{\cA}(0),\EVEXP_\secp^{\cA}(1)\right) = \negl(\secp),\]
%
\end{definition}

\begin{corollary}
The scheme $\CD$-$\SSS = (\Share,\Rec,\Del,\Ver)$ defined as follows is a secret sharing scheme with certified deletion.
\begin{itemize}
    \item $\Share(m):$ sample $x,\theta \gets \{0,1\}^\secp$ and output \[s_1 \coloneqq \ket{x}_\theta,
    s_2 \coloneqq  \left(\theta,b \oplus \bigoplus_{i: \theta_i = 0} x_i\right), \ \ \ \vk \coloneqq (x,\theta).\]
    \item $\Rec(s_1, s_2):$ parse $s_1 \coloneqq \ket{x}_\theta, s_2 \coloneqq \left(\theta,b'\right)$, measure $\ket{x}_\theta$ in the $\theta$-basis to obtain $x$, and output $b = b' \oplus \bigoplus_{i:\theta_i = 0}x_i$.
    \item $\Del(s_1):$ parse $s_1 \coloneqq \ket{x}_\theta$ and measure $\ket{x}_\theta$ in the Hadamard basis to obtain a string $x'$, and output $\cert \coloneqq x'$.
    \item $\Ver(\vk,\cert):$ parse $\vk$ as $(x,\theta)$ and $\cert$ as $x'$ and output $\top$ if and only if $x_i = x_i'$ for all $i$ such that $\theta_i = 1$.
\end{itemize}
\end{corollary}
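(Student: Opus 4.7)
The plan is to verify the four required properties. First, correctness of reconstruction is immediate because measuring $\ket{x}_\theta$ in basis $\theta$ deterministically recovers $x$, from which $b' \oplus \bigoplus_{i:\theta_i=0}x_i = b$. Correctness of deletion (\Cref{def:CD-correctness-SS}) similarly holds with probability $1$: for every $i$ with $\theta_i = 1$, the qubit $\ket{x_i}_1$ measured in the Hadamard basis deterministically returns $x_i$, so the certificate always passes verification. The standard $2$-out-of-$2$ hiding property follows since tracing out $\theta$ makes $s_1 = \ket{x}_\theta$ the maximally mixed state (perfectly independent of $b$), while the classical share $s_2 = (\theta, b \oplus \bigoplus_{i:\theta_i=0}x_i)$ hides $b$ whenever $\theta \neq 1^\secp$, i.e.\ with probability $1 - 2^{-\secp}$.

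For certified deletion security (\Cref{def:CD-security}), the plan is to adapt the hybrid argument from the proof of \Cref{thm:main} to the specialized setting where $\cZ_\secp$ is trivial, so that the adversary receives no auxiliary information about $\theta$. I would define $\Hyb_0(b) := \EVEXP^{\cA}_\secp(b)$ and then $\Hyb_1(b)$ which replaces $\ket{x}_\theta$ by halves of EPR pairs on registers $(\sC, \sA)$ and defers the sampling of $\theta$ together with the measurement of $\sC$ in basis $\theta$ until \emph{after} $\cA_\secp$ has output $(\cert, \sA')$. Because $\cA_\secp$'s reduced density matrix on $\sA$ is $2^{-\secp}\bbI$ and entirely independent of $\theta$, this rewriting is perfectly indistinguishable from $\Hyb_0(b)$, and $\theta$ is now sampled independently of the adversary's output $(\cert, \sA')$.

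The heart of the proof is to show that in $\Hyb_1(b)$, conditioned on $\Ver(\vk, \cert) = \top$, the bit $y := \bigoplus_{i:\theta_i=0} x_i$ is within negligible trace distance of a uniformly random bit independent of $(\sA', \theta)$. I would follow the same two-step strategy used in the proof of \Cref{thm:main}. First, with the projector $\Pi_{x', \theta}$ defined there, the probability that the post-adversary state on $\sC$ lies outside the image of $\bbI - \Pi_{x', \theta}$ is negligible; in our setting this is precisely \Cref{claim:BF10} applied directly, because $\theta$ is independent of $(\cert, \sA')$ and no semantic-security reduction via \Cref{claim:predicate} is needed. Second, \Cref{lemma:gentle-measurement} together with \Cref{thm:XOR-extractor} imply that, once $\sC_{\theta_1}$ is measured in the Hadamard basis and matches $x'_{\theta_1}$, the computational-basis measurement of $\sC_{\theta_0}$ produces bits whose XOR $y$ is uniform and independent of $\sA'$ given $\theta$. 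Since $b' = b \oplus y$, the joint distribution $(\sA', \theta, b')$ has $b'$ uniform conditional on $(\sA', \theta)$ regardless of $b$, yielding $\TD(\EVEXP^{\cA}_\secp(0), \EVEXP^{\cA}_\secp(1)) = \negl(\secp)$.

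The main subtlety — and what I would call the main obstacle — is that the secret-sharing experiment outputs $(\sA', s_2)$ which includes $\theta$, whereas the experiment in \Cref{thm:main} outputs only the adversary's residual state $\sA'$. One cannot include $\theta$ in $\cZ_\secp$'s output (that would destroy semantic security) or in the adversary's output (the adversary never learns $\theta$). The resolution is that with $\cZ_\secp$ trivial, the proof of \Cref{thm:main} in fact establishes the stronger intermediate claim that $y$ is uniform conditional on $(\sA', \theta)$ and not just on $\sA'$; re-running the proof in this regime therefore delivers certified deletion security for the enlarged output. The corner case $\theta = 1^\secp$, where $|\{i:\theta_i=0\}|=0$ and the XOR is vacuous, contributes only an additive $2^{-\secp}$ to the trace distance and is absorbed into the final negligible bound.
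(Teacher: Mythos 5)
Your proposal is correct and follows essentially the same route as the paper's own proof: replace the BB84 states by EPR halves, defer the $\theta$-basis measurement until after the adversary outputs, bound the bad event via the Hoeffding-based projector $\Pi_{x',\theta}$ (the content of \cref{claim:BF10}, applied directly since the adversary's view is independent of $\theta$, so no semantic-security reduction is needed), and conclude with \cref{lemma:gentle-measurement} and \cref{thm:XOR-extractor}. The only deviations are cosmetic simplifications: you drop the $b'$-guessing step (legitimate here because the adversary's input is independent of $b$, so the masked bit can simply be computed after the deferred measurement) and you spell out explicitly the uniformity of the parity conditioned on $(\sA',\theta)$ and the $\theta = 1^\secp$ corner case, points the paper handles implicitly by saying the argument ``follows identically'' to \cref{thm:main}.
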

\begin{proof}
    Correctness of deletion follows immediately from the description of the scheme. Certified deletion security, i.e. 
    \[\TD\left(\EVEXP_\secp^{\cA}(0),\EVEXP_\secp^{\cA}(1)\right) = \negl(\secp)\]
    follows by following the proof strategy of \cref{thm:main}. This setting is slightly different than the setting considered in the proof of \cref{thm:main} since here we consider unbounded $\cA_\secp$ that are not given access to $\theta$ while \cref{thm:main} considers bounded $\cA_\secp$ that are given access to an encryption of $\theta$. However, the proof is almost identical, defining hybrids as follows.
    \item $\Hyb_0(b):$ This is the distribution $\left\{\EVEXP_\secp^{\cA_\secp}(b)\right\}_{\secp \in \bbN}$ described above.
    
    \item $\Hyb_1(b):$ This distribution is sampled as follows.
    \begin{itemize}
        \item Prepare $\secp$ EPR pairs $\frac{1}{\sqrt{2}}(\ket{00} + \ket{11})$ on registers $(\sC_1,\sA_1),\dots,(\sC_\secp,\sA_\secp)$. Define $\sC \coloneqq \sC_1,\dots,\sC_\secp$ and $\sA \coloneqq \sA_1,\dots,\sA_\secp$.
        \item Sample $\theta \gets \{0,1\}, b' \gets \{0,1\}$, measure register $\sC$ in basis $\theta$ to obtain $x \in \{0,1\}^\secp$, and initialize $\cA_{\secp}$ with register $\sA$.
        \item If $b' = b \oplus \bigoplus_{i: \theta_i = 0} x_i$ then proceed as in $\Hyb_0$ and otherwise output $\bot$.
    \end{itemize}
    \item $\Hyb_2(b):$ This is the same as $\Hyb_1(b)$ except that measurement of register $\sC$ to obtain $x$ is performed after $\cA_{\secp}$ outputs $x'$ and $\sA'$.
    
    Indistinguishability between these hybrids closely follows the proof of \cref{thm:main}. The key difference is that  $\Hyb_2'(b)$ is identical to $\Hyb_2(b)$ except that $s_2$ is set to $(b',0^\secp)$. Then, $\Pr[\Pi_{x',\theta},\Hyb_2'(b)] = \negl(\secp)$ follows identically to the proof in \cref{thm:main}, whereas 
    $\Pr[\Pi_{x',\theta},\Hyb_2(b)] = \negl(\secp)$ follows because the view of $\cA_{\secp}$ is identical in both hybrids.
    The final claim, that $\Advt(\Hyb_2) = \negl(\secp)$ follows identically to the proof in \cref{thm:main}.
\end{proof}

\begin{remark}[One-time pad encryption]
We observe that the above proof, which considers unbounded $\cA_\secp$ who don't have access to $\theta$ until after they produce a valid deletion certificate, can also be used to establish the security of a simple one-time pad encryption scheme with certified deletion. The encryption of a bit $b$ would be the state $\ket{x}_\theta$ together with a one-time pad encryption $k \oplus b \oplus \bigoplus_{i:\theta_i = 0}x_i$ with key $k \gets \{0,1\}$. The secret key would be $(k,\theta)$. Semantic security follows from the one-time pad, while certified deletion security follows from the above secret-sharing proof. This somewhat simplifies the construction of one-time pad encryption with certified deletion of \cite{10.1007/978-3-030-64381-2_4}, who required a seeded extractor.\end{remark}

\subsection{Public-key encryption}\label{subsec:PKE}

In this section, we define and construct post-quantum public-key encryption with certified deletion for classical messages, assuming the existence of post-quantum public-key encryption for classical messages.

\paragraph{Public-Key encryption with certified deletion.}
First, we augment the standard syntax to include a deletion algorithm $\Del$ and a verification algorithm $\Ver$. Formally, consider a public-key encryption scheme $\CD$-$\PKE = (\Gen,\Enc,\Dec,\Del,\Ver)$ with syntax

\begin{itemize}
    \item $\Gen(1^\secp) \to (\pk,\sk)$ is a classical algorithm that takes as input the security parameter and outputs a public key $\pk$ and secret key $\sk$.
    \item $\Enc(\pk,m) \to (\ct,\vk)$ is a quantum algorithm that takes as input the public key $\pk$ and a message $m$, and outputs a (potentially quantum) verification key $\vk$ and a quantum ciphertext $\ct$.
    \item $\Dec(\sk,\ct) \to \{m,\bot\}$ is a quantum algorithm that takes as input the secret key $\sk$ and a quantum ciphertext $\ct$ and outputs either a message $m$ or a $\bot$ symbol.
    \item $\Del(\ct) \to \cert$ is a quantum algorithm that takes as input a quantum ciphertext $\ct$ and outputs a (potentially quantum) deletion certificate $\cert$.
    \item $\Ver(\vk,\cert) \to \{\top,\bot\}$ is a (potentially quantum) algorithm that takes as input a (potentially quantum) verification key $\vk$ and a (potentially quantum) deletion certificate $\cert$ and outputs either $\top$ or $\bot$.
\end{itemize}

We say that $\CD$-$\PKE$ satisfies correctness of deletion if the following holds.

\begin{definition}[Correctness of deletion]\label{def:CD-correctness}
$\CD$-$\PKE = (\Gen,\Enc,\Dec,\Del,\Ver)$ satisfies \emph{correctness of deletion} if for any $m$, it holds with $1-\negl(\secp)$ probability over $(\pk,\sk) \gets \Gen(1^\secp), (\ct,\vk) \gets \Enc(\pk,m),\cert \gets \Del(\ct), \mu \gets \Ver(\vk,\cert)$ that $\mu = \top$.
\end{definition}

Next, we define certified deletion security. Our definition has multiple parts, which we motivate as follows. The first experiment is the everlasting security experiment, which requires that conditioned on the (computationally bounded) adversary producing a valid deletion certificate, their left-over state is information-theoretically independent of $b$. However, we still want to obtain meaningful guarantees against adversaries that do not produce a valid deletion certificate. That is, we hope for standard semantic security against arbitrarily malicious but computationally bounded adversaries. Since such an adversary can query the ciphertext generator with an arbitrarily computed deletion certificate, we should include this potential interaction in the definition, and require that the response from the ciphertext generator still does not leak any information about $b$.\footnote{One might expect that the everlasting security definition described above already captures this property, since whether the certificate accepts or rejects is included in the output of the experiment. However, this experiment does not include the output of the adversary in the case that the certificate is rejected. So we still need to capture the fact that the \emph{joint} distribution of the final adversarial state and the bit indicating whether the verification passes semantically hides $b$.} Note that, in our constructions, the verification key $\vk$ is actually completely independent of the plaintext $b$, and thus for our schemes this property follows automatically from semantic security.

\begin{definition}[Certified deletion security]\label{def:CD-security}
$\CD$-$\PKE = (\Gen,\Enc,\Dec,\Del,\Ver)$ satisfies \emph{certified deletion security} if for any non-uniform QPT adversary $\cA = \{\cA_\secp,\ket{\psi}_\secp\}_{\secp \in \bbN}$, it holds that \[\TD\left(\EVEXP_\secp^{\cA}(0),\EVEXP_\secp^{\cA}(1)\right) = \negl(\secp),\]
and
\[\bigg|\Pr\left[\CEXP_\secp^{\cA}(0) = 1\right] - \Pr\left[\CEXP_\secp^{\cA}(1) = 1\right]\bigg| = \negl(\secp),\]
where the experiment $\EVEXP_\secp^{\cA}(b)$ considers everlasting security given a deletion certificate, and is defined as follows.
\begin{itemize}
    \item Sample $(\pk,\sk) \gets \Gen(1^\secp)$ and $(\ct,\vk) \gets \Enc(\pk,b)$.
    \item Initialize $\cA_\secp(\ket{\psi_\secp})$ with $\pk$ and $\ct$.
    \item Parse $\cA_\secp$'s output as a deletion certificate $\cert$ and a residual state on register $\sA'$.
    \item If $\Ver(\vk,\cert) = \top$ then output $\sA'$, and otherwise output $\bot$.
\end{itemize}
and the experiment $\CEXP_\secp^{\cA}(b)$ is a strengthening of semantic security, defined as follows.
\begin{itemize}
    \item Sample $(\pk,\sk) \gets \Gen(1^\secp)$ and $(\ct,\vk) \gets \Enc(\pk,b)$.
    \item Initialize $\cA_\secp(\ket{\psi_\secp})$ with $\pk$ and $\ct$.
    \item Parse $\cA_\secp$'s output as a deletion certificate $\cert$ and a residual state on register $\sA'$.
    \item Output $\cA_\secp \left(\sA', \Ver(\vk,\cert) \right)$.
\end{itemize}
\end{definition}

Now we can formally define the notion of public-key encryption with certified deletion.

\begin{definition}[Public-key encryption with certified deletion]
$\CD$-$\PKE = (\Gen,\Enc,\Dec,\Del,\Ver)$ is a secure \emph{public-key encryption scheme with certified deletion} if it satisfies (i) correctness of deletion (\cref{def:CD-correctness}), and (ii) certified deletion security (\cref{def:CD-security}).
\end{definition}

Then, we have the following corollary of \cref{thm:main}. 

\begin{corollary}
Given any post-quantum semantically-secure public-key encryption scheme $\PKE = (\Gen,\Enc,\Dec)$, the scheme $\CD$-$\PKE = (\Gen,\Enc',\Dec',\Del,\Ver)$ defined as follows is a public-key encryption scheme with certified deletion.
\begin{itemize}
    \item $\Enc'(\pk,m):$ sample $x,\theta \gets \{0,1\}^\secp$ and output \[\ct \coloneqq \left(\ket{x}_\theta,\Enc\left(\pk, \left(\theta,b \oplus \bigoplus_{i: \theta_i = 0} x_i\right)\right)\right), \ \ \ \vk \coloneqq (x,\theta).\]
    \item $\Dec'(\sk,\ct):$ parse $\ct \coloneqq \left(\ket{x}_\theta,\ct'\right)$, compute $(\theta,b') \gets \Dec(\sk,\ct')$, measure $\ket{x}_\theta$ in the $\theta$-basis to obtain $x$, and output $b = b' \oplus \bigoplus_{i:\theta_i = 0}x_i$.
    \item $\Del(\ct):$ parse $\ct \coloneqq \left(\ket{x}_\theta,\ct'\right)$ and measure $\ket{x}_\theta$ in the Hadamard basis to obtain a string $x'$, and output $\cert \coloneqq x'$.
    \item $\Ver(\vk,\cert):$ parse $\vk$ as $(x,\theta)$ and $\cert$ as $x'$ and output $\top$ if and only if $x_i = x_i'$ for all $i$ such that $\theta_i = 1$.
\end{itemize}
\end{corollary}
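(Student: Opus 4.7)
My plan is to verify each of the three requirements in turn: correctness of deletion, the everlasting security part of certified deletion ($\EVEXP$), and the computational indistinguishability part ($\CEXP$). Correctness of deletion is essentially immediate from the construction: on an honest ciphertext $(\ket{x}_\theta,\ct')$, the deletion procedure measures every qubit in the Hadamard basis to obtain a string $x'$, and at indices $i$ with $\theta_i=1$ the state $\ket{x_i}_{\theta_i}$ is already in the Hadamard basis, so $x_i'=x_i$ with certainty; the verification key $(x,\theta)$ then accepts.

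For the everlasting security experiment $\EVEXP^\cA_\secp(b)$, the plan is to instantiate \cref{thm:main} directly. I would define the operation $\cZ_\secp(\theta,b',\sA)$ as follows: sample $(\pk,\sk)\gets \Gen(1^\secp)$, compute $\ct' \gets \Enc(\pk,(\theta,b'))$, and output $(\pk,\ct')$ together with the untouched register $\sA$. Since $\Enc$ is a post-quantum semantically secure PKE, $\cZ_\secp(\theta,b',\sA)$ and $\cZ_\secp(0^\secp,b',\sA)$ are computationally indistinguishable to any non-uniform QPT distinguisher (with arbitrary auxiliary register $\sC$), so the class $\mathscr{A}$ of non-uniform QPT adversaries satisfies the hypothesis of \cref{thm:main} for this $\cZ_\secp$. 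Unpacking the distribution $\widetilde{\cZ}^{\cA_\secp}_\secp(b)$ in this instantiation, it is exactly the $\EVEXP^\cA_\secp(b)$ distribution defined in \cref{def:CD-security}: the adversary receives $\pk$, the classical ciphertext encrypting $(\theta,b\oplus \bigoplus_{i:\theta_i=0}x_i)$, and the BB84 states $\ket{x}_\theta$; it returns a claimed certificate $x'$ and a residual register $\sA'$; the experiment outputs $\sA'$ exactly when $x_i'=x_i$ on all Hadamard coordinates, which is the condition for $\Ver(\vk,\cert)=\top$. Hence \cref{thm:main} gives $\TD(\EVEXP^\cA_\secp(0),\EVEXP^\cA_\secp(1))=\negl(\secp)$.

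For the computational experiment $\CEXP^\cA_\secp(b)$, I would give a direct reduction to post-quantum semantic security of $\PKE$. Given any non-uniform QPT $\cA$ distinguishing $\CEXP^\cA_\secp(0)$ from $\CEXP^\cA_\secp(1)$, build a PKE distinguisher that samples $x,\theta\gets\{0,1\}^\secp$, sends the pair of challenge messages $m_c = (\theta,\,c\oplus \bigoplus_{i:\theta_i=0}x_i)$ for $c\in\{0,1\}$ to its challenger, receives $\ct'$, and runs $\cA$ on $(\pk,\ket{x}_\theta,\ct')$. Since the reduction retains $(x,\theta)$, it can evaluate $\Ver$ on $\cA$'s certificate output internally and feed the bit back to $\cA$, exactly simulating $\CEXP$. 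Any advantage of $\cA$ therefore translates to an advantage against $\PKE$ semantic security, so $|\Pr[\CEXP^\cA_\secp(0)=1]-\Pr[\CEXP^\cA_\secp(1)=1]|=\negl(\secp)$.

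The substantive content sits entirely in the $\EVEXP$ part, and essentially all the work has already been packaged inside \cref{thm:main}; the only thing to check here is that the PKE instantiation syntactically satisfies the theorem's hypothesis and that its $\widetilde{\cZ}_\secp$ coincides with $\EVEXP^\cA_\secp$. The $\CEXP$ part is a routine reduction, and correctness is by inspection, so I do not expect any real obstacle beyond matching notation to the hypotheses of \cref{thm:main}.
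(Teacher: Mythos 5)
Your proposal is correct and matches the paper's own proof: the paper likewise instantiates \cref{thm:main} with $\cZ_\secp(\theta,b',\sA)$ sampling $(\pk,\sk)\gets\Gen(1^\secp)$ and outputting $\Enc(\pk,(\theta,b'))$ alongside $\sA$ (with $\mathscr{A}$ the class of non-uniform QPT adversaries), and disposes of the $\CEXP$ part by the same observation that semantic security of $\Enc$ holds even given the verification key, since $b$ stays encrypted. Your write-up merely spells out the $\CEXP$ reduction and the Hadamard-basis correctness check slightly more explicitly than the paper does.
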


\begin{proof}
Correctness of deletion follows immediately from the description of the scheme. For certified deletion security, we consider the following:
\begin{itemize}
\item First, we observe that 
\[\TD\left(\EVEXP_\secp^{\cA}(0),\EVEXP_\secp^{\cA}(1)\right) = \negl(\secp)\]
follows from \cref{thm:main} and the semantic security of $\PKE$ by setting the distribution $\cZ_\secp(\theta,b',\sA)$ to sample $(\pk,\sk) \gets \Gen(1^\secp)$, and output $(\sA,\Enc(\pk,(\theta,b')))$, and setting the class of adversaries $\mathscr{A}$ to be all non-uniform families of QPT adversaries $\{\cA_\secp,\ket{\psi_\secp}\}_{\secp \in \bbN}$.
\item Next, we observe that \[\bigg|\Pr\left[\CEXP_\secp^{\cA}(0) = 1\right] - \Pr\left[\CEXP_\secp^{\cA}(1) = 1\right]\bigg| = \negl(\secp)\]
follows from the fact that the encryption scheme remains (computationally) semantically secure even when the adversary is given the verification key $x$ corresponding to the challenge ciphertext, since the bit $b$ remains encrypted with $\Enc$.
\end{itemize}
This completes our proof.
\end{proof}
The notion of certified deletion security can be naturally generalized to consider multi-bit messages, as follows.

\begin{definition}[Certified deletion security for multi-bit messages]\label{def:CD-security-multi}
$\CD$-$\PKE = (\Gen,\Enc,\Dec,\Del,\Ver)$ satisfies \emph{certified deletion security} if for any non-uniform QPT adversary $\cA = \{\cA_\secp,\ket{\psi}_\secp\}_{\secp \in \bbN}$, it holds that \[\TD\left(\EVEXP_\secp^{\cA}(0),\EVEXP_\secp^{\cA}(1)\right) = \negl(\secp),\]
and
\[\bigg|\Pr\left[\CEXP_\secp^{\cA}(0) = 1\right] - \Pr\left[\CEXP_\secp^{\cA}(1) = 1\right]\bigg| = \negl(\secp),\]
where the experiment $\EVEXP_\secp^{\cA}(b)$ considers everlasting security given a deletion certificate, and is defined as follows.
\begin{itemize}
    \item Sample $(\pk,\sk) \gets \Gen(1^\secp)$.
    Initialize $\cA_\secp(\ket{\psi_\secp})$ with $\pk$ and parse its output as $(m_0, m_1)$.
    \item Sample $(\ct,\vk) \gets \Enc(\pk,m_b)$.
    \item Run $\cA_\secp$ on input $\ct$ and parse $\cA_\secp$'s output as a deletion certificate $\cert$, and a residual state on register $\sA'$.
    \item If $\Ver(\vk,\cert) = \top$ then output $\sA'$, and otherwise output $\bot$.
\end{itemize}
and the experiment $\CEXP_\secp^{\cA}(b)$ is a strengthening of semantic security, defined as follows.
\begin{itemize}
    \item Sample $(\pk,\sk) \gets \Gen(1^\secp)$.
    Initialize $\cA_\secp(\ket{\psi_\secp})$ with $\pk$ and parse its output as $(m_0, m_1)$.
    \item Sample $(\ct,\vk) \gets \Enc(\pk,m_b)$.
    \item Run $\cA_\secp$ on input $\ct$ and parse $\cA_\secp$'s output as a deletion certificate $\cert$, and a residual state on register $\sA'$.
    \item Output $\cA_\secp \left(\sA', \Ver(\vk,\cert) \right)$.
\end{itemize}
\end{definition}

A folklore method converts any public-key bit encryption scheme to a public-key string encryption scheme, by separately encrypting each bit in the underlying string one-by-one and appending all resulting ciphertexts. Semantic security of the resulting public-key encryption scheme follows by a hybrid argument, where one considers intermediate hybrid experiments that only modify one bit of the underlying plaintext at a time.
We observe that the same transformation from bit encryption to string encryption also preserves certified deletion security, and this follows by a similar hybrid argument. That is, as long as the encryption scheme for bits satisfies certified deletion security for single-bit messages per Definition~\ref{def:CD-security}, 
the resulting scheme for multi-bit messages 
satisfies certified deletion security according to Definition~\ref{def:CD-security-multi}.

\ifsubmission
In \cref{sec:additional}, we show how to build on this framework to obtain several advanced primitives with certified everlasting security, including \emph{attribute-based encryption} and \emph{fully-homormphic encryption}.
\else\fi

\ifsubmission\else
\ifsubmission \section{Additional Primitives with Certified Everlasting Security}\label{sec:additional}\else\fi

\ifsubmission 

\subsection{Attribute-based encryption}
We observe that if the $\PKE$ scheme from \cref{subsec:PKE} is an \emph{attribute-based encryption} scheme, then the scheme with certified deletion that results from the compiler inherits these properties. Thus, we obtain an attribute-based encryption scheme with certified deletion, assuming any standard (post-quantum) attribute-based encryption. The previous work of \cite{10.1007/978-3-030-92062-3_21} also constructs an attribute-based encryption scheme with certified deletion, but under the assumption of (post-quantum) indistinguishability obfuscation. We formalize our construction below.

\else

\paragraph{Attribute-based encryption with certified deletion.}
We observe that if the underlying scheme $\PKE$ is an \emph{attribute-based encryption} scheme, then the scheme with certified deletion that results from the above compiler inherits these properties. Thus, we obtain an attribute-based encryption scheme with certified deletion, assuming any standard (post-quantum) attribute-based encryption. The previous work of \cite{10.1007/978-3-030-92062-3_21} also constructs an attribute-based encryption scheme with certified deletion, but under the assumption of (post-quantum) indistinguishability obfuscation. We formalize our construction below.
\fi

We first describe the syntax of an attribute-based encryption scheme with certified deletion $\CD\text{-}\ABE = (\Gen,\KG,\Enc,\Dec,\Del,\Ver)$. This augments the syntax of an ABE scheme $\ABE = (\Gen,\KG,\Enc,\Dec)$ by adding the $\Del$ and $\Ver$ algorithms. 
Let $p = p(\secp)$ denote a polynomial.
\begin{itemize}
    \item $\Gen(1^\secp) \to (\pk,\msk)$ is a classical algorithm that takes as input the security parameter and outputs a public key $\pk$ and master secret key $\msk$.
    \item $\KG(\msk,P) \to \sk_P$ is a classical key generation algorithm that on input the master secret key and a predicate $P:\{0,1\}^{p(\secp)} \rightarrow \{0,1\}$, outputs a secret key $\sk_P$. 
    \item $\Enc(\pk, X, m) \rightarrow (\ct_X,\vk)$ is a quantum algorithm that on input a message $m$ and an attribute $X$ outputs a (potentially quantum) verification key $\vk$ and quantum ciphertext $\ct_X$.
    \item $\Dec(\sk_P, \ct_X) \to \{m',\bot\}$ on input a secret key $\sk_P$ and a quantum ciphertext $\ct_X$ outputs either a message $m'$ or a $\bot$ symbol.
    \item $\Del(\ct) \to \cert$ is a quantum algorithm that takes as input a quantum ciphertext $\ct$ and outputs a (potentially quantum) deletion certificate $\cert$.
    \item $\Ver(\vk,\cert) \to \{\top,\bot\}$ is a (potentially quantum) algorithm that takes as input a (potentially quantum) verification key $\vk$ and a (potentially quantum) deletion certificate $\cert$ and outputs either $\top$ or $\bot$.
\end{itemize}
Correctness of decryption for $\CD\text{-}\ABE$ is the same as that for $\ABE$. We define correctness of deletion, and certified deletion security for $\CD\text{-}\ABE$ below.

\begin{definition}[Correctness of deletion]
$\CD$-$\ABE = (\Gen,\KG,\Enc,\Dec,\Del,\Ver)$ satisfies \emph{correctness of deletion} if for any $m,X$, it holds with $1-\negl(\secp)$ probability over $(\pk,\msk) \gets \Gen(1^\secp), (\ct,\vk) \gets \Enc(\pk,X,m),\cert \gets \Del(\ct), \mu \gets \Ver(\vk,\cert)$ that $\mu = \top$.
\end{definition}

\begin{definition}[Certified deletion security]
\label{def:CD-sec-ABE}
$\CD$-$\ABE = (\Gen,\KG,\Enc,\Dec,\Del,\Ver)$ satisfies \emph{certified deletion security} if for any non-uniform QPT adversary $\cA = \{\cA_\secp,\ket{\psi}_\secp\}_{\secp \in \bbN}$, it holds that \[\TD\left(\EVEXP_\secp^{\cA}(0),\EVEXP_\secp^{\cA}(1)\right) = \negl(\secp),\] 
and
\[\bigg|\Pr\left[\CEXP_\secp^{\cA}(0) = 1\right] - \Pr\left[\CEXP_\secp^{\cA}(1) = 1\right]\bigg| = \negl(\secp),\] 
where the experiments $\EVEXP_\secp^{\cA}(b)$ and $\CEXP_\secp^{\cA}(b)$ are defined as follows.
\begin{itemize}
    \item Sample $(\pk,\msk) \gets \Gen(1^\secp)$
    and 
    initialize
    $\cA_\secp(\ket{\psi_\secp})$ with $\pk$.
    \item Set $i = 1$.
    \item If $\cA_\secp$ outputs a key query $P_i$, return $\sk_{P_i} \leftarrow \KG(\msk, P_i)$ to $\cA_\secp$ and set $i = i + 1$. This process can be repeated polynomially many times.
    \item If $\cA_\secp$ outputs an attribute $X^*$ where $P_i(X^*) = 0$ for all predicates $P_i$ queried so far, then compute $(\vk,\ct) \gets \Enc(\pk, X^*, b)$, and return $\ct$ to $\cA_\secp$. Else exit and output $\bot$.
    \item If $\cA_\secp$ outputs a key query $P_i$ such that $P_i(X^*) = 0$, return $\sk_{P_i} \leftarrow \KG(\msk, P_i)$ to $\cA_\secp$ (otherwise return $\bot$) and set $i = i + 1$. This process can be repeated polynomially many times.
    \item Parse $\cA_\secp$'s output as a deletion certificate $\cert$ and a residual state on register $\sA'$.
    \item If $\Ver(\vk,\cert) = \top$ then $\EVEXP_\secp^{\cA}(b)$ outputs $\sA'$, and otherwise $\EVEXP_\secp^{\cA}(b)$ outputs $\bot$, and ends.
    \item $\CEXP_\secp^{\cA}(b)$ sends the output $\Ver(\vk,\cert)$ to $\cA_\secp$. Again, upto polynomially many times, $\cA_\secp$ sends key queries $P_i$. For each $i$, if $P_i(X^*) = 0$, return $\sk_{P_i} \leftarrow \KG(\msk, P_i)$ to $\cA_\secp$ (otherwise return $\bot$) and set $i = i + 1$.
    Finally, $\cA_\secp$ generates an output bit, which is set to be the output of $\CEXP_\secp^{\cA}(b)$.
\end{itemize}
\end{definition}

\begin{corollary}
Given any post-quantum attribute-based encryption scheme $\ABE = (\Gen,\KG,\Enc,\Dec)$, the scheme $\CD$-$\ABE = (\Gen,\KG,\Enc',\Dec',\Del,\Ver)$ defined as follows is an attribute-based encryption scheme with certified deletion.
\begin{itemize}
    \item $\Enc'(\pk,X,b):$ sample $x,\theta \gets \{0,1\}^\secp$ and output \[\ct \coloneqq \left(\ket{x}_\theta,\Enc\left(\pk, X, \left(\theta,b \oplus \bigoplus_{i: \theta_i = 0} x_i\right)\right)\right), \ \ \ \vk \coloneqq (x,\theta).\]
    \item $\Dec'(\sk_P,\ct):$ parse $\ct \coloneqq \left(\ket{x}_\theta,\ct'\right)$, compute $(\theta,b') \gets \Dec(\sk_P,\ct')$, measure $\ket{x}_\theta$ in the $\theta$-basis to obtain $x$, and output $b = b' \oplus \bigoplus_{i:\theta_i = 0}x_i$.
    \item $\Del(\ct):$ parse $\ct \coloneqq \left(\ket{x}_\theta,\ct'\right)$ and measure $\ket{x}_\theta$ in the Hadamard basis to obtain a string $x'$, and output $\cert \coloneqq x'$.
    \item $\Ver(\vk,\cert):$ parse $\vk$ as $(x,\theta)$ and $\cert$ as $x'$ and output $\top$ if and only if $x_i = x_i'$ for all $i$ such that $\theta_i = 1$.
\end{itemize}
\end{corollary}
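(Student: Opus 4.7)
The plan is to mirror the proof given for the public-key encryption case, adapting only the semantic-security hypothesis invoked in the reduction. Correctness of deletion is immediate, since $\Del$ and $\Ver$ have the same behavior as in the PKE construction and act only on the BB84 state component of the ciphertext.

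For the everlasting-security half of \cref{def:CD-sec-ABE}, the plan is to apply \cref{thm:main}, instantiating $\cZ_\secp(\theta,b',\sA)$ as the operation that plays the ABE challenger: it samples $(\pk,\msk)\gets\Gen(1^\secp)$, runs the adversary $\cA_\secp$ with $\pk$, answers all pre-challenge key queries $P_i$ using $\KG(\msk,P_i)$, receives the challenge attribute $X^*$ (and aborts if any pre-challenge $P_i$ satisfies $P_i(X^*)=1$), and outputs the pair $\bigl(\sA,\ \Enc(\pk,X^*,(\theta,b'))\bigr)$ together with the current register of $\cA_\secp$ and a private register storing $\msk$. The class $\mathscr{A}$ in \cref{thm:main} is instantiated as non-uniform QPT algorithms that continue this simulation—answering post-challenge queries using the stored $\msk$, collecting $(\cert,\sA')$, and optionally feeding $\Ver(\vk,\cert)$ back to the internal $\cA_\secp$. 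Semantic security of this $\cZ_\secp$ with respect to $\theta$ follows directly from the post-quantum ABE semantic security of $\ABE$: the pre- and post-challenge key-query restrictions built into our $\cZ_\secp$ and the continuation exactly mirror those of the ABE game, so a distinguisher between $\cZ_\secp(\theta,b',\sA)$ and $\cZ_\secp(0^\secp,b',\sA)$ transparently yields an ABE distinguisher between $\Enc(\pk,X^*,(\theta,b'))$ and $\Enc(\pk,X^*,(0^\secp,b'))$. With this semantic-security hypothesis in hand, the proof of \cref{thm:main} goes through verbatim, and the conclusion $\TD(\EVEXP^{\cA}_\secp(0),\EVEXP^{\cA}_\secp(1))=\negl(\secp)$ follows.

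For the $\CEXP$ half, the plan is to reduce directly to ABE semantic security (with no use of \cref{thm:main}). The key observation is that the verification key $\vk=(x,\theta)$ is information-theoretically independent of $b$; the plaintext bit appears in the ciphertext only via the XOR mask $b\oplus\bigoplus_{i:\theta_i=0}x_i$ sitting inside the inner ABE encryption. Any adversary for $\CEXP$ can therefore be converted into a standard ABE adversary that receives $\Enc(\pk,X^*,(\theta,b\oplus\bigoplus_i x_i))$, internally samples $(x,\theta)$, appends $\ket{x}_\theta$ to form the challenge ciphertext, forwards all key queries to its external challenger, and computes $\Ver(\vk,\cert)$ itself in order to hand it back to $\cA_\secp$ before producing its final guess.

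The main obstacle I expect is the formal mismatch between the non-interactive $\cZ_\secp$ assumed in \cref{thm:main} and the multi-round ABE game with adaptive key queries; the resolution is to absorb the interactive challenger into $\cZ_\secp$ and the continuation into the $\mathscr{A}$-adversary, which is sound because \cref{thm:main} uses semantic security of $\cZ_\secp$ only as a black box in the single reduction in \cref{claim:predicate}. Beyond this bookkeeping, the argument is a direct lift of the PKE corollary.
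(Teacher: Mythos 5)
Your overall strategy---absorbing the interactive ABE game into $\cZ_\secp$ and the remainder of the adversary into the class $\mathscr{A}$, then handling $\CEXP$ by a direct reduction using the independence of $\vk=(x,\theta)$ from $b$---matches the paper's, and your treatment of correctness and of $\CEXP$ is fine. The gap is in how you split the game for the $\EVEXP$ part: you stop $\cZ_\secp$ at the challenge ciphertext and hand the class-$\mathscr{A}$ adversary a private register containing $\msk$ so that it can answer post-challenge key queries. But \cref{thm:main} requires its semantic-security hypothesis to hold for a class $\mathscr{A}$ that is closed under adversaries whose time and space exceed a member's by only a linear amount (see the footnote to \cref{thm:main}); once $\msk$ sits in a register of the adversary's input, that closed class necessarily contains adversaries that use $\msk$ to run $\KG$ on a predicate accepting $X^*$ and decrypt the challenge ciphertext, so the hypothesis that $\cZ_\secp(\theta,b',\sA)$ and $\cZ_\secp(0^\secp,b',\sA)$ are indistinguishable to all of $\mathscr{A}$ is simply false for your instantiation. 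Your proposed salvage---that \cref{thm:main} uses the hypothesis only in the single reduction of \cref{claim:predicate}---amounts to reproving a structured variant of the theorem in which the reduction replaces the continuation's $\msk$-based key-generation calls by calls to its own ABE key oracle, which requires non-black-box knowledge of the continuation's structure; as written, the black-box invocation does not go through, and the claim that semantic security of your $\cZ_\secp$ ``follows directly'' from ABE security is not correct for the required (closure-completed) class.

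The paper avoids this entirely by pushing \emph{all} key queries, pre- and post-challenge, inside $\cZ_\secp$: it splits the $\EVEXP$ adversary into $\cA_{\secp,0}$, which plays the entire ABE interaction (key queries, choice of $X^*$, receipt of the challenge ciphertext encrypting $(\theta,b')$, and any further admissible key queries) and is run internally by $\cZ_\secp$ using an $\msk$ that never leaves $\cZ_\secp$, and $\cA_{\secp,1}$, which only maps the resulting state to a deletion certificate and residual register. The class $\mathscr{A}$ is then just QPT adversaries with no secret material exposed to them, the closure condition is harmless, and indistinguishability of $\cZ_\secp(\theta,b',\sA)$ and $\cZ_\secp(0^\secp,b',\sA)$ is exactly ABE semantic security. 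If you restructure your $\cZ_\secp$ this way, the rest of your argument is correct.
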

\begin{proof}
Correctness of decryption and deletion follow from the description of the scheme. For certified deletion security, we consider the following:
\begin{itemize}
\item First, we observe that 
\[\TD\left(\EVEXP_\secp^{\cA}(0),\EVEXP_\secp^{\cA}(1)\right) = \negl(\secp)\]
follows from \cref{thm:main} and the semantic security of $\ABE$. To see this, we imagine splitting $\cA_\secp$ into two parts: $\cA_{\secp,0}$ which interacts in the $\ABE$ security game until it obtains its challenge ciphertext and all the keys $\sk_{P_i}$ that it wants, and $\cA_{\secp,1}$ which takes the final state of $\cA_{\secp,0}$ and produces a deletion certificate $\cert$ and final state on register $\sA'$. We set the distribution $\cZ_\secp(\theta,b',\sA)$ to run the ABE security game with $\cA_{\secp,0}(\ket{\psi_\secp})$ where the challenge ciphertext is an encryption of $m = (\theta,b')$,\footnote{Although we have only defined the ABE security game above for challengers that encrypt a single bit $b$, we can also consider the challenger encrypting an arbitrary bit string $m$.} and output $\cA_{\secp,0}$'s final state. Then, we set the class of adversaries $\mathscr{A}$ to include all $\cA_{\secp,1}$, which is the class of all  uniform families of QPT adversaries. By the semantic security of $\ABE$, $\cA_{\secp,1}$ cannot distinguish between $\cZ_\secp(\theta,b',\sA)$ and $\cZ_{\secp}(0^\secp,b',\sA)$, and thus the guarantees of \cref{thm:main} apply.
\item Next, we observe that \[\bigg|\Pr\left[\CEXP_\secp^{\cA}(0) = 1\right] - \Pr\left[\CEXP_\secp^{\cA}(1) = 1\right]\bigg| = \negl(\secp)\]
follows from the fact that the encryption scheme remains semantically secure even when the adversary is given the verification key corresponding to the challenge ciphertext.
\end{itemize}
This completes our proof.
\end{proof}

\begin{remark}
Similarly to the setting of public-key encryption, single-bit certified deletion security for ABE implies multi-bit certified deletion security.
\end{remark}

\paragraph{Relation with  \cite{10.1007/978-3-030-92062-3_21}'s definitions.} 
The definition of certified deletion security for public-key (resp., attribute-based) encryption in \cite{10.1007/978-3-030-92062-3_21} is
different than our definition in two primary respects: (1) it only considers \emph{computationally-bounded} adversaries even after the deletion certificate is computed, and (2) explicitly gives the adversary the secret key $\sk$ after the deletion certificate is computed.

Our definition allows the adversary to be unbounded after deletion, which gives a strong \emph{everlasting security} property. 
Indeed, in Appendix \ref{sec:implication}, we show that our definition implies \cite{10.1007/978-3-030-92062-3_21}'s definition for public-key (attribute-based) encryption schemes.
To see this, we consider any adversary $\cA$ that contradicts \cite{10.1007/978-3-030-92062-3_21}'s notion of security and construct a reduction $\cR$ that contradicts our notion of security. $\cR$ will run $\cA$ on the challenge that it receives from its challenger, and forward the deletion certificate $\cert$ received from $\cA$. $\cR$ will then, in unbounded time, reverse sample a $\sk$ such that $(\pk,\sk)$ is identically distributed to the output of the honest $\Gen$ algorithm. Finally, $\cR$ runs $\cA$ on $\sk$ to obtain $\cA$'s guess for $b$. We can show that the view of $\cA$ produced by such an $\cR$ matches its view in the \cite{10.1007/978-3-030-92062-3_21} challenge, thus the
advantage of $\cR$ in contradicting our definition will match that of $\cA$ in contradicting~\cite{10.1007/978-3-030-92062-3_21}'s definition.

\ifsubmission 
\subsection{Witness encryption}

Next, we observe that if the scheme $\PKE$ is a \emph{witness encryption} scheme, then the scheme with certified deletion that results from the compiler becomes a witness encryption scheme with certified deletion. That is, we compile any (post-quantum) witness encryption into a witness encryption scheme with certified deletion.
Similar to the case of PKE and ABE, we can augment the syntax of any witness encryption scheme to include a deletion algorithm $\mathsf{Del}$ and a verification algorithm $\mathsf{Ver}$.
That is, the scheme consists of algorithms $(\mathsf{Enc}, \mathsf{Dec})$ with syntax and properties identical to standard witness encryption schemes~\cite{STOC:GGSW13}, except where the ciphertexts are potentially quantum, and where the encryption algorithm outputs a (potentially quantum) verification key $\mathsf{vk}$ along with a ciphertext.
$\Ver(\vk,\cert) \to \{\top,\bot\}$ is a (potentially quantum) algorithm that takes as input a (potentially quantum) verification key $\vk$ and a (potentially quantum) deletion certificate $\cert$ and outputs either $\top$ or $\bot$.
$\mathsf{Del}(\mathsf{ct}) \rightarrow \mathsf{cert}$ is a quantum algorithm that on input a quantum ciphertext $\mathsf{ct}$ outputs a (potentially quantum) deletion certificate $\mathsf{cert}$.

\else

\paragraph{Witness encryption for NP with certified deletion.}
Finally, we observe that if the underlying scheme $\PKE$ is a \emph{witness encryption} scheme, then the scheme with certified deletion that results from the above compiler becomes a witness encryption scheme with certified deletion. That is, we compile any (post-quantum) witness encryption into a witness encryption scheme with certified deletion.
Similar to the case of PKE and ABE, we can augment the syntax of any witness encryption scheme to include a deletion algorithm $\mathsf{Del}$ and a verification algorithm $\mathsf{Ver}$.
That is, the scheme consists of algorithms $(\mathsf{Enc}, \mathsf{Dec})$ with syntax and properties identical to standard witness encryption schemes~\cite{STOC:GGSW13}, except where the ciphertexts are potentially quantum, and where the encryption algorithm outputs a (potentially quantum) verification key $\mathsf{vk}$ along with a ciphertext.
$\Ver(\vk,\cert) \to \{\top,\bot\}$ is a (potentially quantum) algorithm that takes as input a (potentially quantum) verification key $\vk$ and a (potentially quantum) deletion certificate $\cert$ and outputs either $\top$ or $\bot$.
$\mathsf{Del}(\mathsf{ct}) \rightarrow \mathsf{cert}$ is a quantum algorithm that on input a quantum ciphertext $\mathsf{ct}$ outputs a (potentially quantum) deletion certificate $\mathsf{cert}$.

\fi

Correctness of decryption is the same as that for (regular) witness encryption. We define correctness of deletion, and certified deletion security for $\CD\text{-}\WE$ below.

\begin{definition}[Correctness of deletion]
$\CD$-$\WE = (\Enc,\Dec,\Del,\Ver)$ satisfies \emph{correctness of deletion} if for every statement $X$ and message $m$, it holds with $1-\negl(\secp)$ probability over $(\ct,\vk) \gets \Enc(X,m),\cert \gets \Del(\ct), \mu \gets \Ver(\vk,\cert)$ that $\mu = \top$.
\end{definition}

\begin{definition}[Certified deletion security]
\label{def:CD-sec-ABE}
$\CD$-$\WE = (\Enc,\Dec,\Del,\Ver)$ satisfies \emph{certified deletion security} if for any non-uniform QPT adversary $\cA = \{\cA_\secp,\ket{\psi}_\secp\}_{\secp \in \bbN}$, there is a negligible function $\negl(\cdot)$ for which it holds that \[\TD\left(\EVEXP_\secp^{\cA}(0),\EVEXP_\secp^{\cA}(1)\right) = \negl(\secp),\] 
and
\[\bigg|\Pr\left[\CEXP_\secp^{\cA}(0) = 1\right] - \Pr\left[\CEXP_\secp^{\cA}(1) = 1\right]\bigg| = \negl(\secp),\] 
where the experiments $\EVEXP_\secp^{\cA}(b)$ and $\CEXP_\secp^{\cA}(b)$ are defined as follows.
Both experiments take an input $b$, and interact with $\cA$ as follows.
\begin{itemize}
    \item Obtain statement $X$, language $\cL$ and messages $(m_0, m_1)$ from $\cA_\secp(\ket{\psi}_\secp)$. If $X \in L$, abort, otherwise continue.
    \item Set $(\ct,\vk) \gets \Enc(X,m_b)$.
    \item Run $\cA_\secp$ on input $\ct$ and parse $\cA_\secp$'s output as a deletion certificate $\cert$ and a residual state on register $\sA'$.
    \item If $\Ver(\vk,\cert) = \top$ then output $\sA'$, and otherwise output $\bot$.
\end{itemize}
and the experiment $\CEXP_\secp^{\cA}(b)$ is a strengthening of semantic security, defined as follows.
\begin{itemize}
    \item Obtain statement $X$, language $\cL$ and messages $(m_0, m_1)$ from $\cA_\secp(\ket{\psi}_\secp)$. If $X \in L$, abort, otherwise continue.
    \item Set $(\ct,\vk) \gets \Enc(X,m_b)$.
    \item Run $\cA_\secp$ on input $\ct$ and parse $\cA_\secp$'s output as a deletion certificate $\cert$ and a residual state on register $\sA'$.
    \item Output $\cA_\secp \left(\sA', \Ver(\vk,\cert) \right)$.
\end{itemize}
\end{definition}

\begin{corollary}
Given any post-quantum semantically-secure witness encryption scheme $\WE = (\Enc,\Dec)$, the scheme $\CD$-$\WE = (\Enc',\Dec',\Del,\Ver)$ defined as follows is a witness encryption scheme with certified deletion.
\begin{itemize}
    \item $\Enc'(X,m):$ sample $x,\theta \gets \{0,1\}^\secp$ and output \[\ct \coloneqq \left(\ket{x}_\theta,\Enc\left(X, \left(\theta,b \oplus \bigoplus_{i: \theta_i = 0} x_i\right)\right)\right), \ \ \ \vk \coloneqq (x,\theta).\]
    \item $\Dec'(W,\ct):$ parse $\ct \coloneqq \left(\ket{x}_\theta,\ct'\right)$, compute $(\theta,b') \gets \Dec(W,\ct')$, measure $\ket{x}_\theta$ in the $\theta$-basis to obtain $x$, and output $b = b' \oplus \bigoplus_{i:\theta_i = 0}x_i$.
    \item $\Del(\ct):$ parse $\ct \coloneqq \left(\ket{x}_\theta,\ct'\right)$ and measure $\ket{x}_\theta$ in the Hadamard basis to obtain a string $x'$, and output $\cert \coloneqq x'$.
    \item $\Ver(\vk,\cert):$ parse $\vk$ as $(x,\theta)$ and $\cert$ as $x'$ and output $\top$ if and only if $x_i = x_i'$ for all $i$ such that $\theta_i = 1$.
\end{itemize}
\end{corollary}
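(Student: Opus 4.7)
The proof plan follows the same template used for the public-key and attribute-based encryption corollaries above. First, correctness of decryption reduces directly to correctness of the underlying $\WE$ scheme together with the observation that measuring $\ket{x}_\theta$ in the $\theta$-basis recovers $x$ exactly, so $b' \oplus \bigoplus_{i:\theta_i=0} x_i = b$. Correctness of deletion is immediate from the scheme: the honest $\Del$ measures $\ket{x}_\theta$ in the Hadamard basis, and on every index $i$ with $\theta_i = 1$ the outcome equals $x_i$ with probability 1, so $\Ver$ accepts with probability 1.

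For certified deletion security, the plan is to reduce both requirements to \cref{thm:main}. I would split the adversary $\cA_\secp$ into two stages: $\cA_{\secp,0}$, which on input $\ket{\psi_\secp}$ produces the challenge statement $X$, the language $\cL$, messages $(m_0, m_1)$, and internal state; and $\cA_{\secp,1}$, which takes the challenge ciphertext together with $\cA_{\secp,0}$'s state and outputs the deletion certificate and residual register $\sA'$. To invoke \cref{thm:main}, I instantiate the operation $\cZ_\secp(\theta, b', \sA)$ to internally run $\cA_{\secp,0}$ to obtain $(X, \cL, m_0, m_1)$ and state, abort (outputting some fixed state) if $X \in \cL$, and otherwise output $(\sA, \Enc(X, (\theta, b')), \text{state})$. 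The class $\mathscr{A}$ is taken to be all non-uniform QPT adversaries. Semantic security of $\cZ_\secp$ with respect to its first input follows from the witness encryption security guarantee: whenever $X \notin \cL$, no QPT distinguisher can distinguish $\Enc(X, (\theta, b'))$ from $\Enc(X, (0^\secp, b'))$, and when $X \in \cL$ the output is independent of $\theta$ by construction. Since the view produced by $\widetilde{\cZ}_\secp^{\cA_{\secp,1}}(b)$ is identical to $\EVEXP_\secp^{\cA}(b)$ (modulo the $b$-independent abort when $X \in \cL$, which is identical on both sides), \cref{thm:main} gives $\TD(\EVEXP_\secp^{\cA}(0), \EVEXP_\secp^{\cA}(1)) = \negl(\secp)$.

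For the strengthening $\CEXP_\secp^{\cA}$, I would observe that the verification key $\vk = (x, \theta)$ produced by $\Enc'$ is sampled independently of $b$, and the only $b$-dependent component of the challenge ciphertext is the inner $\WE$ ciphertext $\Enc(X, (\theta, b \oplus \bigoplus_{i:\theta_i=0} x_i))$. Thus a standard hybrid reduction to the semantic security of the underlying $\WE$ scheme — which applies since $X \notin \cL$ — shows that even an adversary that is additionally given $\Ver(\vk, \cert)$ cannot distinguish $b=0$ from $b=1$, yielding $|\Pr[\CEXP_\secp^{\cA}(0)=1] - \Pr[\CEXP_\secp^{\cA}(1)=1]| = \negl(\secp)$.

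The main obstacle I anticipate is the same subtlety that arose implicitly in the ABE proof: the adversary chooses $X$ adaptively, so the reduction to $\WE$ semantic security must first extract $X$ from $\cA_{\secp,0}$'s output and verify $X \notin \cL$ before forwarding the challenge to its own challenger. This is handled cleanly by the two-stage splitting above, and it is important that $\cA_{\secp,1}$ remains efficient so that it lies in $\mathscr{A}$; both stages are polynomial-time pieces of the original QPT $\cA_\secp$, so no issue arises. With this split in place, the rest is a direct invocation of \cref{thm:main} and a short semantic-security hybrid, exactly mirroring the PKE and ABE arguments.
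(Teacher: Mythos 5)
Your proposal is correct and follows essentially the same route as the paper: instantiate $\cZ_\secp$ with the witness-encryption of $(\theta,b')$ under the statement $X$ (enforcing $X \notin \cL$), invoke \cref{thm:main} for the everlasting experiment, and dispose of $\CEXP$ by noting that $\vk=(x,\theta)$ is independent of $b$ so ordinary $\WE$ semantic security suffices. Your explicit two-stage splitting of the adversary to handle the adaptive choice of $X$ is the same device the paper uses in its ABE proof and merely makes explicit what the paper's terser WE proof leaves implicit, so there is no substantive difference.
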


\begin{proof}
Correctness of deletion follows immediately from the description of the scheme. For certified deletion security, we consider the following:
\begin{itemize}
\item First, we observe that 
\[\TD\left(\EVEXP_\secp^{\cA}(0),\EVEXP_\secp^{\cA}(1)\right) = \negl(\secp)\]
follows from \cref{thm:main} and the semantic security of $\WE$ by setting the distribution $\cZ_\secp(\theta,b',\sA)$ to sample $\ct \gets \Enc(X,b)$ and output $(\sA,\Enc(X,(\theta,b')))$, and setting the class of adversaries $\mathscr{A}$ to be all non-uniform families of QPT adversaries $\{\cA_\secp,\ket{\psi_\secp}\}_{\secp \in \bbN}$.
\item Next, we observe that \[\bigg|\Pr\left[\CEXP_\secp^{\cA}(0) = 1\right] - \Pr\left[\CEXP_\secp^{\cA}(1) = 1\right]\bigg| = \negl(\secp)\]
follows from the fact that the witness encryption scheme remains (computationally) semantically secure even when the adversary is given the verification key corresponding to the challenge ciphertext.
\end{itemize}
This completes our proof.
\end{proof}

\subsection{Fully-homomorphic encryption}

Next, we consider the syntax of a \emph{fully-homomorphic encryption} scheme (for classical circuits) with certified deletion. Such a scheme consists of algorithms $\CD$-$\FHE$ = ($\CD$-$\FHE.\Gen$, $\CD$-$\FHE.\Enc$, $\CD$-$\FHE.\Eval$, $\CD$-$\FHE.\Dec$, $\CD$-$\FHE.\Del$, $\CD$-$\FHE.\Ver$) with the same syntax as $\CD$-$\PKE$ (\cref{subsec:PKE}), but including the additional algorithm $\CD$-$\FHE.\Eval$.
\begin{itemize}
    \item $\CD$-$\FHE.\Eval(\pk,C,\ct) \to \widetilde{\ct}:$ On input the public key $\pk$, a classical circuit $C$, and a quantum ciphertext $\ct$, the evaluation algorithm returns a (potentially quantum) evaluated ciphertext $\widetilde{\ct}$.
\end{itemize}

We say that $\CD$-$\FHE$ satisfies evaluation correctness if the following holds.

\begin{definition}[$\CD$-$\FHE$ evaluation correctness]\label{def:eval-correctness}
A $\CD$-$\FHE$ scheme satisfies evaluation correctness if for any message $x$, and all polynomial-size circuits $C$, it holds with $1-\negl(\secp)$ probability over $(\pk,\sk) \gets \CD$-$\FHE.\Gen(1^\secp), \ct \gets \CD$-$\FHE.\Enc(\pk,x)$, $\widetilde{\ct} \gets \CD$-$\FHE.\Eval(\pk,C,\ct), y \gets \CD$-$\FHE.\Dec(\sk,\widetilde{\ct})$ that $y = C(x)$.
\end{definition}

Now we can formally define the notion of fully-homomorphic encryption with certified deletion.

\begin{definition}[Fully-homomorphic encryption with certified deltion]
$\CD$-$\FHE$ = ($\CD$-$\FHE.\Gen$, $\CD$-$\FHE.\Enc$, $\CD$-$\FHE.\Eval$, $\CD$-$\FHE.\Dec$, $\CD$-$\FHE.\Del$, $\CD$-$\FHE.\Ver$) is a secure \emph{fully-homomorphic encryption scheme with certified deletion} if it satisfies 
(i) correctness of deletion (\cref{def:CD-correctness}), (ii) certified deletion security (\cref{def:CD-security}), and (iii) evaluation correctness (\cref{def:eval-correctness}).
\end{definition}

\subsubsection{Blind delegation with certified deletion} 

So far, we have described a $\PKE$ scheme with certified deletion augmented with a procedure that allows for homomorphic evaluation over ciphertexts. A fascinating application for such a scheme, as discussed by \cite{10.1007/978-3-030-64381-2_4,cryptoeprint:2022:295}, is the following. A computationally weak client wishes to use the resources of a powerful server to peform some intensive computation $C$ on their input data $x$. However, they would like to keep $x$ private from the server, and, moreover, they would like to be certain that their data is \emph{deleted} by the server after the computation takes place. Here, by deleted, we mean that the original input $x$ becomes \emph{information-theoretically} hidden from the server after the computation has taken place.

While it is not necessarily clear from the syntax described so far that the server can both compute on \emph{and later} delete the client's input data, we demonstrate, via an interaction pattern described by \cite{cryptoeprint:2022:295}, a protocol that achieves this functionality. We refer to such a protocol as a ``blind delegation with certified deletion'' protocol, and describe it in \proref{fig:private-delegation}.

\protocol{Blind delegation with certified deletion}{A generic construction of blind delegation with certified deletion, from any $\CD$-$\FHE$ scheme.}{fig:private-delegation}{
\begin{itemize}
    \item Parties: client with input $x$, and server.
    \item Ingredients: a $\CD$-$\FHE$ scheme.
\end{itemize}

\underline{Encryption phase}
\begin{itemize}
    \item The client samples $(\pk,\sk) \gets \CD$-$\FHE.\Gen(1^\secp)$, $(\ct,\vk) \gets \CD$-$\FHE.\Enc(\pk,x)$ and sends $(\pk,\ct)$ to the server.
\end{itemize}

\underline{Computation phase} (this may be repeated arbitrarily many times)
\begin{itemize}
    \item The client sends the description of a circuit $C$ to the server.
    \item The server runs the algorithm $\CD$-$\FHE.\Eval(\pk,\ct,C)$ \emph{coherently}. Let $\sO$ be the (unmeasured) register that holds the output ciphertext $\widetilde{\ct}$. Send $\sO$ to the client.
    \item The client runs $\CD$-$\FHE.\Dec(\sk,\cdot)$ \emph{coherently} on register $\sO$, and then measures the output register of this computation in the standard basis to obtain the output $y$. Then, it reverses the computation of $\CD$-$\FHE.\Dec(\sk,\cdot)$ and sends the register $\sO$ back to the server.
    \item The server reverses the computation of $\CD$-$\FHE.\Eval(\pk,\ct,C)$ to obtain the original input $(\pk,\ct,C)$ (with overwhelming probability).
\end{itemize}

\underline{Deletion phase}
\begin{itemize}
    \item The server runs $\cert \gets \CD$-$\FHE.\Del(\ct)$ and sends $\cert$ to the client.
    \item The client runs $\Ver(\vk,\cert)$ and outputs the result ($\top$ or $\bot$).
\end{itemize}
}

A blind delegation with certified deletion protocol should satisfy the following notions of correctness and security. We present each definition for the case of a single circuit $C$ queried by the client (one repetition of the computation phase), but they easily extend to considering multiple repetitions of the computation phase.

\begin{definition}[Correctness for blind delegation with certified deletion]\label{def:delegation-correctness}
A blind delegation with certified deletion protocol is \emph{correct} if the honest client and server algorithms satisfy the following properties. First, for any $x,C$, the client obtains $y = C(x)$ after the computation with probability $1-\negl(\secp)$. Second, for any $x,C$, the client outputs $\top$ after the deletion phase with probability $1-\negl(\secp)$.
\end{definition}

\begin{definition}[Security for blind delegation with certified deletion]\label{def:delegation-security}
A blind delegation with certified deletion protocol is \emph{secure} against a class of adversarial servers $\mathscr{S}$ if for any $x_0,x_1$, circuit $C$, and $\{\cS_{\secp}\}_{\secp \in \bbN} \in \mathscr{S}$, the following two properties hold.
\begin{itemize}
    \item \textbf{Privacy:} For any QPT distinguisher $\{\cD_\secp\}_{\secp \in \bbN}$, it holds that \[\bigg|\Pr\left[\cD_\secp( \langle \cC(x_0,C),\cS_\secp \rangle) = 1\right] - \Pr\left[\cD_\secp( \langle \cC(x_1,C),\cS_\secp \rangle) = 1\right]\bigg| = \negl(\secp),\] where $\langle \cC(x,C),\cS_\secp \rangle$ denotes the output state of adversary $\cS_\secp$ after interacting (in the Encryption, Computation, and Delete phases) with an honest client $\cC$ with input $x$ and circuit $C$.

    \item \textbf{Certified deletion:} It holds that \[\TD\left(\EVEXP_\secp^{\cS}(x_0,C),\EVEXP_\secp^{\cS}(x_1,C)\right) = \negl(\secp),\] where the experiment $\EVEXP_\secp^{\cS}(x,C)$ is defined as follows.
    \begin{itemize}
        \item Run the Encryption, Computation, and Deletion phases between client $\cC(x,C)$ and server $\cS_\secp$, obtaining the server's final state on register $\sA'$ and the client's decision $\top$ or $\bot$. If $\top$ output $\sA'$, and otherwise output $\bot$.
    \end{itemize}
\end{itemize}
\end{definition}

Next, we define a class of adversaries $\mathscr{S}$ that we call \emph{evaluation-honest}. The defining feature of an evaluation-honest adversary $\{\cS_\secp\}_{\secp \in \bbN}$ is that, for any client input $x$ and circuit $C$, the state on register $\sO$ returned by $\cS_\secp$ during the Computation phase is within negligible trace distance of the state on register $\sO$ returned by the \emph{honest} server. Otherwise, $\cS_\secp$ may be arbitrarily malicious, including during the Deletion phase and after. Morally, evaluation-honest adversaries are those that are \emph{specious} \cite{C:DupNieSal10} (which is a quantum analogue of semi-honest) during the Computation phase, and \emph{malicious} afterwards, though we do not give a formal definition of specious here.\footnote{Roughly, a specious adversary is one who may, at any step of the computation, apply an operation to their private state such that the joint state of the resulting system is negligibly close to the joint state of an honest interaction. Note that for any specious adversary, the registers they send to the honest party during any round must be negligibly close to the register sent by the honest party during this round, since after transmission, this register is no longer part of their private state.  }

Then, we have the following theorem.

\begin{theorem}
When instantiated with any $\CD$-$\FHE$ scheme, \proref{fig:private-delegation} is a blind delegation with certified deletion scheme, secure against any \emph{evaluation-honest} adversarial server $\{\cS^*_{\secp}\}_{\secp \in \bbN}$. 
\end{theorem}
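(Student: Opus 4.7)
The plan is to reduce both privacy and certified deletion of \proref{fig:private-delegation} to the corresponding properties of the underlying $\CD$-$\FHE$ scheme by establishing that, against an evaluation-honest server, the entire Computation phase acts as a near-identity on the server's registers. Correctness is immediate: the client obtains $y = C(x)$ by evaluation correctness of $\CD$-$\FHE$ together with coherent decryption, and deletion verifies with overwhelming probability by $\CD$-$\FHE$ correctness of deletion.

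First I would prove a ``Computation-phase reversibility'' lemma: for any $\{\cS_\secp\}_{\secp \in \bbN} \in \mathscr{S}$, the joint state of the server (including any auxiliary workspace retained from the Encryption phase) at the end of the Computation phase is within negligible trace distance of its joint state at the start of that phase, when it holds only $\pk$ and the freshly generated $\ct$. The argument has three steps. (a) By evaluation-honesty, the register $\sO$ sent to the client is negligibly close to the honestly and coherently computed $\widetilde{\ct} = \CD\text{-}\FHE.\Eval(\pk,C,\ct)$, jointly with the server's internal workspace. (b) Since $\CD\text{-}\FHE.\Dec(\sk,\widetilde{\ct})$ outputs $C(x)$ with probability $1-\negl(\secp)$, coherent decryption places overwhelming weight on the ``correct output'' subspace; hence the client's standard-basis measurement of the output register accepts the corresponding projector with probability $1-\negl(\secp)$, and by Gentle Measurement (\cref{lemma:gentle-measurement}) the post-measurement state is negligibly perturbed, so uncomputing the decryption returns $\sO$ in a state negligibly close to $\widetilde{\ct}$. (c) The server's inverse $\CD\text{-}\FHE.\Eval$ is unitary, hence maps this near-$\widetilde{\ct}$ register back to a register negligibly close to the original input $(\pk,\ct,C)$. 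The triangle inequality over (a)--(c) yields the lemma.

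Given the lemma, privacy follows by a direct reduction to semantic security guaranteed by \cref{def:CD-security}: a distinguisher for inputs $x_0,x_1$ internally embeds a $\CD$-$\FHE$ semantic-security challenge into the Encryption phase, and then simulates each subsequent round as in the real protocol, since every message the server receives from the client is either (i) the plaintext circuit $C$, which is fixed and independent of $x$, or (ii) the returned register $\sO$, whose content is by the lemma negligibly close to what the server itself just sent. Certified deletion follows by a symmetric reduction to $\EVEXP$ for $\CD$-$\FHE$: the reduction embeds its $\CD$-$\FHE$ challenge ciphertext as $\ct$, honestly runs the Computation and Deletion phases with $\cS_\secp$, and forwards the emitted $\cert$ to its challenger. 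Again by the lemma, $\cS_\secp$'s residual state $\sA'$ together with the verification outcome is within negligible trace distance of the output of $\EVEXP_\secp^{\cS'}$ applied directly to the raw $\CD$-$\FHE$ ciphertext, where $\cS'$ is the induced adversary that skips the Computation phase; certified deletion of $\CD$-$\FHE$ then closes the bound.

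The main obstacle is step (b) of the reversibility lemma: quantitatively converting the statistical correctness of $\CD\text{-}\FHE.\Dec$ into a near-projector onto the ``correct output'' subspace of the coherent-decryption register so that Gentle Measurement applies with negligible disturbance, and then carefully propagating this approximation through the server's subsequent inverse-$\Eval$, which in general acts on a register jointly entangled with arbitrary malicious auxiliary state. Making this careful requires that evaluation-honesty constrain not only the reduced state on $\sO$ but the joint state of $\sO$ together with the server's workspace; once that is formalized, all remaining steps are triangle-inequality bookkeeping and black-box reductions to the properties of $\CD$-$\FHE$ proved earlier.
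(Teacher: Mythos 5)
Your proposal is correct and takes essentially the same approach as the paper: correctness follows from evaluation correctness plus Gentle Measurement, and security follows by arguing that the client's Computation-phase action on $\sO$ is a near-identity on the server's joint state, so the whole interaction collapses to the plain multi-bit certified-deletion experiment for the underlying $\CD$-$\FHE$ scheme---the paper simply packages this as a hybrid in which the client returns $\sO$ untouched, rather than as your ``reversibility lemma.'' Two small points of hygiene: step (c) of your lemma (the server un-computing $\Eval$) holds only for the honest server and is only needed for correctness, since an evaluation-honest server may act arbitrarily after $\sO$ is returned; and your certified-deletion reduction cannot literally ``honestly run'' the client's Computation phase (it has no $\sk$), but the induced adversary $\cS'$ that returns $\sO$ untouched, which you invoke in the same sentence, is exactly the right (and the paper's) formulation.
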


\begin{proof}
First, we argue that correctness (\cref{def:delegation-correctness}) holds. The evaluation correctness of the underlying $\CD$-$\FHE$ scheme (\cref{def:eval-correctness}) implies that the register measured by the client during the computation phase is within negligible trace distance of $\ket{y}$ for $y = C(x)$. This implies the first property of correctness for blind delegation with certified deletion, and it also implies that the state on register $\sO$ returned to the server is negligibly close to the original state on register $\sO$. So, after reversing the computation, the server obtains a $\ct'$ that is negligibly close to the original $\ct$ received from the client (due to the Gentle Measurement Lemma). Thus, the second property of correctness for blind delegation with certified deletion follows from the correctness of deletion property of $\CD$-$\FHE$ (\cref{def:CD-correctness}).

Next, we argue that security (\cref{def:delegation-security}) holds against any evaluation-honest server $\{\cS^*_\secp\}_{\secp \in \bbN}$. Consider a hybrid experiment in which there is no interaction between client and server during the Computation phase, that is, the register $\sO$ is not touched by the client and is immediately return the the server. By the evaluation-honesty of the server, and the above arguments, it follows that the server's view of this hybrid experiment is negligibly close to its view of the real interaction. But now observe that this hybrid experiment is equivalent to the experiment described in \cref{def:CD-security-multi} for defining certified deletion security for standard public-key encryption for multi-bit messages. Thus, the Privacy and Certified Deletion properties of blind delegation follow directly from the certified deletion properties (\cref{def:CD-security-multi}) of the underlying $\CD$-$\FHE$ scheme.
\end{proof}

\subsubsection{Construction of $\CD$-$\FHE$} 

Now, to obtain a blind delegation with certified deletion protocol, it suffices to construct a $\CD$-$\FHE$ scheme. In this section, we show that such a scheme follows from a standard fully-homorphic encryption scheme, and our main theorem.

\begin{corollary}
Given any classical fully-homomorphic encryption scheme $\FHE = (\FHE.\Gen,\allowbreak\FHE.\Enc,\allowbreak \FHE.\Eval,\allowbreak \FHE.\Dec)$, the scheme defined below $\CD$-$\FHE$ = $(\FHE.\Gen,\Enc,\Eval,\Dec,\allowbreak\Del,\Ver)$ is an $\FHE$ scheme with certified deletion. We define encryption, decryption, deletion, and verification for one-bit plaintexts, and evaluation over ciphertexts encrypting $n$ bits (which is simply a concatenation of $n$ ciphertexts each encrypting one bit).
\begin{itemize}
    \item $\Enc(\pk,b):$ Sample $x,\theta \gets \{0,1\}^\secp$ and output \[\ct \coloneqq \left(\ket{x}_{\theta},\FHE.\Enc\left(\theta,b \oplus \bigoplus_{i: \theta_{i} = 0} x_{i}\right)\right), \ \ \ \vk \coloneqq (x,\theta).\] 
    \item $\Eval(\pk,C,\ct):$ Parse $\ct \coloneqq (\ket{x_1}_{\theta_1},\ct'_1),\dots,(\ket{x_n}_{\theta_n},\ct'_n)$. Consider the circuit $\widetilde{C}$ that takes $(x'_1,\theta_1,b_1'),\dots,(x'_n,\theta_n,b_n')$ as input, for each $i \in [n]$ computes $b_i = b_i' \oplus \bigoplus_{j:\theta_{i,j} = 0}x'_j$, and then computes and outputs $C(b_1,\dots,b_n)$. Then, apply $\widetilde{C}$ homomorphically in superposition to $\ct$ to obtain $\widetilde{\ct}$. Optionally, measure $\widetilde{\ct}$ in the standard basis to obtain a classical output ciphertext.

    \item $\Dec(\sk,\ct):$ parse $\ct \coloneqq \left(\ket{x}_\theta,\ct'\right)$, compute $(\theta,b') \gets \FHE.\Dec(\sk,\ct')$, measure $\ket{x}_\theta$ in the $\theta$-basis to obtain $x$, and output $b = b' \oplus \bigoplus_{i:\theta_i = 0}x_i$.
    \item $\Del(\ct):$ parse $\ct \coloneqq \left(\ket{x}_\theta,\ct'\right)$ and measure $\ket{x}_\theta$ in the Hadamard basis to obtain a string $x'$, and output $\cert \coloneqq x'$.
    \item $\Ver(\vk,\cert):$ parse $\vk$ as $(x,\theta)$ and $\cert$ as $x'$ and output $\top$ if and only if $x_i = x_i'$ for all $i$ such that $\theta_i = 1$.
\end{itemize}
\end{corollary}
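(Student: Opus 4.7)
The plan is to verify the three required properties in turn — correctness of deletion, evaluation correctness, and certified deletion security — with most of the substantive work concentrated in spelling out the coherent homomorphic evaluation and then invoking Theorem~\ref{thm:main} for security.

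For \emph{correctness of deletion}, I would simply note that for each index $i$ with $\theta_i = 1$ the qubit $\ket{x_i}_1$ is a Hadamard-basis encoding of $x_i$, so the Hadamard-basis measurement inside $\Del$ returns $x_i$ with probability one and $\Ver$ accepts deterministically; the indices with $\theta_i = 0$ are irrelevant to the check because $\Ver$ only consults positions where $\theta_i = 1$.

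For \emph{evaluation correctness}, I first need to make precise what ``apply $\widetilde{C}$ homomorphically in superposition'' means on a ciphertext whose input register is partially quantum. The plan is to coherently lift each qubit of $\ket{x_i}_{\theta_i}$ into an $\FHE$ ciphertext by running the unitary $\ket{x'}\ket{0} \mapsto \ket{x'}\ket{\FHE.\Enc(\pk, x'; r)}$ with a fresh classical random tape $r$, and then to apply the classical homomorphic evaluation circuit for $\widetilde{C}$ to the resulting collection of $\FHE$ ciphertexts (some classical, some quantum-superposed) in superposition. I would then observe that $\widetilde{C}$ only reads the bits $x'_{i,j}$ for which $\theta_{i,j} = 0$, and for those indices the corresponding qubit of $\ket{x_i}_{\theta_i}$ is the computational basis state $\ket{x_{i,j}}$; hence every branch of the superposition carries the same plaintext value $b_i = b_i' \oplus \bigoplus_{j : \theta_{i,j} = 0} x_{i,j}$ and therefore the same output $C(b_1, \dots, b_n)$. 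Decryption correctness of the underlying classical $\FHE$ then yields the required output bit with overwhelming probability.

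For \emph{certified deletion security}, I would invoke Theorem~\ref{thm:main} by setting $\cZ_\secp(\theta, b', \sA)$ to sample $(\pk, \sk) \gets \FHE.\Gen(1^\secp)$ and output $(\pk, \sA, \FHE.\Enc(\pk, (\theta, b')))$, and by taking $\mathscr{A}$ to be the class of all non-uniform quantum polynomial-time adversaries. Post-quantum semantic security of $\FHE$ supplies the indistinguishability hypothesis on $\cZ_\secp$'s first input required by the theorem, so the $\EVEXP$ clause of Definition~\ref{def:CD-security} follows at once; the $\CEXP$ clause is inherited from plain semantic security of $\FHE$, since the verification key $(x, \theta)$ is informationally independent of $b$ given the ciphertext.

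The main conceptual point I would flag explicitly is the apparent tension between homomorphism and information-theoretic deletion: an adversary can homomorphically and coherently produce a register encoding $\FHE.\Enc(\pk, b)$, and a computational-basis measurement of that register would produce a classical snapshot from which $b$ could be recovered in unbounded time. The crucial observation is that measuring necessarily collapses the superposition over $\FHE$ randomness, destroying the adversary's ability to reverse the evaluation and later pass the Hadamard-basis verification of $\Ver$; Theorem~\ref{thm:main} rules out this and every other efficient strategy uniformly, so no bespoke analysis of this attack is needed in the proof.
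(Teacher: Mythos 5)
Your proposal is correct and follows essentially the same route as the paper: certified deletion security is obtained by instantiating Theorem~\ref{thm:main} with $\cZ_\secp(\theta,b',\sA)$ sampling $(\pk,\sk)\gets\FHE.\Gen(1^\secp)$ and outputting $\sA$ together with $\FHE.\Enc(\pk,(\theta,b'))$, with $\mathscr{A}$ the class of all non-uniform QPT adversaries, while correctness of deletion and the $\CEXP$/semantic-security clause follow directly from the scheme description and the semantic security of $\FHE$. Your additional spelling-out of coherent evaluation correctness (lifting the BB84 register into $\FHE$ ciphertexts in superposition and noting that only the computational-basis positions affect the plaintext) is a harmless elaboration of what the paper treats as immediate.
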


\begin{proof}
Semantic security follows immediately from the semantic security of $\FHE$. Correctness of deletion follows immediately by definition the scheme. Certified deletion security follows from \cref{thm:main} and the semantic security of $\FHE$ by setting the distribution $\cZ_\secp(\theta,b',\sA)$ to sample $(\pk,\sk) \gets \FHE.\Gen(1^\secp)$, and output $(\sA,\FHE.\Enc(\pk,(\theta,b')))$, and setting the class of adversaries $\mathscr{A}$ to be all non-uniform families of QPT adversaries $\{\cA_\secp,\ket{\psi_\secp}\}_{\secp \in \bbN}$.
\end{proof}

\subsection{Commitments and zero-knowledge}\label{subsec:com-ZK}

A bit commitment scheme is an interactive protocol between two (potentially quantum) interactive machines, a committer $\cC = \{\cC_{\Com,\secp},\cC_{\Rev,\secp}\}_{\secp \in \bbN}$ and a receiver $\cR = \{\cR_{\Com,\secp},\cR_{\Rev,\secp}\}_{\secp \in \bbN}$. It operates in two stages.

\begin{itemize}
    \item In the Commit phase, the committer $\cC_{\Com,\secp}(b)$ with input bit $b$ interacts with the receiver $\cR_{\Com,\secp}$. This interaction results in a joint state on a committer and receiver register, which we denote $(\sC,\sR) \gets \Com\langle \cC_{\Com,\secp}(b),\allowbreak\cR_{\Com,\secp} \rangle$.
    \item In the Reveal phase, the parties continue to interact, and the receiver outputs a trit $\mu \in \{0,1,\bot\}$, which we denote by $\mu \gets \Rev\langle \cC_{\Rev,\secp}(\sC),\cR_{\Rev,\secp}(\sR)\rangle$.
    
\end{itemize}

A commitment scheme that is \emph{statistically binding and computationally hiding} is one that satisfies the following three properties.

\begin{definition}[Correctness of decommitment]\label{def:correctness-decommitment}
A commitment scheme satisfies \emph{correctness of decommitment} if for any $b \in \{0,1\}$, it holds with overwhelming probability over $(\sC,\sR) \gets \Com\langle\cC_{\Com,\secp}(b),\allowbreak\cR_{\Com,\secp}\rangle, \mu \gets \Rev\langle\cC_{\Rev,\secp}(\sC),\cR_{\Rev,\secp}(\sR)\rangle$ that $\mu = b$.
\end{definition}

\begin{definition}[Computational hiding]\label{def:comp-hiding}
A commitment scheme satisfies \emph{computational hiding} if for any non-uniform QPT adversary and distinguisher $\cR^* = \{\cR^*_{\Com,\secp},\cD^*_{\secp},\ket{\psi_\secp}\}_{\secp \in \bbN}$, where $\ket{\psi_\secp}$ is a state on two registers $(\sR^*,\sD^*)$, it holds that 
\begin{align*}
    &\bigg|\Pr\left[\cD^*_\secp(\sR^*,\sD^*) = 1 : (\sC,\sR^*) \gets \Com\langle\cC_{\Com,\secp}(0),\cR^*_{\Com,\secp}(\sR^*)\rangle \right] \\ &- \Pr\left[\cD^*_\secp(\sR^*,\sD^*) = 1 : (\sC,\sR^*) \gets \Com\langle\cC_{\Com,\secp}(1),\cR^*_{\Com,\secp}(\sR^*)\rangle \right]\bigg| = \negl(\secp).
\end{align*}
\end{definition}

We follow \cite{cryptoeprint:2021:1663}'s notion of statistical binding, which asks for an unbounded extractor that obtains the committer's bit during the Commit phase.

\begin{definition}[Statistical binding]\label{def:SB}
A commitment scheme satisfies \emph{statistical binding} if for any unbounded adversary $\{\cC^*_{\Com,\secp}\}_{\secp \in \bbN}$ in the Commit phase, there exists an unbounded extractor $\cE = \{\cE_\secp\}_{\secp \in \bbN}$ such that for every unbounded adversary $\{\cC^*_{\Rev,\secp}\}_{\secp \in \bbN}$ in the Reveal phase, \[\TD\left(\Real_\secp^{\cC^*},\Ideal_\secp^{\cC^*,\cE}\right) = \negl(\secp),\] where $\Real_\secp^{\cC^*}$ and $\Ideal_\secp^{\cC^*,\cE}$ are defined as follows.
\begin{itemize}
    \item $\Real_\secp^{\cC^*}$: Execute the Commit phase $(\sC^*,\sR) \gets \Com\langle \cC^*_{\Com,\secp},\cR_{\Com,\secp}\rangle$. Execute the Reveal phase to obtain a trit $\mu \gets \Rev\langle \cC^*_{\Rev,\secp}(\sC^*),\cR_{\Rev,\secp}(\sR)\rangle$ along with the updated committer's state on register $\sC^*$. Output $(\mu,\sC^*)$.
    \item $\Ideal_\secp^{\cC^*,\cE}$: Run the extractor $(\sC^*,\sR,b^*) \gets \cE_\secp$, which outputs a joint state on registers $\sC^*,\sR$ along with a bit $b^*$. Next, execute the Reveal phase to obtain a trit $\mu \gets \Rev\langle \cC^*_{\Rev,\secp}(\sC^*),\cR_{\Rev,\secp}(\sR)\rangle$ along with the updated committer's state on register $\sC^*$. If $\mu \in \{\bot,b^*\}$ output $(\mu,\sC^*)$, and otherwise output a special symbol $\Fail$.
\end{itemize}
\end{definition}

We will also consider commitment schemes with an additional (optional) \emph{Delete} phase. That is, the committer and receiver will be written as three components: $\cC_\secp = \{\cC_{\Com,\secp},\cC_{\Del,\secp},\cC_{\Rev,\secp}\}$, and $\cR_\secp = \{\cR_{\Com,\secp},\cR_{\Del,\secp},\cR_{\Rev,\secp}\}$, and the protocol proceeds as follows.

\begin{itemize}
    \item In the Commit phase, the committer $\cC_{\Com,\secp}(b)$ with input bit $b$ interacts with the receiver $\cR_{\Com,\secp}$. This interaction results in a joint state on a committer and receiver register, which we denote $(\sC,\sR) \gets \Com\langle \cC_{\Com,\secp}(b),\allowbreak\cR_{\Com,\secp} \rangle$.
    \item In the Delete phase, the parties continue to interact. The committer outputs a bit $d_\cC \in \{\top,\bot\}$ indicating whether they accept or reject. We denote the resulting output and joint state of the committer and receiver by $(d_\cC,\sC,\sR) \gets \Del\langle \cC_{\Del,\secp}(\sC),\cR_{\Del,\secp}(\sR)\rangle$.
    \item The Reveal phase is only executed if the Delete phase has not been executed, and the receiver outputs a trit $\mu \in \{0,1,\bot\}$, which we denote by $\mu \gets \Rev\langle \cC_{\Rev,\secp}(\sC),\cR_{\Rev,\secp}(\sR)\rangle$.

\end{itemize}

For such commitments, we ask for an additional correctness property, and a stronger hiding property.

\begin{definition}[Correctness of deletion, \cite{cryptoeprint:2021:1315}]\label{def:correctness-deletion}
A bit commitment scheme satisfies \emph{correctness of deletion} if for any $b \in \{0,1\}$, it holds with overwhelming probability over $(\sC,\sR) \gets \Com\langle\cC_{\Com,\secp}(b),\cR_{\Com,\secp}\rangle, \allowbreak(d_\cC,\sC,\sR) \gets \Del\langle\cC_{\Del,\secp}(\sC),\cR_{\Del,\secp}(\sR)\rangle$ that $d_\cC = \top$.
\end{definition}



\begin{definition}[Certified everlasting hiding, \cite{cryptoeprint:2021:1315}]\label{def:CEH}
A commitment scheme satisfies \emph{certified everlasting hiding} if it satisfies the following two properties. First, for any non-uniform QPT adversary and distinguisher $\cR^* = \{\cR^*_{\Com,\secp},\cR^*_{\Del,\secp},\cD^*_\secp,\ket{\psi_\secp}\}_{\secp \in \bbN}$, where $\ket{\psi_\secp}$ is a state on two registers $(\sR^*,\sD^*)$, it holds that 
\begin{align*}
    &\bigg|\Pr\left[\cD^*_\secp(d_\cC,\sR^*,\sD^*) = 1 : \begin{array}{r}(\sC,\sR^*) \gets \Com\langle\cC_{\Com,\secp}(0),\cR^*_{\Com,\secp}(\sR^*)\rangle \\ (d_\cC,\sC,\sR^*) \gets \Del\langle\cC_{\Del,\secp}(\sC),\cR^*_{\Del,\secp}(\sR^*)\rangle\end{array}\right] \\ &- \Pr\left[\cD^*_\secp(d_\cC,\sR^*,\sD^*) = 1 : \begin{array}{r}(\sC,\sR^*) \gets \Com\langle\cC_{\Com,\secp}(1),\cR^*_{\Com,\secp}(\sR^*)\rangle \\ (d_\cC,\sC,\sR^*) \gets \Del\langle\cC_{\Del,\secp}(\sC),\cR^*_{\Del,\secp}(\sR^*)\rangle\end{array}\right]\bigg| = \negl(\secp).
\end{align*}

Second, for any non-uniform QPT adversary $\cR^* = \{\cR^*_{\Com,\secp},\cR^*_{\Del,\secp},\ket{\psi}\}_{\secp \in \bbN}$, where $\ket{\psi_\secp}$ is a state on two registers $(\sR^*,\sD^*)$, it holds that \[\TD\left(\EVEXP_\secp^{\cR^*}(0),\EVEXP_\secp^{\cR^*}(1)\right) = \negl(\secp),\] where the experiment $\EVEXP_\secp^{\cR^*}(b)$ is defined as follows.
\begin{itemize}
    \item Execute the Commit phase $(\sC,\sR^*) \gets \Com\langle\cC_{\Com,\secp}(b),\cR^*_{\Com,\secp}(\sR^*)\rangle$.
    \item Execute the Delete phase $(d_\cC,\sC,\sR^*) \gets \Del\langle\cC_{\Del,\secp}(\sC),\cR^*_{\Del,\secp}(\sR^*))\rangle$.
    \item If $d_\cC = \top$ then output $(\sR^*,\sD^*)$, and otherwise output $\bot$.
\end{itemize}
\end{definition}

Then, we have the following corollary of \cref{thm:main}.

\begin{corollary}\label{cor:cecom}
Given any statistically binding computationally hiding commitment scheme $\Com$, the commitment defined as follows is a statistically binding commitment scheme with certified everlasting hiding.
\begin{itemize}
    \item The committer, on input $b \in \{0,1\}$, samples $x,\theta \gets \{0,1\}^\secp$. Then, the committer and receiver engage in the Commit phase of $\Com$, where the committer has input $(\theta,b \oplus \bigoplus_{i: \theta_i = 0}x_i)$. Finally, the committer sends $\ket{x}_\secp$ to the receiver.
    \item For the Delete phase, the receiver measures the state $\ket{x}_\theta$ in the Hadamard basis to obtain a string $x'$, and sends $x'$ to the committer. The committer outputs $\top$ if and only if $x_i = x'_i$ for all $i$ such that $\theta_i = 1$.
    \item For the Reveal phase, the committer and receiver engage in the Reveal phase of $\Com$, where the committer reveals the committed input $(\theta,b')$. If this passes, the receiver measures $\ket{x}_\theta$ in the $\theta$ basis to obtain $x$ and outputs $b = b' \oplus \bigoplus_{i: \theta_i = 0}x_i$.
\end{itemize}
\end{corollary}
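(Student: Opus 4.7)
The plan is to verify each of the four required properties of the construction in turn: correctness of decommitment, correctness of deletion, statistical binding, and certified everlasting hiding. The first two follow immediately by inspection of the protocol together with correctness of the underlying $\Com$. For statistical binding, I would define the extractor $\cE$ of Definition \ref{def:SB} by running the statistical-binding extractor $\cE_\Com$ of the inner commitment during the inner Commit phase, thereby extracting a pair $(\theta^*, b'^*)$, and additionally measuring the BB84 register $\sR$ sent by $\cC^*$ in basis $\theta^*$ to obtain $x^*$, finally setting $b^* = b'^* \oplus \bigoplus_{i:\theta^*_i = 0} x^*_i$. To argue indistinguishability with the real experiment, I would first invoke statistical binding of $\Com$ to replace the simulated inner interaction with a real one (up to $\negl(\secp)$ statistical distance), and then observe that if $\cC^*$ successfully reveals a bit $b \neq \bot$, binding of $\Com$ forces its opening to equal $(\theta^*, b'^*)$; the honest receiver's basis-$\theta^*$ measurement of $\sR$ in the Reveal phase therefore yields the same $x^*$ already produced by the extractor (since a repeated projective measurement in the same basis on the same register is deterministic), so $b = b^*$.

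For the computational part of Definition \ref{def:CEH}, I would directly reduce to the computational hiding of $\Com$ on multi-bit messages (which follows from bit hiding by a standard hybrid). Going from $b = 0$ to $b = 1$ changes only the masked bit inside the inner commitment message $(\theta, b')$; the quantum register $\ket{x}_\theta$ is identically distributed, and the delete-verification bit $d_\cC$ can be computed inside the reduction since the reduction itself samples $(x, \theta)$, so the joint distribution $(d_\cC, \sR^*, \sD^*)$ is computationally indistinguishable in the two worlds.

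The core of the proof is the everlasting trace-distance bound, which I would obtain by directly invoking Theorem \ref{thm:main}. Set $\cZ_\secp(\theta, b', \sA)$ to be the operation that runs the inner Commit phase with $\cR^*_{\Com,\secp}$ using commitment message $(\theta, b')$ and then hands the $\secp$-qubit register $\sA$ to $\cR^*_{\Com,\secp}$, outputting its resulting state together with the untouched advice register $\sD^*$; indistinguishability of $\cZ_\secp(\theta, b', \sA)$ and $\cZ_\secp(0^\secp, b', \sA)$ against non-uniform QPT adversaries is exactly computational hiding of $\Com$ (again using the bit-to-string hybrid). Let $\mathscr{A}$ be the class of non-uniform QPT machines containing $\cR^*_{\Del,\secp}$, whose outgoing message is parsed as the string $x'$ required by the theorem and whose remaining state is the residual register $\sA' = (\sR^*, \sD^*)$. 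With these choices, the induced experiment $\widetilde{\cZ}_\secp^{\cA_\secp}(b)$ of the theorem coincides with $\EVEXP_\secp^{\cR^*}(b)$ of Definition \ref{def:CEH}, so Theorem \ref{thm:main} yields $\TD\left(\EVEXP_\secp^{\cR^*}(0), \EVEXP_\secp^{\cR^*}(1)\right) = \negl(\secp)$, completing the proof. The main conceptual point requiring care is the precise identification of $\cZ_\secp$ and $\mathscr{A}$ so that the abstract setup of Theorem \ref{thm:main} faithfully models the interactive commitment protocol, and in particular that the advice register $\sD^*$ is carried through both the Commit and Delete phases without being touched; since the theorem permits arbitrary auxiliary registers $\sC$ alongside $\sA$, this embedding goes through without modification.
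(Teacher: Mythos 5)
Your proposal is correct and follows essentially the same route as the paper: the same extractor for statistical binding (run the inner extractor to get $(\theta^*,b'^*)$, measure the BB84 register in basis $\theta^*$, and argue via binding of $\Com$ that an accepting reveal forces $\theta=\theta^*$ so the repeated measurement agrees), the same direct reduction to hiding of $\Com$ for the computational part of certified everlasting hiding, and the same instantiation of Theorem~\ref{thm:main} (folding the commit-phase interaction with $\cR^*_{\Com,\secp}$ into $\cZ_\secp(\theta,b',\sA)$ and taking $\cR^*_{\Del,\secp}$ as the theorem's adversary) for the everlasting trace-distance bound. The only differences are cosmetic bookkeeping (where the advice register and BB84 register are handed off) and your slightly more explicit treatment of correctness and of the bit-to-string hybrid.
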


\begin{proof}
First, we show that statistical binding is preserved. Given a malicious committer $\{\cC^*_{\Com,\secp}\}_{\secp \in \bbN}$, consider the experiment $\Ideal_\secp^{\cC^*,\cE}$ specified by an extractor $\cE$ defined as follows. 

\begin{itemize}
    \item Invoke the extractor for the underlying commitment scheme on the first part of $\cC^*_{\Com,\secp}$, which produces a joint state on $(\sC^*,\sR)$ and extracted values $(\theta^*,b^*)$.
    \item Continue running $\cC^*_{\Com,\secp}$ until it outputs a $\secp$-qubit state on register $\sX$, which in the honest case will hold a state of the form $\ket{x}_\theta$.
    \item Measure the register $\sX$ in the $\theta^*$ basis to produce $x^*$, and set the extracted bit $\widehat{b}^* \coloneqq b^* \oplus \bigoplus_{i: \theta_i^* = 0}x^*_i$.
\end{itemize}

Then, the Reveal phase of the underlying commitment scheme is run to produce a final committer's state on register $\sC^*$ and receiver's output, which is either $\bot$ or some $(\theta',b')$. Finally, the receiver either outputs $\bot$ or completes the Reveal phase by measuring register $\sX$ in the $\theta$ basis to obtain $x$, and outputting $b' \oplus \bigoplus_{i: \theta_i = 0}x_i$.

Note that, by the statistical binding property of the underlying commitment, the final state on $\sC^*$ produced by $\cC^*_{\Rev,\secp}$ in this experiment will be within negligible trace distance of the state on $\sC^*$ output in the $\Real_\secp^{\cC^*}$ experiment, and moreover the probability that $\cR_{\Rev,\secp}$ accepts opened values $(\theta',b')$ that are not equal to the previously extracted values $(\theta^*,b^*)$ is negligible. Thus, conditioned on opening accepting, with all but negligible probability the extractor's and receiver's measurement of $\sX$ will be identical. Thus, the extracted bit and receiver's output will be the same, and the outcome $\mathsf{FAIL}$ will only occur with negligible probability.

Next, we show certified everlasting hiding. The first property follows immediately from the hiding of the underlying commitment scheme, since there are no messages from $\cC$ in the delete phase, and the bit $d_\cC$ is computed independently of $b$. The second property follows from hiding of the underlying commitment scheme and \cref{thm:main} by setting  $\cZ_\secp(\theta,b',\sA)$ and $\cA_\secp$ as follows.

\begin{itemize}
    \item $\cZ_\secp(\theta,b',\sA)$ initializes registers $(\sR^*,\sD^*)$ with $\ket{\psi_\secp}$, runs $(\sC_\theta,\sR^*) \gets \Com\langle\cC_{\secp}(\theta,b'),\cR^*_{\Com,\secp}(\sR^*)\rangle$ with the first part of $\cR^*_{\Com,\secp}$ (where $\cC_\secp$ is the commit algorithm of the underlying commitment scheme $\Com$), and outputs the resulting state on registers $\sR^*,\sA$ (recall that $\sA$ holds the BB84 states $\ket{x}_\theta$).
    \item $\cA_\secp$ receives $(\sR^*,\sA)$ and runs $\cR^*_{\Del,\secp}(\sR^*,\sA)$, which outputs a classical certificate and a left-over quantum state. 
    
\end{itemize}
\end{proof}

\begin{remark}
We note that the above corollary explicitly considers underlying statistically binding commitment schemes that may include quantum communication, and thus one implication is that statistically binding commitments with certified everlasting hiding can be built just from the assumption that pseudo-random quantum states exist \cite{cryptoeprint:2021:1691,cryptoeprint:2021:1663}.
\end{remark}

\begin{remark}
Similarly to the setting of public-key encryption, single-bit certified everlasting hiding for statistically binding commitments implies multi-bit certified everlasting hiding.
\end{remark}

\subsubsection{Certified everlasting zero-knowledge proofs for QMA}
We begin by defining proofs for QMA with certified everlasting zero-knowledge, introduced in~\cite{cryptoeprint:2021:1315}. Our definition is identical to theirs, except that we also guarantee computational zero-knowledge in the case that the verifier outputs invalid deletion certificates. In what follows, we will assume familiarity with the notion of a (statistically sound) proof for a QMA promise problem. 

\begin{definition}\label{def:cehzk}
A certified everlasting zero-knowledge proof for a QMA promise problem $A = (A_{\mathsf{yes}}, A_{\mathsf{no}})$ is a proof for $A$ that
additionally satisfies the following properties.
\begin{itemize}
\item{ \bf (Perfect) Correctness of certified deletion.}
For every instance $x \in A_{\mathsf{yes}}$ and every state $\ket{\psi} \in R_A(x)$, the prover  outputs $\top$ as its output in the interaction $\langle \mathcal{P}(x,\ket{\psi}^{\otimes {k(|x|)}}),\cV(x) \rangle$.

\item{\bf Certified everlasting zero-knowledge.}
Let $\mathsf{REAL}_\secp{\langle \mathcal{P}(x,\ket{\psi}^{\otimes {k(|x|)}}),\cV^*(x)\rangle}$ denote the joint distribution of the output of an honest prover and the state of an arbitrary QPT verifier $\cV^*$ after they execute the proof on instance $x \in A_{\mathsf{yes}}$, where the prover has as quantum input a polynomial number $k(|x|)$ copies of a state $\ket{\psi} \in R_A(x)$.
Then there exists a QPT algorithm $\mathsf{Sim}$ that on input any $x \in A_{\mathsf{yes}}$ and with oracle access to any non-uniform QPT $\cV^* = \{\cV^*_\secp\}_{\secp \in \bbN}$, outputs distribution $\mathsf{Sim}_\secp^{\cV^*}(x)$ such that:
\begin{itemize}
    \item First, we have everlasting zero-knowledge against adversaries that produce a valid deletion certificate, i.e., \[\TD\left(\mathsf{EV}\left(\mathsf{REAL}_\secp{\langle \mathcal{P}\left(x,\ket{\psi}^{\otimes {k(|x|)}}\right),\cV^*(x)\rangle}\right),\mathsf{EV}\left(\mathsf{Sim}_\secp^{\cV^*}(x)\right)\right) = \negl(\secp),\] where 
    $\mathsf{EV}(\cdot)$ is a quantum circuit that on input a classical string $o \in \{\top, \bot\}$ and a quantum state $\rho$ outputs $(\top, \rho)$ when $o = \top$, and otherwise outputs $(\bot, \bot)$. 
    \item Second, we have computational zero-knowledge against all adversaries, even when they do not necessarily output valid deletion certificates, i.e.,
    for every QPT distinguisher $\cD^* = \{\cD^*_\secp\}_{\secp \in \bbN}$, 
    \[\bigg|\Pr\left[\cD^*_\secp\left(\mathsf{REAL}_\secp{\langle \mathcal{P}\left(x,\ket{\psi}^{\otimes {k(|x|)}}\right), \cV^*(x) \rangle}\right) = 1\right] - \Pr\left[\cD^*_\secp\left(\mathsf{Sim}_\secp^{\cV^*}(x)\right) = 1\right]\bigg| = \negl(\secp)\]
\end{itemize}
\end{itemize}
\end{definition}

Next, we define a notion of classical extractor-based binding for commitments. This definition was introduced in~\cite{cryptoeprint:2021:1315}, and while their definition requires perfect extraction, we observe that their theorem holds even if the underlying commitment satisfies only statistical extraction. As such we allow for $\negl(\secpar)$ statistical error in our definition.

\begin{definition}[Classical extractor-based binding~\cite{cryptoeprint:2021:1315}]
A quantum commitment with classical non-interactive decommitment satisfies classical extractor-based binding if there exists an unbounded-time deterministic algorithm $\mathsf{Ext}$ that on input the classical transcript of a (possibly quantum) commitment $\mathsf{com}$, outputs the only unique classical decommitment string $d$ that will cause the verifier to accept the reveal phase, except with negligible probability.
\end{definition}

Finally, we will rely on the following theorem from~\cite{cryptoeprint:2021:1315}, which we describe below, paraphrased according to our definitions.

\begin{theorem}[\cite{cryptoeprint:2021:1315}] \label{thm:imp}
Assuming the existence of commitments satisfying statistical {\em classical extractor-based} binding and certified everlasting hiding (according to Definition \ref{def:CEH}), there exists a zero-knowledge proof for QMA satisfying certified everlasting zero-knowledge (according to Definition \ref{def:cehzk}).
\end{theorem}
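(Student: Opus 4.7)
Since this theorem is attributed to prior work, the natural strategy is to follow the template of a commitment-based zero-knowledge proof for $\QMA$ (for instance a protocol of the Broadbent--Ji--Song--Watrous flavor for the local Hamiltonian problem, or more generally any $\QMA$ proof system whose zero-knowledge reduction goes through an underlying commit-and-open structure), and argue that (i) soundness is inherited from the classical extractor-based binding property, and (ii) zero-knowledge gets upgraded to certified everlasting zero-knowledge once we instantiate the commitments with ones satisfying \cref{def:CEH} and add a Delete phase. The construction itself is: run the underlying ZK proof verbatim, using the assumed commitment scheme in every place the base protocol calls for a commitment; then append a Delete phase in which the verifier runs the Delete phase of every commitment it received, forwards all deletion certificates to the prover, and the prover outputs $\top$ iff every individual certificate verifies.

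The first step is soundness: I would argue, using the (unbounded) extractor guaranteed by statistical classical extractor-based binding, that for any $x \in A_{\mathsf{no}}$ and any cheating prover $\cP^*$, one can extract the unique classical openings of all of $\cP^*$'s commitments and feed them to the soundness reduction of the underlying ZK proof. The negligible statistical slack in extraction translates into a negligible additive loss in soundness error. The second step is the zero-knowledge simulator: I would take the simulator $\Sim_{\mathsf{base}}$ from the underlying (computational) ZK protocol, which against a QPT verifier $\cV^*$ produces a view that is computationally indistinguishable from the real interaction. Our simulator $\Sim$ runs $\Sim_{\mathsf{base}}$ verbatim, then continues to execute the Delete phase honestly against $\cV^*$, and outputs $(d_\cC,\sV^*)$ where $d_\cC$ is computed exactly as in the real protocol. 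Computational zero-knowledge (the second bullet in \cref{def:cehzk}) follows directly from the base protocol's zero-knowledge, because the additional Delete phase is an efficient QPT procedure applied identically in both worlds.

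The main obstacle, and the conceptual heart of the proof, is establishing certified everlasting zero-knowledge, i.e.\ the first bullet of \cref{def:cehzk}. Here the output of the experiment must be $\negl(\secp)$-close in trace distance, not just computationally indistinguishable, conditioned on $d_\cC = \top$. I would proceed by a hybrid argument over the commitments appearing in the transcript. Let $q = \poly(\secp)$ be their number. The $i$th hybrid runs $\Sim_{\mathsf{base}}$ for the first $i$ commitments and the honest prover procedure for the remaining $q-i$ (this requires the simulator's standard ``committed values'' to be well-defined, which they are for typical commit-and-open ZK protocols). Adjacent hybrids differ in the plaintext of a single commitment; by \cref{def:CEH}, the joint state of the verifier and the distinguisher, \emph{conditioned on the $d_\cC = \top$ outcome of the Delete phase for that commitment}, is negligibly close in trace distance between the two worlds. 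Composing $q$ such $\negl(\secp)$ trace-distance steps and restricting attention to the event that \emph{all} certificates verify (which is precisely what $\mathsf{EV}(\cdot)$ projects onto) yields the desired bound.

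The delicate point in this hybrid argument, which is where I expect the main difficulty to lie, is that the certified everlasting hiding definition applies to a single commitment executed in isolation with the adversary's auxiliary state, whereas in the ZK protocol each commitment is entangled with the rest of the transcript and with the prover's internal state (the $\QMA$ witness copies in particular). The resolution is to carefully purify all other prover operations and bundle the rest of the transcript, including all other commitments' receiver states, into the distinguisher register $\sD^*$ in \cref{def:CEH}; one then needs to check that the reduction to the single-commitment certified everlasting hiding game is efficient (QPT) for each hybrid step, and that the global event ``all certificates verify'' factors across commitments appropriately so that the $\mathsf{EV}(\cdot)$ projection in the hybrid reductions composes cleanly. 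Once this bookkeeping is handled, \cref{thm:imp} follows.
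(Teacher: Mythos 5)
The paper does not prove Theorem~\ref{thm:imp}: it is an imported result, cited directly from~\cite{cryptoeprint:2021:1315} and paraphrased in the paper's notation. Since there is no in-paper proof to compare against, the substance of your proposal should be judged against the construction in that prior work, and your sketch does capture its general shape: one instantiates a commit-and-open ZK protocol for $\QMA$ (in the BJSW tradition) with a commitment satisfying certified everlasting hiding, appends a Delete phase in which the verifier deletes every commitment, derives soundness from extractor-based binding, and establishes the everlasting ZK guarantee via a hybrid argument that swaps commitments one at a time while invoking \cref{def:CEH}. Your identification of the central delicacy — the single-commitment certified everlasting hiding game must absorb all other transcript registers into the side-information register $\sD^*$ and the conjunction of all certificates verifying must be composed correctly with the $\mathsf{EV}(\cdot)$ projection — is the right place to focus attention.

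Two cautions on the details you left open. First, for soundness, the reason the theorem asks for \emph{classical extractor-based} binding (rather than the more permissive simulation-style binding of \cref{def:SB}) is that the soundness analysis of BJSW-type protocols needs the classical transcript of the commit phase alone to determine a unique opening; your sentence about ``feeding extracted openings to the soundness reduction'' is compatible with this, but you should be explicit that it is the \emph{classical transcript}, not the joint quantum state, from which the extractor works, since the paper's in-scheme commitments also have a quantum component ($\ket{x}_\theta$) which carries no binding information. Second, your hybrid description ``the $i$th hybrid runs $\Sim_{\mathsf{base}}$ for the first $i$ commitments and the honest prover for the rest'' is not quite how BJSW-style simulation factorizes: the simulator guesses a verifier challenge and commits to a consistent pseudo-witness, so the hybrid should be phrased as swapping the committed \emph{plaintext} of the $i$th commitment (real witness share vs.\ simulated share) rather than swapping the simulation procedure itself. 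The reduction to a single instance of \cref{def:CEH} then runs the rest of the protocol honestly around the challenge commitment. This is a presentational rather than fatal issue, but matters for the reduction being QPT as you correctly flag.
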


We obtain the following corollary of Theorem \ref{thm:imp} and our Corollary \ref{cor:cecom}.
\begin{corollary}
Assuming the existence of post-quantum one-way functions, there exists a zero-knowledge proof for QMA satisfying certified everlasting zero-knowledge (according to Definition \ref{def:cehzk}).
\end{corollary}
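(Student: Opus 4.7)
The plan is to invoke \cref{thm:imp} from~\cite{cryptoeprint:2021:1315}, which reduces certified everlasting zero-knowledge proofs for QMA to the existence of commitments that are simultaneously (a) classical extractor-based statistically binding, and (b) certified everlasting hiding. Thus the task reduces to constructing such a commitment from post-quantum one-way functions.

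First, I would instantiate Naor's commitment scheme~\cite{C:Naor89} using a post-quantum pseudorandom generator (which follows from post-quantum OWFs by the standard construction, whose post-quantum security is a straightforward exercise). This yields a two-message, fully classical, statistically binding and computationally hiding commitment. Because this scheme is classical and statistically binding in the strong sense that the first-round receiver message statistically fixes the committed bit, the unique accepting decommitment string is determined by the classical transcript of the commit phase (except with negligible probability over the receiver's first message). So Naor's commitment in particular satisfies classical extractor-based binding.

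Next, I would plug this Naor commitment in as the underlying $\Com$ in the construction of \cref{cor:cecom}. The resulting scheme inherits certified everlasting hiding directly from \cref{cor:cecom}. For the binding property, I would argue that classical extractor-based binding is preserved: the classical transcript of the outer commit phase is exactly the classical transcript of Naor's commitment (the only other message in the commit phase is the quantum register $\ket{x}_\theta$, which is not part of the classical transcript), and the decommitment string is $(\theta,b',\text{Naor opening})$. By classical extractor-based binding of Naor's commitment, this tuple is uniquely determined by the classical transcript, up to negligible error, so the unbounded classical extractor for the underlying scheme lifts to an unbounded classical extractor for the compiled scheme.

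Finally, I would feed this compiled commitment into \cref{thm:imp} to obtain the desired certified everlasting zero-knowledge proof for QMA. The only step requiring any care is verifying the preservation of classical extractor-based binding under the BB84 compiler, which is essentially immediate once one notes that the quantum component of the commit phase is not part of the classical transcript and does not participate in the decommitment string (the receiver's verification uses it via a deterministic measurement in the basis $\theta$ recovered from the classical decommitment). No other obstacles arise, since \cref{cor:cecom} already supplies everything else needed.
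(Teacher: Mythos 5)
Your proposal is correct and follows essentially the same route as the paper: instantiate the compiler of \cref{cor:cecom} with Naor's (classical, statistically binding) commitment from post-quantum one-way functions, observe that the classical part of the transcript determines the decommitment so classical extractor-based binding holds, and plug the result into \cref{thm:imp}. The paper's own proof is just a terser version of exactly this argument.
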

This corollary follows from the observation that our construction of commitments with certified everlasting hiding, when instantiated with any classical statistically binding commitment (and in particular Naor's commitment from one-way functions) satisfies classical extractor-based binding. The extractor simply outputs the decommitment of the classical part of our commitment. The resulting commitment with certified everlasting hiding and classical extractor-based binding can be plugged into Theorem \ref{thm:imp} to obtain the corollary above.

\subsection{Timed-release encryption}

A timed-release encryption scheme \cite{Rivest1996TimelockPA,EC:Unruh14} $\TRE = (\TRE.\Enc,\TRE.\Dec)$  has the following syntax.

\begin{itemize}
    \item $\TRE.\Enc(1^\secp,m) \to \ct$ is a polynomial-time algorithm that takes as input the security parameter $1^\secp$ and a message $m$ and outputs a ciphertext $\ct$.
    \item $\TRE.\Dec(\ct) \to m$ is a polynomial-time algorithm that takes as input a ciphertext $\ct$ and outputs a message $m$.
\end{itemize}

A timed-released encryption scheme is (post-quantum) $T(\secp)$-hiding if the following holds.

\begin{definition}[Hiding time-released encryption]\label{def:TRE-hiding}
A timed-released encryption scheme $\TRE = (\TRE.\Enc,\allowbreak\TRE.\Dec)$ is $T(\secp)$-hiding if for any non-uniform quantum polynomial-time\footnote{As discussed in \cite{EC:Unruh14}, it is important to have a polynomial-time bound on the overall complexity of the adversary, in addition to the $T(\secp)$ parallel time bound.} adversary $\cA = \{\cA_\secp,\ket{\psi_\secp}\}_{\secp \in \bbN}$ with at most $T(\secp)$ parallel time, \[\bigg|\Pr\left[\cA_\secp(\TRE.\Enc(1^\secp,0)) = 1\right] - \Pr\left[\cA_\secp(\TRE.\Enc(1^\secp,1)) = 1\right]\bigg| = \negl(\secp).\]
\end{definition}

Now, we augment the syntax of a $\TRE$ scheme with algorithms $\RTRE.\Del, \RTRE.\Ver$ to arrive at the notion of a \emph{revocable} timed-release encryption scheme $\RTRE$.

\begin{itemize}
    \item $\RTRE.\Enc(1^\secp,m) \to (\ct,\vk)$ is a polynomial-time algorithm that takes as input the security parameter $1^\secp$ and a message $m$ and outputs a quantum ciphertext $\ct$ and a (potentially quantum) verification key $\vk$.
    \item $\RTRE.\Dec(\ct) \to m$ is a polynomial-time algorithm that takes as input a quantum ciphertext $\ct$ and outputs a message $m$.
    \item $\RTRE.\Del(\ct) \to \cert$ is a quantum algorithm that takes as input a quantum ciphertext $\ct$ and outputs a (potentially quantum) deletion certificate $\cert$.
    \item $\RTRE.\Ver(\vk,\cert) \to \{\top,\bot\}$ is a (potentially quantum) algorithm that takes as input a (potentially quantum) verification key $\vk$ and a (potentially quantum) deletion certificate $\cert$ and outputs either $\top$ or $\bot$.
\end{itemize}

We say that $\RTRE$ satisfies revocable hiding if the following holds.

\begin{definition}[Revocably hiding time-released encryption]\label{def:TRE-revocable-hiding}
A timed-released encryption scheme $\RTRE = (\RTRE.\Enc,\RTRE.\Dec,\RTRE.\Del,\RTRE.\Ver)$ is $T(\secp)$-revocably hiding if for any non-uniform quantum polynomial-time adversary $\cA = \{\cA_\secp,\ket{\psi_\secp}\}_{\secp \in \bbN}$ with at most $T(\secp)$ parallel time, it holds that \[\TD\left(\EVEXP_\secp^{\cA}(0),\EVEXP\secp^{\cA}(1)\right) = \negl(\secp),\]
and  \[\bigg|\Pr\left[\CEXP_\secp^{\cA}(0) = 1\right] - \Pr\left[\CEXP_\secp^{\cA}(1) = 1\right]\bigg| = \negl(\secp),\]
where the experiment $\EVEXP\secp^{\cA}(b)$ is defined as follows.
\begin{itemize}
    \item Sample $(\ct,\vk) \gets \RTRE.\Enc(1^\secp,b)$.
    \item Initialize $\cA_\secp(\ket{\psi_\secp})$ with $\ct$.
    \item Parse $\cA_\secp$'s output as a deletion certificate $\cert$ and a residual state on register $\sA'$.
    \item If $\RTRE.\Ver(\vk,\cert) = \top$ then output $\sA'$, and otherwise output $\bot$.
\end{itemize}
and the experiment $\CEXP\secp^{\cA}(b)$ is defined as follows.
\begin{itemize}
    \item Sample $(\ct,\vk) \gets \RTRE.\Enc(1^\secp,b)$.
    \item Initialize $\cA_\secp(\ket{\psi_\secp})$ with $\ct$.
    \item Parse $\cA_\secp$'s output as a deletion certificate $\cert$ and a residual state on register $\sA'$.
    \item Output $\cA_\secp(\sA', \Ver(\vk,\cert))$.
\end{itemize}
\end{definition}

We say that $\RTRE$ is a revocable time-released encryption scheme against $T(\secp)$-parallel time adversaries if it satisfies (i) hiding (\cref{def:TRE-hiding}) against $T(\secp)$-parallel time adversaries, (ii) correctness of deletion (\cref{def:CD-correctness}), and (iii) recovable hiding (\cref{def:TRE-revocable-hiding}) against $T(\secp)$-parallel time adversaries.

Then, we have the following corollary of \cref{thm:main}.

\begin{corollary}
Given any post-quantum secure time-released encryption $\TRE = (\TRE.\Enc,\TRE.\Dec)$ against $T(\secp)$-parallel time adversaries, the scheme $\RTRE = (\Enc',\Dec',\Del,\Ver)$ defined as follows is a secure \emph{revocable} time-released encryption scheme against $T(\secp)$-parallel time adversaries.
\begin{itemize}
     \item $\Enc'(\pk,m):$ sample $x,\theta \gets \{0,1\}^\secp$ and output \[\ct \coloneqq \left(\ket{x}_\theta,\TRE.\Enc\left(\theta,b \oplus \bigoplus_{i: \theta_i = 0} x_i\right)\right), \ \ \ \vk \coloneqq (x,\theta).\]
    \item $\Dec'(\sk,\ct):$ parse $\ct \coloneqq \left(\ket{x}_\theta,\ct'\right)$, compute $(\theta,b') \gets \TRE.\Dec(\sk,\ct')$, measure $\ket{x}_\theta$ in the $\theta$-basis to obtain $x$, and output $b = b' \oplus \bigoplus_{i:\theta_i = 0}x_i$.
    \item $\Del(\ct):$ parse $\ct \coloneqq \left(\ket{x}_\theta,\ct'\right)$ and measure $\ket{x}_\theta$ in the Hadamard basis to obtain a string $x'$, and output $\cert \coloneqq x'$.
    \item $\Ver(\vk,\cert):$ parse $\vk$ as $(x,\theta)$ and $\cert$ as $x'$ and output $\top$ if and only if $x_i = x_i'$ for all $i$ such that $\theta_i = 1$.
\end{itemize}
\end{corollary}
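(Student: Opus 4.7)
The plan is to verify each of the three properties required of a revocable timed-release encryption scheme: (i) $T(\secp)$-hiding, (ii) correctness of deletion, and (iii) $T(\secp)$-revocable hiding. Properties (i) and (ii) are immediate. For (i), hiding against $T(\secp)$-parallel time adversaries follows directly from the hiding of the underlying $\TRE$ scheme, since the bit $b$ is masked inside the encrypted payload $\TRE.\Enc(\theta, b\oplus\bigoplus_{i:\theta_i=0}x_i)$, and the accompanying BB84 states $\ket{x}_\theta$ are information-theoretically independent of $b$ without knowledge of $\theta$. For (ii), correctness of deletion holds perfectly by inspection of the construction, since an honest receiver measuring $\ket{x}_\theta$ in the Hadamard basis will always produce an $x'$ with $x_i=x_i'$ at every index where $\theta_i=1$.

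The main work is establishing (iii), revocable hiding. The plan is to invoke the main theorem (Theorem 3.1) by instantiating the abstract distribution $\cZ_\secp$ and the adversary class $\mathscr{A}$ appropriately. Specifically, I will set \[\cZ_\secp(\theta,b',\sA) \coloneqq (\sA,\TRE.\Enc(1^\secp,(\theta,b'))),\] and take $\mathscr{A}$ to be the class of all non-uniform quantum polynomial-time adversaries whose \emph{parallel} depth is bounded by $T(\secp)$. The hypothesis that $\cZ_\secp(\theta,b',\sA)$ and $\cZ_\secp(0^\secp,b',\sA)$ are indistinguishable against this class follows immediately from the $T(\secp)$-hiding of $\TRE$ applied to the message $\theta$ (treating $b'$ as an auxiliary fixed label). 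I need to briefly verify that $\mathscr{A}$ is closed under the linear-time, linear-space extensions demanded by the footnote in Theorem 3.1, which is straightforward since adding a linear-depth overhead on top of a $T(\secp)$-parallel time adversary yields another $T(\secp)+O(\secp)$-parallel time adversary, still within the allowed class (or we absorb the constant into the definition). Then the conclusion of Theorem 3.1 gives precisely \[\TD(\EVEXP_\secp^{\cA}(0),\EVEXP_\secp^{\cA}(1)) = \negl(\secp),\] which is the first condition of Definition on revocable hiding.

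For the second condition of revocable hiding, namely that \[\bigl|\Pr[\CEXP_\secp^{\cA}(0)=1]-\Pr[\CEXP_\secp^{\cA}(1)=1]\bigr| = \negl(\secp),\] the plan is to give a direct reduction to the $T(\secp)$-hiding of $\TRE$. An adversary $\cA$ in $\CEXP$ sees the ciphertext $(\ket{x}_\theta,\TRE.\Enc(1^\secp,(\theta,b\oplus\bigoplus_{i:\theta_i=0}x_i)))$, produces a deletion certificate $\cert$, a residual register $\sA'$, and is then given the single classical bit $\Ver(\vk,\cert)$ (which depends on $x,\theta$ but not on $b$). A $T(\secp)$-parallel time reduction samples $x,\theta,b'$ itself, queries its challenger on messages $m_0=(\theta,b')$ versus $m_1=(\theta,b'\oplus 1)$, appends $\ket{x}_\theta$, runs $\cA$, computes $\Ver$ using $(x,\theta)$, and outputs $\cA$'s final bit. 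The view produced is identically distributed to $\CEXP_\secp^{\cA}(0)$ or $\CEXP_\secp^{\cA}(1)$ depending on the challenger's bit, so any non-negligible distinguishing advantage of $\cA$ yields a violation of hiding of $\TRE$.

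The main conceptual obstacle is checking the parallel-time bookkeeping in invoking Theorem 3.1: the statement is phrased for adversaries that are merely computationally bounded, but here we need to track that the reduction inside the proof of Theorem 3.1 (which applies the projector $\Pi_{x',\theta}$ and otherwise adds only linear overhead) preserves the $T(\secp)$-parallel time bound on the adversary class. I expect this to go through without difficulty because the extra operation is applied to registers disjoint from the adversary's and has depth independent of $T(\secp)$, so the reduction remains $T(\secp)$-parallel time (up to an additive polynomial factor that can be absorbed by a suitable choice of $T$). Once this bookkeeping is verified, everything else is either a direct application of Theorem 3.1 or a short reduction to hiding of $\TRE$.
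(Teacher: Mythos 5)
Your proposal is correct and follows essentially the same route as the paper: hiding and correctness of deletion by inspection, the everlasting (EVEXP) property by instantiating \cref{thm:main} with $\cZ_\secp(\theta,b',\sA) = (\sA,\TRE.\Enc(1^\secp,(\theta,b')))$ and the class of non-uniform QPT adversaries with at most $T(\secp)$ parallel time, and the CEXP property by a direct reduction to the hiding of $\TRE$ given that the verification key is independent of the plaintext bit. Your extra bookkeeping about closure of the $T(\secp)$-parallel-time class under the linear-overhead reduction of \cref{thm:main} is exactly the point the paper leaves implicit, and it goes through as you describe.
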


\begin{proof}
Hiding follows immediately from the hiding of $\TRE$. Correctness of deletion follows immediately from the description of the scheme. Revocable hiding follows because
\begin{itemize}
\item First, \[\TD\left(\EVEXP_\secp^{\cA}(0),\EVEXP\secp^{\cA}(1)\right) = \negl(\secp),\]
follows from \cref{thm:main} and the hiding of $\TRE$, by setting the distribution $\cZ(\theta,b',\sA)$ to sample $(\ct,\vk) \gets \TRE.\Enc(1^\secp,(\theta,b'))$ and output $(\sA,\ct)$, and setting the class of adversaries $\mathscr{A}$ to be all non-uniform QPT adversaries $\{\cA_\secp,\ket{\psi_\secp}\}_{\secp \in \bbN}$ with at most $T(\secp)$ parallel time.
\item Second, \[\bigg|\Pr\left[\CEXP_\secp^{\cA}(0) = 1\right] - \Pr\left[\CEXP_\secp^{\cA}(1) = 1\right]\bigg| = \negl(\secp),\]
follows from the fact that the timed-release encryption remains (computationally) semantically secure even when the adversary is given the verification key corresponding to the challenge ciphertext.
\end{itemize}
This completes our proof.
\end{proof}

\begin{remark}
Similarly to the setting of public-key encryption, single-bit certified everlasting hiding for timed-release encryption implies certified everlasting hiding for multi-bit messages.
\end{remark}
\fi
\ifsubmission\else\section{Cryptography with Everlasting Security Transfer}
\label{sec:2pc-est}

In this section, we construct bit commitment and secure computation schemes that satisfy our notion of Everlasting Security Transfer (EST). In \cref{subsec:2PC-defs}, we formalize the notion of deletion and EST in the context of simulation security. We also derive a quantum sequential composition theorem for \emph{reactive functionalities}, extending the framework of \cite{C:HalSmiSon11}. Extending to reactive functionalities is crucial for us, since the bit commitment functionality we compose has multiple phases. Finally, we formalize composition of protocols with EST, defining the notion of a ``deletion-composable'' protocol. Next, we show how to construct ideal commitments with EST in \cref{subsec:one-sided} and \cref{subsec:ideal}, via a two-step process outlined in \cref{subsec:tech-overview}. Finally, in \cref{subsec:secure-computation}, we make use of our ideal commitment with EST, our composition theorems, and known compilers, to obtain the notions of two-party and multi-party computation with EST.

\subsection{Definitions}\label{subsec:2PC-defs}

\paragraph{Ideal functionalities.}

An ideal functionality $\cF$ is a classical interactive machine specifying some (potentially reactive) distributed classical computation. Reactive means that the distributed computation is broken into multiple ``phases'' with distinct inputs and outputs, and the outputs of previous phases may be used as inputs in later phases. For now, we will specifically consider \emph{two-party} functionalities. Each invocation of an ideal functionality is associated with some session id $\mathsf{sid}$. We will be interested in designing protocols that \emph{securely realize} ideal functionalities (defined later), but first we specify the main ideal functionality that we consider in this work: bit commitment $\cF_\com$. 

\protocol{Ideal functionality $\cF_\com$}{Specification of the bit commitment ideal functionality.}{fig:comIF}{
Parties: committer $C$ and receiver $R$
\begin{itemize}
    \item Commit phase: $\cF_\com$ receives a query $(\text{Commit},\mathsf{sid},b)$ from $C$, records this query, and sends $(\text{Commit},\mathsf{sid})$ to $R$.
    \item Reveal phase: $\cF_\com$ receives a query $(\text{Reveal},\mathsf{sid})$ from $C$, and if a message $(\text{Commit},\mathsf{sid},b)$ has been recorded, sends $(\text{Reveal},\mathsf{sid},b)$ to $R$.
\end{itemize}
\noindent}


In this work, we will consider augmenting ideal functionalities with a ``deletion phase'', which can be used by parties to transfer everlasting security. When parties are labeled $A$ and $B$, we maintain the precedent that deletion is from $B$ to $A$, that is, $A$ can request that $B$ deletes $A$'s information.

\protocol{Deletion phase}{Specification of a generic deletion phase that can be added to any ideal functionality $\cF$.}{fig:delPhase}{
Parties: $A$ and $B$
\begin{itemize}
    \item Receive a query $(\mathsf{DelRequest},\mathsf{sid})$ from $A$, and send $(\mathsf{DelRequest},\mathsf{sid})$ to $B$.
    \item Receive a query $(\mathsf{DelResponse},\mathsf{sid})$ from $B$. If a message $(\mathsf{DelRequest},\mathsf{sid})$ has been recorded, send $(\mathsf{DelResponse},\mathsf{sid})$ to $A$, and otherwise ignore the message. 
\end{itemize}
}

Importantly, if a deletion phase is added to a \emph{reactive} functionality $\cF$, we allow party $A$ to request the Deletion phase (send $\mathsf{DelRequest}$) \emph{between any two phases of $\cF$}, or \emph{at the end of $\cF$}. But, once the Deletion phase has been executed, this marks the end of the reactive functionality, so no other phases will be executed.

\paragraph{Security with abort.} In what follows, we will by default consider the notion of \emph{security with abort}, where the ideal functionality $\cF$ is always modified to (1) know the identities of corrupted parties and (2) be slightly reactive: after all parties have provided input, the functionality computes outputs and sends these outputs to the corrupt parties only. Then the functionality awaits either a ``deliver'' or ``abort'' command from the corrupted parties. Upon receiving ``deliver'', the functionality delivers any honest party outputs. Upon receiving ``abort'', the functionality instead delivers $\abort$ to all the honest parties.

\paragraph{The real-ideal paradigm.} A two-party protocol $\Pi_\cF$ for computing the (potentially reactive) functionality $\cF$ consists of two families of quantum interactive machines $A$ and $B$. An adversary intending to attack the protocol by corrupting a party $M \in \{A,B\}$ can be described by a family of sequences of quantum interactive machines $\{\cA_\secp \coloneqq (\cA_{\secp,1},\dots,\cA_{\secp,\ell})\}_{\secp \in \bbN}$, where $\ell$ is the number of phases of $\cF$. This adversarial interaction happens in the presence of an \emph{environment}, which is a family of sequences of quantum operations $\{\cZ_\secp \coloneqq (\cZ_{\secp,1},\dots,\cZ_{\secp,\ell})\}_{\secp \in \bbN}$, and a family of initial advice states $\{\ket{\psi_\secp}\}_{\secp \in \bbN}$. It proceeds as follows.

\begin{itemize}
    \item $\cZ_{\secp,1}$ receives as input $\ket{\psi_\secp}$. It outputs what (if any) inputs the honest party $H \in \{A,B\}$ is initialized with for the first phase of $\Pi_\cF$. It also outputs a quantum state on registers $(\sA,\sZ)$, where $\sA$ holds the state of the adversary and $\sZ$ holds the state of the environment,
    \item $\cA_{\secp,1}$ receives as input a state on register $\sA$, and interacts with the honest party in the first phase of $\Pi_\cF$. It outputs a state on register $\sA$.
    \item $\cZ_{\secp,2}$ receives as input registers $(\sA,\sZ)$ along with the honest party outputs from the first phase. It computes honest party inputs for the second phase, and updates registers $(\sA,\sZ)$.
    \item $\cA_{\secp,2},\cZ_{\secp,3},\dots,\cA_{\secp,\ell}$ are defined analogously.
\end{itemize}

Given an adversary, environment, and advice, we define the random variable $\Pi_\cF[\cA_\secp,\cZ_\secp,\ket{\psi_\secp}]$ as the output of the above procedure, which includes registers $(\sA,\sZ)$ and the final honest party outputs.

An \emph{ideal-world} protocol $\widetilde{\Pi}_\cF$ for functionality $\cF$ consists of ``dummy'' parties $\widetilde{A}$ and $\widetilde{B}$ that have access to an additional ``trusted'' party that implements $\cF$. That is, $\widetilde{A}$ and $\widetilde{B}$ only interact directly with $\cF$, providing inputs and receiving outputs, and do not interact with each other. We consider the execution of ideal-world protocols in the presence of a simulator, described by a family of sequences of quantum interactive machines $\{\cS_\secp \coloneqq (\cS_{\secp,1},\dots,\cS_{\secp,\ell})\}_{\secp \in \bbN}$, analogous to the definition of an adversary above. This interaction also happens in the presence of an environment $\{\cZ_\secp \coloneqq (\cZ_{\secp,1},\dots,\cZ_{\secp,\ell})\}_{\secp \in \bbN}$, and a family of initial advice states $\{\ket{\psi_\secp}\}_{\secp \in \bbN}$, as described above, and we define the analogous random variable $\widetilde{\Pi}_\cF[\cS_\secp,\cZ_\secp,\ket{\psi_\secp}]$.

\paragraph{Secure realization and composition.} Now, we formally define what it means for a protocol $\Pi_\cF$ to securely realize a (potentially reactive) functionality $\cF$. We give definitions for both computational and statistical security.

\begin{definition}[Computational secure realization]\label{def:securerealization}
A protocol $\Pi_\cF$ \emph{computationally securely realizes} the $\ell$-phase functionality $\cF$ if for every QPT adversary $\{\cA_\secp \coloneqq (\cA_{\secp,1},\dots,\cA_{\secp,\ell})\}_{\secp \in \bbN}$ corrupting either party $A$ or $B$, there exists a QPT simulator $\{\cS_\secp \coloneqq (\cS_{\secp,1},\dots,\cS_{\secp,\ell})\}_{\secp \in \bbN}$ such that for any QPT environment $\{\cZ_{\secp} \coloneqq (\cZ_{\secp,1},\dots,\cZ_{\secp,\ell})\}_{\secp \in \bbN}$, polynomial-size family of advice $\{\ket{\psi_\secp}\}_{\secp \in \bbN}$, and QPT distinguisher $\{\cD_\secp\}_{\secp \in \bbN}$, it holds that \[\bigg|\Pr\left[\cD_\secp\left(\Pi_\cF[\cA_\secp,\cZ_\secp,\ket{\psi_\secp}]\right) = 1\right] - \Pr\left[\cD_\secp\left(\widetilde{\Pi}_\cF[\cS_\secp,\cZ_\secp,\ket{\psi_\secp}]\right) = 1\right]\bigg| = \negl(\secp).\]
\end{definition}

For the notion of statistical secure realization, we allow the adversary and environment to be unbounded, but we require that the simulator is at most polynomially larger than the adversary. 

\begin{definition}[Statistical secure realization]\label{def:statsecurerealization}
A protocol $\Pi_\cF$ \emph{statistically securely realizes} the $\ell$-phase functionality $\cF$ if there exists a polynomial $p(\cdot)$ such that for every (potentially unbounded) adversary $\{\cA_\secp \coloneqq (\cA_{\secp,1},\dots,\cA_{\secp,\ell})\}_{\secp \in \bbN}$ corrupting either party $A$ or $B$, there exists a simulator $\{\cS_\secp \coloneqq (\cS_{\secp,1},\dots,\cS_{\secp,\ell})\}_{\secp \in \bbN}$ with size at most $p(\secp)$ times the size of $\{\cA_\secp\}_{\secp \in \bbN}$, such that for any (potentially unbounded) environment $\{\cZ_{\secp} \coloneqq (\cZ_{\secp,1},\dots,\cZ_{\secp,\ell})\}_{\secp \in \bbN}$ and polynomial-size family of advice $\{\ket{\psi_\secp}\}_{\secp \in \bbN}$, it holds that \[\TD\left(\Pi_\cF[\cA_\secp,\cZ_\secp,\ket{\psi_\secp}],\widetilde{\Pi}_\cF[\cS_\secp,\cZ_\secp,\ket{\psi_\secp}]\right) = \negl(\secp).\]
\end{definition}

\begin{remark}
Recall that trace distance between two distributions is an upper bound on the advantage that any unbounded machine has in distinguishing the distributions, so the above definition is equivalent to saying that no unbounded distinguisher has better than negligible advantage in distinguishing the real and ideal world outputs.
\end{remark}

Next, we consider the \emph{hybrid} model, where parties can make calls to an ideal-world protocol implementing some ideal functionality $\cG$. We call such a protocol a $\cG$-hybrid protocol, and denote it $\Pi^{\cG}$. Supposing that we also have a real-world protocol $\Gamma$ implementing $\cG$, we can consider the \emph{composed} protocol $\Pi^{\cG/\Gamma}$, where each invocation of $\cG$ is replaced with an invocation of the protocol $\Gamma$ for computing $\cG$. In this work, while we allow $\Pi$ to utilize many invocations of $\Gamma$, \textbf{we require that each phase of each invocation of $\Gamma$ is \emph{atomic}, meaning that no other protocol messages are interleaved during each phase of $\Gamma$}. That is, if $\cG$ is a reactive functionality, we allow different phases of different invocations to be interleaved, but we require that at any point in time, only a single phase is being executed, and no other protocol messages are interleaved during the computation of this phase. In this case, we can show the following sequential composition theorem, which is a straightforward extension of the composition theorem given in \cite{C:HalSmiSon11} to handle reactive functionalities.

\begin{theorem}[Extension of \cite{C:HalSmiSon11}]\label{thm:composition}
Let $\cF$ and $\cG$ be (potentially reactive) functionalities, let $\Pi^\cG$ be a $\cG$-hybrid protocol that computationally (resp. statistically) securely realizes $\cF$, and let $\Gamma$ be a protocol that computationally (resp. statistically) securely realizes $\cG$. Then, $\Pi^{\cG/\Gamma}$ computationally (resp. statistically) securely realizes $\cF$.
\end{theorem}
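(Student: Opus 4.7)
The plan is to construct the simulator for $\Pi^{\cG/\Gamma}$ against $\cF$ in two stages and then argue indistinguishability via a hybrid argument over the $\Gamma$-invocations. Given any QPT (resp.\ unbounded) adversary $\cA$ attacking $\Pi^{\cG/\Gamma}$, I would first apply the simulator $\cS_\Gamma$ guaranteed by the fact that $\Gamma$ securely realizes $\cG$ to $\cA$'s local behavior inside each invocation of $\Gamma$. This produces an adversary $\cA^\cG$ attacking the $\cG$-hybrid protocol $\Pi^\cG$, where each call to $\Gamma$ in $\cA$'s view is replaced by an ideal $\cG$-call whose messages with $\cA$ are mediated by an embedded copy of $\cS_\Gamma$. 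Then, using that $\Pi^\cG$ securely realizes $\cF$, one applies its simulator $\cS_\Pi$ to $\cA^\cG$ to obtain the final simulator $\cS$ for the ideal-world execution $\widetilde{\Pi}_\cF$. Polynomial sizes (respectively, the at-most polynomial blowup required by Definition~\ref{def:statsecurerealization}) are preserved because $\cS_\Gamma$ is invoked polynomially many times and $\cS_\Pi$ is invoked once.

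\textbf{Hybrid argument.} Let $m = m(\secp)$ be a polynomial bound on the number of $\Gamma$-invocations inside $\Pi^{\cG/\Gamma}$. Define hybrids $H^{(0)},\dots,H^{(m)}$ where in $H^{(i)}$ the first $i$ invocations of $\Gamma$ (in some canonical ordering, e.g.\ order of first message) are replaced by ideal $\cG$-calls with $\cA$'s local view inside them handled by $\cS_\Gamma$, while the remaining $m-i$ invocations are executed as real $\Gamma$ sessions. Thus $H^{(0)} = \Pi^{\cG/\Gamma}[\cA,\cZ,\ket{\psi_\secp}]$ and $H^{(m)} = \Pi^\cG[\cA^\cG,\cZ,\ket{\psi_\secp}]$, while the security of $\Pi^\cG$ realizing $\cF$ then gives $H^{(m)} \approx \widetilde{\Pi}_\cF[\cS,\cZ,\ket{\psi_\secp}]$. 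To argue $H^{(i-1)} \approx H^{(i)}$, reduce to the security of $\Gamma$ realizing $\cG$: construct an outer environment $\cZ^\Gamma$ for the $\Gamma$-vs-$\cG$ game that internally runs $\cZ$, the adversary $\cA$ outside of the $i$-th invocation, the other $\Gamma$ executions (or their $\cS_\Gamma$-simulated $\cG$-counterparts for indices $<i$), and the entire outer $\Pi$ protocol, while forwarding all messages of the $i$-th invocation to the external challenger. If the challenger runs the real protocol $\Gamma$ we recover $H^{(i-1)}$, and if it runs the ideal $\cG$ with $\cS_\Gamma$ we recover $H^{(i)}$, so indistinguishability follows from the security hypothesis on $\Gamma$.

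\textbf{Main obstacle.} I expect the subtle point to be the reactive treatment of quantum state in $\cZ^\Gamma$ when multiple phases of the same $i$-th $\Gamma$-invocation are separated by arbitrary outer activity (other $\Gamma$ sessions, $\Pi$-messages, adversarial operations, and $\cZ$-operations), since $\cZ^\Gamma$ must coherently carry its full quantum state across all of these phases and pass control back to the external challenger exactly at the atomic phase boundaries. This is where the atomicity requirement enforced in the theorem statement is essential: because no other protocol messages are interleaved \emph{within} a single phase of any $\Gamma$-invocation, $\cZ^\Gamma$ never has to split a quantum register across simultaneous interactions, and so the reactive environment framework of Definitions~\ref{def:securerealization}~and~\ref{def:statsecurerealization} applies directly. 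If phases of $\Gamma$ could be arbitrarily interleaved with other protocol messages, no such clean $\cZ^\Gamma$ would exist in general, which is the fundamental reason we restrict to atomic phases and can only obtain a sequential (rather than fully concurrent) composition theorem.
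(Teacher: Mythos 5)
Your proposal is correct and follows essentially the same route as the paper: both replace the $\Gamma$-invocations by ideal $\cG$-calls one at a time, each step reducing to the security of $\Gamma$ by packaging everything outside that invocation (including the phase-separated outer activity) as a reactive environment/adversary for the $\Gamma$-vs-$\cG$ game, with atomicity of phases ensuring this repackaging is well defined, and then concluding via the simulator for $\Pi^\cG$ realizing $\cF$. The paper phrases this as an iterative per-invocation replacement rather than an explicitly indexed hybrid sequence, but the argument is the same.
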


\begin{proof}
Let $\cG$ be a reactive $\ell$-phase functionality. Throughout this proof, we drop the dependence on $\secp$ for convenience. Let $(\cA,\cZ)$ be any adversary and environment attacking the protocol $\Pi^{\cG/\Gamma}$. Consider the first time in $\Pi^{\cG/\Gamma}$ that $\Gamma$ is invoked, which means the first time that the first phase of some subroutine $\Gamma$ is invoked (we note that other $\Gamma$ subroutines could occur between the phases of this first invocation). Write $(\cZ_1,\cA_1,\dots,\cZ_\ell,\cA_\ell)$ as an adversary and environment attacking the protocol $\Gamma$, according to the following. 

\begin{itemize}
    \item $\cZ_1$ runs $\cZ$ and then runs the interaction between $\cA$ and the honest party until right before the first time $\Gamma$ is invoked in $\Pi^{\cG/\Gamma}$. It outputs the adversary's state on register $\sA$, the honest party's input to $\Gamma$, and any other state kept by $\cZ$ along with the honest party's state on register $\sZ$.
    \item $\cA_1$ consists of the part of $\cA$ that interacts in the first phase of $\Gamma$. It takes as input a state on $\sA$ and outputs a state on $\sA$.
    \item $\cZ_2$ takes as input registers $(\sA,\sZ)$ and the honest party's output from $\Gamma$. It runs the interaction between $\cA$ and the honest party in $\Pi^{\cG/\Gamma}$ until right before the second phase of $\Gamma$ is invoked.
    \item $\cA_2,\cZ_3,\dots,\cA_\ell$ are defined analogously.
\end{itemize}

Now, since $\Gamma$ computationally (resp. statistically) securely realizes $\cG$, there exists a simulator $(\cS_1,\dots,\cS_\ell)$ defined based on $(\cA_1,\dots,\cA_\ell)$ such that, if we replace each $\cA_i$ interacting with the honest party with $\cS_i$ interacting with the ideal functionality $\cG$, then the output remains computationally (resp. statistically) indistinguishable. Note that the resulting interaction, defined by $(\cZ_1,\cS_1,\dots,\cZ_\ell,\cS_\ell)$, can be described by an adversary and environment $(\cA',\cZ)$ attacking the protocol $\Pi^{\cG/\Gamma}$ where the first invocation of $\Gamma$ is replaced with the parties querying the ideal functionality $\cG$. This follows because only the parts of $\cA$ that interacted in the first invocation of $\Gamma$ were changed, since each phase of $\Gamma$ was atomic. Now, continuing this argument for each invocation of $\Gamma$, we eventually arrive at an adversary and environment $(\cA'',\cZ)$ attacking the $\cG$-hybrid protocol $\Pi^\cG$. Note that $\cA''$ was defined based on $\cA$, and $\cZ$ remained unchanged. Thus, the fact that $\Pi^\cG$ computationally (resp. statistically) securely realizes $\cF$ completes the proof of the theorem, since we can define a simulator $\cS''$ based on $\cA''$, where indistinguishability will hold for any environment $\cZ$. 
%
\end{proof}

\paragraph{Secure realization with everlasting security transfer.}

Next, we define the notion of secure realization with everlasting security transfer (EST). Here, parties are interested in securely computing an ideal functionality $\cF$ with a \emph{deletion phase} added to the end, which we denote by $\cF^{\Del}$. 

The deletion phase adds one bits to each honest party output, which we denote by $\mathsf{DelReq}$ (which is party $B$'s output, and is set to 1 if party $A$ initiates the Delete phase by issuing a request, and 0 otherwise) and $\mathsf{DelRes}$ (which is party $A$'s output, and is set to 1 if party $B$ sends a Delete response and 0 otherwise). Then, we have the following definition.

\begin{definition}[Secure realization with Everlasting Security Transfer]\label{def:security-with-EST}
A protocol $\Pi_\cF$ securely realizes the $\ell$-phase functionality $\cF$ between parties $A$ and $B$ with EST if $\Pi_\cF$ computationally securely realizes $\cF^\Del$ (\cref{def:securerealization}) and the following additional properties hold.
\begin{itemize}
\item
{\bf Statistical security against $A$ when no security transfer occurs.} There exists a polynomial $p(\cdot)$ such that for every (potentially unbounded) adversary $\{\cA_\secp \coloneqq (\cA_{\secp,1},\dots,\cA_{\secp,\ell})\}_{\secp \in \bbN}$ corrupting party $A$, there exists a simulator $\{\cS_\secp \coloneqq (\cS_{\secp,1},\dots,\cS_{\secp,\ell})\}_{\secp \in \bbN}$ with size at most $p(\secp)$ times the of size of $\{\cA_\secp\}_{\secp \in \bbN}$, such that for any (potentially unbounded) environment $\{\cZ_{\secp} \coloneqq (\cZ_{\secp,1},\dots,\cZ_{\secp,\ell})\}_{\secp \in \bbN}$ and polynomial-size family of advice $\{\ket{\psi_\secp}\}_{\secp \in \bbN}$, \[\TD\left(\Pi_{\cF^\Del}^{\mathsf{DelReq} = 0}[\cA_\secp,\cZ_\secp,\ket{\psi_\secp}], \widetilde{\Pi}_{\cF^\Del}^{\mathsf{DelReq} = 0}[\cS_\secp,\cZ_\secp,\ket{\psi_\secp}]\right) = \negl(\secp),\] where $\Pi^{\mathsf{DelReq} = 0}_{\cF^\Del}[\cA_\secp,\cZ_\secp,\ket{\psi_\secp}]$ is defined to be equal to $\Pi_{\cF^\Del}[\cA_\secp,\cZ_\secp,\ket{\psi_\secp}]$ if party $B$'s output $\mathsf{DelReq}$ is set to 0, and defined to be $\bot$ otherwise, and likewise for $ \widetilde{\Pi}_{\cF^\Del}^{\mathsf{DelReq} = 0}[\cS_\secp,\cZ_\secp,\ket{\psi_\secp}]$.

\item {\bf Certified everlasting security against $B$.} For every QPT adversary $\{\cA_\secp \coloneqq (\cA_{\secp,1},\dots,\cA_{\secp,\ell})\}_{\secp \in \bbN}$ corrupting party $B$, there exists a QPT simulator $\{\cS_\secp \coloneqq (\cS_{\secp,1},\dots,\cS_{\secp,\ell})\}_{\secp \in \bbN}$ such that for any QPT environment $\{\cZ_{\secp} \coloneqq (\cZ_{\secp,1},\dots,\cZ_{\secp,\ell})\}_{\secp \in \bbN}$, and polynomial-size family of advice $\{\ket{\psi_\secp}\}_{\secp \in \bbN}$, \[\TD\left(\Pi_{\cF^\Del}^{\mathsf{DelRes} = 1}[\cA_\secp,\cZ_\secp,\ket{\psi_\secp}], \widetilde{\Pi}_{\cF^\Del}^{\mathsf{DelRes} = 1}[\cS_\secp,\cZ_\secp,\ket{\psi_\secp}]\right) = \negl(\secp),\] where $\Pi^{\mathsf{DelRes} = 1}_{\cF^\Del}[\cA_\secp,\cZ_\secp,\ket{\psi_\secp}]$ is defined to be equal to $\Pi_{\cF^\Del}[\cA_\secp,\cZ_\secp,\ket{\psi_\secp}]$ if party $A$'s output $\mathsf{DelRes}$ is set to 1, and defined to be $\bot$ otherwise, and likewise for $ \widetilde{\Pi}_{\cF^\Del}^{\mathsf{DelRes} = 1}[\cS_\secp,\cZ_\secp,\ket{\psi_\secp}]$.
\end{itemize}
\end{definition}

\paragraph{Deletion-composable protocols.} Finally, we consider the composition of protocols that securely realize functionalities with EST. Suppose we have a $\cG^\Del$-hybrid protocol $\Pi^{\cG^\Del}$ for implementing a functionality $\cF^\Del$. We say that $\Pi^{\cG^\Del}$ is \emph{deletion-composable} if the following two properties hold. 
\begin{enumerate}
    \item If the deletion phase of $\cF^\Del$ is never requested, then none of the deletion phases of $\cG^\Del$ are requested.
    \item If the deletion phase of $\cF^\Del$ is accepted by party $A$, meaning that $\mathsf{DelRes} = 1$, then it must be the case that the deletion phases of all the $\cG^\Del$ sub-routines are requested and accepted by $A$.
\end{enumerate}

Then, we can show the following composition theorem, which essentially follows from \cref{thm:composition}.

\begin{theorem}
\label{thm:compose-EST}
Let $\Pi^{\cG^\Del}$ be a deletion-composable protocol that statistically securely realizes\footnote{One could strengthen this theorem to only requiring that $\Pi^{\cG^\Del}$ securely realizes $\cF^\Del$ with EST, but we state the theorem with statistical security for simplicity.} a functionality $\cF^\Del$, and let $\Gamma$ be a protocol that securely implements $\cG^\Del$ with EST. Then $\Pi^{\cG^\Del / \Gamma}$ securely implements $\cF^\Del$ with EST.
\end{theorem}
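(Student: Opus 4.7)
The plan is to prove each of the three properties in \cref{def:security-with-EST} (computational secure realization of $\cF^\Del$, statistical security against $A$ when $\mathsf{DelReq}=0$, and certified everlasting security against $B$ when $\mathsf{DelRes}=1$) by a hybrid argument that replaces each invocation of $\Gamma$ by an invocation of the ideal functionality $\cG^\Del$, in the style of the proof of \cref{thm:composition}. In each hybrid step, atomicity of the phases of $\Gamma$ ensures that the interaction around a single invocation can be repackaged as a fresh adversary/environment pair attacking $\Gamma$ in isolation; we then substitute the simulator for $\Gamma$ guaranteed by the relevant EST clause. After all $\Gamma$-subroutines have been replaced, we are left with an adversary and environment attacking the $\cG^\Del$-hybrid protocol $\Pi^{\cG^\Del}$, at which point we invoke the statistical secure realization assumed of $\Pi^{\cG^\Del}$.

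For the computational secure realization of $\cF^\Del$, this is essentially a direct application of \cref{thm:composition}: since secure realization with EST implies (in particular) computational secure realization of $\cG^\Del$, and $\Pi^{\cG^\Del}$ computationally securely realizes $\cF^\Del$ (being statistically secure), \cref{thm:composition} yields the result. The two EST-flavored properties require a more careful hybrid, and this is where the deletion-composability of $\Pi^{\cG^\Del}$ becomes essential. First I would consider statistical security against a corrupt $A$ in the case $\mathsf{DelReq}=0$. By deletion-composability (property 1), $\mathsf{DelReq}=0$ at the outer level implies that no $\mathsf{DelRequest}$ message is sent in any $\Gamma$-subroutine. Thus in every hybrid step, the slice of the execution corresponding to one $\Gamma$-subroutine can be viewed as an interaction in which $A$ does \emph{not} request deletion; we may therefore apply the statistical-security-against-$A$ clause of $\Gamma$'s EST guarantee, obtaining a poly-size simulator whose view is within negligible trace distance of the real view. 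Summing the errors over the polynomially many subroutine calls and then applying statistical security of $\Pi^{\cG^\Del}$ gives the desired statistical indistinguishability between the real and ideal outputs of $\Pi^{\cG^\Del / \Gamma}$ conditioned on $\mathsf{DelReq}=0$.

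For certified everlasting security against a corrupt $B$ in the case $\mathsf{DelRes}=1$, I would run the symmetric hybrid. By deletion-composability (property 2), conditioning on $A$'s outer output $\mathsf{DelRes}=1$ forces all inner $\Gamma$-subroutines to have terminated with $\mathsf{DelRes}=1$ as well; hence in each hybrid step the corresponding $\Gamma$-invocation falls squarely within the scope of the certified-everlasting-security-against-$B$ clause of $\Gamma$'s EST guarantee, yielding a QPT simulator whose final (post-deletion) joint view is within negligible trace distance of the real one. After all invocations have been replaced, we apply statistical (hence, in particular, computational) secure realization of $\Pi^{\cG^\Del}$ in the $\cG^\Del$-hybrid model to bridge to the ideal execution of $\cF^\Del$.

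The main obstacle is bookkeeping rather than a single hard step: we must be careful that in each hybrid step the conditioning event ($\mathsf{DelReq}=0$ or $\mathsf{DelRes}=1$) is preserved across the substitution of a real $\Gamma$-invocation by its ideal counterpart, so that we may continue to apply the appropriate EST clause in subsequent steps. This is exactly what deletion-composability buys us: it guarantees that the conditioning event at the outer level deterministically propagates to a matching conditioning event at every inner $\Gamma$-invocation, uniformly over the choice of simulator used in earlier steps. Once this propagation is justified, the hybrid chain has only polynomially many steps, each incurring negligible trace distance, and the theorem follows by triangle inequality.
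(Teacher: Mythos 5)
Your proposal is correct and follows essentially the same route as the paper: handle computational secure realization via \cref{thm:composition} directly, and for the two EST clauses re-run the hybrid replacement of $\Gamma$-invocations by $\cG^\Del$, using property 1 of deletion-composability to place every inner invocation in the no-deletion case (so $\Gamma$'s statistical-security-against-$A$ clause applies) and property 2 to place every inner invocation in the deletion-accepted case (so $\Gamma$'s certified-everlasting clause applies), before finally invoking the statistical secure realization of $\Pi^{\cG^\Del}$. The paper's proof is just a terser rendering of the same argument.
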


\begin{proof}
First, the fact that $\Pi^{\cG^\Del/\Gamma}$ computationally securely realizes $\cF^\Del$ follows from directly from \cref{thm:composition} and the fact that both statistical secure realization and secure realization with EST imply computational secure realization. 

Next, statistical security against $A$ in $\Pi^{\cG^\Del/\Gamma}$ when $\mathsf{DelReq} = 0$ also follows directly from \cref{thm:composition} (applied to statistical secure realization), since by the first property of deletion-composability, all of the underlying $\Gamma$ protocols are statistically secure against $A$.

Finally, we argue certified everlasting security against $B$ in $\Pi^{\cG^\Del/\Gamma}$ when $\mathsf{DelRes} = 1$. This does not follow generically from the statement of \cref{thm:composition}. However, it can be shown via essentially the same proof as the proof of \cref{thm:composition}. Starting with $\Pi^{\cG^\Del/\Gamma}$, we replace each invocation of $\Gamma$ with an invocation of $\cG^\Del$ one by one. Conditioned on the deletion phase of $\cG^\Del$ passing, we know that this switch is statistically indistinguishable by the environment, due to the fact that $\Gamma$ securely implements $\cG^\Del$ with EST. Thus, conditioned on the deletion phase of each $\cG^\Del$ being accepted, we know that protocols $\Pi^{\cG^\Del/\Gamma}$ and $\Pi^{\cG^\Del}$ are statistically indistinguishable by the environment. We also know that $\Pi^{\cG^\Del}$ and $\cF^\Del$ are statistically indistinguishable by the environment, by assumption. Thus, by the second property of deletion composability, it follows that $\Pi^{\cG^\Del / \Gamma}$ and $\cF^\Del$ are statistically indistinguishable by the environment conditioned on $\mathsf{DelRes} = 1$, completing the proof.
\end{proof}

\subsection{One-sided ideal commitments}\label{subsec:one-sided}

In this section, we construct what we call a \emph{one-sided ideal commitment with EST}. In the following subsection, we define this primitive as well as some underlying building blocks.

\subsubsection{Definitions and building blocks}
A one-sided ideal commitment with EST satisfies full-fledged security with EST against a malicious committer, but not against a malicious receiver. This commitment satisfies the weaker property of certified everlasting hiding against a malicious receiver. These properties are formalized below, where we denote by $\cF_{\mathsf{Com}}^{\mathsf{Del}}$ the commitment ideal functionality from \proref{fig:comIF} augmented with the delete phase from \proref{fig:delPhase}.

\begin{definition}[One-sided ideal commitment with EST]
\label{def:one-sided}

A three-phase (Commit, Reveal, Delete) commitment scheme is a \emph{one-sided ideal commitment with EST} if
\begin{enumerate}
    \item It computationally securely realizes $\cF_{\mathsf{Com}}^{\mathsf{Del}}$ (\cref{def:securerealization}) against a corrupt committer $C$.
    \item It satisfies statistical security against a corrupt committer $C$ that does not initiate deletion (first part of \cref{def:security-with-EST}). 
    
    \item It satisfies correctness of deletion (\cref{def:correctness-deletion}) and it satisfies certified everlasting hiding (\cref{def:CEH}) against adversaries that corrupt the receiver $R$.
\end{enumerate}
\end{definition}

To construct this object, our building block will be a \emph{computationally-hiding statistically-efficiently-extractable} (CHSEE) commitment, which is a two-phase (Commit, Reveal) commitment that satisfies correctness (\cref{def:correctness-decommitment}), standard computational hiding (\cref{def:comp-hiding}), and the following notion of binding. Note that this is similar to \cref{def:SB}, except that the extractor must be \emph{efficient}.

\begin{definition}[Statistical efficient extractability]\label{def:SEE}
A commitment scheme satisfies \emph{statistical efficient extractability} if for any QPT adversary $\{\cC^*_{\Com,\secp}\}_{\secp \in \bbN}$ in the Commit phase, there exists a QPT extractor $\cE = \{\cE_\secp\}_{\secp \in \bbN}$, such that for any initial advice $\{\ket{\psi_\secp}^{\sAux,\sC^*}\}_{\secp \in \bbN}$ and any QPT adversary $\{\cC^*_{\Rev,\secp}\}_{\secp \in \bbN}$ in the Reveal phase,
\[\TD\left(\Real^{\cC^*}_\secp,\Ideal^{\cC^*,\cE}_\secp\right) = \negl(\secp),\] where $\Real^{\cC^*}_\secp$ and $\Ideal^{\cC^*,\cE}_\secp$ are defined as follows.

\begin{itemize}
    \item $\Real_\secp^{\cC^*}$: Execute the Commit phase $(\sC^*,\sR) \gets \Com\langle \cC^*_{\Com,\secp}(\sC^*),\cR_{\Com,\secp}\rangle$, where $\cC^*_{\Com,\secp}$ has as input the $\sC^*$ register of $\ket{\psi_\secp}^{\sAux,\sC^*}$. Execute the Reveal phase to obtain a trit $\mu \gets \Rev\langle \cC^*_{\Rev,\secp}(\sC^*),\cR_{\Rev,\secp}(\sR)\rangle$ along with the committer's final state on register $\sC^*$. Output $(\mu,\sC^*,\sAux)$, which includes the $\sAux$ register of the original advice state.
    \item $\Ideal_\secp^{\cC^*,\cE}$: Run the extractor $(b^*,\sC^*,\sR) \gets \cE_\secp(\sC^*)$, where the extractor takes as input the $\sC^*$ register of $\ket{\psi_\secp}^{\sAux,\sC^*}$, and outputs a bit $b^*$ and a state on registers $\sC^*,\sR$. Next, execute the Reveal phase to obtain a trit $\mu \gets \Rev\langle \cC^*_{\Rev,\secp}(\sC^*),\cR_{\Rev,\secp}(\sR)\rangle$ along with the committer's final state on register $\sC^*$. If $\mu \in \{\bot,b^*\}$ output $(\mu,\sC^*,\sAux)$, and otherwise output a special symbol $\Fail$.
\end{itemize}

\end{definition}

\begin{importedtheorem}[\cite{BCKM2021}]\label{thm:CHSEE}
There exists a construction of CHSEE commitments that makes black-box use of any computationally-hiding statistically-binding commitment (\cref{def:comp-hiding} and \cref{def:SB}).\footnote{In \cite{BCKM2021}, a different notion of statistical binding for the underlying commitment was used, but it was noted by \cite{cryptoeprint:2021:1663} that the extractor-based definition of statistical binding suffices.}
\end{importedtheorem}
These are implied by OT with statistical security against one party, which was constructed in~\cite{BCKM2021}, based on the black-box use of computationally-hiding statistically-binding commitments.
Alternatively, CHSEE commitments can be obtained more directly by plugging in the statistically-equivocal computationally-extractable commitments from \cite{BCKM2021} into the extractability compiler \cite[Section 5]{BCKM2021}. Furthermore, this implies that CHSEE commitments can be based on the black-box use of one-way functions~\cite{BCKM2021} or pseudo-random quantum states~\cite{cryptoeprint:2021:1691,cryptoeprint:2021:1663}. 

\subsubsection{Construction}
We construct one-sided ideal commitments with EST from CHSEE commitments in \proref{fig:one-sided}. We note that constructing one-sided ideal commitments with EST does not just follow immediately from applying our certified deletion compiler, as in our construction of commitments with certified everlasting hiding from statistically-binding commitments in \cref{subsec:com-ZK}. The reason is that we need indistinguishability between the real and ideal worlds to hold against a malicious committer \emph{even if the delete phase is run}. To satisfy this property, we actually use additional invocations of the CHSEE commitment going in the ``opposite'' direction during the Delete phase of \proref{fig:one-sided}.

Next, we prove the following theorem.
\begin{theorem}\label{thm:one-sided}
\proref{fig:one-sided} is a one-sided ideal commitment with EST (according to \cref{def:one-sided}).
\end{theorem}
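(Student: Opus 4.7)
The plan is to establish the three requirements of Definition 5.3 separately, handling the corrupt-receiver (certified everlasting hiding) case via our main theorem and the corrupt-committer cases via the CHSEE extractor.

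For the third requirement (correctness of deletion and certified everlasting hiding against a corrupt receiver), correctness of deletion is immediate from the construction. For certified everlasting hiding, the plan is to invoke Theorem~\ref{thm:main} with $\cZ_\secp(\theta,b',\sA)$ instantiated as the CHSEE Commit procedure on input $(\theta,b')$, since CHSEE is computationally hiding against non-uniform QPT receivers. The one complication is that the Delete phase in \proref{fig:one-sided} is interactive: the receiver first CHSEE-commits to its Hadamard outcomes, the committer opens $\theta$, and the receiver selectively opens positions where $\theta_i=1$. To reduce this to the setting of Theorem~\ref{thm:main}, I would first argue via a hybrid that the CHSEE commitments to outcomes on positions $i$ with $\theta_i=0$ can be replaced by commitments to $0$ (by computational hiding of CHSEE against the committer, which acts as the ``receiver'' of these inner commitments). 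After this replacement, the only deletion-related information sent to the malicious receiver is the set of Hadamard outcomes on positions where $\theta_i=1$, which is exactly the classical certificate $x'$ of Theorem~\ref{thm:main}.

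For the second requirement (statistical security against a corrupt committer that does not initiate deletion), the simulator runs the CHSEE extractor (Definition~\ref{def:SEE}) on the committer's CHSEE message to extract $(\theta^*,b^*)$, measures $\ket{x}_\theta$ in basis $\theta^*$ to obtain $x^*$, and submits $b^* \oplus \bigoplus_{i:\theta_i^*=0} x_i^*$ to $\cF_{\mathsf{Com}}$. Conditioned on no deletion occurring, the simulator only needs to simulate the Reveal phase (by playing the honest-receiver check using $(\theta^*,b^*)$), and statistical indistinguishability from the real interaction follows directly from statistical efficient extractability of CHSEE.

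The first requirement—computational simulation against a malicious committer when a Delete phase may occur—is the main obstacle, because a naive simulator that measures $\theta_i^*=0$ positions in the standard basis (for extraction) and later in the Hadamard basis (for deletion) would be distinguishable by a committer that sends half of an EPR pair, as noted in the overview. The fix is to leverage the interactive Delete phase: the simulator measures positions $i$ with $\theta_i^*=0$ in the standard basis (recovering $x_i^*$ for extracting $b$) and positions with $\theta_i^*=1$ in the Hadamard basis, submits the extracted $b$ to $\cF_{\mathsf{Com}}^{\mathsf{Del}}$, and in the Delete phase CHSEE-commits to the true Hadamard outcomes on $\theta_i^*=1$ positions and to arbitrary (e.g., zero) values on the other positions. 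When the committer decommits to some $\widehat\theta$, CHSEE (statistical) binding ensures $\widehat\theta=\theta^*$ with overwhelming probability, and the simulator opens only the positions where $\widehat\theta_i=1$, which are precisely the correct Hadamard outcomes. The hard part will be the hybrid argument showing that this simulated transcript is computationally indistinguishable from the real one: one introduces an intermediate hybrid in which the simulator uses the real honest-receiver Hadamard measurements on all positions inside the CHSEE commitments (so that the bits inside the CHSEE commitments match what the honest receiver would commit to), with indistinguishability justified by computational hiding of CHSEE against the committer (who never sees openings on $\theta_i^*=0$ positions). Once inside this hybrid, the only remaining difference from the real interaction is that the simulator additionally measures $\theta_i^*=0$ positions in the standard basis to extract $b$, but these measurements commute with all operations visible to the committer, so the committer's view is identical. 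Combining this with the corresponding argument conditioned on $\mathsf{DelReq}=0$ (handled in requirement two) completes the computational simulation.
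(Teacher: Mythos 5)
Your treatment of the corrupt-committer side essentially matches the paper: the simulator runs the CHSEE extractor to get $(\theta^*,d^*)$, measures the $\theta^*_i=0$ positions in the computational basis to extract the committed bit, and in the Delete phase commits to the genuine Hadamard outcomes only at $\theta^*_i=1$ positions and to zeros elsewhere, with indistinguishability resting on the fact that the $\theta^*_i=0$ commitments are never opened (computational hiding of $\Com$) and on statistical efficient extractability when no deletion occurs. (Your intermediate hybrid is internally inconsistent as stated---you cannot both place ``the real honest-receiver Hadamard outcomes on all positions'' inside the commitments and standard-basis-measure the $\theta^*_i=0$ positions for extraction, and ``the committer's view is identical'' is not literally true against entangled committers---but the correct one-step hiding argument is exactly what the paper does, so this part is fixable.)

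The genuine gap is in the certified everlasting hiding proof (your third requirement). In that experiment the \emph{receiver} is the adversary, so the inner CHSEE commitments to the Hadamard outcomes in the Delete phase are produced by the \emph{malicious} party; you cannot introduce a hybrid that ``replaces the commitments on $\theta_i=0$ positions by commitments to $0$,'' and the certificate $x'$ is information flowing from the receiver to the honest committer, not ``information sent to the malicious receiver.'' The missing idea is how to obtain a \emph{classical} certificate from the malicious receiver while $\theta$ is still hidden from it, which is what \cref{thm:main} requires: the paper's reduction runs the receiver up to the point where it has produced its CHSEE commitments to $x'$ (i.e., before the committer decommits to $(\theta,b')$) and then applies the \emph{efficient} extractor of \cref{def:SEE} to those commitments to read off $x'$; the resulting machine is the QPT adversary $\cA_\secp$ of \cref{thm:main} with $\cZ_\secp(\theta,b',\sA)$ being the CHSEE commit interaction, and statistical efficient extractability guarantees that the extracted bits at positions with $\theta_i=1$ coincide with the values the receiver later opens, so that acceptance by the honest committer coincides (up to negligible error) with validity of the extracted certificate. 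Without this extraction step there is no point in the protocol at which the adversary outputs a classical $x'$ while $\theta$ remains semantically hidden (the opened values are produced only \emph{after} $\theta$ is revealed), so \cref{thm:main} cannot be invoked; note also that the extractor's efficiency is essential here, since the reduction must remain in the class of adversaries against which the CHSEE commitment is hiding.
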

The theorem follows by combining Lemmas \ref{lem:os-id-one}, \ref{lem:os-id-two}, \ref{lem:os-idthree} and \ref{lem:os-idfour} proved below.

\protocol{\proref{fig:one-sided}: One-sided ideal commitment with EST}{Construction of one-sided ideal commitment with EST, from a CHSEE commitment.}{fig:one-sided}{
Ingredients: a CHSEE commitment $(\Com,\Rev)$ \\
Parties: committer $C$ with input $b \in \{0,1\}$ and receiver $R$.\\

\underline{\textbf{Commit phase}} 
\begin{itemize}
    \item $C$ samples $x,\theta \gets \{0,1\}^\secp$.
    \item $C$ and $R$ execute $\sC_\theta,\sR_\theta \gets \Com\langle C(\theta,b \oplus \bigoplus_{i:\theta_i = 0} x_i),R\rangle$.
    \item $C$ sends $\ket{x}_\theta$ to $R$ on register $\sX$.
\end{itemize}

\underline{\textbf{Reveal phase}} 
\begin{itemize}
    \item $C$ and $R$ execute $(\theta,b') \gets \Rev\langle C(\sC_\theta),R(\sR_\theta)\rangle$.
    \item  $R$ measures the qubits $i$ of register $\sX$ such that $\theta_i = 0$ to obtain $x_i$, and then outputs $b' \oplus \bigoplus_{i:\theta_i = 0} x_i$.
\end{itemize}

\underline{\textbf{Delete phase}\footnote{Note that both the Reveal phase and the Delete phase require $C$ and $R$ to run $\Rev\langle C(\sC_\theta),R(\sR_\theta)\rangle$.
So if the Reveal phase has already been run, we can instruct $R$ to abort if a deletion is requested, since we don't require any correctness of deletion or everlasting security after Reveal.}} 
\begin{itemize}
    \item $R$ measures all qubits of register $\sX$ in the Hadamard basis to obtain a string $x' \in \{0,1\}^\secp$.
    \item $R$ and $C$ execute $\secp$ Commit phases of $\Com$, with $R$ as the committer, committing bit-by-bit to $x'$: $\sR_{x',i},\sC_{x',i} \gets \Com \langle R(x'_i),C\rangle$.
    \item $C$ and $R$ execute $(\theta,b') \gets \Rev\langle C(\sC_\theta),R(\sR_\theta)\rangle$.
    \item $R$ and $C$ execute the Reveal phase of $\Com$ for each $i$ such that $\theta_i = 1$: $x'_i \gets \Rev \langle R(\sR_{x',i}),C(\sC_{x',i})\rangle$.
    \item $C$ accepts (outputs 1) if $x'_i = x_i$ for all $i$ such that $\theta_i = 1$.
\end{itemize}

}

\subsubsection{Security against a corrupt committer}
\begin{lemma}
\label{lem:os-id-one}
\proref{fig:one-sided} computationally securely realizes $\cF_{\mathsf{Com}}^{\mathsf{Del}}$ (\cref{def:securerealization}) against a corrupt committer $C$. That is, for every QPT committer $\{\cC^*_\secp \coloneqq (\cC^*_{\secp,\Com},\cC^*_{\secp,\Rev},\cC^*_{\secp,\Del})\}_{\secp \in \bbN}$, there exists a QPT simulator $\{\cS_\secp \coloneqq (\cS_{\secp,\Com},\cS_{\secp,\Rev},\cS_{\secp,\Del})\}_{\secp \in \bbN}$, such that for any QPT environment $\{\cZ_{\secp} \coloneqq (\cZ_{\secp,1},\cZ_{\secp,2},\cZ_{\secp,3})\}_{\secp \in \bbN}$, polynomial-size family of advice $\{\ket{\psi_\secp}\}_{\secp \in \bbN}$, and QPT distinguisher $\{\cD_\secp\}_{\secp \in \bbN}$, it holds that \[\bigg|\Pr\left[\cD_\secp\left(\Pi_{\cF_\Com^\Del}[\cC^*_\secp,\cZ_\secp,\ket{\psi_\secp}]\right) = 1\right] - \Pr\left[\cD_\secp\left(\widetilde{\Pi}_{\cF_\Com^\Del}[\cS_\secp,\cZ_\secp,\ket{\psi_\secp}]\right) = 1\right]\bigg| = \negl(\secp).\] 
\end{lemma}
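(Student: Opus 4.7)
The plan is to construct a QPT simulator $\cS$ and to show, via a short hybrid argument, that its output is computationally indistinguishable from the real execution for every QPT environment and distinguisher.

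I will define $\cS$ as follows. In the Commit phase, $\cS$ plays the honest receiver in the CHSEE commit but invokes the extractor from \cref{def:SEE} to recover the committed pair $(\theta^*,b^*)$. It then receives the register $\sX$, measures it in basis $\theta^*$ to obtain $x^*$, computes $\tilde{b} \coloneqq b^* \oplus \bigoplus_{i:\theta^*_i=0} x^*_i$, and sends $(\mathsf{Commit},\mathsf{sid},\tilde{b})$ to $\cF_\Com^\Del$. In the Reveal phase, $\cS$ runs the CHSEE reveal as honest receiver; if it opens to $(\theta^*,b^*)$ it forwards $(\mathsf{Reveal},\mathsf{sid})$, and otherwise it causes $\cF_\Com^\Del$ to deliver $\bot$. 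In the Delete phase, $\cS$ forwards the $\mathsf{DelRequest}$ and, playing the CHSEE committer in the inner invocations, commits bit-by-bit to $x^*_i$ at coordinates where $\theta^*_i=1$ and to $0$ elsewhere; after $\cC^*$ opens $(\theta,b')$ via CHSEE, $\cS$ aborts unless $(\theta,b') = (\theta^*,b^*)$, and otherwise opens exactly the positions $i$ with $\theta_i=1$, revealing $x^*_i$.

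Indistinguishability I will establish by three hybrid steps starting from the real execution. Hybrid~1 will replace the honest receiver with one that invokes the CHSEE extractor at the end of the Commit phase and aborts whenever a later CHSEE reveal opens to anything other than the extracted $(\theta^*,b^*)$; this is statistically close to the real execution by the CHSEE extractability guarantee (\cref{def:SEE}). Hybrid~2 will additionally have the receiver measure $\sX$ in basis $\theta^*$ at the end of the Commit phase and reuse the resulting $x^*$ in the subsequent phases. For the Reveal output this makes no difference, since the XOR over coordinates with $\theta^*_i=0$ uses exactly the computational-basis measurements the honest receiver would have performed. For the Delete phase, at coordinates with $\theta^*_i=1$ the final Hadamard-basis measurement commutes with the early one and yields the same $x^*_i$, so all values \emph{opened} to $\cC^*$ are unchanged; at coordinates with $\theta^*_i=0$ the values committed via CHSEE change (from Hadamard measurements of the post-collapse state to what one obtains by measuring a computational-basis state in the Hadamard basis) but those commitments are never opened, so Hybrid~2 is computationally close to Hybrid~1 by a standard polynomial-step hybrid over CHSEE hiding (\cref{def:comp-hiding}). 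Hybrid~3 will further replace each of those unopened commitments by a commitment to $0$, again by CHSEE hiding, at which point the modified receiver's behavior is exactly the interaction $\cS$ has with $\cF_\Com^\Del$.

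The main obstacle will be the Hybrid~1 to Hybrid~2 transition: because $\cC^*$ may have sent an $\sX$ register entangled with its private state, I need to rule out that performing the $\theta^*$-basis measurement early covertly leaks information to $\cC^*$. The argument rests on the fact that this early measurement influences $\cC^*$'s view only through quantities the simulator subsequently sends, and each of these quantities is either (i) hidden inside a CHSEE commitment that is never opened (coordinates with $\theta^*_i=0$), or (ii) identical in distribution to the real-protocol value because two Hadamard-basis measurements on the same qubit commute (the opened values at coordinates with $\theta^*_i=1$, and the XOR recovered at reveal time). Since the CHSEE extractor is QPT and every other step is polynomial-time, $\cS$ is QPT, establishing the lemma.
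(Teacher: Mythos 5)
Your proposal is correct and follows essentially the same route as the paper's proof: the simulator extracts $(\theta^*,b^*)$ via the CHSEE extractor, measures $\sX$ to define the committed bit, commits to $0$ at the $\theta^*_i=0$ positions in the Delete phase, and indistinguishability rests on statistical extractability (Commit/Reveal) plus computational hiding of the never-opened inner commitments (Delete). The only differences—performing the Hadamard measurements at $\theta^*_i=1$ positions already in the Commit phase and the slightly stricter abort check on $(\theta',b')$—are immaterial, and your explicit hybrid structure just spells out the argument the paper states more tersely.
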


\begin{lemma}
\label{lem:os-id-two}
\proref{fig:one-sided} satisfies statistical security against a corrupt committer $C$ that does not initiate deletion. That is, there exists a polynomial $p(\cdot)$ such that for every (potentially unbounded) committer $\{\cC^*_\secp \coloneqq (\cC^*_{\Com,\secp},\cC^*_{\Rev,\secp},\cC^*_{\Del,\secp})\}_{\secp \in \bbN}$, there exists a simulator $\{\cS_\secp \coloneqq (\cS_{\secp,\Com},\cS_{\secp,\Rev},\cS_{\secp,\Del})\}_{\secp \in \bbN}$ with size at most $p(\secp)$ times the of size of $\cC^*$, such that for any (potentially unbounded) environment $\{\cZ_{\secp} \coloneqq (\cZ_{\secp,1},\cZ_{\secp,2},\cZ_{\secp,3})\}_{\secp \in \bbN}$ and polynomial-size family of advice $\{\ket{\psi_\secp}\}_{\secp \in \bbN}$, \[\TD\left(\Pi_{\cF_{\mathsf{Com}}^\Del}^{\mathsf{DelReq} = 0}[\cC^*_\secp,\cZ_\secp,\ket{\psi_\secp}], \widetilde{\Pi}_{\cF_{\mathsf{Com}}^\Del}^{\mathsf{DelReq} = 0}[\cS_\secp,\cZ_\secp,\ket{\psi_\secp}]\right) = \negl(\secp),\] where $\Pi^{\mathsf{DelReq} = 0}_{\cF_{\mathsf{Com}}^\Del}[\cC^*_\secp,\cZ_\secp,\ket{\psi_\secp}]$ is defined to equal $\Pi_{\cF_{\mathsf{Com}}^\Del}[\cC^*_\secp,\cZ_\secp,\ket{\psi_\secp}]$ if the receiver's output $\mathsf{DelReq}$ is set to 0, and defined to be $\bot$ otherwise, and likewise for $ \widetilde{\Pi}_{\cF_{\mathsf{Com}}^\Del}^{\mathsf{DelReq} = 0}[\cS_\secp,\cZ_\secp,\ket{\psi_\secp}]$.
\end{lemma}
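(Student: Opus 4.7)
The plan is to build a simulator $\cS_\secp$ that extracts the committed bit from the malicious committer using the CHSEE extractor guaranteed by \cref{def:SEE}, and to argue statistical indistinguishability via a short hybrid argument. Concretely, $\cS_{\secp,\Com}$ simulates the honest receiver's messages in the inner CHSEE commitment using the CHSEE extractor on $\cC^*_{\Com,\secp}$, obtaining a pair $(\theta^*, b'^*)$ together with a residual state on $\cC^*_{\Com,\secp}$'s register $\sC^*$. It then receives the register $\sX$ from $\cC^*_{\Com,\secp}$ and measures the qubits $\{i : \theta^*_i = 0\}$ in the computational basis to obtain bits $\{x^*_i\}$, sets $b^* := b'^* \oplus \bigoplus_{i:\theta^*_i=0} x^*_i$, and sends $(\textsf{Commit},\mathsf{sid},b^*)$ to $\cF_\Com$. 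In the Reveal phase, $\cS_{\secp,\Rev}$ runs the CHSEE Reveal phase honestly with $\cC^*_{\Rev,\secp}$; if the revealed values equal $(\theta^*, b'^*)$ it sends $(\textsf{Reveal},\mathsf{sid})$ to $\cF_\Com$, and otherwise it instructs $\cF_\Com$ to abort. Because we condition on $\mathsf{DelReq}=0$, the Delete phase never executes and $\cS_{\secp,\Del}$ is irrelevant.

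The indistinguishability argument proceeds through three hybrids. $\Hyb_0$ is the real execution $\Pi_{\cF_\Com^\Del}^{\mathsf{DelReq}=0}[\cC^*_\secp,\cZ_\secp,\ket{\psi_\secp}]$. In $\Hyb_1$, we additionally run the CHSEE extractor during the Commit phase to obtain $(\theta^*, b'^*)$, and replace the Reveal outcome with $\bot$ whenever $\cC^*_{\Rev,\secp}$ opens to some $(\theta,b')\neq(\theta^*,b'^*)$; \cref{def:SEE} directly yields $\TD(\Hyb_0,\Hyb_1)=\negl(\secp)$, and the polynomial blowup of the extractor supplies the required polynomial size bound $p(\cdot)$ on $\cS_\secp$. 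In $\Hyb_2$, we move the receiver's computational-basis measurement of $\sX$ at positions $\{i : \theta_i = 0\}$ from the Reveal phase back to the end of the Commit phase, using the extracted $\theta^*$ to choose positions; because the committer has physically transmitted $\sX$ to the receiver during Commit and no intervening operation (including the CHSEE Reveal messages) touches $\sX$, this reordering is a perfect rewrite, and it matches the real measurement positions since $\theta^* = \theta$ with overwhelming probability under $\Hyb_1$. Finally, $\Hyb_3$ replaces the honest receiver with the dummy party interacting with $\cF_\Com$: the output delivered by the ideal functionality on a successful Reveal is precisely $b^* = b'^* \oplus \bigoplus_{i:\theta^*_i=0} x^*_i$, which is exactly what the (reordered) real receiver computes in $\Hyb_2$. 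Hence $\Hyb_3$ is identical in distribution to $\widetilde\Pi_{\cF_\Com^\Del}^{\mathsf{DelReq}=0}[\cS_\secp,\cZ_\secp,\ket{\psi_\secp}]$.

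The only substantive step is $\Hyb_0 \to \Hyb_1$, which is handed to us by statistical efficient extractability of the CHSEE commitment; the remaining steps are bookkeeping enabled by the fact that, absent a deletion request, the committer never again touches $\sX$ or interacts in any way that depends on it. The main subtlety to be careful about is that the environment $\cZ_\secp$ and the committer $\cC^*_\secp$ may share entanglement through the auxiliary register $\sAux$ in \cref{def:SEE}; however, the CHSEE extractability definition is stated precisely with this register preserved, so invoking it with $\sAux$ playing the role of $(\sZ,\sA)$ at the moment of Commit is exactly what is needed, and the rest of the environment's interactions only see messages that are negligibly close to those in the real execution.
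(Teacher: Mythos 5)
Your Commit- and Reveal-phase simulator and the accompanying hybrids essentially match the paper's proof: extract $(\theta^*,b'^*)$ with the CHSEE extractor of \cref{def:SEE}, measure the positions $\{i:\theta^*_i=0\}$ of $\sX$ in the computational basis, forward $b^*=b'^*\oplus\bigoplus_{i:\theta^*_i=0}x^*_i$ to the functionality, and only send $(\text{Reveal},\mathsf{sid})$ when the opened values agree with the extracted ones; the deferred-measurement observation and the handling of auxiliary entanglement via the $\sAux$ register are fine. The genuine gap is your treatment of the Delete phase. The claim that ``the Delete phase never executes and $\cS_{\secp,\Del}$ is irrelevant'' misreads the conditioning in the statement: the lemma quantifies over \emph{all} unbounded committers, including ones that do initiate deletion, and $\Pi^{\DelReq=0}_{\cF_{\Com}^{\Del}}$ is obtained by running the full experiment and replacing the output by $\bot$ exactly when the honest receiver's output bit $\DelReq$ equals $1$. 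That bit is computed separately in each world: in the real world it is set to $1$ whenever $\cC^*$ starts the Delete phase, while in the ideal world it is set to $1$ only if the simulator sends $(\mathsf{DelRequest},\mathsf{sid})$ to $\cF_{\Com}^{\Del}$ so that the dummy receiver is notified.

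With $\cS_{\secp,\Del}$ undefined (or doing nothing), take a committer that always initiates deletion after the Commit phase. In the real world $\DelReq=1$ always, so the conditioned real output is $\bot$ with probability $1$; in your ideal world $\DelReq$ stays $0$, so the conditioned ideal output is the actual non-$\bot$ final state, and the trace distance is essentially $1$. More generally, the two conditioned distributions differ by roughly the probability that the adversary initiates deletion, which need not be negligible. The fix is exactly what the paper's simulator does: upon $\cC^*_{\Del,\secp}$ initiating the Delete phase, $\cS_{\secp,\Del}$ sends $(\mathsf{DelRequest},\mathsf{sid})$ to the functionality (and then runs some simulation of the Delete-phase interaction, whose fidelity matters only for the computational Lemma~\ref{lem:os-id-one}); with that addition, deletion runs are mapped to $\bot$ in both worlds and your remaining argument goes through.
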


\begin{proof} (of Lemmas \ref{lem:os-id-one} and \ref{lem:os-id-two})
We define a simulator $\cS = (\cS_\Com,\cS_\Rev,\cS_\Del)$ based on any adversary $\cC^* = (\cC^*_{\Com}, \cC^*_{\Rev}, \cC^*_{\Del})$ that will suffice to prove both lemmas. We have dropped the dependence on $\secp$ for notational convenience.
\begin{enumerate}
    \item {\bf Commit Phase.} $\cS_\Com$ does the following.
    \begin{itemize}
        \item Run the CHSEE extractor $((\theta^*,d^*),\sC^*,\sR_\theta) \gets \cE_\secp[\cC^*_{\Com,\theta}](\sC^*)$, where the extractor is defined based on the part of $\cC^*_\Com$ that interacts in the commitment to $(\theta,b \oplus \bigoplus_{i: \theta_i = 0} x_i)$. It takes as input $\cC^*_\Com$'s private state register $\sC^*$, and outputs a sequence of committed bits $(\theta^*,d^*)$ and a state on $\sC^*,\sR_\theta$.
        \item Keep running $\cC^*_\Com$ until it outputs a state on register $\sX$.
        \item Measure the qubits $i$ of register $\mathsf{X}$ such that $\theta^*_i = 0$ to obtain $x^*_i$, and then send 
        $(\text{Commit}, \mathsf{sid}, b^*)$ where
        $b^* = d^* \oplus \bigoplus_{i:\theta^*_i = 0} x^*_i$ to the ideal functionality.
    \end{itemize}
    \item {\bf Reveal Phase.}
    $\cS_\Rev$ does the following.
    \begin{itemize}
        \item Execute the Reveal phase of CHSEE to obtain $(\theta',d') \leftarrow \Rev\langle \cC^*_{\Rev,\theta}(\sC^*),R(\sR_\theta)\rangle$(and update the register $\sC^*$).
        \item If $\theta^* = \theta'$ and $d^* = d'$ then send $(\text{Reveal}, \mathsf{sid})$ to the ideal functionality.
    \end{itemize}
    \item {\bf Delete Phase.}
    If $\cC^*_\Del$ initializes the Delete phase, $\cS_\Del$ sends $(\mathsf{DelRequest},\mathsf{sid})$ to the ideal functionality, and upon obtaining $(\mathsf{DelResponse},\mathsf{sid})$, it does the following.
    \begin{itemize}
        \item For every $i$ such that $\theta^*_i = 1$, measure the $i^{th}$ qubit of register $\sX$ in the Hadamard basis to obtain $x'_i$.
        For all $i$ such that $\theta^*_i \in \{0,\bot\}$, set $x'_i = 0$.
        \item Execute $\lambda$ commit phases of $\Com$, with the simulator as the committer, committing bit-by-bit to $x'$.
        \item Execute the Reveal phase of $\cC^*$'s commitments to obtain $(\theta',b') \leftarrow \Rev\langle \cC^*_{\Del,\theta}(\sC^*),R(\sR_\theta)\rangle$ (and update the register $\sC^*$), where $\cC^*_{\Del,\theta}$ is the part of $\cC^*_\Del$ that interacts in the reveal phase of its commitments.
        \item If $\theta' \neq \theta^*$, abort. Otherwise, execute with $\cC^*$ the Reveal phase of commitments to $x'$ restricted to indices $i \in [\secp]$ such that $\theta'_i = 
        1$.
    \end{itemize}
\end{enumerate}
It is straightforward to see that the simulator runs in quantum polynomial time as long as $\cC^*$ runs in quantum polynomial time.

Statistical indistinguishability between the real and ideal distributions at the end of the Commit Phase or the Reveal Phase follows directly from \cref{def:SEE}, thereby proving Lemma \ref{lem:os-id-two}.

Furthermore, in the Delete phase, simulator and receiver strategies are identical on indices where $\theta^*_i = 1$. The only difference between these strategies is that the simulator commits to $0$ when $\theta^*_i = 0$ whereas the receiver commits to outcomes of measurements of the $i^{th}$ qubit of register $\sX$ in the Hadamard basis. Now, the Reveal phase for these commitments are only run when $\theta' = \theta^*$, and only restricted to indices $i \in [\lambda]$ such that $\theta_i' = 1$. Thus, computational indistinguishability during the Delete phase follows from a reduction to the computational hiding of $\Com$. This proves \cref{lem:os-id-one}.
\end{proof}

\subsubsection{Security against a corrupt receiver}
\begin{lemma}
\label{lem:os-idthree}
\proref{fig:one-sided} satisfies correctness of deletion (\cref{def:correctness-deletion}). 
\end{lemma}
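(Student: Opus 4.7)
The plan is to prove Lemma \ref{lem:os-idthree} by directly tracing through an honest execution of the Commit phase followed by the Delete phase of Protocol \ref{fig:one-sided}, and observing that all measurement outcomes and commitment reveals behave as required with overwhelming probability.

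First I would analyze the quantum state $\ket{x}_\theta$ that the honest committer prepares and sends to the receiver during the Commit phase. By construction, for each index $i$ with $\theta_i = 1$, the $i$-th qubit is $\ket{x_i}_1$, i.e., $x_i$ encoded in the Hadamard basis. Therefore, when the honest receiver later measures every qubit of register $\sX$ in the Hadamard basis at the start of the Delete phase, the outcome $x'_i$ at each such index satisfies $x'_i = x_i$ with probability exactly $1$. (The outcomes $x'_i$ at indices with $\theta_i = 0$ are unconstrained, but these indices never enter the acceptance check.)

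Next I would handle the classical commitment sub-protocol invocations. During the Delete phase, $R$ commits to each bit of $x'$ using $\Com$, then $C$ executes the Reveal phase of the outer $\Com$-commitment to $(\theta, b\oplus\bigoplus_{i:\theta_i=0}x_i)$, and finally $R$ reveals $x'_i$ for each index with $\theta_i = 1$. By correctness of decommitment of the underlying CHSEE commitment $\Com$ (\cref{def:correctness-decommitment}), each of these reveals succeeds and produces the originally committed value, except with negligible probability. A union bound over the at most $2\secp + 1$ invocations of $\Com$ keeps the total failure probability negligible in $\secp$.

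Finally, conditioned on all reveals succeeding, $C$'s acceptance predicate checks that $x'_i = x_i$ for every $i$ with $\theta_i = 1$, and this holds deterministically by the first step. Hence $C$ outputs $\top$ with probability $1-\negl(\secp)$, establishing correctness of deletion. I do not expect any real obstacle here, since the claim is a straightforward composition of (i) the perfect correctness of Hadamard-basis measurements of Hadamard-basis encoded qubits and (ii) the standard correctness of the underlying commitment scheme; the only care needed is the union bound over the polynomially many invocations of $\Com$ in the Delete phase.
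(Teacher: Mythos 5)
Your proposal is correct and matches the paper's approach: the paper simply states that correctness of deletion follows immediately from the description of the scheme, and your argument is just the explicit unfolding of that observation (Hadamard-basis qubits measured in the Hadamard basis reproduce $x_i$ exactly at positions with $\theta_i = 1$, and the $\Com$ reveals succeed by correctness of decommitment).
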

\begin{proof}
This follows immediately from the description of the scheme.
\end{proof}

\begin{lemma}
\label{lem:os-idfour}
\proref{fig:one-sided} satisfies certified everlasting hiding (\cref{def:CEH}) against adversaries that corrupt the receiver $R$.
\end{lemma}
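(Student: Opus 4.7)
To establish Lemma~\ref{lem:os-idfour}, I need to verify the two properties of certified everlasting hiding (\cref{def:CEH}): computational indistinguishability of the joint post-Delete view $(d_\cC,\sR^*,\sD^*)$ between $b=0$ and $b=1$, and information-theoretic closeness of that view conditioned on the committer's deletion check succeeding. The first (computational) property follows by a hybrid argument reducing to the computational hiding of the CHSEE commitment $\Com$: the only $b$-dependent message in the honest committer's strategy is its input $(\theta, b \oplus \bigoplus_{i:\theta_i=0} x_i)$ to the CHSEE commitment, since the BB84 register, the revealed $\theta$ in the Delete phase, and the final acceptance bit are either sampled or computed independently of $b$; a hybrid that replaces the second coordinate of the CHSEE input by $0$ is computationally indistinguishable by CHSEE hiding and strips the $b$-dependence.

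For the everlasting (second) property, my plan is to first introduce an intermediate experiment $\EVEXP'^{\cR^*}_\secp(b)$ that is identical to $\EVEXP^{\cR^*}_\secp(b)$ except that after $\cR^*$ bit-commits to its claimed Hadamard outcomes $x'$ in the Delete phase, the committer invokes the CHSEE extractor $\cE$ on those $\secp$ commitment transcripts to recover a string $\widetilde{x}$, and accepts iff $\widetilde{x}_i = x_i$ for every $i$ with $\theta_i = 1$. Statistical efficient extractability of $\Com$ (\cref{def:SEE}) ensures $\TD(\EVEXP, \EVEXP') = \negl(\secp)$ for both values of $b$, since on any non-$\bot$ opened transcript the extracted and opened values must agree.

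I then invoke \cref{thm:main} on $\EVEXP'$, with $\cZ_\secp(\theta, b', \sA)$ defined to initialize the receiver's registers $(\sR^*,\sD^*)$ with its advice state, run the interactive $\Com\langle C(\theta, b'), \cR^*_{\Com}\rangle$ internally, and output the receiver's view together with the BB84 register $\sA$; and with $\cA_\secp$ defined as $\cR^*_\Del$ run up through the bit-commitments to $x'$, followed by application of $\cE$, producing output $(\widetilde{x}, \sA')$ where $\sA'$ is $\cR^*$'s internal state at that moment. Computational hiding of $\Com$ supplies the required semantic-security hypothesis on $\cZ_\secp$, and \cref{thm:main} then yields that $\sA'$ conditioned on $\widetilde{x}$ passing the deletion check is within negligible trace distance between $b=0$ and $b=1$.

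The main obstacle I anticipate is extending this $b$-independence through the remaining Delete rounds (the committer's reveal of $\theta$, and $\cR^*$'s openings of the bit-commitments at positions with $\theta_i = 1$), which together determine $\cR^*$'s final post-Delete state $\sR^*$. The committer's reveal message in this portion conveys only $\theta$---which is sampled independently of $b$---and the subsequent openings are carried out by $\cR^*$ acting on registers whose state is already captured in $\sA'$. Thus the transformation from $\sA'$ to the full final receiver state is a CPTP map whose description depends only on $\theta$ and $\cR^*$'s own circuitry, not on $b$; since trace distance is preserved by CPTP maps, the negligible trace-distance bound on $\sA'$ lifts to the final output of $\EVEXP'$, and combined with the first hybrid step yields the conclusion for $\EVEXP$.
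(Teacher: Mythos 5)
Your approach tracks the paper's closely for most of the argument: the first (computational) property via the hiding of the CHSEE commitment, the intermediate experiment $\EVEXP'$ where the accept decision is made on the extracted string $\widetilde{x}$ rather than the opened one (the paper argues the same thing by noting the extracted and opened $\{x'_i\}_{i:\theta_i=1}$ agree up to negligible error), and the instantiation of $\cZ_\secp$ and $\cA_\secp$ so that $\cA_\secp$ runs $\cR^*_\Del$ through the commitments to $x'$ and then invokes the CHSEE extractor. You also deserve credit for explicitly noticing — and trying to handle — the fact that $\cA_\secp$ stops partway through the Delete phase, so a further argument is needed to get from $\sA'$ to the receiver's actual final state; the paper's proof leaves this transition implicit.

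However, the way you handle this last step contains a factual error that opens a genuine gap. You claim ``the committer's reveal message in this portion conveys only $\theta$ --- which is sampled independently of $b$.'' In the protocol the committer executes $(\theta,b') \gets \Rev\langle C(\sC_\theta),R(\sR_\theta)\rangle$, so the receiver learns \emph{both} $\theta$ and $b'$, and $b' = b \oplus \bigoplus_{i:\theta_i=0}x_i$ does depend on $b$. Hence the post-extraction CPTP map that produces the receiver's final state is \emph{not} $b$-independent as stated, and the trace-distance bound from \cref{thm:main} on $\sA'$ alone does not automatically lift. What is actually needed is the (strictly stronger) statement that the joint state $(\sA',\theta,b')$, conditioned on the extracted certificate passing, is negligibly close across $b=0,1$. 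This does not follow from the statement of \cref{thm:main} as a black box — its output is $\sA'$ only — but it \emph{does} follow from its proof: in $\Hyb_2$, both $\theta$ and $b'$ are fixed classical values sampled independently of $b$ before $\cA_\secp$ runs, and \cref{thm:XOR-extractor} shows that $\bigoplus_{i:\theta_i=0}x_i$ is uniform and independent of $\sA'$ (hence of $(\sA',\theta,b')$) conditioned on the Hadamard check, so the abort event $b' \oplus \bigoplus_{i:\theta_i=0}x_i \neq b$ is an unbiased coin independent of $(\sA',\theta,b')$. With this strengthened conclusion in hand, your CPTP-map argument for the remaining Delete rounds (now depending on $\theta$, $b'$, and fresh committer randomness, all of whose post-selected distributions are $b$-independent) goes through. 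As written, though, the proposal would not survive a careful check, because the stated reason the remaining rounds are harmless is wrong.
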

\begin{proof}
The first property of certified everlasting hiding follows immediately from the computational hiding of CHSEE and the fact that the delete phase is completely independent of the committed bit $b$.

The second property follows from the computational hiding of CHSEE and \cref{thm:main} by setting $\cZ_\secp(\theta,b',\sA)$ and $\cA_\secp$ as follows, based on any non-uniform corrupt receiver $\cR^* = \{\cR^*_{\secp,\Com},\cR^*_{\secp,\Del},\ket{\psi_\secp}\}_{\secp \in \bbN}$.

\begin{itemize}
    \item $\cZ_\secp(\theta,b',\sA)$ initializes registers $(\sR^*,\sD^*)$ with $\ket{\psi_\secp}$, runs $(\sC_\theta,\sR^*) \gets \Com\langle C(\theta,b'),\cR^*_{\secp,\Com}(\sR^*)\rangle$ with the first part of $\cR^*_{\Com,\secp}$, and outputs the resulting state on registers $(\sR^*,\sA)$.
    \item $\cA_\secp$ receives registers $(\sR^*,\sA)$ and runs $\cR^*_{\secp,\Del}(\sR^*,\sA)$ until the beginning of the part where $\cR^*_{\secp,\Del}$ is supposed to commit to $x'$. At this point, it runs the extractor $\cE_\secp(\sR^*)$ for $\Com$, which outputs a certificate $x'$ and a left-over quantum state on register $\sR^*$.
    
\end{itemize}




Note that the delete phase succeeds in the experiment $\EVEXP_\secp^{\cR^*}(b)$ iff for every $i$ where $\theta_i = 1$, $x'_i = x_i$, where the $x'_i$ are opened by $\cR^*_{\secp,\Del}$. Also, by statistical efficient extractability of Com, the $\{x'_i\}_{i: \theta_i = 1}$ output by $\cE_\secp$ are equal to the $\{x'_i\}_{i: \theta_i = 1}$ opened by $\cR^*_{\secp,\Del}$, except with negligible probability. Thus, \cref{thm:main} implies that

\[\TD\left(\EVEXP_\secp^{\cR^*}(0),\EVEXP_\secp^{\cR^*}(1)\right) = \negl(\secp).\]
This completes the proof of the lemma.
\end{proof}

\subsection{Ideal commitments}\label{subsec:ideal}

In this section, we show how to generically upgrade a one-sided ideal commitment with EST to a full-fledged ideal commitment with EST. Our construction, which is given in \proref{fig:ideal}, is essentially the ``equivocality compiler'' from \cite{BCKM2021} with an added Delete phase.

\begin{theorem}\label{thm:ideal}
\proref{fig:ideal} securely realizes the commitment ideal functionality with EST (according to \cref{def:security-with-EST}).
\end{theorem}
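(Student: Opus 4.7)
The plan is to prove the three requirements of \cref{def:security-with-EST} separately: (i) computational secure realization of $\cF_\Com^\Del$, (ii) statistical security against a corrupt committer when $\mathsf{DelReq}=0$, and (iii) certified everlasting security against a corrupt receiver when $\mathsf{DelRes}=1$. The overall strategy is to compose the guarantees already established for the one-sided ideal commitment with EST (\cref{thm:one-sided}) with the equivocality compiler of \cite{BCKM2021}, which upgrades any commitment admitting extraction against malicious committers to one that additionally admits equivocation against malicious receivers. The nontrivial check is that all three EST-flavored guarantees (not only computational secure realization) survive the wrapping, including in the presence of the new Delete phase.

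For properties (i) and (ii), I would build a simulator against a corrupt committer that runs the committer-side simulator for the one-sided commitment as a subroutine to extract the committed bit, and faithfully simulates the additional (classical) messages introduced by the equivocality layer. Computational indistinguishability of the entire joint view of the committer and environment, including through the Delete phase, then follows by a hybrid argument that invokes \cref{lem:os-id-one}. When $\mathsf{DelReq}=0$, \cref{lem:os-id-two} upgrades the indistinguishability in the underlying layer from computational to statistical with only polynomial simulator blowup; since the equivocality-compiler messages are themselves information-theoretic against a committer, this statistical bound propagates to the compiled protocol.

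For property (iii), I would construct a simulator that, in the commit phase, runs the receiver-side simulator from the equivocality compiler to produce a commitment that can later be opened to either bit; once the ideal functionality delivers $b$, it programs the reveal accordingly. Before deletion, indistinguishability is only computational and follows from computational hiding of the one-sided commitment combined with the equivocality property. The key task is to show that, conditioned on $\mathsf{DelRes}=1$, the real and simulated joint views of receiver and environment become statistically close even to an unbounded distinguisher. Here I would invoke the certified everlasting hiding of the one-sided commitment (\cref{lem:os-idfour}), itself a consequence of \cref{thm:main}, on the transcript of the inner one-sided commitment, treating the additional equivocality-compiler traffic as part of the environment/adversary class $\mathscr{A}$ in the statement of \cref{thm:main} (it is classical and efficiently generated from the receiver's view).

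The main obstacle will be handling the interplay between the equivocality wrapping and the Delete phase cleanly. The compiler typically introduces extra commitments or challenges that themselves carry information about the committed bit via the equivocation handle, so I must verify that deleting the inner one-sided commitment, together with appropriate post-processing of these auxiliary messages, leaves nothing about $b$ in the receiver's residual state. Concretely, this amounts to scheduling the inner Delete phase so that by the time $\mathsf{DelRes}=1$ is recorded, every transcript element carrying nontrivial information about $b$ either has been removed (by the inner everlasting hiding guarantee) or is information-theoretically masked by the equivocation randomness; once that structural argument is in place, applying \cref{thm:main} to the composite view yields statistical closeness and completes the proof.
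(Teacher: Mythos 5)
Your proposal is correct and follows essentially the same route as the paper: against a corrupt committer you extract session-by-session using the committer-side simulator of the one-sided commitment and conclude by hybrids over the $4\secp$ sub-sessions (computational with deletion via \cref{lem:os-id-one}, statistical without deletion via \cref{lem:os-id-two}), against a corrupt receiver you equivocate via the \cite{BCKM2021}-style compiler (Watrous rewinding), and conditioned on $\mathsf{DelRes}=1$ you reduce, one session at a time, to the certified everlasting hiding of the inner one-sided commitment (\cref{lem:os-idfour}), exactly as the paper does. The only detail you leave implicit that the paper spells out is how the extracted bit is defined and why it binds — $b^*$ is the majority of $b_i \oplus d_{i,1-c_i,0}$ over the extracted values, and a cut-and-choose argument (each inconsistent pair survives the opened-pair check with probability at most $1/2+\negl(\secp)$) shows this matches any accepted reveal except with negligible probability.
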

The theorem follows by combining Lemmas \ref{lem:ts-id-one}, \ref{lem:ts-id-two}, \ref{lem:ts-idthree} and \ref{lem:ts-idfour} proved below.

\protocol
{\proref{fig:ideal}: Ideal commitment with EST}
{Ideal commitment with EST, from a one-sided ideal commitment with EST.}
{fig:ideal}
{
Ingredients: a one-sided ideal commitment with EST $(\Com,\Rev,\Del)$.\\
Parties: committer $C$ with input $b \in \{0,1\}$ and receiver $R$.\\

\vspace{-4mm}
\underline{\textbf{Commit phase}}
\begin{enumerate}
\item $C$ samples uniformly random bits $a_{i,j}$ for $i \in [\lambda]$ and $j \in \{0,1\}$.
\item For every $i \in [\lambda]$, $C$ and $R$ sequentially perform the following steps.
\begin{enumerate}
\item $C$ and $R$ execute four Commit phases sequentially, namely:
\begin{itemize}
    \item $\sC_{i,0,0}, \sR_{i,0,0} \leftarrow \Com \langle C(a_{i,0}), R \rangle$,
    \item $\sC_{i,0,1}, \sR_{i,0,1} \leftarrow \Com \langle C(a_{i,0}), R \rangle$, 
    \item $\sC_{i,1,0}, \sR_{i,1,0} \leftarrow \Com \langle C(a_{i,1}), R \rangle$,
    \item $\sC_{i,1,1}, \sR_{i,1,1} \leftarrow \Com \langle C(a_{i,1}), R \rangle$.
\end{itemize}
\item $R$ sends a choice bit $c_i \leftarrow \{0,1\}$.
\item $C$ and $R$ execute two Reveal phases, obtaining the opened bits:
\begin{itemize}
\item $u \leftarrow \Rev \langle C(\sC_{i,c_i,0}), R(\sR_{i,c_i,0}) \rangle$,
\item $v \leftarrow \Rev \langle C(\sC_{i,c_i,1}), R(\sR_{i,c_i,1}) \rangle$. 
\end{itemize}
If $u \neq v$, $R$ aborts. Otherwise, $C$ and $R$ continue.
\end{enumerate}
\vspace{-4mm}
\item For $i \in [\lambda]$, 
$C$ sets $b_i = b \oplus a_{i, 1-c_i}$ and
sends $\{b_i\}_{i \in [\lambda]}$ to $R$.
\end{enumerate}

\vspace{-2mm}
\underline{\textbf{Reveal phase}}
\vspace{-2mm}
\begin{enumerate}
    \item $C$ sends $b$ to $R$. In addition,
    \vspace{-2mm}
    \begin{enumerate}
    \item For $i \in [\lambda]$, 
    $C$ picks $\alpha_i \gets \{0,1\}$ and sends it to $\cR$. 
    \item $C$ and $R$ execute  $a_i' \leftarrow \Rev \langle C(\sC_{i,1-c_i,\alpha_i}), R(\sR_{i,1-c_i,\alpha_i}) \rangle$.
    \end{enumerate}
    \vspace{-4mm}
    \item $R$ accepts and outputs $b$ if for every $i \in [\lambda]$, $a_i' = b \oplus b_i$.
\end{enumerate}

\vspace{-2mm}
\underline{\textbf{Delete phase}}
\vspace{-2mm}
\begin{enumerate}
    \item For every $i \in [\secp]$, $C$ and $R$ sequentially perform the following steps.
    \vspace{-2mm}
    \begin{enumerate}
        \item If Reveal was performed, execute $D_i \gets \Del\langle C(\sC_{i,1-c_i,1-\alpha_i},R(\sR_{i,1-c_i,1-\alpha_i})\rangle$.
        \item Otherwise, set $D_i = D_{i,0} \wedge D_{i,1}$ where
        $D_{i,0} \gets \Del\langle C(\sC_{i,1-c_i,0},R(\sR_{i,1-c_i,0})\rangle$ and $D_{i,1} \gets \Del\langle C(\sC_{i,1-c_i,1},R(\sR_{i,1-c_i,1})\rangle$.
    \end{enumerate}
    \item If $D_i = 1$ for all $i \in [\secp]$, then $C$ outputs 1.
\end{enumerate}
}

\subsubsection{Security against a corrupt committer}

\begin{lemma}
\label{lem:ts-id-one}
\proref{fig:ideal} computationally securely realizes $\cF_{\mathsf{Com}}^{\mathsf{Del}}$ (\cref{def:securerealization}) against a corrupt committer $C$.
\end{lemma}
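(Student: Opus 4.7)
The plan is to construct a simulator that, in analogy with the equivocality compiler analysis of~\cite{BCKM2021}, uses the simulator of the underlying one-sided ideal commitment on each of the $4\secp$ sub-commitments executed in the Commit phase to extract the adversary's committed values $a_{i,j,k}^*$ for all $i \in [\secp]$ and $j,k \in \{0,1\}$. Concretely, during each sub-commitment the simulator $\cS$ invokes the corresponding one-sided-commitment simulator (which exists by Lemmas~\ref{lem:os-id-one} and~\ref{lem:os-id-two}); these simulators produce extracted bits $a_{i,j,k}^*$ and simulate the view of $\cC^*$'s interaction with an honest receiver. After step 2(b) for index $i$, $\cS$ samples $c_i \gets \{0,1\}$, aborts (mirroring the honest receiver) if $a_{i,c_i,0}^* \neq a_{i,c_i,1}^*$, and otherwise proceeds. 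After $\cC^*$ sends $\{b_i\}_{i\in[\secp]}$ in step 3, $\cS$ identifies the first index $i^*$ such that $a_{i^*,1-c_{i^*},0}^* = a_{i^*,1-c_{i^*},1}^*$ (a \emph{bound} index), sets $b^* \coloneqq b_{i^*} \oplus a_{i^*,1-c_{i^*},0}^*$, and sends $(\mathrm{Commit},\mathsf{sid},b^*)$ to $\cF_\Com$. If no bound index exists, $\cS$ sends $b^* = 0$. In the Reveal phase, $\cS$ invokes the one-sided commitment simulator for the opening of $\sC_{i,1-c_i,\alpha_i}$ at each $i$, and if these all open to values consistent with the single bit $b$ announced by $\cC^*$, forwards $(\mathrm{Reveal},\mathsf{sid})$ to $\cF_\Com$; otherwise it instructs $\cF_\Com$ to abort. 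In the Delete phase, $\cS$ invokes the delete-phase operations of the one-sided commitment simulators for all the unopened sub-commitments, and upon receiving $(\mathsf{DelResponse},\mathsf{sid})$ from the ideal functionality if all underlying deletion checks passed, completes the deletion.

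Indistinguishability between the real and ideal executions is argued via a sequence of hybrids. In the first hybrid, we replace each of the $4\secp$ real sub-commitment executions one-by-one with ideal one-sided commitment executions using the simulators provided by Lemmas~\ref{lem:os-id-one}--\ref{lem:os-idfour}; indistinguishability of this replacement follows from the (computational) secure realization of $\cF_\Com^\Del$ by the one-sided ideal commitment, together with the sequential composition theorem for reactive functionalities (\cref{thm:composition}), whose applicability we get from the fact that each sub-commitment phase in \proref{fig:ideal} is atomic (no external messages are interleaved during a phase of $\Com$). In the second hybrid, we replace the extracted-bit choice by the ``ideal'' extraction (committing $b^*$ to $\cF_\Com$ and delivering it at Reveal time), and the only change in the view of the distinguisher is the event that in the Reveal phase, $\cC^*$ successfully opens to some $b \neq b^*$. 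We bound the probability of this event as follows: the only way $\cC^*$ can open to a bit different from $b^*$ at some index $i$ is if $i$ is \emph{not} a bound index, i.e.\ $a_{i,1-c_i,0}^* \neq a_{i,1-c_i,1}^*$, which (conditioned on the receiver not aborting) is possible for a given $i$ only if the committer was inconsistent exclusively in column $1-c_i$ at index $i$. Since $c_i$ is sampled uniformly after all four sub-commitments at index $i$ have been bound by the extractor, for each $i$ this ``lucky'' event occurs with probability at most $1/2$ (inconsistency in both columns forces an abort in round $i$), so all $\secp$ indices being non-bound occurs with probability at most $2^{-\secp}$, which is negligible.

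The main obstacle is handling the Delete phase, where we must carefully separate the two possible use modes of the underlying one-sided ideal commitment. If $\cC^*$ never initiates deletion in \proref{fig:ideal}, then $R$ never initiates deletion in any of the $4\secp$ sub-commitments, and we could in principle invoke the statistical-security guarantee (Lemma~\ref{lem:os-id-two}) for the underlying commitments; however, since this lemma applies in the corrupt-\emph{committer} setting, we must use the fact that in our protocol the sub-commitments have $\cC^*$ as committer, which matches the setting of Lemma~\ref{lem:os-id-two} exactly. If $\cC^*$ does initiate deletion, then $R$ initiates deletion in each of the remaining sub-commitments, and we must rely on computational security against a corrupt committer post-deletion (Lemma~\ref{lem:os-id-one}). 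In both cases, the substitution is still indistinguishable, and the sequential composition theorem (\cref{thm:composition}) applied to each of the $4\secp$ atomic sub-commitments yields indistinguishability of the hybrid from the real experiment. Combined with the negligible equivocation probability argued above, this yields the claimed computational secure realization of $\cF_\Com^\Del$ against the corrupt committer $\cC^*$.
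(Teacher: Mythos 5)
Your proof follows the same overall strategy as the paper's: build a simulator that replaces each of the $4\secp$ sub-commitment executions with the one-sided commitment's simulator to extract $a^*_{i,j,k}$, derive the committed bit $b^*$ from the extracted values, and argue real/ideal indistinguishability by a hybrid over the sub-commitments. Two points where you diverge, one harmless and one where you should be more careful.

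First, your extraction rule differs from the paper's. You take the first \emph{bound} index (one whose unopened column $1-c_i$ has $a^*_{i,1-c_i,0}=a^*_{i,1-c_i,1}$) and output $b^*_{i^*}\oplus a^*_{i^*,1-c_{i^*},0}$, whereas the paper takes the \emph{majority} of $\{b_i\oplus d_{i,1-c_i,0}\}_{i\in[\secp]}$. Both work: for any bound index, an accepting Reveal forces $b=b_i\oplus a^*_{i,1-c_i,0}$, so the first bound index determines the unique acceptable bit; and by the $c_i$-coin-flip argument, the probability that all $\secp$ indices are simultaneously unbound and non-aborting is at most $(1/2+\negl)^\secp$. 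This is if anything a cleaner bound than the paper's $2^{-\secp/2}+\negl$ for the majority version. (You do elide the $+\negl$ from extraction error of the underlying simulator, which the paper mentions, but this is minor.)

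Second, and more substantively: you justify the hybrid replacements by invoking the sequential composition theorem (\cref{thm:composition}). That theorem, as stated, requires the sub-protocol $\Gamma$ to securely realize $\cG$ — meaning \emph{for all} corruption patterns — but the one-sided ideal commitment provably does \emph{not} securely realize $\cF_\Com^\Del$ against a corrupt receiver (it only gives certified everlasting hiding). So \cref{thm:composition} cannot be applied in black-box fashion to this building block. What you actually need is the weaker, corruption-pattern-specific statement: when the adversary corrupts the committer, and in every sub-commitment the adversary also plays committer, one can replace real sub-commitments by simulated ones, one at a time, reducing to the one-sided commitment's security against a corrupt committer. This is exactly what the paper's explicit hybrid sequence $\mathsf{Hybrid}_{i,j,k}$ with its reduction $\widetilde{\cC}$ does. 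Your argument is sound in spirit, but as written the appeal to \cref{thm:composition} is not justified; you should either spell out the corruption-pattern-restricted hybrid argument directly (as the paper does), or first prove and invoke a corruption-aware variant of the composition theorem. Your Delete-phase case split (using Lemma~\ref{lem:os-id-one} when deletion occurs, Lemma~\ref{lem:os-id-two} when it does not) is the right structure and matches the paper's two-pronged Claim inside the hybrid argument.
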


\begin{lemma}
\label{lem:ts-id-two}
\proref{fig:ideal} satisfies statistical security against a corrupt committer $C$ that does not initiate deletion. That is, there exists a polynomial $p(\cdot)$ such that for every (potentially unbounded) adversary $\{\cC^*_\secp \coloneqq (\cC^*_{\secp,\Com},\cC^*_{\secp,\Rev},\cC^*_{\secp,\Del})\}_{\secp \in \bbN}$ corrupting $C$, there exists a simulator $\{\cS_\secp \coloneqq (\cS_{\secp,\Com},\cS_{\secp,\Rev},\cS_{\secp,\Del})\}_{\secp \in \bbN}$ with size at most $p(\secp)$ times the of size of $\{\cC^*_\secp\}_{\secp \in \bbN}$, such that for any (potentially unbounded) environment $\{\cZ_{\secp} \coloneqq (\cZ_{\secp,1},\cZ_{\secp,2},\cZ_{\secp,3})\}_{\secp \in \bbN}$ and polynomial-size family of advice $\{\ket{\psi_\secp}\}_{\secp \in \bbN}$, \[\TD\left(\Pi_{\cF_{\mathsf{Com}}^\Del}^{\mathsf{DelReq} = 0}[\cC^*_\secp,\cZ_\secp,\ket{\psi_\secp}], \widetilde{\Pi}_{\cF_{\mathsf{Com}}^\Del}^{\mathsf{DelReq} = 0}[\cS_\secp,\cZ_\secp,\ket{\psi_\secp}]\right) = \negl(\secp),\] where $\Pi^{\mathsf{DelReq} = 0}_{\cF_{\mathsf{Com}}^\Del}[\cC^*_\secp,\cZ_\secp,\ket{\psi_\secp}]$ is defined to equal $\Pi_{\cF_{\mathsf{Com}}^\Del}[\cC^*_\secp,\cZ_\secp,\ket{\psi_\secp}]$ if the receiver's output $\mathsf{DelReq}$ is set to 0, and defined to be $\bot$ otherwise, and likewise for $ \widetilde{\Pi}_{\cF_{\mathsf{Com}}^\Del}^{\mathsf{DelReq} = 0}[\cS_\secp,\cZ_\secp,\ket{\psi_\secp}]$.
\end{lemma}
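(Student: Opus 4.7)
The plan is to handle Lemmas \ref{lem:ts-id-one} and \ref{lem:ts-id-two} by a single unified simulator construction in the $\cF_\Com^\Del$-hybrid model (where each of the $4\secp$ underlying one-sided commit/reveal/delete sub-phases is replaced with a call to an ideal $\cF_\Com^\Del$), together with the sequential composition theorem (Theorem \ref{thm:composition}). This lifts the equivocality-compiler analysis of \cite{BCKM2021} to the deletion setting.

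In the hybrid model, the simulator $\cS$ internally runs the corrupt committer $\cC^*$, emulating the honest receiver $R$'s messages and relaying $\cC^*$'s messages to the sub-$\cF_\Com^\Del$ instances. Since $\cS$ observes everything $\cC^*$ sends to those sub-instances, it learns each committed bit $a_{i,j,k}^*$ at the time of commit. After sampling $c_i \gets \{0,1\}$ honestly and checking that the claimed openings match $a_{i,c_i,0}^* = a_{i,c_i,1}^*$ (aborting otherwise), $\cS$ receives $\{b_i\}_{i \in [\secp]}$ and defines the set $B = \{i : a_{i,1-c_i,0}^* = a_{i,1-c_i,1}^*\}$ of ``bound'' indices. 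If $B \neq \emptyset$, $\cS$ sends $(\mathrm{Commit}, \mathsf{sid}, b^*)$ to the external $\cF_\Com^\Del$ with $b^* = b_{i^*} \oplus a_{i^*,1-c_{i^*},0}^*$ for any $i^* \in B$; otherwise it commits to $0$. During the Reveal phase, $\cS$ performs the honest receiver's consistency checks and forwards $(\mathrm{Reveal}, \mathsf{sid})$ to the external functionality iff verification passes and $b = b^*$. During the Delete phase, $\cS$ relays the delete request between $\cC^*$ and the external $\cF_\Com^\Del$ while invoking the appropriate ideal sub-Delete phases as prescribed in Protocol \ref{fig:ideal}.

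The correctness of this hybrid-model simulation is statistical. For each $i$, conditioned on the commit-phase check not causing an abort, the probability over the uniform $c_i$ that $i \notin B$ is at most $1/2$ (since being unbound for both values of $c_i$ would force an abort at $i$ with probability $1$), yielding $\Pr[B = \emptyset \wedge \text{no abort}] \leq 2^{-\secp}$. Conditioned on $B \neq \emptyset$, any successful Reveal in the hybrid model must produce, for each $i$, an opened value $a_i'$ equal to the ideally-committed $a_{i,1-c_i,\alpha_i}^*$; combined with $a_{i,1-c_i,0}^* = a_{i,1-c_i,1}^*$ for $i \in B$, this forces $b = b^*$, so the extracted bit agrees with whatever value can be successfully opened. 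Thus $\cS$ statistically realizes $\cF_\Com^\Del$ in the hybrid model, with simulator size only a polynomial factor larger than $\cC^*$.

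Finally, we apply Theorem \ref{thm:composition}. For Lemma \ref{lem:ts-id-one}, substituting each ideal sub-instance with the one-sided commitment protocol and invoking property $1$ of Definition \ref{def:one-sided} yields the computational real-world simulator. For Lemma \ref{lem:ts-id-two}, we observe by inspection of Protocol \ref{fig:ideal} that if the outer Delete phase is not initiated, no Delete is ever initiated in any sub-commitment either (sub-commitment Deletes appear only within the outer Delete phase); hence property $2$ of Definition \ref{def:one-sided} applies to each sub-instance, and the statistical version of composition yields the claim. The main subtlety to verify is the atomicity requirement of Theorem \ref{thm:composition} --- that no outer-protocol messages are interleaved within any single sub-commitment sub-phase --- which is immediate from the sequential structure of Protocol \ref{fig:ideal}, so no new technical ingredient is required beyond the properties already established for the one-sided ideal commitment with EST.
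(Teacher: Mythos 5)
Your core extraction argument is sound and, if anything, cleaner than the paper's: working in the $\cF_{\Com}^{\Del}$-hybrid model makes the committed bits $a^*_{i,j,k}$ exactly defined classical values before $c_i$ is sampled, so your bound-index analysis (each $i$ fails to be bound without causing an abort with probability at most $1/2$ over $c_i$, hence $B=\emptyset$ without abort has probability $2^{-\secp}$, and any accepted reveal at a bound index forces $b=b^*$) gives an exact $2^{-\secp}$ extraction error. The paper takes a genuinely different route: it never passes through the hybrid model. Its simulator is built directly from the inner one-sided commitment's simulators (extracting $b^*$ as the majority of $\{b_i \oplus d_{i,1-c_i,0}\}_{i\in[\secp]}$, with error $2^{-\secp/2}+\negl(\secp)$), and the real-vs-ideal closeness is proved by a sequence of hybrids $\Hyb_{i,j,k}$ that swap the honest receiver for the simulated receiver one sub-session at a time, with a reduction $\widetilde{\cC}$ embedding a single external session of the one-sided commitment; property 1 of \cref{def:one-sided} handles the computational claim (\cref{lem:ts-id-one}) and property 2 handles the $\mathsf{DelReq}=0$-restricted statistical claim (\cref{lem:ts-id-two}).

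The one step you should not treat as a black-box invocation is ``the statistical version of composition yields the claim.'' \cref{thm:composition} assumes that $\Gamma$ statistically securely realizes $\cG$ against \emph{every} corruption pattern and concludes \emph{unconditional} closeness, whereas here the one-sided commitment is only computationally secure against a corrupt committer once its delete phase is run, is not simulatable at all against a corrupt receiver, and its statistical guarantee (property 2 of \cref{def:one-sided}) is itself an indicator-restricted statement --- as is the conclusion you need, where the output is defined to be $\bot$ unless $\mathsf{DelReq}=0$. Nor can you fall back on \cref{thm:compose-EST}, since its hypothesis requires full EST security of $\Gamma$, including receiver-side simulation, which the one-sided commitment lacks. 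So what you actually need is a corruption-pattern-restricted, deletion-aware composition variant, and its proof is precisely the per-sub-session replacement argument the paper writes out. Your supporting observations --- that when the outer delete is never initiated no inner delete phase is ever run with the honest receiver, that a corrupt outer committer induces a corrupt committer in every inner sub-session, and that the sub-phases are atomic --- are exactly the facts that make that hybrid argument go through, but the argument still has to be spelled out; the closing claim that no technical ingredient beyond \cref{thm:composition} is required is slightly too optimistic.
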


\begin{proof} (of Lemmas \ref{lem:ts-id-one} and \ref{lem:ts-id-two})\\

\noindent {\bf \underline{The Simulator.}} The simulator $(\cS_{\Com},\cS_{\Rev}, \cS_{\Del})$ is defined as follows. 
\begin{enumerate}
    \item {\bf Commit Phase.} $\cS_\Com$ does the following.
    \begin{itemize}
        \item For all $i \in [\secp]$,
        \begin{itemize}
            \item Execute four sequential simulated Commit phases where the simulator for the commitment $\Com$ is run on the part of the committer $\cC^*_\Com$ participating in each of the four sequential sessions. Denote the bit output by the simulator in each session by $(d_{i,0,0},d_{i,0,1},d_{i,1,0},d_{i,1,1})$.
            \item Sample and send choice bit $c_i \leftarrow \{0,1\}$ to $\cC^*_\Com$.
            \item Execute two simulated Reveal phases where the simulator is run on the part of the committer $\cC^*_\Com$ corresponding to sessions $(i, c_i, 0)$ and $(i, c_i, 1)$. If the simulator outputs $(\text{Reveal},\mathsf{sid})$ for both sessions and $d_{i,c_i,0} = d_{i,c_i,1}$, continue, and otherwise abort.
        \end{itemize}
        \item Obtain $\{b_i\}_{i \in [\secp]}$ from $\cC^*_\Com$.
        Fix $b^*$ to be the most frequently occuring bit in $\{b_i \oplus d_{i,1-c_i,0}\}_{i \in [\secp]}$. Send $(\text{Commit},\mathsf{sid},b^*)$ to the commitment ideal functionality.
    \end{itemize}
    \item {\bf Reveal Phase.}
    $\cS_\Rev$ does the following.
    \begin{enumerate}
        \item Obtain $b$ from $\cC^*_\Rev$. Additionally, for $i \in [\secp]$,
        \begin{itemize}
            \item Obtain $\alpha_i$ from $\cC^*_\Rev$.
            \item Execute the simulated Reveal phase where simulator is run on the part of the committer $\cC^*_\Rev$ corresponding to session $(i, 1-c_i, \alpha_i)$. If $\cS_\Rev$ outputs $(\text{Reveal},\mathsf{sid})$ and  $d_{i,1-c_i,\alpha_i} = b \oplus b_i$, continue. Otherwise, abort.
        \end{itemize}
        \item Send $(\text{Reveal},\mathsf{sid})$ to the ideal functionality.
    \end{enumerate}
    \item {\bf Delete Phase.}
   If $\cC^*$ makes a delete request, $\cS_\Del$ sends $(\mathsf{DelRequest},\mathsf{sid})$ to the ideal functionality, and upon obtaining $(\mathsf{DelResponse},\mathsf{sid})$, it does the following.
   \begin{itemize}
       \item If the Reveal phase was executed, then for every $i \in [\secp]$, run the simulator on the part of $\cC^*_\Del$ that interacts in the delete phase of session $(i, 1-c_i, 1-\alpha_i)$.
       \item If the Reveal phase was not executed, then for every $i \in [\secp]$, run the simulator on the part of $\cC^*_\Del$ that interacts (sequentially) in the delete phases of sessions $(i, 1-c_i, 0)$ and $(i, 1-c_i, 1)$.
   \end{itemize}
\end{enumerate}

\noindent{\bf \underline{Analysis.}\\}

Note that there are a total of $4\lambda$ commitment sessions. Denote the real experiment by $\mathsf{Hybrid}_{0,1,1}$.
For each $i \in [\lambda], j \in [0,1], k \in [0,1]$, define $\mathsf{Hybrid}_{i,j,k}$ to be the distribution obtained as follows.

\paragraph{Commit Phase.}
Set $\counternew = 1$, $\mathsf{DelReq}= 0$ and do the following:
\begin{enumerate}
    \item If $\counternew = \lambda + 1$, obtain $\{b_i\}_i$ from $\cC^*_\Com$ and end.
    \item If $\counternew < i$,
    \begin{enumerate}
    \item Execute four sequential simulated Commit phases where the simulator for the commitment $\Com$ is run on the part of the committer $\cC^*_\Com$ participating in each of the four sequential sub-sessions.
    Denote the bit output by the the simulator in each sub-session respectively by $(d_{\counternew,0,0},d_{\counternew,0,1},d_{\counternew,1,0},d_{\counternew,1,1})$.
    \item Sample and send choice bit $c_\counternew \leftarrow \{0,1\}$ to $\cC^*_\Com$.
    \item Execute two simulated Reveal phases where the simulator is run on the part of the committer $\cC^*_\Com$ corresponding to sub-sessions $(\counternew, c_\counternew, 0)$ and $(\counternew, c_\counternew, 1)$. If the simulator outputs $(\text{Reveal},\mathsf{sid})$ for both sub-sessions and $d_{\counternew,c_\counternew,0} = d_{\counternew,c_\counternew,1}$, continue, and otherwise abort.
    \end{enumerate}
    \item If $\counternew = i$, 
    \begin{enumerate}
        \item 
        Do the same as above (i.e., for the case $\counternew <i$) except execute simulated Commit (and if needed, Reveal) phases where the simulator for the commitment $\mathsf{Com}$ is run on the part of the committer $\cC^*_\Com$ participating in  sequential sub-sessions
        $(\counternew,j',k')$ whenever $(j',k') \leq (j,k)$ where we have $(0,0) \leq (0,1) \leq (1,0) \leq (1,1)$ for transitive relation $\leq$. But for $(j',k') \not \leq (j,k)$, follow honest receiver strategy in sub-session $(i,j',k')$.
    \end{enumerate}
    \item If $\counternew > i$,
    \begin{enumerate}
        \item Execute honest receiver strategy for all Commit (and Reveal) phases for all sessions $(i, j', k')$ for every $j', k' \in \{0,1\}^2$.
    \end{enumerate}
    \item Set $\gamma = \gamma + 1$.
\end{enumerate}
\paragraph{Reveal Phase.}
    Do the following.
    \begin{itemize}
        \item Obtain $b$ from $\cC^*_\Rev$. Additionally, for $\counternew \in [\secp]$,
        \begin{itemize}
            \item Obtain $\alpha_\gamma$ from $\cC^*_\Rev$.
            \item If $\counternew < i$, execute the simulated Reveal phase where the simulator is run on the part of the committer $\cC^*_\Rev$ corresponding to session $(\gamma, 1-c_\gamma, \alpha_\gamma)$.
            If the simulator outputs $(\text{Reveal},\mathsf{sid})$ and if $d_{\gamma,1-c_\gamma,\alpha_\gamma} = b \oplus b_i$, continue. Otherwise, abort.
            \item If $\gamma = i$, do the same as above when $(1-c_\gamma,\alpha_\gamma) \leq (j,k)$ otherwise follow honest receiver strategy.
            If $\gamma > i$, follow honest receiver strategy.
        \end{itemize}
        \item Set $b^* = b$. 
    \end{itemize}

\noindent {\bf Delete Phase.}
   If $\cC^*$ makes a delete request, send $(\mathsf{DelRequest},\mathsf{sid})$ to the ideal functionality, and upon obtaining $(\mathsf{DelResponse},\mathsf{sid})$, do the following.
   \begin{itemize}
        \item If the Reveal phase was executed, then
        \begin{itemize}
           \item For every $\gamma \in [1,i-1]$, run the simulator on the part of $\cC^*_\Del$ that interacts in the delete phase of session $(\gamma, 1-c_\gamma, 1-\alpha_\gamma)$.
           \item For $\gamma = i$, if $(1-c_\gamma, 1-\alpha_\gamma) \leq (j,k)$, run the simulator on the part of $\cC^*_\Del$ that interacts in the delete phase of session $(\gamma, 1-c_\gamma, 1-\alpha_\gamma)$. Otherwise run honest receiver strategy on session $(\gamma, 1-c_\gamma, 1-\alpha_\gamma)$.
           \item For $\gamma \in [i+1,\lambda]$, follow honest receiver strategy.
        \end{itemize}
        \item If the Reveal phase was not executed, then 
        \begin{itemize}
            \item For every $\gamma \in [1,i-1]$, run the simulator on the part of $\cC^*_\Del$ that interacts in the delete phases (sequentially) of $(\gamma, 1-c_\gamma, 0)$ and $(\gamma, 1-c_\gamma, 1)$.
            \item For $\gamma = i$, run the simulator on the part of $\cC^*_\Del$ that interacts in the delete phases (sequentially) of $(\gamma, 1-c_\gamma, b)$ for all $b \in \{0,1\}$ for which $(1-c_\gamma, b) \leq (j,k)$, and use honest receiver strategy on other sessions.
            \item For $\gamma \in [i+1,\lambda]$, follow honest receiver strategy.
        \end{itemize}
        \item Set $\mathsf{DelReq} = 1$.
   \end{itemize}
\underline{The output} of $\mathsf{Hybrid}_{i,j,k}$ is the final state of $\cC^*$ together with the bit $b^*$ (which is set to $\bot$ if the game aborted before $b^*$ was set), and the bit $\mathsf{DelReq}$.

We consider the interaction of $\cC^*$ with an honest receiver, and denote the state output by $\cC^*$ jointly with the bit output by the honest receiver in this interaction by $\mathsf{Hybrid}_{0,1,1}$. 
We now prove the following claim about consecutive hybrids.



%
\begin{claim}
There exists a negligible function $\mu(\cdot)$ such that for every $i \in [\lambda]$,  every $(\iota, j, k, \iota', j', k') \in \{(i - 1, 1, 1, i, 0, 0), (i, 0, 0, i, 0, 1), (i, 0, 1, i, 1, 0), (i, 1, 0, i, 1, 1)\}$,
\begin{itemize}
\item for every QPT distinguisher $\cD$,
$$|\Pr[\cD(\mathsf{Hybrid}_{\iota, j, k}) = 1] -  \Pr[\cD(\mathsf{Hybrid}_{\iota', j',k'}) = 1]| = \mu(\lambda)$$
\item and furthermore, for every unbounded distinguisher $\cD$,
$$|\Pr[\cD(\mathsf{Hybrid}_{\iota, j, k}^{\mathsf{DelReq}=0}) = 1] -  \Pr[\cD(\mathsf{Hybrid}_{\iota', j',k'}^{\mathsf{DelReq}=0}) = 1]| = \mu(\lambda)$$
where $\mathsf{Hybrid}_{\iota, j, k}^{\mathsf{DelReq}=0}$ is defined to be equal to $\mathsf{Hybrid}_{\iota,j,k}$ when $\mathsf{DelReq}$ is set to $0$, and defined to be $\bot$ otherwise, and likewise for $\mathsf{Hybrid}_{\iota', j',k'}^{\mathsf{DelReq}=0}$.
\end{itemize}
\end{claim}
\begin{proof} Suppose this is not the case. Then there exists an adversarial QPT committer $\cC^*$,  a polynomial $p(\cdot)$, and an initial committer state $\ket{\psi}$ that corresponds to a state just before the beginning of commitment $(\iota', j', k')$
where for some QPT distinguisher $\cD$,
\begin{equation}
\label{eq:hybdone}
\Pr[\cD(\mathsf{Hybrid}_{\iota, j, k}) = 1] -  \Pr[\cD(\mathsf{Hybrid}_{\iota', j',k'}) = 1]| \geq \frac{1}{p(\lambda)}.
\end{equation}
or for unbounded $\cC^*$ and some unbounded distinguisher $\cD'$,
\begin{equation}
\label{eq:hybdtwo}
|\Pr[\cD'(\mathsf{Hybrid}_{\iota, j, k}^{\mathsf{DelReq}=0}) = 1] -  \Pr[\cD'(\mathsf{Hybrid}_{\iota', j',k'}^{\mathsf{DelReq}=0}) = 1]| \geq \frac{1}{p(\lambda)}
\end{equation}
Consider a reduction/adversarial committer $\widetilde{\cC}$ that obtains initial state $\ket{\psi}$, then internally runs $\cC^*$, forwarding all messages between an external receiver and $\cC^*$ for the $(\iota', j', k')^{th}$ commitment session, while running all other sessions according to the strategy in $\mathsf{Hybrid}_{\iota,j,k}$. 
The commit phase then ends, 
and $\widetilde{C}$ initiates the opening phase with the external receiver.
Internally, it continues to run the remaining commit sessions with $\cC^*$ -- generating for it the messages on behalf of the receiver according to the strategy in $\mathsf{Hybrid}_{\iota, j, k}$. 
The only modification is that it forwards $\cC^*$'s opening of the $(\iota', j', k')^{th}$ commitment (if and when it is executed) to the external challenger.
Finally, $\widetilde{\cC}$ behaves similarly if there is a delete phase, i.e., it forwards $\cC^*$'s deletion request and any messages generated in the delete phase of the $(\iota', j', k')^{th}$ commitment between $\cC^*$ and the external challenger.

Then, equation~(\ref{eq:hybdone}) and equation~(\ref{eq:hybdtwo}) respectively contradict the security of one-sided ideal commitments with EST against the committer $\cC^*$ (\cref{def:one-sided}). More specifically, equation~(\ref{eq:hybdone}) contradicts the computationally secure realization of $\cF_{\Com}^{\Del}$ whereas equation~(\ref{eq:hybdtwo}) contradicts the statistical security of $\Com$ against a corrupt committer that does not initiate deletion. This completes the proof of the claim.
%
%
\end{proof}
To complete the proof of the two lemmas, we observe that 
the only difference between $\mathsf{Hybrid}_{\lambda,1,1}$ and $\mathsf{Ideal}$ is the way the bit $b^*$ (output by the honest receiver) is computed. In more detail, in  $\mathsf{Hybrid}_{\lambda,1,1}$, the bit $b^*$ is computed as the majority of $\{b_i \oplus d_{i,1-c_i,0}\}_{i \in [\secp]}$.
Now for every commitment strategy and every $i \in [\lambda]$, by correctness of extraction (which follows from the indistinguishability between real and ideal distributions for every commitment),
the probability that $d_{i,1-c_i,0} \neq d_{i,1-c_i,1}$ and yet the receiver does not abort in Step 2(c) in the $i^{th}$ sequential repetition, is $\leq \frac{1}{2} + \negl(\secp)$. 
Thus, this implies that the probability that $\mathsf{Hybrid}_{\lambda,1,1}$ and $\mathsf{Ideal}$ output different bits $b^*$ is at most $2^{-\secp/2} + \negl(\secp) = \negl(\secp)$, which implies that the two are statistically close. 

This, combined with the claim above, completes the proof.
\end{proof}

\subsubsection{Security against a corrupt receiver}
\begin{lemma}
\label{lem:ts-idthree}
\proref{fig:ideal} computationally securely realizes $\cF_{\mathsf{Com}}^{\mathsf{Del}}$ (\cref{def:securerealization}) against a corrupt $R$.
\end{lemma}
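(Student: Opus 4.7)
The plan is to adapt the equivocality-compiler simulation from \cite{BCKM2021} to handle the additional Delete phase. At a high level, the simulator $\cS$ against a QPT corrupt receiver $\cR^*$ operates as follows. In each of the $\lambda$ sequential repetitions of Step 2 of the Commit phase, $\cS$ (acting as committer in each of the four one-sided commitment sub-sessions) commits to uniformly random bits $(a_{i,0,0}, a_{i,0,1}, a_{i,1,0}, a_{i,1,1})$. After receiving the receiver's challenge $c_i$, $\cS$ tests whether the tuple is ``good,'' meaning that $a_{i,c_i,0} = a_{i,c_i,1}$ (so that the honest opening check in Step 2(c) passes) and $a_{i,1-c_i,0} \neq a_{i,1-c_i,1}$ (so that the unopened pair remains equivocable). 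If the tuple is good, $\cS$ opens the $c_i$ pair and sends a uniformly random $b_i$; otherwise it invokes Watrous quantum rewinding (\cref{lem:qrl}) on the Commit-plus-challenge portion of the repetition and retries. When the ideal functionality later delivers the committed bit $b$, the simulator sets $\alpha_i$ to be the unique index such that $a_{i,1-c_i,\alpha_i} = b \oplus b_i$ and opens the corresponding sub-session. For the Delete phase, once the outer (honest) committer initiates deletion, $\cS$ plays the honest committer in the Delete phase of each one-sided sub-session, using the $(a_{i,j,k})$ values that it committed to in the Commit phase.

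The indistinguishability argument will follow a hybrid structure that changes one commitment sub-session at a time from the honest-committer strategy (which uses the real bit $b$) to the simulated strategy (which uses random bits together with a later equivocation), reducing each hybrid transition to the computational hiding property of the underlying one-sided ideal commitment with EST (first item of \cref{def:CEH}). Because the Delete-phase behavior in each sub-session depends only on the honestly-chosen $\theta$ (which $\cS$ knows) and on $\cR^*$'s Delete-phase responses, and is independent of $b$, these messages can be produced identically in the real and simulated experiments once we have switched the Commit-phase sub-sessions; hiding therefore suffices to extend the reduction all the way through the Delete phase.

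The main obstacle will be establishing correctness of the Watrous rewinding step. Specifically, we must show that for every quantum state of $\cR^*$ entering the challenge round, the probability of obtaining a good tuple is negligibly close to $1/4$, independently of $\cR^*$'s strategy. This will follow from computational hiding of the one-sided commitment: if the probability of a good tuple deviated noticeably from $1/4$ for some fixed $c_i$, then the distribution of $\cR^*$'s challenge would be noticeably correlated with whether the four committed bits happened to satisfy the two constraints, yielding a distinguisher that contradicts hiding. With the success probability $q$ bounded between $1/4 - \negl(\secp)$ and $1/4 + \negl(\secp)$, the hypotheses of \cref{lem:qrl} are satisfied (setting $p_0$ to a constant below $1/4$ and $\epsilon = \negl(\secp)$), so Watrous rewinding produces, in expected polynomial time, a post-rewinding state negligibly close in trace distance to the state obtained by conditioning on the good event. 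Combining this with the per-sub-session hiding reductions and the $\lambda$-fold hybrid over Step 2 completes the proof of computational secure realization against $\cR^*$.
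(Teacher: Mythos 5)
Your proposal is correct and follows essentially the same route as the paper: an equivocal simulator built from Watrous rewinding (\cref{lem:qrl}) whose success probability is pinned down via computational hiding, equivocation in the Reveal phase by opening the appropriate unopened sub-commitment, an honestly-executed Delete phase, and a session-by-session hybrid reducing to the hiding property of the one-sided commitment (first item of \cref{def:CEH}). The only difference is cosmetic: you commit to four independent random bits and condition on a ``good tuple'' (success probability $\approx 1/4$), whereas the paper's $\cS_{\mathsf{partial}}$ guesses the challenge $\widehat{c}$ and plants $(z,z)$ in the to-be-opened pair and $(d,1-d)$ in the other (success probability $\approx 1/2$); when you spell out the per-session hiding reduction you will in effect need the paper's guess-and-abort structure anyway, since the external challenge commitment's value cannot enter your ``good'' test.
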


\begin{lemma}
\label{lem:ts-idfour}
\proref{fig:ideal} satisfies certified everlasting security against $R$.
That is, for every QPT adversary $\{\cR_\secp^* \coloneqq (\cR^*_{\secp,\Com}, \cR^*_{\secp,\Rev}, \cR^*_{\secp,\Del})\}_{\secp \in \bbN}$ corrupting party $R$, there exists a QPT simulator $\{\cS_\secp \coloneqq (\cS_{\secp,\Com},\cS_{\secp,\Rev},\cS_{\secp,\Del})\}_{\secp \in \bbN}$ such that for any QPT environment $\{\cZ_{\secp} \coloneqq (\cZ_{\secp,1},\cZ_{\secp,2},\cZ_{\secp,3})\}_{\secp \in \bbN}$, and polynomial-size family of advice $\{\ket{\psi_\secp}\}_{\secp \in \bbN}$, \[\TD\left(\Pi_{\cF_{\mathsf{Com}}^\Del}^{\mathsf{DelRes} = 1}[\cR^*_\secp,\cZ_\secp,\ket{\psi_\secp}], \widetilde{\Pi}_{\cF_{\mathsf{Com}}^\Del}^{\mathsf{DelRes} = 1}[\cS_\secp,\cZ_\secp,\ket{\psi_\secp}]\right) = \negl(\secp),\] where $\Pi^{\mathsf{DelRes} = 1}_{\cF_{\mathsf{Com}}^\Del}[\cR^*_\secp,\cZ_\secp,\ket{\psi_\secp}]$ is defined to be equal to $\Pi_{\cF_{\mathsf{Com}}^\Del}[\cR^*_\secp,\cZ_\secp,\ket{\psi_\secp}]$ if the committer's output $\mathsf{DelRes}$ is set to 1, and defined to be $\bot$ otherwise, and likewise for $ \widetilde{\Pi}_{\cF_\Com^\Del}^{\mathsf{DelRes} = 1}[\cS_\secp,\cZ_\secp,\ket{\psi_\secp}]$.
\end{lemma}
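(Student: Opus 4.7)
My simulator $\cS$ acts as the outer-protocol committer, interacting with $\cR^*$, and it learns $b$ only at the reveal phase (when $\cF_\Com$ notifies the receiver). It processes each session $i \in [\lambda]$ sequentially as follows. It samples $c_i^*, \alpha_i^* \gets \{0,1\}$ together with $b_i, a_{i,c_i^*} \gets \{0,1\}$, and then plays the honest one-sided committer in the four sub-commitments, setting $v_{i,c_i^*,0} = v_{i,c_i^*,1} = a_{i,c_i^*}$, $v_{i,1-c_i^*,\alpha_i^*} = b_i$, and $v_{i,1-c_i^*,1-\alpha_i^*} = 1 \oplus b_i$. Upon receiving $c_i$ from $\cR^*$, if $c_i \neq c_i^*$, $\cS$ invokes Watrous quantum rewinding (\cref{lem:qrl}) to retry the session until $c_i = c_i^*$. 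It then opens $v_{i,c_i,0} = v_{i,c_i,1} = a_{i,c_i}$ and sends $b_i$. At the reveal phase, upon learning $b$, $\cS$ sets $\alpha_i \coloneqq \alpha_i^* \oplus b$ and opens $v_{i,1-c_i,\alpha_i}$, whose committed value is $b \oplus b_i$ by construction, matching exactly what the honest committer would reveal. In the delete phase, $\cS$ continues to act as the honest one-sided committer and verifies $\cR^*$'s delete certificates on the unopened sub-commitments.

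\textbf{Hybrid argument.} I will define hybrids $\Hyb_0, \Hyb_1, \dots, \Hyb_\lambda$, where $\Hyb_0$ is the real interaction and $\Hyb_k$ uses the simulator's strategy on sessions $1, \dots, k$ and the honest strategy on the remaining sessions. To bound $\TD(\Hyb_{k-1}^{\mathsf{DelRes}=1}, \Hyb_k^{\mathsf{DelRes}=1})$, I decompose the switch in session $k$ into two sub-steps. First, I introduce the guessing of $c_k^*$ together with Watrous rewinding, while still committing honestly (i.e., $v_{k,1-c_k^*,0} = v_{k,1-c_k^*,1} = a_{k,1-c_k^*}$) and using a uniform $\alpha_k$; this step is statistically close to its predecessor by the analysis of \cref{lem:qrl}. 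Second, I switch the committed value at $v_{k,1-c_k^*,1-\alpha_k^*}$, which is precisely the position that gets deleted in the $\mathsf{DelRes}=1$ branch of the delete phase. This second switch is statistically indistinguishable by the second part of certified everlasting hiding of the one-sided ideal commitment (\cref{def:CEH}). Because $\alpha_k = \alpha_k^* \oplus b$, the opened value in $\Hyb_k$ remains $b \oplus b_k$, so after both sub-steps the two hybrids are identically distributed on the opened data, and certified everlasting hiding absorbs the remaining difference.

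\textbf{Main obstacle.} The principal technical challenge is making Watrous rewinding work uniformly across sessions: I need $p_k \coloneqq \Pr[c_k = c_k^*]$ to be close to $q = 1/2$ and bounded away from $0$ and $1$ regardless of the (possibly entangled) state that $\cR^*$ carries from earlier sessions. This will follow by a reduction to computational hiding of the one-sided ideal commitment---since $c_k^*$ only influences the committed values in session $k$, an $\cR^*$ with $p_k$ noticeably different from $1/2$ would distinguish the sub-commitments. A secondary subtlety is that \cref{def:CEH} is stated for a single commit-then-delete interaction, whereas the construction has $\lambda$ sequentially composed sessions interleaved with reveals and deletes; I will handle this via an inner hybrid within each session that replaces one sub-commitment at a time, reducing each replacement to a fresh invocation of \cref{def:CEH}. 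Finally, when $\cR^*$ skips the reveal phase, both $v_{k,1-c_k,0}$ and $v_{k,1-c_k,1}$ are deleted, so certified everlasting hiding applies symmetrically to both sub-commitments and no equivocation is required---this case is strictly easier.
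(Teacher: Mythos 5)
Your proposal follows essentially the same route as the paper's proof: an equivocal simulator that guesses the receiver's challenge and applies Watrous rewinding, with the unchecked pair of sub-commitments holding complementary values so that the choice of which one to open at Reveal encodes $b$, combined with a session-by-session hybrid argument that reduces each switch to the certified everlasting hiding of the one-sided commitment, split according to whether the Reveal phase occurs, and with the $p_k \approx 1/2$ condition argued from computational hiding. The one step you should make explicit is that the CEH reduction cannot be carried out inside the rewinding loop (the external challenger cannot be rewound); as in the paper, the reduction embeds the external commitment in a single non-rewound execution of session $k$, aborts when $c_k \neq c_k^*$ (which happens with probability at most $1/2 + \negl(\secp)$ by computational hiding), and then uses the statistical closeness between the rewound simulator's output and the success-conditioned single-run output to transfer the indistinguishability to your hybrids.
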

\begin{proof} (of Lemmas \ref{lem:ts-idthree} and \ref{lem:ts-idfour})\\

\noindent{\textbf{\underline{The simulator.}}}
The first stage of the simulator $\cS_{\Com}$, defined based on $\cR^*_\Com$, will be obtained via the use of the Watrous rewinding lemma (Lemma \ref{lem:qrl})~\cite{STOC:Watrous06}.
For the purposes of defining the simulation strategy, it will be sufficient (w.l.o.g.) to consider a restricted receiver $\cR^*_\Com$ that operates as follows in the $i^{th}$ sequential step of the commitment phase of the protocol.
In the simulation, the state of this step of $\cR^*_\Com$ will be initialized to the final state at the end of simulating the $(i-1)^{th}$ step.

\begin{enumerate}
\item $\cR^*_\Com$ takes a quantum register $\mathsf{W}$, representing its auxiliary quantum input. $\cR^*_\Com$ will use two additional quantum registers that function as work space: $\mathsf{V}$, which is an arbitrary (polynomial-size) register, and $\mathsf{A}$, which is a single qubit register. The registers $\mathsf{V}$ and $\mathsf{A}$ are initialized to the all-zero state before the protocol begins.

\item Let
$\textsf{M}$ denote the polynomial-size register used by the committer $C$ to send messages to $\cR^*_\Com$. After carrying out step 2(a) by running on registers $(\mathsf{W},\mathsf{V},\mathsf{A},\mathsf{M})$, $\cR^*_\Com$ measures the register $\mathsf{A}$ to obtain a bit $c_i$ for Step 2(b), which it sends back to $C$.

\item Next, $\cR^*_\Com$ computes the reveal phases (with messages from $C$ placed in register $\mathsf{M}$) according to Step 2(c). $\cR^*_\Com$ outputs registers $(\mathsf{W},\mathsf{V},\mathsf{A}, \mathsf{M})$.
\end{enumerate}

Any QPT receiver can be modeled as a receiver of this restricted form followed by some polynomial-time post-processing of the restricted receiver’s output. The same post-processing can be applied to the output of the simulator that will be constructed for the given restricted receiver.

Following~\cite{STOC:Watrous06}, we define a simulator that uses two additional registers, $\mathsf{C}$ and $\mathsf{Z}$, which are both initialized to the all-zero state. $\mathsf{C}$ is a one qubit register, while $\mathsf{Z}$ is an auxiliary register used to implement the computation that will be described next. Consider a quantum procedure $\cS_{\mathsf{partial}}$ that implements the strategy described in  \proref{fig:eqsimulator} using these registers.

\protocol
{\proref{fig:eqsimulator}}
{Partial Equivocal Simulator.}
{fig:eqsimulator}
{

\underline{\textbf{Circuit $\cS_{\mathsf{partial}}$.}}
\begin{enumerate}
\item Sample a uniformly random classical bit $\widehat{c}$, and store it in register $\mathsf{C}$.
\item Sample uniformly random bits $(z,d)$.
\item 
If $\widehat{c} = 0$, initialize committer input as follows, corresponding to four sequential sessions:
\begin{itemize}
    \item For the first two sessions, set committer input to $z$.
    \item For the third and fourth sessions, set committer input to $d$ and $1-d$ respectively.
\end{itemize}
\item 
If $\widehat{c} = 1$, initialize committer input as follows, corresponding to four sequential sessions:
\begin{itemize}
    \item For the first and second sessions, set committer input to $d$ and $1-d$ respectively.
    \item For the last two sessions, set committer input to $z$.
\end{itemize}
\item Run the commitment phase interaction between the honest committer and $\cR^*_\Com$'s sequence of unitaries on registers $(\sW,\sV,\sA,\sM)$ initialized as above.
\item Measure the qubit register $A$ to obtain a bit $c$. If $c=\widehat{c}$, output 0, otherwise output $1$.
\end{enumerate}
}

Next, we apply the Watrous rewinding lemma to the $\cS_{\mathsf{partial}}$ circuit to obtain a circuit $\widehat{\cS}_{\mathsf{partial}}$. To satisfy the premise of \cref{lem:qrl}, we argue that the probability $p(\ket{\psi})$ that $\cS_{\mathsf{partial}}$ outputs $0$ is such that $|p(\ket{\psi})-\frac{1}{2}| = \mathsf{negl}(\lambda)$, regardless of the auxiliary input $\ket{\psi}$ to the $i$'th sequential stage of $\cR^*_\Com$. This follows from the fact that the commitments are computationally hiding.
In more detail, by definition, Step 5 produces a distribution on $\cR^*_\Com$'s side that is identical to the distribution generated by $\cR^*_\Com$ in its interaction with the committer, who either has input $(z,z,d,1-d)$ (if $\widehat{c} = 0$) or input $(d,1-d,z,z)$ (if $\widehat{c} = 1$). If $|p(\ket{\psi})-\frac{1}{2}|$ were non-negligible, then the sequence of unitaries applied by $\cR^*_\Com$ could be used to distinguish commitments generated according to the case $\widehat{c} = 0$ from commitments generated according to the case $\widehat{c} = 1$, which would contradict the hiding of the commitment.

Now consider the residual state on registers $(\sW,\sV,\sA,\sM,\sC,\sZ)$ of $\cS_{\mathsf{partial}}$ conditioned on a measurement of its output register $\sA$ being $0$. The output state of $\widehat{\cS}_{\mathsf{partial}}$ will have negligible trace distance from the state on these registers. Now, the simulator $\cS_{\Com}$ must further process this state as follows.
\begin{itemize}
\item Measure the register $\mathsf{C}$, obtaining challenge $c$. Place the classical bits $(c,d)$ in the register $\sZ$, which also contains the current state of the honest committer algorithm.
\item Use information in register $\sZ$ to execute Step 2(c) of \proref{fig:ideal}.
\item Discard register $\sC$, re-define register $\sZ_i \coloneqq \sZ$ to be used later in the Reveal / Delete phases, and output registers $(\mathsf{W}, \mathsf{V}, \mathsf{A}, \mathsf{M})$ to be used in the next sequential step of the Commit phase. 
\end{itemize}


The simulator $\cS_\Com$ for the commit phase executes all $\lambda$ sequential interactions in this manner, and then samples $b_1, \ldots, b_{\lambda} \leftarrow \{0,1\}^{\lambda}$, as the committer messages for Step 3 of \proref{fig:ideal}. It then outputs the final state of $\cR^*_\Com$ on registers $(\mathsf{W}, \mathsf{V}, \mathsf{A}, \mathsf{M})$, and additionally outputs a private state on registers $(\sZ_1,\dots,\sZ_\secp)$, which consist of the honest committer's state after each of the $i$ sequential steps, as well as bits $(b_1,c_1,d_1,\dots,b_\secp,c_\secp,d_\secp)$.

The reveal stage of the simulator $\cS_\Rev$ takes as input a bit $b$, and a state on registers $(\sZ_1,\dots,\sZ_\secp,\allowbreak\sW,\allowbreak\sV,\sA,\sM)$, and does the following for each $i \in [\secp]$. 
\begin{itemize}
\item Let $\widehat{d}_i = b \oplus b_i$.
\item If $c_i = 0$, it executes the decommitment phase for the $((\widehat{d}_i \oplus d_i)+2)^{th}$ session with $\cR^*_\Rev$.
\item If $c_i = 1$, it executes the decommitment phase for the $(\widehat{d}_i \oplus d_i)^{th}$ session with $\cR^*_\Rev$.
\item Output $\cR^*_\Rev$'s resulting state. Note that each decommitment will be to the bit $\widehat{d}_i = b \oplus b_i$.
\end{itemize}
Finally, the simulator $\cS_\Del$ for the delete phase executes the honest committer's algorithm on the commitments that were not revealed above.\\

\noindent{\textbf{\underline{Analysis.}}}
%
Lemma \ref{lem:ts-idthree} follows from the computational hiding of the underlying commitment scheme $\Com$, via an identical proof to~\cite{BCKM2021}. We have already argued above that the distribution produced by $\cS_\Com$ is statistically close to the distribution that would result from conditioning on the output of $\cS_{\mathsf{partial}}$ being 0 in each sequential step. Thus, it remains to argue that this is computationally indistinguishable from the real distribution. If not, then there exists a session $i \in [\lambda]$ such that the distribution in the real experiments up to the ${i-1}^{th}$ session is indistinguishable, but up to the $i^{th}$ session is distinguishable. However, this directly contradicts the computational hiding of the underlying commitment scheme.


In what follows, we prove Lemma \ref{lem:ts-idfour}. This only considers executions where $\mathsf{DelRes} = 1$, i.e., executions where $\cR^*$ successfully completes the delete phase. We again consider a sequence of $\lambda$ intermediate hybrids between the real and ideal executions.
We will let $\mathsf{Hybrid}_0^{\mathsf{DelRes} = 1}$ denote the final state of $\cR^*$ in the real experiment when the honest party output $\mathsf{DelRes} = 1$ and $\bot$ otherwise. Let $\mathsf{Hybrid}_i$ denote the final state of $\cR^*$ when the first $i$ (out of $\lambda$) sequential commit sessions are simulated using the $\widehat{\cS}_{\mathsf{partial}}$ circuit, defined based on $\cS_{\mathsf{partial}}$ from \proref{fig:eqsimulator}. Let $\mathsf{Hybrid}_i^{\mathsf{DelRes} = 1}$ denote the output of $\mathsf{Hybrid}_i$ when the honest party output $\mathsf{DelRes} = 1$ and $\bot$ otherwise. 

For every $i \in [\lambda]$, statistical indistinguishability between $\mathsf{Hybrid}_{i-1}^{\mathsf{DelRes} = 1}$ and $\mathsf{Hybrid}_{i}^{\mathsf{DelRes} = 1}$ follows by a reduction to the certified everlasting security of $\Com$ (according to \cref{def:one-sided}), as follows. The reduction $\mathsf{Red}$ is different depending on whether or not the Reveal phase is executed.
\begin{itemize}
    \item \underline{Case 1: The Reveal Phase is not executed.}
    $\mathsf{Red}$ acts as receiver in one session of $\Com$, interacting with an external challenger. 
    $\mathsf{Red}$ samples a uniformly random bit $d$ and sends it to the challenger. The challenger samples a uniformly random bit $b'$. If $b' = 0$, the challenger participates as a committer in a commit session to $d$ and otherwise to $(1-d)$.
    
    $\mathsf{Red}$ internally follows the strategy in $\mathsf{Hybrid}_{i-1}$ in the Commit phase for sessions $1, \ldots, i-1$ and $i+1, \ldots, \lambda$, based on the adversary $\cR^*_\Com$. During the $i^{th}$ session, $\mathsf{Red}$ interacts with the challenger and the adversary. In particular, it runs the strategy $\cS_{\mathsf{partial}}$ from \proref{fig:eqsimulator}, with the following exception. For $\widehat{c}$ sampled uniformly at random, if $\widehat{c} = 0$, it 
    forwards messages between $\cR^*_\Com$ and the challenger for either the third or fourth commitment (sampled randomly) and commits to $d$ in the other session and otherwise forwards messages between $\cR^*_\Com$ and the challenger for either the first or second commitment (sampled randomly) and commits to $d$ in the other session. If $\cR^*_\Com$'s challenge $c_i = \widehat{c}$, $\mathsf{Red}$ continues the experiment, otherwise it aborts. 
    $\mathsf{Red}$ continues to follow the strategy in $\mathsf{Hybrid}_{i-1}$, except setting $b_i = b \oplus d$.
    Note that the 
    challenge commitment is never opened.
    
    In the Delete phase, $\mathsf{Red}$ again follows the strategy in $\mathsf{Hybrid}_{i-1}$ except that it executes the Delete phase for the (two) unopened commitments in the $i^{th}$ session, one that it generated on its own, and the other by forwarding messages between $\cR^*_\Del$ and the external challenger. 
    
    By computational hiding of the challenger's commitment, the probability that the reduction aborts is at most $\frac{1}{2} + \negl(\secpar)$. Furthermore, conditioned on not aborting, the distribution output by $\mathsf{Red}$ is identical to $\mathsf{Hybrid}_{i-1}^{\mathsf{DelRes} = 1}$ 
    when $b' = 0$ and is statistically close to $\mathsf{Hybrid}_{i}^{\mathsf{DelRes} = 1}$ when $b' = 1$ (the latter follows because the output of $\widehat{\cS}_{\mathsf{partial}}$ and $\cS_{\mathsf{partial}}$ conditioned on $c_i = \widehat{c}$ are statistically close, due to Watrous rewinding).
    Thus if $\mathsf{Hybrid}_{i-1}^{\mathsf{DelRes} = 1}$ and $\mathsf{Hybrid}_{i}^{\mathsf{DelRes} = 1}$ are not negligibly close in trace distance, $\mathsf{Red}$ breaks certified everlasting hiding of $\mathsf{Com}$, as desired. 
    Finally, we observe that $\mathsf{Hybrid}_{\lambda}$ is identical to the ideal experiment.
    
    \item \underline{Case 2: The Reveal Phase is executed.}
    $\mathsf{Red}$ acts as receiver in one session of $\Com$, interacting with an external challenger. 
    $\mathsf{Red}$ samples a uniformly random bit $d$ and sends it to the challenger.
    The challenger samples a uniformly random bit $b'$. If $b' = 0$, the challenger generates a commitment to $d$, and otherwise to $1-d$.
    
    $\mathsf{Red}$ internally follows the strategy in $\mathsf{Hybrid}_{i-1}$ in the Commit phase for sessions $1, \ldots, i-1$ and $i+1, \ldots, \lambda$. For the $i^{th}$ session $\mathsf{Red}$ runs the strategy $\cS_{\mathsf{partial}}$ from \proref{fig:eqsimulator}, with the following exception. For bits $\widehat{c}, \widehat{b}$ sampled uniformly at random, it sets the commitment in sub-session $(2\widehat{c} + 1 + \widehat{b})$ as the external commitment, and generates the commitment in sub-session $(2\widehat{c} + 1 + (1-\widehat{b}))$ as a commitment to $d$. It sets commitments in the remaining two sessions according to the strategy in $\mathsf{Hybrid}_{i - 1}$.
    If $\cR^*_\Com$'s challenge $c_i = \widehat{c}$, $\mathsf{Red}$ continues the experiment, otherwise it aborts.
    $\mathsf{Red}$ continues to follow the strategy in $\mathsf{Hybrid}_{i-1}$, except setting $b_i = b \oplus d$.
    Note that the challenge commitment is not opened in the Commit phase.

    In the Reveal phase, $\mathsf{Red}$ behaves identically to $\mathsf{Hybrid}_{i-1}$ in sessions $(1, \ldots, i-1, i+1, \ldots, \lambda)$, and for session $i$ it runs the Reveal phase of the commitment in sub-session $(2\widehat{c} + 1 + (1-\widehat{b}))$. 
    
    In the Delete phase, $\mathsf{Red}$ again follows the strategy of $\mathsf{Hybrid}_{i-1}$ except that it executes the Delete phase for the unopened commitments in the $i^{th}$ session by forwarding messages between $\cR^*_\Del$ and the external challenger. Thus, if $\mathsf{Hybrid}_{i-1}^{\mathsf{DelRes} = 1}$ and $\mathsf{Hybrid}_{i}^{\mathsf{DelRes} = 1}$ are not negligibly close in trace distance, $\mathsf{Red}$ breaks certified everlasting hiding of $\mathsf{Com}$, as desired. 
    Finally, we observe that $\mathsf{Hybrid}_{\lambda}$ is identical to the ideal experiment.
\end{itemize}
This completes the proof.
\end{proof}


\subsection{Secure computation}\label{subsec:secure-computation}

In this section, we show that, following compilers in previous work, ideal commitments with EST imply oblivious transfer with EST and thus two-party computation of arbitrary functionalities with EST. Since prior compilers in the commitment hybrid model actually make use of ``commitments with selective opening'', we will first discuss this primitive, then describe a simple (deletion-composable) protocol that securely realizes commitments with selective opening. Next, we will invoke prior results~\cite{EC:GLSV21} that together with our composition theorem imply secure two-party computation with EST.

Finally, we define the notion of multi-party computation with EST, and again show that it follows from ideal commitments with EST.

\subsubsection{Two-party computation}\label{subsec:two-party-computation}

\protocol{Ideal functionality $\cF_{\socom}$}{Specification of the bit commitment with selective opening ideal functionality.}{fig:socom}{
Parties: committer $C$ and receiver $R$\\
Parameters: security parameter $\secp$ and function $r(\cdot)$\\
\begin{itemize}
    \item Commit phase: $\cF_{\socom}$ receives a query $(\text{Commit},\mathsf{sid},b_1,\dots,b_{r(\secp)})$ from $C$, where each $b_i \in \{0,1\}$, records this query, and sends $(\text{Commit},\mathsf{sid})$ to $R$.
    \item Reveal phase: $\cF_{\socom}$ receives a query $(\text{Reveal},\mathsf{sid},I)$ from $R$, where $I$ is an index set of size $|I| \leq r(\secp)$. $\cF_{\socom}$ ignores this message if no $(\text{Commit},\mathsf{sid},b_1,\dots,b_{r(\secp)})$ is recorded. Otherwise, $\cF_{\socom}$ records $I$ and sends a message $(\text{Reveal},\mathsf{sid},\{b_i\}_{i \in I})$ to $R$, and a message $(\text{Choice},\mathsf{sid},I)$ to $C$.
\end{itemize}
\noindent}

We define the ``commitment with selective opening'' ideal functionality $\cF_{\socom}$ in \proref{fig:socom}, and we describe a simple (deletion-composable) protocol $\Pi^{\cF_\com^\Del}$ that statistically securely realizes $\cF_{\socom}^\Del$.

\begin{itemize}
    \item The committer, with input $(b_1,\dots,b_{r(\secp)})$, sequentially sends $(\mathsf{Commit},i,b_i)$ to $\cF_\Com^\Del$ for $i \in [r(\secp)]$.
    \item The receiver, with input $I$, sends $I$ to the committer.
    \item The committer sequentially sends $(\mathsf{Reveal},i)$ for $i \in I$ to $\cF_\com^\Del$.
    \item The receiver obtains output $\{(\mathsf{Reveal},i,b_i)\}_{i \in I}$ from $\cF_\com^\Del$.
    \item The parties perform the delete phase as follows.
    \begin{itemize}
    \item If the committer is instructed to request a deletion, it sends $\{(\mathsf{DelRequest},i)\}_{i \in [r(\secp)]}$ to $\cF_\com^\Del$, which are forwarded to the receiver. 
    \item If the receiver obtains \emph{any} $(\mathsf{DelRequest},i)$ for $i \in [r(\secp)]$, it sets its output $\mathsf{DelReq} = 1$.
    \item For each $(\mathsf{DelRequest},i)$ obtained by the receiver, it sends $(\mathsf{DelResponse},i)$ to $\cF_\com^\Del$, which are forwarded to the committer. 
    \item If the committer obtains \emph{all} $\{(\mathsf{DelResponse},i)\}_{i \in [r(\secp)]}$, it sets its output $\DelRes = 1$.
    \end{itemize}
\end{itemize}

It is clear by definition that the above protocol statistically securely realizes $\cF_{\socom}^\Del$. Thus, by combining \cref{thm:CHSEE}, \cref{thm:one-sided}, and \cref{thm:ideal}, which together show that there exists a protocol that realizes $\cF_\com^\Del$ with EST assuming computationally-hiding statistically-binding commitments, and \cref{thm:composition}, which is our composition theorem, we obtain the following theorem.

\begin{theorem}\label{thm:socom}
There exists a protocol that securely realizes $\cF_{\socom}^\Del$ with EST that makes black-box use of a computationally-hiding statistically-binding commitment (\cref{def:comp-hiding} and \cref{def:SB}).
\end{theorem}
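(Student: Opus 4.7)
The plan is to combine four ingredients: the chain of constructions giving ideal commitments with EST (namely \cref{thm:CHSEE}, \cref{thm:one-sided}, \cref{thm:ideal}), the simple hybrid protocol $\Pi^{\cF_\com^\Del}$ described immediately before the theorem, and the deletion-composable composition result (\cref{thm:compose-EST}). Concretely, starting from any post-quantum computationally-hiding statistically-binding commitment, \cref{thm:CHSEE} gives a CHSEE commitment via black-box use of the underlying commitment, \cref{thm:one-sided} then yields a one-sided ideal commitment with EST, and \cref{thm:ideal} upgrades this to a full ideal commitment with EST, i.e.\ a protocol $\Gamma$ that securely realizes $\cF_\com^\Del$ with EST. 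All these steps are black-box in the underlying commitment.

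Next, I would analyze the hybrid protocol $\Pi^{\cF_\com^\Del}$ described above the theorem statement. First, I would verify it statistically securely realizes $\cF_{\socom}^\Del$ in the $\cF_\com^\Del$-hybrid model (per \cref{def:statsecurerealization}): against a corrupt committer, the simulator forwards each $(\mathsf{Commit},i,b_i)$ extracted from the ideal calls into a single $(\mathsf{Commit},\mathsf{sid},b_1,\dots,b_{r(\secp)})$ query to $\cF_{\socom}^\Del$, and relays reveals; against a corrupt receiver, the simulator obtains $I$ from the adversary's selective-opening query, queries $\cF_{\socom}^\Del$ on $(\mathsf{Reveal},\mathsf{sid},I)$, receives $\{b_i\}_{i\in I}$, and uses these to simulate the ideal-commitment reveal messages (commit phases for $i\notin I$ reveal no information in the ideal world). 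The delete phase is handled symmetrically. This yields perfect (hence statistical) indistinguishability.

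The main technical point is to check that $\Pi^{\cF_\com^\Del}$ is \emph{deletion-composable}. This has two conditions. (i) If the top-level $\cF_{\socom}^\Del$ delete phase is never requested by the honest party, then by inspection of the protocol no $(\mathsf{DelRequest},i)$ is ever sent to any $\cF_\com^\Del$ subroutine, so no sub-deletion occurs. (ii) If the top-level delete request is accepted (i.e.\ the honest committer outputs $\mathsf{DelRes}=1$), then by the protocol the committer only sets $\mathsf{DelRes}=1$ after receiving every single $(\mathsf{DelResponse},i)$ for $i\in[r(\secp)]$, which means every sub-instance of $\cF_\com^\Del$ had its deletion phase requested by the committer and accepted by the receiver. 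Hence both conditions of deletion-composability hold.

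Finally, applying \cref{thm:compose-EST} to the deletion-composable, statistically secure $\cF_\com^\Del$-hybrid protocol $\Pi^{\cF_\com^\Del}$ and the protocol $\Gamma$ that securely realizes $\cF_\com^\Del$ with EST, we obtain that the composed protocol $\Pi^{\cF_\com^\Del/\Gamma}$ securely realizes $\cF_{\socom}^\Del$ with EST. Since every step above is black-box in the underlying commitment, the resulting construction is black-box in any computationally-hiding statistically-binding commitment, as claimed. I expect the only nontrivial check to be the two deletion-composability conditions; the rest is direct assembly of previously proved results.
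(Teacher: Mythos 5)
Your proposal is correct and follows essentially the same route as the paper: the paper obtains the ideal commitment with EST from \cref{thm:CHSEE}, \cref{thm:one-sided}, and \cref{thm:ideal}, notes that the described $\cF_\com^\Del$-hybrid protocol statistically securely realizes $\cF_{\socom}^\Del$ and is deletion-composable by construction, and concludes via the composition theorem. Your explicit verification of the two deletion-composability conditions and the hybrid-model simulators simply spells out what the paper treats as immediate.
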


Note that it was necessary that our composition theorem handled reactive functionalities in order to establish this claim. Moreover, it is crucial in the definition of a reactive functionality with a deletion phase that we allow the deletion phase to be run after any phase of the reactive functionality. Indeed, in the above construction, some underlying commitments are not revealed but they still must be deleted. 

Finally, it was shown in \cite{EC:GLSV21} (building on the work of \cite{FOCS:CreKil88,C:DFLSS09}, among others) that using quantum communication, it is possible to statistically realize the primitive of \emph{oblivious transfer} in the $\cF_{\socom}$-hybrid model. Moreover, the work of \cite{STOC:Kilian88} showed how to statistically realize \emph{arbitrary two-party computation} in the oblivious transfer hybrid model. Since it is straightfoward to make these protocols deletion-composable with a delete phase at the end, we have the following corollary.

\begin{corollary}
Secure two-party computation of any polynomial-time functionality with Everlasting Security Transfer (\cref{def:security-with-EST}) exists, assuming only black-box use of a computationally-hiding statistically-binding commitment (\cref{def:comp-hiding} and \cref{def:SB}).
\end{corollary}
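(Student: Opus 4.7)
The plan is to derive this corollary by invoking the Everlasting Security Transfer composition theorem (\cref{thm:compose-EST}) twice on top of \cref{thm:socom}, using previously known unconditionally-secure compilers as the ``outer'' statistical protocols. First, I would invoke \cref{thm:socom}, which gives a protocol $\Gamma_{\socom}$ that securely realizes $\cF_{\socom}^{\Del}$ with EST from black-box use of any computationally-hiding statistically-binding commitment. This is the only step in the entire derivation that relies on a computational assumption, and from here on everything should be information-theoretic and composed generically.

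Next, I would consider the $\cF_{\socom}$-hybrid OT protocol of \cite{EC:GLSV21} (building on \cite{FOCS:CreKil88,C:DFLSS09}), which statistically realizes the oblivious-transfer ideal functionality $\cF_{\OT}$. I would augment this protocol with a trivial delete phase at the very end: whenever the OT functionality's delete phase is invoked, the parties sequentially invoke the delete phase of every underlying $\cF_{\socom}$ subroutine, and the outer delete output is set to $1$ only if every inner delete output is $1$. By construction, this augmented protocol is deletion-composable in the sense of Section~\ref{subsec:2PC-defs}: (i) if the outer delete is never requested, no inner delete phase is requested; (ii) if the outer delete succeeds, every inner $\cF_{\socom}^{\Del}$ delete phase must have succeeded. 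I would then apply \cref{thm:compose-EST} to compose this deletion-composable protocol with $\Gamma_{\socom}$, yielding a protocol $\Gamma_{\OT}$ that securely realizes $\cF_{\OT}^{\Del}$ with EST.

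I would repeat the same move for general two-party computation. Kilian's protocol~\cite{STOC:Kilian88} statistically realizes an arbitrary polynomial-time 2PC functionality $\cF$ in the $\cF_{\OT}$-hybrid model. I again augment it with the canonical delete phase that invokes the delete phase of every underlying $\cF_{\OT}$ call in sequence, aggregating the bits as above, which makes the resulting $\cF_{\OT}^{\Del}$-hybrid protocol deletion-composable for $\cF^{\Del}$. Composing this with $\Gamma_{\OT}$ via a second application of \cref{thm:compose-EST} produces a protocol in the plain model that securely realizes $\cF^{\Del}$ with EST, which is exactly the statement of the corollary.

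The main obstacle I expect is not any deep cryptographic step but rather a bookkeeping one: verifying that the chosen statistical protocols, after the trivial delete-phase augmentation, genuinely meet the ``atomic subroutine'' and deletion-composability requirements demanded by \cref{thm:compose-EST}. In particular, I need to check that each invocation of $\cF_{\socom}$ (respectively $\cF_{\OT}$) can be scheduled so that no other messages are interleaved within a single phase of the subroutine, and that the aggregation rule $\mathsf{DelRes} = \bigwedge_i \mathsf{DelRes}_i$ at each layer preserves both conditions of deletion-composability. Once this syntactic check is in place, \cref{thm:compose-EST} does all the real work, and the corollary follows immediately.
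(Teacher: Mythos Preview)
Your proposal is correct and follows essentially the same route as the paper: invoke \cref{thm:socom}, then use the statistical $\cF_{\socom}$-hybrid OT protocol of \cite{EC:GLSV21} and Kilian's statistical OT-to-2PC compiler, each equipped with the canonical delete phase, and apply \cref{thm:compose-EST} at each layer. The paper's own argument is in fact terser than yours---it simply asserts that ``it is straightforward to make these protocols deletion-composable with a delete phase at the end''---so your explicit identification of the atomicity and deletion-composability checks as the only remaining bookkeeping is exactly right and, if anything, more careful than the paper.
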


\subsubsection{Multi-party computation}

In order to define and construct multi-party computation with EST, we first have to specify a multi-party version of the Delete phase, which is described in \proref{fig:multidelPhase}.

\protocol{Muti-party Deletion phase}{A specification of a generic multi-party deletion phase that can be added to any multi-party ideal functionality $\cF$.}{fig:multidelPhase}{
Parties: $\{P_i\}_{i \in [n]}$
\begin{itemize}
    \item Receive a sequence of queries $(\mathsf{DelRequest},\mathsf{sid},i,j)$ which indicate that party $i$ is requesting party $j$ to delete their data. For each such query received, record it and send $(\mathsf{DelRequest},\mathsf{sid},i,j)$ to party $j$.
    \item Receive a sequence of queries $(\mathsf{DelResponse},\mathsf{sid},i,j)$ which indicate that party $i$ has deleted party $j$'s data. For each such query received, if there does not exist a recorded $(\mathsf{DelRequest},\mathsf{sid},j,i)$, then ignore the query. Otherwise, send $(\mathsf{DelResponse},\mathsf{sid},i,j)$ to party $j$.
\end{itemize}
}

To define security, we first note that it is straightforward to extend the discussion on the ``real-ideal paradigm'' from \cref{subsec:2PC-defs} to handle multi-party protocols where the adversary may corrupt any subset $M \subset [n]$ of $n$ parties. One can similarly generalize the definitions of computational and statistical secure realization (\cref{def:securerealization} and \cref{def:statsecurerealization}) to apply to multi-party protocols. Finally, we note that the multi-party Deletion phase introduced above adds $2(n-1)$ bits to each honest party $i$'s output, which we denote by $\{\DelReq_{j \to i},\DelRes_{j \to i}\}_{j \in [n] \setminus \{i\}}$, where each $\DelReq_{j \to i}$ indicates whether party $j$ requested that party $i$ delete its data, and each $\DelRes_{j \to i}$ indicates whether party $j$ deleted party $i$'s data. Now, we can generalize the notion of secure realization with EST (\cref{def:security-with-EST}) to the multi-party setting.

\begin{definition}[Secure realization with Everlasting Security Transfer: Multi-party protocols]\label{def:EST-multiparty}
A protocol $\Pi_\cF$ securely realizes the $\ell$-phase $n$-party functionality $\cF$ with EST if $\Pi_\cF$ computationally securely realizes $\cF^\Del$ (\cref{def:securerealization}) and the following holds. For every QPT adversary $\{\cA_\secp \coloneqq (\cA_{\secp,1},\dots,\cA_{\secp,\ell})\}_{\secp \in \bbN}$ corrupting a subset of parties $M \subset [n]$, there exists a QPT simulator $\{\cS_\secp \coloneqq (\cS_{\secp,1},\dots,\cS_{\secp,\ell})\}_{\secp \in \bbN}$ such that for any QPT environment $\{\cZ_{\secp} \coloneqq (\cZ_{\secp,1},\dots,\cZ_{\secp,\ell})\}_{\secp \in \bbN}$, and polynomial-size family of advice $\{\ket{\psi_\secp}\}_{\secp \in \bbN}$, \[\TD\left(\Pi_{\cF^\Del}^{\mathsf{Del}_M = 1}[\cA_\secp,\cZ_\secp,\ket{\psi_\secp}], \widetilde{\Pi}_{\cF^\Del}^{\mathsf{Del}_M = 1}[\cS_\secp,\cZ_\secp,\ket{\psi_\secp}]\right) = \negl(\secp),\]

where $\Pi^{\mathsf{Del}_M = 1}_{\cF^\Del}[\cA_\secp,\cZ_\secp,\ket{\psi_\secp}]$ is defined to be equal to $\Pi_{\cF^\Del}[\cA_\secp,\cZ_\secp,\ket{\psi_\secp}]$ if the bit $\mathsf{Del}_M = 1$ and defined to be $\bot$ otherwise, and likewise for $ \widetilde{\Pi}_{\cF^\Del}^{\mathsf{Del}_M = 1}[\cS_\secp,\cZ_\secp,\ket{\psi_\secp}]$. The bit $\mathsf{Del}_M$ is computed based on the honest party outputs, and is set to 1 if and only if for all $i \in [n] \setminus M$ and $j \in M$, $\DelReq_{j \to i} = 0$ and $\DelRes_{j \to i} = 1$.

\end{definition}

Note that we here we did not include a designated party (or parties) against whom statistical security should hold by default (as in \cref{def:security-with-EST}), but in principle one could define security in this manner. The definition as written in the multi-party case captures a type of \emph{dynamic} statistical security property, where after the completion of the protocol, any arbitrary subset of parties can comply with a deletion request and certifiably remove information about the other party inputs from their view.

Finally, we can prove the following corollary of \cref{thm:socom}.

\begin{corollary}
Secure multi-party computation of any polynomial-time functionality with Everlasting Security Transfer (\cref{def:EST-multiparty}) exists, assuming only black-box use of a computationally-hiding statistically-binding commitment (\cref{def:comp-hiding} and \cref{def:SB}).
\end{corollary}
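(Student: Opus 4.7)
The plan is to follow the template of the two-party result from \cref{subsec:two-party-computation}, replacing each ingredient with its multi-party analogue. My starting point is the classical fact that any polynomial-time $n$-party functionality can be \emph{statistically} realized (with abort, against dishonest majority) in the $\cF_{\socom}$-hybrid model via standard compilers (e.g., Kilian's \cite{STOC:Kilian88} extended to the multi-party setting, or the IPS compiler), where all cryptographic work is pushed into pairwise $\cF_{\socom}$ calls. I would combine this with \cref{thm:socom}, which already gives a protocol realizing $\cF_{\socom}^\Del$ with EST from black-box use of computationally-hiding statistically-binding commitments.

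To glue these together, I first need a straightforward multi-party extension of \cref{thm:compose-EST}. The same hybrid argument goes through: replace each call to $\cF_{\socom}^\Del$ one at a time with an invocation of the concrete protocol from \cref{thm:socom}, using statistical indistinguishability (conditioned on the underlying delete phase being accepted) at each step. I equip the outer MPC protocol with a multi-party Delete phase (per \proref{fig:multidelPhase}) in which, whenever the environment instructs a subset $M$ to request deletion from $[n]\setminus M$, every pairwise $\cF_{\socom}^\Del$ subroutine between a party in $M$ and a party in $[n]\setminus M$ has its delete sub-phase invoked. The routing of $\mathsf{DelReq}_{j \to i}$ and $\mathsf{DelRes}_{j \to i}$ bits is arranged so that the composite event $\mathsf{Del}_M = 1$ from \cref{def:EST-multiparty} holds precisely when every one of these underlying deletion sub-phases is accepted; this deletion-composability condition is exactly what the extended composition theorem needs.

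The main obstacle is the \emph{direction asymmetry} baked into \cref{def:security-with-EST}: in the two-party setting, there is a fixed party $A$ who by default enjoys statistical security and a fixed party $B$ who can transfer it by producing a valid deletion certificate. In the multi-party setting, any subset $M$ must be able to play the role of $B$. I would handle this by instantiating, for every ordered pair $(i,j)$ of parties, a fresh $\cF_{\socom}^\Del$-with-EST session in which $i$ is assigned the role of $A$ and $j$ the role of $B$, so that whenever $i \in [n] \setminus M$ is honest and $j \in M$ is corrupt, the appropriate session carries the everlasting-security direction demanded by $\mathsf{Del}_M = 1$. Since this only at most doubles the number of pairwise commitment invocations, the resulting protocol is still efficient, and its security against arbitrary corruption subsets $M$ follows by a symmetric application of the extended composition theorem to the statistically secure base MPC protocol, completing the proof of the corollary.
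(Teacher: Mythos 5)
Your overall skeleton is the same as the paper's: take a statistically secure MPC protocol in a pairwise hybrid model, instantiate the hybrid functionality with its EST-secure realization, and glue via a multi-party extension of deletion-composability and the composition theorem. (One route inaccuracy: statistical MPC directly in the $\cF_{\socom}$-hybrid model via ``Kilian extended to multi-party / IPS'' is not a classical fact, since commitments do not statistically imply OT; you must pass through the quantum OT protocol of \cite{EC:GLSV21} and then \cite{C:CreVanTap95} in the OT-hybrid model, which is exactly how the paper proceeds, building on \cref{subsec:two-party-computation} and \cref{thm:socom}.)

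The genuine gap is your handling of the direction asymmetry. The delete phase of a commitment/OT with EST exists only in one direction (from the computationally bounded party $B$ toward $A$), and the information that must become everlastingly hidden from a corrupt $j$ resides in the subroutine invocations \emph{actually used} by the base protocol, whose committer/receiver (or sender/receiver) roles are fixed by that protocol. Instantiating a fresh, otherwise unused $\cF_{\socom}^{\Del}$ session per ordered pair with swapped roles does not help: running the delete phase of a session that carries no protocol data does nothing to the sessions in which $j$ holds only computationally hidden information about honest inputs, so conditioning on those extra deletions accepting does not justify the hybrid replacement argument. Relatedly, your requirement that ``every pairwise subroutine between $M$ and $[n]\setminus M$ has its delete sub-phase invoked'' cannot be satisfied as stated, because invocations in which $j$ already plays the statistically-secure-against role have no delete phase in the needed direction (nor do they need one); and note that in \cref{def:EST-multiparty} it is the honest parties who request and the corrupt parties in $M$ who must respond, not the other way around. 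The correct deletion phase, which is what the paper specifies, is: when honest $i$ requests deletion from $j$, issue delete requests for exactly those invocations between $i$ and $j$ that are \emph{not already statistically secure against $j$}, and set $\DelRes_{j \to i} = 1$ only if all of those delete phases accept. Then, between every honest $i$ and corrupt $j$, every invocation is either statistically secure against $j$ from the outset or has had everlasting security transferred, and the one-subroutine-at-a-time replacement argument (conditioned on acceptance) goes through without any added sessions.
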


\begin{proof}
It was shown by \cite{C:CreVanTap95} that multi-party computation of any polynomial-time functionality can be statistically realized in the oblivious transfer hybrid model, where each pair of parties has access to an ideal oblivious transfer functionality. The stand-alone composition theorem (\cref{thm:composition}) shows that this is also true in the quantum setting, so it remains to argue that the resulting multi-party protocol can be made to satisfy security with EST, assuming that the underlying oblivious transfers do. This only requires extending the notion of deletion-composability to this setting, which can be achieved with the following deletion phase.
\begin{itemize}
    \item If party $i$ is instructed to issue a deletion request to party $j$, they issue deletion requests for \emph{all} oblivious transfers that occurred between party $i$ and $j$ that were not already statistically secure against $j$. 
    \item If party $i$ obtains a deletion request from \emph{any} one of the oblivious transfers between party $i$ and party $j$, they output $\DelReq_{j \to i} = 1$.
    \item For each deletion request obtained by party $i$ from party $j$, party $i$ is instructed to send a deletion response to party $j$. 
    \item If party $i$ obtains a deletion response from party $j$ for \emph{all} oblivious transfers between party $i$ and party $j$ that were not already statistically secure against $j$, they output $\DelRes_{j \to i} = 1$.
\end{itemize}
This completes the proof.
\end{proof}

\fi

\ifsubmission
\else
\section*{Acknowledgments}
We thank Bhaskar Roberts and Alex Poremba for comments on an earlier draft, and for noting that quantum fully-homomorphic encryption is not necessary for our FHE with certified deletion scheme, classical fully-homomorphic encryption suffices.

D.K. was supported in part by DARPA and NSF QIS award 2112890.
This material is based on work supported by DARPA under Contract No. 
HR001120C0024. Any opinions, findings and conclusions or recommendations expressed in
this material are those of the author(s) and do not necessarily reflect the views of the United States
Government or DARPA.
\fi

\ifsubmission
\bibliographystyle{splncs04}
\else
\bibliographystyle{alpha}
\fi

\addcontentsline{toc}{section}{References}
\bibliography{abbrev3,custom,crypto}

\ifsubmission
\newpage
\begin{center}
\textbf{\Huge Supplementary Material}
\end{center}
\vspace{2em}
\else
\fi

\appendix
\ifsubmission\else\fi
\ifsubmission\else\fi
\section{Relation with~\cite{10.1007/978-3-030-92062-3_21}'s definitions}
\label{sec:implication}
In this section, we prove that our definitions of certified deletion security for PKE and ABE imply prior definitions~\cite{10.1007/978-3-030-92062-3_21}. First, we reproduce the definitions in~\cite{10.1007/978-3-030-92062-3_21}, albeit following our notational conventions, for the settings of public-key encryption and attribute-based encryption below.

\begin{definition}[Certified deletion security for PKE in~\cite{10.1007/978-3-030-92062-3_21}]
\label{def:CD-security-prior}
$\CD$-$\PKE = (\Gen,\Enc,\Dec,\Del,\Ver)$ satisfies \emph{certified deletion security} if for any non-uniform QPT adversary $\cA = \{\cA_\secp,\ket{\psi}_\secp\}_{\secp \in \bbN}$, it holds that 
\[\bigg|\Pr\left[\mathsf{C}'\text{-}\EXP_\secp^{\cA}(0) = 1\right] - \Pr\left[\mathsf{C}'\text{-}\EXP_\secp^{\cA}(1) = 1\right]\bigg| = \negl(\secp),\]
where the experiment $\mathsf{C}'\text{-}\EXP_\secp^{\cA}(b)$ is defined as follows.
\begin{itemize}
    \item Sample $(\pk,\sk) \gets \Gen(1^\secp)$ and $(\ct,\vk) \gets \Enc(\pk,b)$.
    \item Initialize $\cA_\secp(\ket{\psi_\secp})$ with $\pk$ and $\ct$.
    \item Parse $\cA_\secp$'s output as a deletion certificate $\cert$ and a residual state on register $\sA'$.
    \item If $\Ver(\vk,\cert) = \top$, set $\mathsf{ret} = \sk$, otherwise set $\mathsf{ret} = \bot$.
    \item Output $\cA_\secp \left(\sA', \mathsf{ret} \right)$.
\end{itemize}
\end{definition}

\begin{definition}[Certified deletion security for ABE in~\cite{10.1007/978-3-030-92062-3_21}]
\label{def:CD-sec-ABE-prior}
$\CD$-$\ABE = (\Gen,\KG,\Enc,\allowbreak\Dec,\Del,\Ver)$ satisfies \emph{certified deletion security} if for any non-uniform QPT adversary $\cA = \{\cA_\secp,\ket{\psi}_\secp\}_{\secp \in \bbN}$, it holds that 
\[\bigg|\Pr\left[\mathsf{C}'\text{-}\EXP_\secp^{\cA}(0) = 1\right] - \Pr\left[\mathsf{C}'\text{-}\EXP_\secp^{\cA}(1) = 1\right]\bigg| = \negl(\secp),\] 
where the experiment $\mathsf{C}'\text{-}\EXP_\secp^{\cA}(b)$ is defined as follows.
\begin{itemize}
    \item Sample $(\pk,\msk) \gets \Gen(1^\secp)$
    and 
    initialize
    $\cA_\secp(\ket{\psi_\secp})$ with $\pk$.
    \item Set $i = 1$.
    \item If $\cA_\secp$ outputs a key query $P_i$, return $\sk_{P_i} \leftarrow \KG(\msk, P_i)$ to $\cA_\secp$ and set $i = i + 1$. This process can be repeated polynomially many times.
    \item If $\cA_\secp$ outputs an attribute $X^*$ and a pair of messages $(m_0, m_1)$ where $P_i(X^*) = 0$ for all predicates $P_i$ queried so far, then compute $(\vk,\ct) = \Enc(\pk, X^*, m_b)$ and return $\ct$ to $\cA_\secp$, else exit and output $\bot$.
    \item If $\cA_\secp$ outputs a key query $P_i$ such that $P_i(X^*) = 0$, return $\sk_{P_i} \leftarrow \KG(\msk, P_i)$ to $\cA_\secp$ (otherwise return $\bot$) and set $i = i + 1$. This process can be repeated polynomially many times.
    \item Parse $\cA_\secp$'s output as a deletion certificate $\cert$ and a residual state on register $\sA'$.
    \item If $\Ver(\vk,\cert) = \top$ set $\mathsf{ret} = \msk$, and otherwise set $\mathsf{ret} = \bot$. Send $\mathsf{ret}$ to $\cA_\secp$.
    \item Again, upto polynomially many times, $\cA_\secp$ sends key queries $P_i$. For each $i$, if $P_i(X^*) = 0$, return $\sk_{P_i} \leftarrow \KG(\msk, P_i)$ to $\cA_\secp$ (otherwise return $\bot$) and set $i = i + 1$.
    Finally, $\cA_\secp$ generates an output bit, which is set to be the output of $\mathsf{C}'\text{-}\EXP_\secp^{\cA}(b)$.
\end{itemize}
\end{definition}

\begin{claim}
Any PKE scheme satisfying Definition \ref{def:CD-security} also satisfies Definition \ref{def:CD-security-prior}.
\end{claim}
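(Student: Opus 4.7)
The plan is to prove the contrapositive: starting from a QPT adversary $\cA$ that witnesses a non-negligible distinguishing advantage $\epsilon$ in Definition~\ref{def:CD-security-prior}, I will build reductions that violate one of the two conditions of Definition~\ref{def:CD-security}. The first move will be a simple decomposition of $\Pr[\mathsf{C}'\text{-}\EXP^\cA(b) = 1]$ according to whether $\Ver(\vk,\cert)$ accepts or rejects, yielding two summands: $\Pr[\Ver = \top \wedge \cA(\sA',\sk)=1 \mid b]$ and $\Pr[\Ver = \bot \wedge \cA(\sA',\bot)=1 \mid b]$. By the triangle inequality, at least one of these summands must differ by $\epsilon/2$ between $b=0$ and $b=1$, and I will knock out each case with its own reduction.

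For the ``verify accepts'' summand, I will construct a QPT reduction $\cR$ that runs $\cA(\pk, \ct)$, submits the resulting $\cert$ as its deletion certificate, and leaves $(\sA', \pk)$ in its residual register --- the crucial step being to explicitly append the classical string $\pk$ so that it survives into the output of $\EVEXP^\cR(b)$. I will then exhibit an \emph{unbounded} distinguisher $\cD$ that, on input $\bot$, outputs $0$, and otherwise reads $\pk$, reverse-samples a key $\sk$ from the posterior distribution of $\Gen(1^\secp)$ conditioned on producing this particular $\pk$, and returns $\cA(\sA', \sk)$. The key observation is that this makes the joint distribution of $(\sA', \sk)$ produced by $\cD$ exactly match the joint distribution appearing in the prior experiment conditioned on $\Ver = \top$, so $\cD$'s advantage on $\EVEXP^\cR(0)$ versus $\EVEXP^\cR(1)$ will be at least $\epsilon/2$, contradicting the negligible trace distance guaranteed by Definition~\ref{def:CD-security}.

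For the ``verify rejects'' summand, I will instead use the $\CEXP$ half of Definition~\ref{def:CD-security}, which keeps both the reduction and the distinguisher QPT and suffices precisely because no secret key needs to be materialized in this case. The reduction $\cR'$ will run $\cA(\pk, \ct)$, output $\cert$ and residual register $\sA'$, and upon receiving $(\sA', v)$ from the $\CEXP$ challenger will return $\cA(\sA', \bot)$ if $v = 0$ and the constant $0$ otherwise. Then $\Pr[\CEXP^{\cR'}(b)=1]$ equals the reject-summand, so a non-negligible gap there directly violates $\CEXP$ indistinguishability.

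The only real subtlety I anticipate is justifying that the joint distribution $(\sA', \sk)$ produced by the reverse-sampling distinguisher is truly identical to that of the prior experiment conditioned on acceptance. This reduces to the standard fact that, under $\Gen(1^\secp)$, the conditional distribution of $\sk$ given the classical value $\pk$ is well-defined and can be resampled from $\pk$ alone, so the resulting joint law matches exactly; nothing in the reduction entangles $\pk$ with the quantum register $\sA'$ in any way beyond what the honest experiment already does, since $\pk$ travels through $\cA$ and the reduction as a classical value. The ABE version follows by a nearly identical argument, where the reduction additionally forwards all pre- and post-challenge key queries between $\cA$ and its $\ABE$ challenger, and reverse-samples $\msk$ conditioned on $\pk$ together with all previously answered keys.
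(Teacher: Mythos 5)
Your proposal is correct and follows essentially the same route as the paper: split the advantage in the prior experiment according to the verification outcome, handle the reject branch with a QPT reduction to $\CEXP$ (abort on $\top$, run $\cA(\sA',\bot)$ on $\bot$), and kill the accept branch using the everlasting $\EVEXP$ condition together with the observation that $\sk$ can be reverse-sampled from $\pk$ independently of the rest of the experiment. The only difference is presentational: the paper asserts the accept-branch contribution is negligible directly from the trace-distance condition (noting the conditional resampling of $\sk$ given $\pk$ in passing), whereas you make that step explicit as an $\EVEXP$ reduction that appends $\pk$ to the residual register and an unbounded distinguisher that resamples $\sk$ — a more spelled-out write-up of the same argument.
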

\begin{proof}
Suppose the claim is not true. Then there exists an adversary $\cA$ and polynomial $p(\cdot)$ such that with respect to the notation in Definition \ref{def:CD-security-prior},
\[\bigg|\Pr\left[\mathsf{C}'\text{-}\EXP_\secp^{\cA}(0) = 1\right] - \Pr\left[\mathsf{C}'\text{-}\EXP_\secp^{\cA}(1) = 1\right]\bigg| = \frac{1}{p(\secp)},\] 
and yet for every adversary $\cB$, with respect to the notation in Definition \ref{def:CD-security},
\begin{equation}
\label{eq:everlasting}
\TD\left(\EVEXP_\secp^{\cB}(0),\EVEXP_\secp^{\cB}(1)\right) = \negl(\secp),
\end{equation}
and
\[\bigg|\Pr\left[\CEXP_\secp^{\cB}(0) = 1\right] - \Pr\left[\CEXP_\secp^{\cB}(1) = 1\right]\bigg| = \negl(\secp),\] 
Now we consider the following (efficient) reduction $\cR = \{\cR_\secp\}_{\secp \in \bbN}$ that acts as an adversary in the experiment $\CEXP$.
$\cR_\secp$ passes $\pk$ and $\ct$ to $\cA_\secp$, then passes the deletion certificate output by $\cA_\secp$ to its challenger and saves its residual state on register $\sA'$. $\cR_\secp$ then obtains a verification outcome in $\{ \bot, \top \}$ from the challenger. 
If the outcome is $\top$, $\cR$ aborts, and otherwise $\cR_\secp$ outputs $\cA_\secp(\sA',\bot)$. 

By equation~(\ref{eq:everlasting}), when the outcome is $\top$, the resulting state on register $\sA'$ in $\mathsf{C}'\text{-}\EXP_\secp^{\cA}(b)$ is statistically independent of $b$ (and in particular, since the distribution over $\sk$ is fixed by $\pk$ and is otherwise independent of $\ct$, the state is also statistically independent given $\sk$). Thus, except for a negligible loss, any advantage of $\cA_\secp$ can only manifest in the case when the output is $\bot$, which implies that
\[\bigg|\Pr\left[\CEXP_\secp^{\cR}(0) = 1\right] - \Pr\left[\CEXP_\secp^{\cR}(1) = 1\right]\bigg| = \frac{1}{p(\secp)} - \negl(\secp) > \frac{1}{2p(\secp)},\]
a contradiction.
\end{proof}

\begin{claim}
Any ABE scheme satisfying Definition \ref{def:CD-sec-ABE} also satisfies Definition \ref{def:CD-sec-ABE-prior}.
\end{claim}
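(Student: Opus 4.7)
The plan is to adapt the PKE argument to the attribute-based setting, with the master secret key $\msk$ playing the role that $\sk$ did there. Suppose for contradiction that there exist a non-uniform QPT adversary $\cA = \{\cA_\secp, \ket{\psi_\secp}\}_{\secp \in \bbN}$ and a polynomial $p(\cdot)$ for which $|\Pr[\mathsf{C}'\text{-}\EXP_\secp^\cA(0) = 1] - \Pr[\mathsf{C}'\text{-}\EXP_\secp^\cA(1) = 1]| \geq 1/p(\secp)$ for infinitely many $\secp$. I would then build an efficient reduction $\cR = \{\cR_\secp\}_{\secp \in \bbN}$ that breaks the $\CEXP$ part of \cref{def:CD-sec-ABE}, yielding the desired contradiction.

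The reduction $\cR_\secp$ plays in the $\CEXP$ game and relays, between $\cA_\secp$ and its own challenger, the public key $\pk$, every pre-ciphertext key query, the chosen attribute $X^*$, the challenge ciphertext $\ct$, every post-ciphertext pre-delete key query, and finally the deletion certificate $\cert$ and residual register $\sA'$ produced by $\cA_\secp$. Upon receiving the verdict $\Ver(\vk, \cert) \in \{\top, \bot\}$, $\cR_\secp$ aborts with output $0$ in the $\top$ case (since it does not possess $\msk$ to hand to $\cA_\secp$), and otherwise forwards $\bot$ to $\cA_\secp$, continues to relay the post-delete key queries (all of which must satisfy $P_i(X^*) = 0$), and outputs $\cA_\secp$'s final bit.

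Letting $P_b = \Pr[\Ver = \top \mid b]$, $\alpha_b = \Pr[\cA_\secp = 1 \mid b, \Ver = \top]$ in the prior-definition experiment (where $\cA_\secp$ receives $\msk$), and $\beta_b = \Pr[\cA_\secp = 1 \mid b, \Ver = \bot]$, a triangle-inequality split of the prior-def advantage yields $|\Pr[\mathsf{C}'\text{-}\EXP_\secp^\cA(0) = 1] - \Pr[\mathsf{C}'\text{-}\EXP_\secp^\cA(1) = 1]| \leq |P_0\alpha_0 - P_1\alpha_1| + |\Pr[\cR_\secp = 1 \mid 0] - \Pr[\cR_\secp = 1 \mid 1]|$. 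Since the second term is $\negl(\secp)$ by the $\CEXP$ part of \cref{def:CD-sec-ABE}, obtaining the contradiction reduces to showing that $|P_0\alpha_0 - P_1\alpha_1| = \negl(\secp)$.

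The main obstacle is that $|P_0\alpha_0 - P_1\alpha_1|$ is essentially a distinguishing advantage on the joint distribution of $\msk$ together with $\EVEXP_\secp^\cA(\cdot)$, while \cref{def:CD-sec-ABE} only gives a trace-distance bound on $\EVEXP_\secp^\cA(\cdot)$ alone; in general, augmenting a state of small trace distance by a correlated register can blow that distance up. The resolution is that $\msk$ is conditionally independent of $b$ given $\cA_\secp$'s classical transcript $T$ consisting of $\pk$, all query-response pairs $\{(P_i, \sk_{P_i})\}$, and $\ct$: the posterior $\Pr[\msk \mid T]$ depends only on $\pk$ and the query-response pairs (since $\ct$ is computed from $\pk$ alone and carries no extra information about $\msk$) and is therefore $b$-independent. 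Because $T$ is classical and may be assumed without loss of generality to live inside $\sA'$, there is a CPTP map $\Phi$ that reads $T$ from $\sA'$ and appends a freshly sampled $\msk \sim \Pr[\msk \mid T]$; the above conditional independence ensures $\Phi(\EVEXP_\secp^\cA(b))$ has exactly the same distribution as the real joint $(\msk, \EVEXP_\secp^\cA(b))$, so the data-processing inequality gives $\TD((\msk, \EVEXP_\secp^\cA(0)), (\msk, \EVEXP_\secp^\cA(1))) \leq \TD(\EVEXP_\secp^\cA(0), \EVEXP_\secp^\cA(1)) = \negl(\secp)$. A final routine observation is that the post-verification key queries in the prior-def experiment can be perfectly simulated from $\msk$ together with fresh $\KG$ randomness, so they introduce no additional $b$-dependence, and $|P_0 \alpha_0 - P_1 \alpha_1| = \negl(\secp)$ follows.
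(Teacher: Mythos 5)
Your proposal is correct and follows essentially the same route as the paper: the identical relay-and-abort reduction to the $\CEXP$ part of \cref{def:CD-sec-ABE}, with the $\top$-branch advantage killed by the $\EVEXP$ guarantee because the conditional distribution of $\msk$ given the adversary's view is independent of $b$. Your explicit treatment of that last step (the CPTP map $\Phi$ resampling $\msk$ from the transcript posterior plus data processing, and noting the posterior depends on the key-query responses as well as $\pk$) is a more careful spelling-out of the parenthetical the paper states in one sentence, not a different argument.
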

\begin{proof}
Suppose the claim is not true. Then there exists an adversary $\cA$ and polynomial $p(\cdot)$ such that with respect to the notation in Definition \ref{def:CD-sec-ABE-prior},
\[\bigg|\Pr\left[\mathsf{C}'\text{-}\EXP_\secp^{\cA}(0) = 1\right] - \Pr\left[\mathsf{C}'\text{-}\EXP_\secp^{\cA}(1) = 1\right]\bigg| = \frac{1}{p(\secp)},\] 
and yet for every adversary $\cB$,with respect to the notation in Definition \ref{def:CD-sec-ABE},
\begin{equation}
\label{eq:everlasting2}
\TD\left(\EVEXP_\secp^{\cB}(0),\EVEXP_\secp^{\cB}(1)\right) = \negl(\secp),
\end{equation}
and
\[\bigg|\Pr\left[\CEXP_\secp^{\cB}(0) = 1\right] - \Pr\left[\CEXP_\secp^{\cB}(1) = 1\right]\bigg| = \negl(\secp),\] 
Now we consider the following (efficient) reduction $\cR = \{\cR_\secp\}_{\secp \in \bbN}$ that acts as an adversary in the experiment $\CEXP$.
$\cR_\secp$ passes $\pk$ to $\cA_\secp$, then forwards all key queries of $\cA_\secp$ to its challenger, and forwards challenger responses back to $\cA_\secp$. Furthermore, it forwards any attribute $X^*$ and pair of messages $(m_0, m_1)$ output by $\cA_\secp$ to its challenger.
Finally, it passes the deletion certificate output by $\cA_\secp$ to its challenger and saves its residual state on register $\sA'$. $\cR_\secp$ then obtains a verification outcome in $\{ \bot, \top \}$ from the challenger. 
If the outcome is $\top$, $\cR$ aborts, and otherwise $\cR_\secp$ outputs $\cA_\secp(\sA',\bot)$. 

By equation~(\ref{eq:everlasting2}), when the outcome is $\top$, the resulting state on register $\sA'$ in $\mathsf{C}'\text{-}\EXP_\secp^{\cA}(b)$ is statistically independent of $b$ (and in particular, since the distribution over $\msk$ is fixed by $\pk$ and is otherwise independent of $\ct$, the state is also statistically independent given $\msk$). Thus, except for a negligible loss, any advantage of $\cA_\secp$ can only manifest in the case when the output is $\bot$, which implies that
\[\bigg|\Pr\left[\CEXP_\secp^{\cR}(0) = 1\right] - \Pr\left[\CEXP_\secp^{\cR}(1) = 1\right]\bigg| = \frac{1}{p(\secp)} - \negl(\secp) > \frac{1}{2p(\secp)},\]
a contradiction.
\end{proof}

\end{document}